\numberwithin{equation}{section}
\newtheorem{thm}{Theorem}[section]
\newtheorem{defn}[thm]{Definition}
\newtheorem{lem}[thm]{Lemma}
\newtheorem{prop}[thm]{Proposition}
\newtheorem{assumption}[thm]{Assumption}
\DeclareMathOperator{\id}{id}
\def\Wick#1{\,\colon\!\! #1 \!\colon}
\def\th@newremark{\th@remark\thm@headfont{\bfseries}}
\def\eps{\epsilon}
\theoremstyle{newremark}
\newtheorem{rmk}[thm]{Remark}
\definecolor{darkgreen}{rgb}{0.1,0.7,0.1}
\definecolor{darkred}{rgb}{0.7,0.1,0.1}
\definecolor{darkblue}{rgb}{0,0,0.7}
\newcommand{\PP}{\mathbb{P}}     
\newcommand{\TT}{\mathbf{T}}
\newcommand{\ZZ}{\mathbb{Z}}      
\newcommand{\bB}{\mathcal{B}}
\newcommand{\cC}{\mathcal{C}}
\newcommand{\dD}{\mathcal{D}}
\newcommand{\eE}{\mathcal{E}}
\newcommand{\fF}{\mathcal{F}}
\newcommand{\gG}{\mathcal{G}}
\newcommand{\hH}{\mathcal{H}}
\newcommand{\iI}{\mathcal{I}}
\newcommand{\kK}{\mathcal{K}}
\newcommand{\lL}{\mathcal{L}}
\newcommand{\mM}{\mathcal{M}}
\newcommand{\nN}{\mathcal{N}}
\newcommand{\oO}{\mathcal{O}}
\newcommand{\pP}{\mathcal{P}}
\newcommand{\qQ}{\mathcal{Q}}
\newcommand{\rR}{\mathcal{R}}
\newcommand{\tT}{\mathcal{T}}
\newcommand{\uU}{\mathcal{U}}
\newcommand{\vV}{\mathcal{V}}
\newcommand{\wW}{\mathcal{W}}
\newcommand{\xX}{\mathcal{X}}
\def\CO{\mathcal{O}}
\newcommand{\fM}{\mathfrak{M}}
\newcommand{\fR}{\mathfrak{R}}
\newcommand{\fs}{\mathfrak{s}}
\newcommand{\fK}{\mathfrak{K}}
\newcommand{\E}{\mathbf{E}}
\newcommand{\R}{\mathbf{R}}
\newcommand{\T}{\mathbf{T}}
\newcommand{\1}{\mathbf{1}}
\newcommand{\x}{\mathbf{x}}
\newcommand{\sE}{\mathscr{E}}
\newcommand{\sJ}{\mathscr{J}}
\newcommand{\sL}{\mathscr{L}}
\newcommand{\sM}{\mathscr{M}}
\newcommand{\sR}{\mathscr{R}}
\def\Wck{{\scriptscriptstyle\mathrm{Wick}}}
\newcommand{\wL}{L^{\Wck}}
\newcommand{\wM}{M^{\Wck}}
\newcommand{\wDelta}{\Delta^{\Wck}}
\newcommand{\wPi}{\Pi^{\Wck}}
\newcommand{\wf}{f^{\Wck}}
\newcommand{\ha}{\widehat{a}}
\newcommand{\heE}{\widehat{\eE}}
\newcommand{\hPi}{\widehat{\Pi}}
\newcommand{\emb}{\hookrightarrow}
\colorlet{symbols}{blue!90!black}
\colorlet{testcolor}{green!60!black}
\def\${|\!|\!|}
\def\ex{{\text{ex}}}
\def\drawx{\draw[-,solid] (-3pt,-3pt) -- (3pt,3pt);\draw[-,solid] (-3pt,3pt) -- (3pt,-3pt);}
\tikzset{
	root/.style={circle,fill=testcolor,inner sep=0pt, minimum size=2mm},
	dot/.style={circle,fill=black,inner sep=0pt, minimum size=1mm},
	sdot/.style={circle,fill=black,inner sep=0pt, minimum size=0.5mm},
	var/.style={circle,fill=black!10,draw=black,inner sep=0pt, minimum size=
	2mm},
    sou/.style={circle,fill=black,inner sep=0pt, minimum size=1.5mm},
    nou/.style={circle,draw=black,inner sep=0pt, minimum size=1.5mm},
	dotred/.style={circle,fill=black!50,inner sep=0pt, minimum size=2mm},
	generic/.style={semithick,shorten >=1pt,shorten <=1pt},
	ageneric/.style={semithick},
	dist/.style={ultra thick,draw=testcolor,shorten >=1pt,shorten <=1pt},
	testfcn/.style={ultra thick,testcolor,shorten >=1pt,shorten <=1pt,<-},
	testfcnx/.style={ultra thick,testcolor,shorten >=1pt,shorten <=1pt,<-,
		postaction={decorate,decoration={markings,mark=at position 0.6 with {\drawx}}}},
	kepsilon/.style={semithick,shorten >=1pt,shorten <=1pt,densely dashed,->},
	kprimex/.style={semithick,shorten >=1pt,shorten <=1pt,densely dashed,->,
		postaction={decorate,decoration={markings,mark=at position 0.4 with {\drawx}}}},
	kernel/.style={semithick,shorten >=1pt,shorten <=1pt,->},
	akernel/.style={semithick,->},
	multx/.style={shorten >=1pt,shorten <=1pt,
		postaction={decorate,decoration={markings,mark=at position 0.5 with {\drawx}}}},
	kernelx/.style={semithick,shorten >=1pt,shorten <=1pt,->,
		postaction={decorate,decoration={markings,mark=at position 0.4 with {\drawx}}}},
	kernel1/.style={->,semithick,shorten >=1pt,shorten <=1pt,postaction={decorate,decoration={markings,mark=at position 0.45 with {\draw[-] (0,-0.1) -- (0,0.1);}}}},
	kernel2/.style={->,semithick,shorten >=1pt,shorten <=1pt,postaction={decorate,decoration={markings,mark=at position 0.45 with {\draw[-] (0.05,-0.1) -- (0.05,0.1);\draw[-] (-0.05,-0.1) -- (-0.05,0.1);}}}},
	kernelBig/.style={semithick,shorten >=1pt,shorten <=1pt,decorate, decoration={zigzag,amplitude=1.5pt,segment length = 3pt,pre length=2pt,post length=2pt}},
	gepsilon/.style={dotted,semithick,shorten >=1pt,shorten <=1pt},
	renorm/.style={shape=circle,fill=white,inner sep=1pt},
	labl/.style={shape=rectangle,fill=white,inner sep=1pt},
	xi/.style={circle,fill=symbols!10,draw=symbols,inner sep=0pt,minimum size=1.2mm},
	xix/.style={crosscircle,fill=symbols!10,draw=symbols,inner sep=0pt,minimum size=1.2mm},
	xib/.style={circle,fill=symbols!10,draw=symbols,inner sep=0pt,minimum size=1.6mm},
	xibx/.style={crosscircle,fill=symbols!10,draw=symbols,inner sep=0pt,minimum size=1.6mm},
	not/.style={circle,fill=symbols,draw=symbols,inner sep=0pt,minimum size=0.5mm},
	>=stealth,
	}
\def\DeclareSymbol#1#2#3{\expandafter\gdef\csname MH@symb@#1\endcsname{\tikz[baseline=#2,scale=0.15,draw=symbols]{#3}}\expandafter\gdef\csname MH@symb@#1s\endcsname{\scalebox{0.7}{\tikz[baseline=#2,scale=0.15,draw=symbols]{#3}}}}
\def\<#1>{\csname MH@symb@#1\endcsname}
\def\scal#1{\langle#1\rangle}
\def\d{\partial}
\begin{document}

\title{Large scale behaviour of 3D continuous\\ phase coexistence models}
\author{Martin Hairer$^1$ and Weijun Xu$^2$}
\institute{University of Warwick, UK, \email{m.hairer@warwick.ac.uk}
\and University of Warwick, UK, \email{weijun.xu@warwick.ac.uk}}

\maketitle

\begin{abstract}
We study a class of three dimensional continuous phase coexistence models, and show that, under different symmetry assumptions on the potential, the large-scale behaviour of such models near a bifurcation point is described by the dynamical $\Phi^p_3$ models for $p \in \{2,3,4\}$.
This result is specific to space dimension $3$ and does not hold in dimension $2$.
\end{abstract}

\tableofcontents

\section{Introduction}

The aim of this article is to study the large scale behaviour of phase coexistence models of the type
\begin{equation} \label{eq:general_model}
\partial_{t} u = \Delta u - \epsilon V_{\theta}'(u) + \delta \widehat{\xi}\;,
\end{equation}
in three spatial dimensions, where $V_\theta$ denotes a potential depending on some parameter $\theta$
and $\eps$, $\delta$ are two small parameters. Throughout this article, $\widehat{\xi}$ is assumed to be a continuous space-time Gaussian random field modelling the local fluctuations, with covariance having compact support and integrating to $1$. The potential $(\theta, u)\mapsto V_{\theta}(u)$ is a sufficiently regular function (depending on the regime, we will actually assume that it is polynomial in $u$). Regarding the two 
parameters $\epsilon$ and $\delta$, we will consider two extremal regimes: either 
$\epsilon = o(1), \delta \approx 1$, which we call the weakly nonlinear regime, or $\delta = o(1), \epsilon \approx 1$, which we call the weak noise regime. However, our results would easily carry over to 
intermediate regimes as well. Also, the spatial domain of the process $u$ is a large three dimensional torus whose size depends on $\epsilon$ (see Remark~\ref{rem:rescaling} for more details).

For the sake of the present discussion, consider the weakly nonlinear regime, i.e.\ set $\delta = 1$ in
\eqref{eq:general_model}. It is then natural to consider scalings of the type
$u_\lambda(t,x) = \lambda^{-1/2}u(t \lambda^{-2}, x\lambda^{-1})$ which leave invariant the stochastic heat equation, so that $u_{\eps^\alpha}$ satisfies
\begin{equ}[e:rescaled]
\partial_{t} u_{\eps^\alpha} = \Delta u_{\eps^\alpha} - \eps^{1-5\alpha/2} V_{\theta}'(\eps^{\alpha/2} u_{\eps^\alpha}) + \xi_{\eps^\alpha}\;,
\end{equ}
where $\xi_{\bar \eps}$ denotes a suitable rescaling of $\widehat{\xi}$ which approximates space-time
white noise at scales larger than $\bar \eps$.

\begin{rmk}
Since the process $u$ in \eqref{eq:general_model} itself depends on $\epsilon$, one should really 
write $u_{\epsilon,\epsilon^{\alpha}}$ for the rescaled process in \eqref{e:rescaled} to avoid ambiguity. 
However, we still write the ambiguous one $u_{\epsilon^{\alpha}}$ here in order to keep the notations simple. 
\end{rmk}

The form \eqref{e:rescaled} suggests that if we start \eqref{eq:general_model}
with an initial condition located at a local minimum of $V$, then at scales of order 
$\eps^{-1/2}$ (i.e.\ setting $\alpha = {1\over 2}$ in \eqref{e:rescaled}) solutions should
be well approximated by solutions to an Ornstein-Uhlenbeck process of the type
\begin{equ}[e:OU]
\d_t v = \Delta v - c v + \xi\;,
\end{equ}
for some $c > 0$ and $\xi$ a space-time white noise. As we will see in Theorem~\ref{th:main_asymmetric} below,
this is in general false, unless $V$ is harmonic to start with. 
Instead, one should compute from $V_\theta$ an effective potential
$\scal{V_\theta}$ in the following way. Consider the space-time stationary 
solution $\Psi$ to the linearised equation 
\begin{equation} \label{eq:linearised}
\d_t \Psi = \Delta \Psi + \widehat \xi\;. 
\end{equation}
Since we are in dimension $3$, such a solution exists and is Gaussian with finite variance $C_0$. 
We then set 
\begin{align*}
\langle V_{\theta} \rangle (x) = \int_{\R} V_{\theta}(x+y) \mu (dy), 
\end{align*}
where $\mu = \nN(0, C_{0})$. In other words, $\scal{V_\theta}$ is the effective potential obtained
by averaging $V$ against the stationary measure of $\Psi$. 
We show in Theorem~\ref{th:main_asymmetric} that if we start with an initial condition located at a local minimum of
$\scal{V_\theta}$, then it is indeed the case that the behaviour at scales of order $\eps^{-1/2}$
is described by \eqref{e:OU}.

These considerations suggest that more interesting nonlinear scaling limits can arise in regimes
where $\theta \mapsto \scal{V_{\theta}}$ undergoes a bifurcation, and this is the main object of study
of this article. In particular, if $\scal{V_{\theta}}$ is symmetric and undergoes a pitchfork bifurcation
at some $\theta = \theta_0$, then
one would expect the large-scale behaviour to be described near $\theta_0$ by 
the dynamical $\Phi^4_3$ model built in \cite{Hai14a}
and further investigated in \cite{CC13,Antti}.
Similarly, near a saddle-node bifurcation, one would expect the large-scale behaviour
to be described by the dynamical $\Phi^3_3$ model built in \cite{EJS13}
using the techniques developed in \cite{DPD02,DPD03}.

Recall that, at least formally, the dynamical $\Phi^p_3 (a)$ model is given by
the family of equations
\begin{equation} \label{eq:Phip3}
\partial_{t} \Phi = \Delta \Phi - a \Phi^{p-1} + \lambda \Phi^{p-3} + \xi, 
\end{equation}
where $\xi$ is the space-time white noise, and the spatial variable belongs to the three-dimensional
torus $\TT^{3}$. In this article, we will only ever consider $p \in \{3,4\}$,
with $p=2$ corresponding to the Ornstein-Uhlenbeck process \eqref{e:OU} (but then there is no term involving $\lambda$). 
Also, the constant $a$ in front of $\Phi^{p-1}$ can be set to $1$ by a formal scaling
\begin{equ}[e:rescalecanonical]
\Phi (t,x) \mapsto \nu^{-\frac{1}{2}} \Phi(t/\nu^{2}, x/\nu) \quad \text{with} \quad \nu^{3-\frac{p}{2}} = a, 
\end{equ}
since the transformation $\xi(t,x) \mapsto \nu^{-\frac{1}{2}} \xi(t/\nu^{2}, x/\nu)$ leaves the white noise invariant. The equation \eqref{eq:Phip3} with $a=1$ is the standard dynamical $\Phi^p_3$ model. In this article, we will however keep $a$ in the equation since it is convenient for the scalings later. 

For $p \in \{3,4\}$, the interpretation of \eqref{eq:Phip3} is not clear
a priori since solutions are distribution-valued so that the term
$\Phi^{p-1}$ lacks a canonical interpretation. However, they can be
constructed as limits of solutions to
\begin{equation} \label{eq:Phiapprox}
\partial_{t} \Phi_{\eps}^\lambda = \Delta \Phi_{\eps}^\lambda - a (\Phi_{\eps}^\lambda)^{p-1} + (C_\eps + \lambda) (\Phi_{\eps}^\lambda)^{p-3} + \xi_\eps\;, 
\end{equation}
for a regularisation $\xi_\eps$ of space-time white noise
and a suitable diverging sequence of constants $C_\eps$.
In the case $p =3$, this turns the term $\Phi^2$ into the Wick product 
$\Wick{\Phi^2}$ with respect to the Gaussian structure 
induced by the stationary solution to the corresponding linearised 
equation (see \cite{EJS13} for more details). 
In the case $p = 4$, the situation is more delicate and additional logarithmic
divergences arise due to higher order effects, see \cite{GJ,Feldman,Hai14a}. 

At this stage, it is important to note that the notation
\eqref{eq:Phip3}, even when interpreted as limit of processes of the type \eqref{eq:Phiapprox}, 
is really an abuse of notation: since one could always change the value of $C_{\epsilon}$ in \eqref{eq:Phiapprox} by a finite quantity, it is not clear which process should be associated to any fixed value of $\lambda$, and it is only
the whole family of processes, 
indexed by that finite quantity, which has a canonical meaning. We call the resulting family of 
solutions the $\Phi^p_3(a)$ family. 
Henceforth, when we say that a sequence of processes $\Psi_\eps^\lambda$ ``converges to the $\Phi^p_3(a)$ family
indexed by $\lambda$'', we mean that there exists a choice of $C_\eps$ (independent of $\lambda$)
such that $\lim_{\eps \to 0}\Psi_\eps^\lambda = \lim_{\eps \to 0}\Phi_\eps^\lambda =: \Phi^\lambda$ in law, 
for every $\lambda$.
The precise notion of convergence appearing here slightly depends on $p$ since the $\Phi^3_3$ process
may explode in finite time, while the $\Phi^4_3$ process doesn't \cite{Konstantin,Phi4_global}.
This will be clarified in \eqref{e:funnydistance} below.
Let us point out that, without the presence of the diverging counter-term $C_{\epsilon}$, the 
sequence $\Phi_{\epsilon}$ for $p=4$ would converge to $0$ in a sufficiently weak topology depending on 
the dimension $d$ (see \cite{HRW12} for more details). 

Formally, the equilibrium measure of the dynamics \eqref{eq:Phip3} for $p=4$ is the measure on Schwartz 
distributions associated to Bosonic Euclidean quantum field theory. This can also be justified 
rigorously, see \cite{Konstantin}.
The construction of this measure was 
a major achievement of constructive field theory; see the articles \cite{EO, Feldman, FO, GJ, Glimm} and references therein. In two spatial dimensions, the equation \eqref{eq:Phip3} was treated in \cite{AR91, DPD03, Phi42global}. For $d \ge 4$, one does not expect to be able to obtain any non-trivial 
scaling limit, see \cite{Frohlich,Aizenman,Brydges}. 

Another reason why the dynamical $\Phi^4_3$ is interesting is that it is expected to describe the $3$D Ising model with Glauber dynamics and Kac interactions near critical temperature (as conjectured in \cite{GLP99}). In fact, the one dimensional version of this result was shown in \cite{BPRS93} at the critical temperature. The two dimensional case is more difficult, as the equation itself requires renormalisation. It was shown recently in \cite{Ising2d} that the $2$D Kac-Ising model does rescale to $\Phi^4_2$ near critical temperature, and the renormalisation constant has a nice interpretation as the shift of critical temperature from its mean field value. See also the article \cite{Simon} which however required a two-step
procedure to obtain $\Phi^4_2$ from an Ising model.

We now turn back to the rescaled process \eqref{e:rescaled}. As suggested by the form of renormalisation in \eqref{eq:Phiapprox}, it is reasonable to expect that the behaviour of $u_{\epsilon}$ at scale $\alpha = 1$ and $\theta$ at (or near) a pitchfork bifurcation
should be well approximated by the dynamical $\Phi^4_3$ model. However, it turns out that this is \textit{not} true in full generality. The main result of this article is that, although $u_{\epsilon}$ converges to $\Phi^4_3$ for all symmetric polynomial potentials, for generic non-symmetric potentials, after proper re-centering and rescaling, the large scale behaviour of the system will always be described either by $\Phi^3_3$ or by the O.U. process of the type \eqref{e:OU}. One way to understand this is that, as is well-known from dynamical systems, pitchfork bifurcations are structurally unstable: small generic perturbations tend to turn them into a saddle-node bifurcation taking place very close to a local minimum. One can then argue (this is quite clear in Wilson's renormalisation group picture which has recently been applied to the construction of the dynamical $\Phi^4_3$ model in \cite{Antti}) that the effective potential experienced by the process at large scale is not $\scal{V_\theta}$ but some small perturbation thereof, thus 
reconciling our results with intuition.

\subsection{Weakly nonlinear regime} \label{sec:weak_nonlinear}

We start with the weakly nonlinear regime given by
\begin{equation} \label{eq:micro_model}
\partial_{t} u = \Delta u - \epsilon V_{\theta}'(u) + \widehat{\xi}, 
\end{equation}
where we assume that $V_\theta$ is a \textit{polynomial} whose coefficients depend smoothly on $\theta$.
Defining $\scal{V_\theta}$ as above, we thus write
\begin{align*}
\langle V_{\theta}' \rangle (u) = \sum_{j=0}^{m} \ha_{j} (\theta) u^{j}\;,
\end{align*}
for some smooth functions $\ha_j$.
For notational simplicity, we let $\ha_{j}, \ha_{j}'$ and $\ha_{j}''$ denote the value and first two 
derivatives of $\ha_{j}(\theta)$ at $0$. We will always assume that $\scal{V_\theta}$ has a critical 
point at the origin (which could easily be enforced by just translating $u$),
so that $\ha_{0} = 0$.

\begin{rmk}\label{rem:rescaling}
From now on, we will always assume that \eqref{eq:micro_model} is considered on a periodic domain
of the relevant size. In particular, we define $u_{\eps^\alpha}$ directly as the solution
to \eqref{e:rescaled} on a domain of size $\CO(1)$ (the precise size is irrelevant, but it should
be bounded and no longer depend on $\eps$). Ideally, one would like to extend the convergence
results of this article to all of $\R^3$, which would be much more canonical, but this
requires some control at infinity which is lacking at present.
\end{rmk}

\begin{rmk} \label{rem:noise_domain}
In principle, the noise $\widehat{\xi}$ appearing in \eqref{eq:micro_model} also depends on $\eps$, 
since it is defined on a torus of size $\eps^{-\alpha}$ for some $\alpha > 0$ depending on the regime we
consider. However, since we assume that its correlation function is fixed (independent of $\eps$)
and has compact support, the noises on domains of different sizes agree in law when considered on
an identical patch, as long as a suitable fattening of that patch remains simply connected. 
\end{rmk}

In the simplest case when $\ha_{1} \neq 0$, it is not very difficult to show that at 
scale $\alpha = \frac{1}{2}$, $u_{\epsilon^{\alpha}}$ converges in probability to 
the O.U.\ process. 
Interesting phenomena occur when $(0,0)$ is a bifurcation point for $\scal{V_\theta}$, which gives the necessary bifurcation condition
\begin{equation} \label{eq:bifurcation}
\ha_{0} = \ha_{1} = 0\;. 
\end{equation}
The saddle-node bifurcation further requires that $\ha_{0}' \neq 0$ and $\ha_{2} \neq 0$, and in this case one should choose $\alpha = \frac{2}{3}$ so that as long as $\theta = \oO(\epsilon^{\frac{2}{3}})$, the macroscopic process $u_{\epsilon^{\alpha}}$ converges to $\Phi^3_3$ family. In fact, the terms in $V_{\theta}'(\epsilon^{\alpha/2} u_{\epsilon^\alpha})$ in \eqref{e:rescaled} are Hermite polynomials in $u_{\epsilon^\alpha}$ whose coefficients are precisely $\ha_{j}(\theta)$'s with corresponding powers of $\epsilon$. Thus, the Wick renormalisation is already taken account of, and this is the reason why the bifurcation assumption naturally appears for $\scal{V_\theta}$ but not $V_\theta$.

The most interesting case arises when $(0,0)$ is a pitchfork bifurcation point of $\scal{V_\theta}$ so 
that in addition to \eqref{eq:bifurcation}, one has
\begin{equation} \label{eq:pitchfork}
\ha_{0}' = 0, \quad \ha_{1}' < 0, \quad \ha_{2} = 0, \quad \ha_{3} > 0\;. 
\end{equation}
As mentioned above, from \eqref{eq:Phiapprox}, it is natural to expect that at scale $\alpha = 1$, and with a suitable choice of $\theta$, the processes $u_{\epsilon^{\alpha}}$ should converge to the solution of the $\Phi^4_3$ model. As already alluded to earlier, this turn out to be true if and only if the quantity
\begin{equation} \label{eq:a}
A = \int P(z) \phantom{1} \E \big(V_{0}'(\Psi(0)) V_{0}''(\Psi(z)) \big) dz 
\end{equation}
vanishes, where $P$ is the heat kernel, $z$ denotes the space time variable $(t,x)$, and the expectation is taken with respect to the stationary measure of $\Psi$ as defined in \eqref{eq:linearised}. 
For general $V_0$, this integral diverges since the heat kernel $P$ is not integrable at large scales.
It turns out however that this expression is finite provided that 
\begin{align*}
\ha_0 \ha_1 = \ha_2 = 0\;, 
\end{align*}
which is certainly the case when $\scal{V_\theta}$ has a pitchfork bifurcation at the origin. 
The quantity $A$ can 
be written in terms of the coefficients of $\scal{V}$ as
\begin{equation} \label{eq:a_expression}
A = \sum_{j=3}^{m-1} (j+1)! \cdot \ha_{j} \ha_{j+1} C_{j}, 
\end{equation}
where the $C_{j}$ (to be defined in Section~\ref{sec:convModels} below) 
are explicit constants depending only on the covariance of $\widehat{\xi}$. It is clear from this expression that $A$ 
vanishes if $V$ is symmetric. 

If $A \neq 0$, then in order to obtain a nontrivial limit, it is necessary to slightly 
shift the potential from the origin, so we set
\begin{equation} \label{eq:macro_intro_a}
u_{\epsilon^{\alpha}} (t,x) = \epsilon^{-\frac{\alpha}{2}} \big( u(t/\epsilon^{2\alpha}, x/\epsilon^{\alpha}) - h_\eps \big)\;, 
\end{equation}
for some small $h_\eps$. The process $u_{\epsilon^\alpha}$ above then satisfies the equation
\begin{equation} \label{eq:shifted_equation}
\partial_{t} u_{\epsilon^\alpha} = \Delta u_{\epsilon^\alpha} - \epsilon^{1 - 5 \alpha / 2} V_{\theta}'(\epsilon^{\alpha/2} u_{\epsilon^\alpha} + h_{\epsilon}) + \xi_{\epsilon^\alpha}. 
\end{equation}
From now on, in both weakly nonlinear and weak noise regimes, we will use $u_{\epsilon^\alpha}$ to denote the re-centred process, and the process in \eqref{e:rescaled} is a special case of \eqref{eq:shifted_equation} when $h = 0$. We also assume the rescaled initial conditions $u_{\epsilon^{\alpha}}(0, \cdot)$ converge to a function $u_{0}(\cdot)$ in some sense (essentially in some low regularity H\"older norm at large scales and some high regularity H\"older norm at small scales -- this will be made precise in Definition~\ref{defn:weighted_function_space} and Section~\ref{sec:limits} below, the same is true for the symmetric case $A=0$), and we identify the limit of the solution sequence $\{u_{\epsilon^{\alpha}}\}$ for appropriate choices of $\theta$ and $h_{\epsilon}$. 

If one then takes $\theta \sim \epsilon^{\beta}$ for some $\beta < \frac{2}{3}$, 
then there are three different choices
of $h_{\epsilon}$'s such that the shifted process $u_{\epsilon^{\alpha}}$ converges to O.U.\ for $\alpha = \frac{1+\beta}{2}$. As expected, two of the possible limiting O.U.\ processes are stable, and the third one is unstable\footnote{Usually, the 
O.U.\ process is defined as the solution of \eqref{e:OU} only for $c>0$. But for the sake of simplicity of the presentation here, we call solutions to \eqref{e:OU} an O.U.\ process for every $c \in \R$. We call it a stable O.U.\ if $c > 0$, and unstable if $c \leq 0$. }. 
If $\theta \sim \epsilon^{\beta}$ for some $\beta > \frac{2}{3}$ on the other hand, 
then there is a unique choice
of $h_\eps$ such that at scale $\alpha = \frac{5}{6}$, the process $u_{\epsilon^{\alpha}}$ converges to 
a stable O.U. process. 

At the critical case $\theta = c \epsilon^{\frac{2}{3}}$, there is a constant $c^{*}$ such that for $c > c^{*}$ and $c < c^{*}$, at scale $\alpha = \frac{5}{6}$, $u_{\epsilon^{\alpha}}$ either converges 
to three O.U.'s or just one O.U., respectively. At $c = c^{*}$, there are two possible choices of $h_\eps$. One of them again yields a stable O.U.\ process at scale $\frac{5}{6}$ in the limit, but the other one 
yields $\Phi^3_3$ at scale $\alpha = \frac{8}{9}$. Note that this scale is much larger 
than the scale ${2\over 3}$ at which one obtains $\Phi^3_3$ in the case of a simple
saddle-node bifurcation. We summarise them in the following theorem. The precise statements can be found in Theorems~\ref{th:main_symmetric} and~\ref{th:main_asymmetric}.

\begin{thm}
Let $\langle V_{\theta} \rangle$ have a pitchfork bifurcation at
the origin, and let $u_{\epsilon^{\alpha}}$ be the solution 
to \eqref{e:rescaled} on $[0,T] \times \TT^{3}$. 

If the quantity $A$ given by \eqref{eq:a} is $0$, then there exists $\mu < 0$ such that at the distance to criticality
\begin{align*}
\theta = \mu \epsilon |\log \epsilon| + \lambda \epsilon + \oO(\epsilon^2), 
\end{align*}
scale $\alpha = 1$ and $h = 0$, the process $u_{\epsilon}$ converges to the $\Phi^4_3 (\ha_{3})$ family indexed by $\lambda$, where $\ha_{3}$ is the coefficient of the cubic term in $\scal{V_{0}'}$, the derivative of the averaged potential at $\theta = 0$. 

If $A \neq 0$, then the large scale behaviour of $u_{\epsilon^\alpha}$ depends on the value
\begin{align*}
\theta = \rho \epsilon^\beta, \rho > 0. 
\end{align*}
In fact, there exists $\rho^{*} > 0$ such that if $\beta < \frac{2}{3}$, or if $\beta = \frac{2}{3}$ and $\rho > \rho^{*}$, then there exist three choices of $h_{\epsilon}$'s such that at scale $\alpha = \frac{1+\beta}{2}$, two of the resulting processes $u_{\epsilon^\alpha}$ converge to a stable O.U.\ process, and the other converges to an unstable one.  

If $\beta > \frac{2}{3}$, or if $\beta = \frac{2}{3}$ and $\rho < \rho^{*}$, then there exists a choice of $h_{\epsilon}$ such that at scale $\alpha = \frac{5}{6}$, the process $u_{\epsilon^{\alpha}}$ converges to a stable O.U.\ process. 

At the critical value $\beta = \frac{2}{3}$ and $\rho = \rho^{*}$, there exist two choices of $h_{\epsilon}$ such that one of the resulting processes converges to a stable O.U.\ process at scale $\alpha = \frac{5}{6}$, and the other converges to $\Phi^3_3$ at scale $\alpha = \frac{8}{9}$. 
\end{thm}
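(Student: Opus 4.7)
The plan is to substitute $u \mapsto \eps^{\alpha/2} u_{\eps^\alpha} + h_\eps$ into the nonlinearity in \eqref{eq:shifted_equation} and decompose the result into Wick monomials with respect to the Gaussian structure of the rescaled stationary linear solution. Since $V_\theta$ is polynomial, this yields a finite expansion
\begin{equation*}
\eps^{1-5\alpha/2}\, V'_\theta(\eps^{\alpha/2} u_{\eps^\alpha} + h_\eps) = \sum_{j \ge 0} B_j(\theta, h_\eps, \eps)\, \Wick{u_{\eps^\alpha}^j}\,,
\end{equation*}
in which the coefficients $B_j$ are explicit polynomials in $h_\eps$ whose coefficients involve the $\ha_k(\theta)$'s and the Wick constants, themselves depending on $\eps$ through the rescaled variance of $\Psi$. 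Only the mean part $\langle V_\theta\rangle'$ contributes to the zero-chaos projection, so each $B_j$ has a compact expression purely in terms of the $\ha_k(\theta)$'s and powers of $\eps^{\alpha/2}$. Every limit asserted by the theorem then reduces to choosing $(\theta_\eps, h_\eps)$ so that a distinguished subfamily of the $B_j$'s converges to prescribed finite constants while the remainder tends to zero, after which the general convergence theorem for equations with this Wick structure (established earlier in the paper via regularity structures) delivers the claimed convergence to the relevant $\Phi^p_3$ family.

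With this reduction in hand, the case analysis is driven by the structure of the equation $B_0(\theta, h, \eps) = 0$ which forces the macroscopic constant term to vanish. Under the pitchfork assumption its leading form is $\ha_0'\theta + \ha_1'\theta h + \ha_3 h^3$. For $\beta < 2/3$ this cubic dominates: for $\theta$ negative of order $\eps^\beta$ there are three real roots producing three OU limits at $\alpha = (1+\beta)/2$, with stability controlled by the sign of $\ha_1'\theta + 3\ha_3 h^2$, while for $\theta$ positive only one root survives. For $\beta > 2/3$ the cubic is subdominant, a unique root of the linear part survives, and the resulting OU sits at scale $5/6$. At the transition $\beta = 2/3$ the first subleading correction to $B_0$---coming from the second-order iteration of $V'_\theta$ against the heat kernel and the Gaussian $\Psi$, which is precisely the integral $A$ of \eqref{eq:a}---becomes comparable to the cubic term. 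The critical constant $\rho^*$ is then identified as the unique value at which the resulting deformed cubic acquires a double root.

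The $A = 0$ case is the simplest: pitchfork symmetry ensures $B_0 = B_2 = 0$ at $h = 0$ to leading order, and the constraint that $B_1$ converge to a finite constant pins down $\theta$ up to a logarithmically divergent shift $\mu\eps|\log\eps|$ (coming from the higher-order renormalisation diagrams of $\Phi^4_3$, whose vanishing is equivalent to $A = 0$) together with a finite residual freedom parametrised by $\lambda$. The main obstacle is the critical-critical subcase $\beta = 2/3$, $\rho = \rho^*$, where two candidate $h_\eps$'s coalesce at leading order and must be separated by expanding $h_\eps$ to the next order in $\eps$. For one branch the linear coefficient $B_1$ retains a finite negative limit and yields an OU at $\alpha = 5/6$; for the other $B_1$ vanishes to leading order, forcing a rescaling to the finer scale $\alpha = 8/9$. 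At that scale, a careful bookkeeping of the subleading $\eps$-corrections in each $\ha_j(\theta)$ and in the Wick constants shows that $B_3 = o(1)$ while $B_2$ converges to a finite multiple of $A$, producing the $\Phi^3_3$ limit. Tracking these nested cancellations, and verifying that all $B_j$ with $j \ge 3$ remain negligible at the finer scale, is where the bulk of the technical work resides.
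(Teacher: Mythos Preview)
Your high-level strategy — expand the shifted nonlinearity in Wick polynomials, track the resulting coefficients, and invoke the abstract convergence theorem — matches the paper's. However, there is a genuine gap in how you account for the quantity $A$.

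The coefficients $B_j$ that you obtain from Wick-expanding $V_\theta'(\eps^{\alpha/2}u + h_\eps)$ are, in the paper's notation, $\delta^{\frac{1}{\alpha}-1}\delta^{\frac{j-3}{2}}\ha_j^{(h)}(\theta)$. These are \emph{not} the coefficients $\lambda_j^{(\delta)}$ that must converge for the abstract fixed point argument to apply. The point is that the renormalised model differs from the canonical one by the map $M_0$ going \emph{beyond} Wick ordering, and by Theorem~\ref{th:renormalised_equation} this forces extra corrections into $\lambda_1^{(\delta)}$ and $\lambda_0^{(\delta)}$: one subtracts $\delta^{\frac{2}{\alpha}-2}C_{\delta,\theta,h}$ from the linear coefficient and $\delta^{\frac{2}{\alpha}-\frac{5}{2}}C'_{\delta,\theta,h}$ from the constant one. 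The first produces the $|\log\eps|$ shift in $\theta$; the second satisfies $C'_{\delta,\theta,h}\to A$ and is the \emph{only} place $A$ enters. Your Wick expansion alone never produces $A$, so your framework as written cannot distinguish $A=0$ from $A\neq 0$.

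You do gesture at this (``second-order iteration of $V_\theta'$ against the heat kernel''), but you misplace the mechanism: this is not a subleading term in the Hermite expansion of $V_\theta'$, it is a renormalisation constant forced on you by the requirement that the models converge. Concretely, the cubic governing the choice of $h$ at $\beta=\frac23$ is $\ha_3 r^3 + \rho\ha_1' r - A = 0$ with $h=r\eps^{1/3}$, and the $-A$ comes entirely from the $C'$ correction, not from $\ha_0^{(h)}(\theta)$. (Incidentally, your leading form for $B_0$ contains $\ha_0'\theta$, but $\ha_0'=0$ by the pitchfork assumption; the sign discussion for $\beta<\frac23$ is also off, since three roots arise for $\rho>0$ because $\ha_1'<0$.) Once the $\lambda_j^{(\delta)}$ are correctly defined with these extra terms, the case analysis you outline does go through essentially as in the paper.
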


The intuitive explanation why this is so is that $\scal{V}$ is really only a $0$-th order approximation to the ``real'' effective potential felt by the system at large scales. Since pitchfork bifurcations are structurally unstable, one would indeed expect higher-order corrections to $\scal{V}$ to turn this into a saddle-node bifurcation for generic non-symmetric potentials. 

The following picture illustrates our results, with the light shaded curve representing the symmetric case and the black curve representing the generic case when $\scal{V}$ undergoes a pitchfork bifurcation. Here, the field $\Phi$ is represented on the horizontal axis and the bifurcation parameter $\theta$ on
the vertical axis (with positive direction pointing downwards).
\begin{align*}
\begin{tikzpicture} [xscale=1.2,yscale=0.8,baseline=0cm]
\draw [dotted] (-3,0.3) -- (3,0.3); 
\draw [dotted] (-3,-0.47) -- (3,-0.47); 
\node[anchor=west] at (2,0.5) {\small $\theta \approx - \epsilon |\log \epsilon| + \oO(\epsilon)$}; 
\node[anchor=west] at (2,-0.75) {\small $\theta = c^{*} \epsilon^{\frac{2}{3}} + \oO(\epsilon^{\frac{8}{9}})$}; 
\draw[scale=0.6,domain=-3.5:3.5,smooth,ultra thick,blue!30,variable=\x] plot ({\x},{0.5-0.5*\x*\x}); 
\draw[scale=0.6,domain=-6:0.5,smooth,ultra thick,blue!30,dashed,variable=\x] plot ({0},{\x}); 
\draw[scale=0.6,domain=0.5:6,smooth,ultra thick,blue!30,variable=\x] plot ({0},{\x}); 
\draw[scale=0.6,domain=0.33:3.5,smooth,thick,variable=\x] plot ({\x},{-0.5*\x*\x + 2/(\x)}); 
\draw[scale=0.6,domain=-3.5:-1,smooth,thick,variable=\x] plot ({\x},{-0.5*\x*\x + 1/(\x) + 0.7}); 
\draw[scale=0.6,domain=-1:-0.15,smooth,thick,dashed,variable=\x] plot ({\x},{-0.5*\x*\x + 1/(\x) + 0.7}); 
\end{tikzpicture}
\end{align*}
The reason why, in the symmetric case, we see the bifurcation at $\theta \approx - \epsilon |\log \epsilon|$
rather than $\theta \approx \eps$
is due to the additional mass renormalisation appearing in $\Phi^4_3$. In the generic case where $\scal{V}$ is asymmetric (and the quantity $A$ defined in \eqref{eq:a} is non-zero), we can see that the asymmetry separates one local minimum from two other critical points, and creates 
a saddle-node bifurcation. It turns out that this bifurcation then occurs at $\theta = c^{*} \epsilon^{\frac{2}{3}} + \oO(\epsilon^{\frac{8}{9}})$ for an explicitly given constant $c^*$. All these results will be formulated precisely in Section~\ref{sec:limits} below. 

\begin{rmk}
The coefficient of the Wick term $\Wick{u^2}$ in the critical $\Phi^3_3$ case is proportional to $A^\frac{1}{3}$. If $A = 0$, then the process becomes a free field, and one can then further enlarge the scale to $1$, and adjust $\theta$ and $h$ to get $\Phi^4_3$. Also, the coefficient of the term $\Phi^{p-1}$ in the limiting equation depends on various coefficients of $\scal{V_{0}}$, but we can rescale them while leaving invariant the white noise such that they all become $1$. 
\end{rmk}

\begin{rmk}
In the non-symmetric case ($A \neq 0$), one can actually expand $\theta$ to the second order such that in the branch containing the saddle point, the scale increases continuously from $0$ up to $\frac{8}{9}$ with respect to $\theta$ (see Remark~\ref{rm:chart}). Similar results also hold in the symmetric case, but this is not important here, so we omit the details. 
\end{rmk}

%
%

\subsection{Weak noise regime} \label{sec:weak_noise}

There is another regime of microscopic models in which the nonlinear dynamics dominates the noise. The local mean field fluctuation is given by the equation
\begin{equ} \label{eq:weakNoise}
\partial_{t} u = \Delta u - V_{\theta}'(u) + \epsilon^{\frac{1}{2}} \widehat{\xi}, 
\end{equ}
where $V_\theta$ is a potential with sufficient regularity, not necessarily a polynomial. 
More precisely, we assume $V: \theta \mapsto V_{\theta}(\cdot)$ is a smooth function in 
the space of $\cC^{8}$ functions. Thus, we can Taylor expand $V_{\theta}'$ around $x=0$ as
\begin{equation} \label{eq:uniform_regular}
V_{\theta}'(x) = \sum_{j=0}^{6} a_{j}(\theta) x^{j} + F_{\theta}(x), 
\end{equation}
where $a_{j}$'s are smooth functions in $\theta$, and $|F_{\theta}(x)| \lesssim |x|^{7}$ uniformly over $|\theta| < 1$ and $|x| < 1$. 

Since the noise now has strength of order $\epsilon^{\frac{1}{2}}$, the large scale behaviour of 
\eqref{eq:weakNoise} is determined by the behaviour of $V_\theta$ itself near the origin, 
and not by that of an effective potential. Again, in order to observe an interesting limit, 
we assume that $V$ 
has a pitchfork bifurcation at $(0,0)$, namely one has
\begin{equation} \label{eq:pitchfork_noise}
a_{0} = a_{0}' = a_{1} = a_{2} = 0, \qquad a_{1}' < 0, \qquad a_{3} > 0, 
\end{equation}
where the $a_{j}(\theta)$ are the coefficients of the Taylor series of 
$V_{\theta}'(\phi)$ around $\phi = 0$. For $\epsilon > 0$, we set similarly to before
\begin{align*}
u_{\epsilon^{\alpha}} (t,x) = \epsilon^{-\frac{1+\alpha}{2}} \big( u (t/\epsilon^{2\alpha}, x/\epsilon^{\alpha}) - h_\eps  \big)\;, 
\end{align*}
where $h_\eps$ is a small parameter as before. We see that this time $u_{\eps^\alpha}$ solves the PDE
\begin{equation} \label{eq:noise_pde_intro}
\partial_{t} u_{\eps^\alpha} = \Delta u_{\eps^\alpha} - \eps^{-(\frac{1}{2} + \frac{5\alpha}{2})} V_{\theta}'(\eps^{\frac{1}{2} + \frac{\alpha}{2}} u_{\eps^\alpha} + h_{\epsilon}) + \xi_{\eps^\alpha}\;. 
\end{equation}
While this appears to be identical to \eqref{e:rescaled} modulo the substitution
$\alpha \mapsto \alpha + 1$, it genuinely differs from it in that the driving noise
still has correlation length $\eps^\alpha$ and not $\eps^{\alpha + 1}$.
In order for $u_{\eps^\alpha}$ to converge to $\Phi^4_3$, it then seems natural to choose 
$\alpha = 1$, thus guaranteeing that the coefficient of the cubic term in the Taylor 
expansion of $V_{\theta}'$ is of order $1$. 
But this creates the divergences in both linear and constant 
terms on the right hand side of the equation. Since $a_{0} = a_{1} = 0$, and we have 
two parameters $\theta$ and $h$ to tune, it looks like that we could kill the 
divergences by choosing the proper values of $\theta$ and $h$ and get $\Phi^4_3$ in the limit. 

Unfortunately, this turns out to be impossible. When tuning $\theta$ to its correct value to 
kill the linear divergence, the terms involving the leading order of $h$ will be precisely 
be canceled out so that $h$ could only have a second order effect, which is far from enough to kill the divergence in the constant term. Thus, one cannot make both 
linear and constant terms convergent unless the coefficients of $V$ itself are balanced. It 
turns out that similar to before, whether $u_{\eps}$ converges to $\Phi^4_3$ depends on the quantity
\begin{align*}
B = a_{4} + \frac{3 a_{0}'' a_{3}^{2}}{2 a_{1}'^{2}} - \frac{a_{2}' a_{3}}{a_{1}'}\;. 
\end{align*}
Indeed, what happens here is essentially the same as the weakly nonlinear regime except that the critical value $\theta$ at which one sees a bifurcation is different. Similar as above, we also require the convergence of the initial data $u_{\epsilon^\alpha}(0, \cdot)$. The main result can be loosely stated as follows, and the precise statements are in Theorems~\ref{th:noise_symmetric} and~\ref{th:noise}. 

\begin{thm} \label{th:noise_intro}
	Assume $V: \theta \mapsto V_{\theta}(\cdot)$ is a smooth function in the space of $\cC^{8}$ functions, and exhibits a pitchfork bifurcation at the origin $(\theta,x) = (0,0)$. Let $u_{\epsilon^{\alpha}}$ solves the PDE \eqref{eq:noise_pde_intro}. 
	
	If $B = 0$, then there exist choices of $\theta$ and $h$ of the form
	\begin{align*}
	\theta = a \epsilon + b \epsilon^{2} \log \epsilon + \oO(\epsilon^{2})\;, \qquad h = \rho_1 \epsilon + \rho_2 \epsilon^{2}
	\end{align*}
	such that $u_{\epsilon^{\alpha}}$ converges to $\Phi^4_3(a_{3})$ family at scale $\alpha = 1$\footnote{In order to get convergence to $\Phi^4_3$, one needs to choose $\rho_2$ depending on the coefficient of the $\epsilon^{2}$ term in $\theta$; otherwise one will get a shifted $\Phi^4_3$ of the form
		\begin{align*}
		\partial u = \Delta u - a_{3} (u^{3} - \infty u) + \xi + C
		\end{align*}
		with an additional constant $C$. This constant can be killed by a proper choice of $\rho_{2}$.}.
	
	If $B \neq 0$, then there exist $\rho_{j}^{*} > 0$ for $j = 1, 2, 3$ such that if
	\begin{align*}
	\theta = \theta^{*} = \rho_{1}^{*} \epsilon + \rho_{2}^{*} \epsilon^{\frac{4}{3}} + \rho_{3}^{*} \epsilon^{\frac{5}{3}} + \oO(\epsilon^{\frac{16}{9}}), 
	\end{align*}
	then there exist two choices of $h_{\epsilon}$ such that one of the resulting processes $u_{\epsilon^{\alpha}}$ converges to $\Phi^3_3$ at scale $\alpha = \frac{7}{9}$, and the other 
	one converges to a stable OU process at $\alpha = \frac{2}{3}$. 
	
	If $\theta > \theta^{*}$ (resp.\ $\theta < \theta^{*}$), then there exist three (resp.\ one) choices of $h_{\epsilon}$ such that the resulting $u_{\epsilon^{\alpha}}$ converge to OU processes. In the former case, two of the OU processes are stable and the last one is unstable; in the latter case the OU 
	process is stable. 
\end{thm}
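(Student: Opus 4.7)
The plan is to substitute $\theta = \theta(\eps)$ and $h = h_\eps$ with prescribed asymptotic expansions into \eqref{eq:noise_pde_intro} and carefully track the power of $\eps$ carried by each resulting term. Writing
$V'_\theta(\eps^{(1+\alpha)/2} u + h_\eps) = \sum_k \tfrac{1}{k!}\eps^{k(1+\alpha)/2} V^{(k+1)}_\theta(h_\eps)\, u^k$
and expanding each $a_j(\theta) = a_j + a_j'\theta + \tfrac{1}{2}a_j''\theta^2 + \cdots$, the pitchfork conditions \eqref{eq:pitchfork_noise} allow one to identify exactly which monomials in $\theta$, $h$, $u$ produce divergent contributions after multiplication by $\eps^{-(1/2 + 5\alpha/2)}$. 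After incorporating the Wick counterterm $-3 a_3 C_\eps^{(1)} u$ associated to $u^3 \to \Wick{u^3}$ (with $C_\eps^{(1)} \sim \bar C / \eps$) and, in the $\Phi^4_3$ regime, the second-order sunset counterterm of size $|\log\eps|$, convergence to the correct $\Phi^p_3$ family reduces to the same abstract continuity statement for models in the regularity structures framework that is already used in the weakly nonlinear regime; the only new input is the identification of the correct parameter choice.

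For the target scale $\alpha = 1$ one has three divergences to kill (those of the constant, linear, and quadratic coefficients in $u$) but only two free parameters, namely the leading constants $a$ and $c$ in $\theta = a\eps + \cdots$ and $h_\eps = c\eps + \cdots$. Matching the coefficient of $u^2$ forces $c = -a_2' a/(3 a_3)$; finiteness of the renormalised $u$-coefficient forces $a_1' a = 3 a_3 \bar C$; and finiteness of the constant term gives a third equation which, after substituting the previous two, collapses to the single algebraic relation $B = 0$. Hence when $B = 0$ the system is consistent, the $b\eps^2\log\eps$ correction in $\theta$ absorbs the second-order log divergence, and the $\oO(\eps^2)$ tail of $\theta$ parametrises the free constant $\lambda$ in the limit equation; this proves the first claim.

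When $B \neq 0$ the system above has no solution, and one instead exploits the bifurcation geometry of the tilted effective dynamics in the $(\theta, h)$-plane. The cusp $\{V_\theta'(u) = V_\theta''(u) = 0\}$ near $(0,0,0)$, when rescaled, is reached by a unique curve $\theta^* = \rho_1^* \eps + \rho_2^* \eps^{4/3} + \rho_3^* \eps^{5/3} + \oO(\eps^{16/9})$, whose successive coefficients are obtained by enforcing consistency of the three balance equations at orders $\eps, \eps^{4/3}$ and $\eps^{5/3}$ in turn, using the implicit function theorem. At the cusp tip two choices of $h_\eps$ arise: the nondegenerate one places $u$ at a minimum with strictly positive effective mass, so that at the scale $\alpha = 2/3$ only the $u$-coefficient survives and the limit is a stable OU; the degenerate one cancels the mass identically and leaves a renormalised $\Wick{u^2}$ whose coefficient is (after computation) proportional to $B^{1/3}$, which is of order one precisely at the scale $\alpha = 7/9$, producing a member of the $\Phi^3_3$ family. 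Unfolding the cusp by varying $\theta$ around $\theta^*$ produces either three or one transverse local minima, each contributing a (stable or unstable) OU limit at scale $\alpha = 2/3$ by the same argument as in the simple bifurcation-free regime, yielding the remaining alternatives in the theorem.

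The main obstacle will be the multi-scale bookkeeping: the expansions of $\theta$ and $h_\eps$ must be tracked simultaneously to three nontrivial orders, the finite parts of the Wick constants $C_\eps^{(1)}$ and $C_\eps^{(2)}$ feed back into the determination of $\rho_1^*, \rho_2^*, \rho_3^*$, and one has to verify that all residual contributions — notably the $u^4$ term at scale $\alpha = 7/9$, the smooth remainder $F_\theta$ from \eqref{eq:uniform_regular}, and the numerous cross terms generated by subleading parts of $\theta$ and $h$ — converge to zero in the topology on models. Once this bookkeeping is carried out, the convergence of $u_{\eps^\alpha}$ to the claimed limit is an immediate consequence of the abstract continuity of the solution map already used in the weakly nonlinear regime.
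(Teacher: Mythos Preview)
Your overall strategy---lift to the abstract fixed point problem, show the coefficients $\lambda_j^{(\delta)}$ converge, invoke the continuity of the solution map already established in the weakly nonlinear regime---is exactly what the paper does in Section~\ref{sec:limits}. However, your account of the $B=0$ case contains a concrete error in the bookkeeping that misidentifies the mechanism.

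You claim that at $\alpha=1$ there are three divergences (constant, linear, quadratic) and that matching the quadratic one forces $c=-a_2'a/(3a_3)$. In fact, under the ansatz $\theta=a\eps+\cdots$, $h=c\eps+\cdots$, the coefficient $\lambda_2^{(\eps)}$ of $H_2$ is automatically finite: one finds $\lambda_2^{(\eps)}\to a_2'a+3a_3 c+6a_4 C_0$ (the last term being the Wick correction from the quartic), so $h$ merely tunes a finite constant and is \emph{not} needed to kill a divergence. Only $\lambda_1^{(\eps)}$ and $\lambda_0^{(\eps)}$ carry $\eps^{-1}$ divergences. The linear balance fixes $a_1'a+3a_3 C_0=0$, and the crucial point---stated explicitly in the introduction to the paper---is that once this is imposed, the leading part $c$ of $h$ \emph{cancels identically} from the $\eps^{-1}$ part of $\lambda_0^{(\eps)}$. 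The remaining condition on $a$ alone is $3C_0^2 B=0$. So the obstruction $B$ arises from a degeneracy (two equations, but the second becomes independent of the second unknown), not from an overdetermined $3\times 2$ system. Your formula $c=-a_2'a/(3a_3)$ is also off by the $6a_4 C_0$ Wick shift.

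For $B\neq 0$, your description via the cusp $\{V_\theta'=V_\theta''=0\}$ is heuristically right but imprecise: the balance equations that determine $\rho_j^*$ involve not the bare potential but the Wick-corrected Hermite coefficients, as is visible in the explicit formulae \eqref{eq:rho's}; $C_0$ enters $\rho_1^*,\rho_2^*,\rho_3^*$ at leading order, not merely as a finite-part correction. Finally, the remainder $F_{\theta,h}$ from \eqref{eq:uniform_regular} cannot be dispatched by ``verifying it converges to zero in the topology on models'': it is not lifted to the regularity structure at all but kept as a nonlinear function of the reconstructed solution, and one needs a separate fixed-point argument (Theorem~\ref{th:abstract_noise} in the paper) to show the map $\Phi^{(\delta)}\mapsto \delta^{-\frac{1}{2\alpha}-\frac{5}{2}}\pP\1_+\bigl(F_{\theta,h}(\delta^{\frac{1}{2\alpha}+\frac12}\widehat\rR^\delta\Phi^{(\delta)})\cdot\1\bigr)$ is a $\delta^\sigma$-small Lipschitz perturbation of the polynomial fixed point map.
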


\begin{rmk}
Similar to the weakly nonlinear case, the coefficient of the Wick term for $\Phi^3_3$ is proportional to $B^{\frac{1}{3}}$. A symmetric potential $V$ will give $B = 0$, but it is not clear whether the quantity $B$ has a probabilistic meaning as in the case of $A$ \eqref{eq:a}. Also, as explained just before \eqref{e:rescalecanonical}, one could rescale the solution leaving invariant the white noise such that all the limits are of the form \eqref{eq:Phip3} with $a=1$. 
\end{rmk}

The precise statement will be given in Theorems~\ref{th:noise_symmetric} and~\ref{th:noise}.

\subsection{Some remarks and structure of the article}

Before describing the structure of this article,
we discuss two possible natural generalisations of our results. 

\begin{enumerate}
\item We expect that analogous results still hold when the noise $\widehat{\xi}$ 
is not assumed to be Gaussian, but still satisfies good enough integrability and mixing 
conditions. The techniques developed in \cite{KPZCLT} should apply here as well. Indeed, in \cite{phi4_non_gaussian}, the authors showed convergence of the weakly nonlinear regime to $\Phi^4_3$ under symmetry assumption on both the potential and the noise. Note however that if the noise is non-symmetric, then we do not expect to see 
$\Phi^4_3$ at large scales generically, even if $V_\theta$ is symmetric. 

\item The assumption that $V_{\theta}$ is a polynomial can probably be relaxed (see \cite{KPZ_equilibrium}
for a result similar to those of \cite{HQ15} in the context of the KPZ equation). 
It is however not clear at all at this stage how the methods in this article could be carried 
over to handle this case. 
\end{enumerate}

It turns out that, as in \cite{HQ15}, the weak noise regime can be treated as a 
perturbation of the weakly nonlinear regime, so we will mainly focus on the latter case. 
The main strategy to prove the above results is the recently developed theory of regularity 
structures (\cite{Hai14a}), combined with the results of (\cite{HQ15}), where results
analogous to ours are obtained for the KPZ equation. 
The idea is to lift and solve \eqref{e:rescaled} in an abstract regularity structure space that is purposed built for this equation, and then pull the solution back to the usual distribution spaces after suitable renormalisation. 

The article is organised as follows. In Section~\ref{sec:structure}, we construct the regularity structure as well as the renormalisation maps that allow us to treat the equations of the form \eqref{e:rescaled}. 
Section~\ref{sec:sol} is devoted to construction of the solution to the abstract equation. In Section~\ref{sec:convModels}, we prove the convergence of the renormalised models. Finally, in Section~\ref{sec:limits}, we collect all the previous results to identify the limit of the renormalised solutions.

\subsection*{Acknowledgements}

{\small
MH gratefully acknowledges financial support from 
the Philip Leverhulme trust and the European Research Council.
}

\section{Construction of the regularity structure}
\label{sec:structure}

In this section, we build a regularity structure that is sufficiently rich to solve the fixed point problem for the equation
\begin{equation} \label{eq:mollified_equation}
\partial_{t} u_{\epsilon} = \Delta u_{\epsilon} - \epsilon^{-\frac{3}{2}} V_{\theta}'(\epsilon^{\frac{1}{2}} u_{\epsilon}) + \xi_{\epsilon}
\end{equation}
in the abstract space of modelled distributions. Here, $\xi_{\epsilon}$ is a mollified version of the space-time white noise $\xi$ at scale $\epsilon$, and $V_{\theta}'$ is a polynomial of degree $m$. Note that \eqref{eq:mollified_equation} corresponds to the weakly nonlinear regime with scale $\alpha = 1$, and we do not restrict $V$ to be symmetric here. Since this is the largest scale we will look at, all other situations (including the weak noise regime) will follow as a perturbation of the above equation. 

The construction of the regularity structure mainly follows the methodologies and set up in \cite{Hai14a} and \cite[Sec.~3]{HQ15}, with some slight modifications to accommodate the particular form of the equation \eqref{eq:mollified_equation}. We will refer to the precise statements in those two papers when we state a result from there without proof. More gentle introductions to regularity structures can be found in \cite{Hai14b}, \cite{Phi4review}, \cite{ICM} and \cite{HendrikNotes}.

\subsection{The (extended) regularity structure}

Recall that a regularity structure is a pair $(\tT, \gG)$, where 
$\tT = \bigoplus_{\alpha \in A} \tT_\alpha$ is a vector space
that is graded by some (bounded below, locally finite) 
set $A \subset \R$ of homogeneities, and $\gG$ is a group of linear transformations
of $\tT$ such that, for every $\Gamma \in \gG$, $\Gamma - \id$ is strictly upper triangular
with respect to the graded structure. 

For the purpose of this article, we build basis vectors $\tT$ similarly to \cite[Sec.~3.1]{HQ15}
as a collection of formal expressions built from the symbols $\1$, $\Xi$, $\{X_i\}_{i=0}^3$
and operators $\iI$ and $\eE^\beta$ for \textit{half integers} $\beta > 0$. As usual,
we assume that all symbols and sub-expressions commute and that $\1$ is neutral for the 
product, so we identify
for example $\iI(\Xi X_1) \Xi$ and $\Xi \1\iI(X_1\Xi)$.
Given a multi-index $k = (k_0,\cdots,k_3)$, we also write $X^k$ as a shorthand
for $X_0^{k_0}\cdots X_3^{k_3}$ (with the convention $X_i^0 = \1$), 
and $|k| = 2k_0 + \sum_{i=1}^3 k_i$ for its
parabolic degree. 

With these notations, we define two sets $\uU$ and $\vV$ of such expressions 
as the smallest sets such that $X^{k} \in \uU$, $\Xi \in \vV$, and such that for 
every $k \in \{1,\ldots,m-3\}$,
\begin{equation}\label{eq:graded_vector_space} 
	\begin{split}
\{\tau_{1}, \cdots, \tau_{k}\} \subset \uU  \qquad  &\Rightarrow \qquad \{\tau_1\tau_2\tau_3\;,\;\eE^{\frac{k}{2}} (\tau_{1} \cdots \tau_{k+3})\} \subset \vV\;, \\
\tau \in \vV \qquad &\Rightarrow \qquad \iI(\tau) \in \uU\;.
\end{split}
\end{equation}
We then set $\wW = \uU \cup \vV$ and we associate to each element of $\wW$ a homogeneity in the following
way. We set
\begin{equ}
|\Xi| = -\frac{5}{2} - \kappa, \qquad |X^{k}| = |k|\;, 
\end{equ}
where $\kappa$ is a small positive number to be fixed later, and we extend this to every formal 
expression in $\wW$ by
\begin{equation} \label{eq:homogeneity}
|\tau \bar{\tau}| = |\tau| + |\bar{\tau}|, \quad |\iI(\tau)| = |\tau| + 2, \quad |\eE^{\beta} (\tau)| = \beta + |\tau|\;.
\end{equation}
We then write $\tT_\alpha$ for the free vector space generated by $\{\tau \in \wW\,:\, |\tau| = \alpha\}$.
In this article, we will only ever use basis vectors with homogeneity less than $2$, 
we therefore take for $\tT$ the space of all finite linear
combinations of elements of $\wW$ of homogeneity less than $2$, i.e.\ $\tT = \bigoplus_{\alpha < 2}\tT_\alpha$.

The main reason for introducing $\eE^{\beta}$ as in \eqref{eq:graded_vector_space} rather than treating $\epsilon$ as a fixed real number is the following crucial fact. It reflects that \eqref{eq:mollified_equation} is subcritical under the scaling reflected by our regularity structure.

\begin{lem} \label{le:subcritical}
If $\kappa < \frac{1}{8m}$, then for every $\gamma > 0$, the set $\{ \tau \in \wW: |\tau| < \gamma \}$ is finite. 
\end{lem}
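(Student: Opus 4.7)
The plan is to establish a linear lower bound for $|\tau|$ in the number of $\Xi$-leaves present in $\tau$, and then use the grammar of $\wW$ to bound every other structural quantity of $\tau$. Finiteness of the set in question will then follow from boundedness of the tree describing $\tau$.

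Let $N(\tau)$ denote the number of occurrences of $\Xi$ in $\tau$. First, I would show by mutual induction on the inductive definition of $\uU$ and $\vV$ the existence of constants $c>0$ and $D_\uU, D_\vV \geq 0$, depending only on $m$ and $\kappa$, such that
\[
|\tau| \geq c\, N(\tau) - D_\uU \quad (\tau \in \uU), \qquad |\tau| \geq c\, N(\tau) - D_\vV \quad (\tau \in \vV).
\]
The base cases $X^k \in \uU$ and $\Xi \in \vV$ require $D_\uU \geq 0$ and $D_\vV \geq c + \tfrac{5}{2} + \kappa$. The inductive steps, respectively $\iI(\sigma)$, a triple product $\tau_1\tau_2\tau_3$, and $\eE^{k/2}(\tau_1\cdots\tau_{k+3})$ with $k \leq m-3$, impose $D_\uU \geq D_\vV - 2$, $D_\vV \geq 3 D_\uU$, and $D_\vV \geq (k+3) D_\uU - k/2$. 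The last of these is worst at $k = m-3$ once $D_\uU > 1/2$, and saturating $D_\uU = D_\vV - 2$ I would take
\[
D_\uU = \frac{m+1}{2(m-1)}, \qquad D_\vV = D_\uU + 2,
\]
which makes every structural constraint tight and yields the largest admissible $c = \tfrac{1}{m-1} - \kappa$, strictly positive under the hypothesis $\kappa < 1/(8m)$.

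Once this linear bound is in place, the finiteness is bookkeeping. Decomposing
\[
|\tau| = -\bigl(\tfrac{5}{2}+\kappa\bigr)\, N(\tau) + p(\tau) + 2\, n_I(\tau) + n_E(\tau),
\]
where $p(\tau)$ is the total parabolic degree of the $X$-leaves, $n_I(\tau)$ the number of $\iI$-nodes, and $n_E(\tau)$ the total $\eE^\beta$-weight, the assumption $|\tau| < \gamma$ together with the already obtained bound on $N(\tau)$ forces uniform bounds on $p(\tau)$, $n_I(\tau)$ and $n_E(\tau)$. This in turn bounds the number of $X$-leaves (each contributing $|k| \geq 1$ to $p$), the number of $\eE$-nodes (each contributing at least $1/2$ to $n_E$), and, via the handshake identity counting $\uU$-elements in the tree as $\#\{X\text{-leaves}\} + n_I$ on one hand and $3 n_M + \sum_j (k_j + 3) + O(1)$ on the other, also the number $n_M$ of multiplication nodes. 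Since every internal node has arity at most $m$, there are only finitely many $X^k$ with $|k|$ bounded, and only finitely many values of $\beta$ at $\eE$-nodes, only finitely many trees remain.

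The delicate ingredient is the induction through $\eE^{k/2}$: this operation can combine up to $m$ elements of $\uU$ but only contributes $k/2 \leq (m-3)/2$ to the homogeneity, so naively every additional child drags the homogeneity down by another $D_\uU$. The calibration of the constants above is exactly what makes the $+2$ gained from the $\iI$ that must eventually wrap any $\Xi$-containing branch (since $\Xi \notin \uU$) offset this loss, and the smallness assumption on $\kappa$ is precisely what ensures the resulting slope $c$ is strictly positive.
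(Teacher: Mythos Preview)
The paper states this lemma without proof, so there is nothing to compare against directly; your argument is essentially correct and supplies what the paper omits.

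Your inductive linear lower bound $|\tau| \geq c\,N(\tau) - D$ with $c = 1/(m-1) - \kappa$ is the right mechanism, and the calibration of $D_\uU = (m+1)/(2(m-1))$, $D_\vV = D_\uU + 2$ is carried out correctly: the $\eE^{k/2}$ constraint $(k+3)D_\uU - k/2 = 3D_\uU + k(D_\uU - \tfrac12)$ is indeed monotone in $k$ once $D_\uU > \tfrac12$, and saturating it at $k = m-3$ together with $D_\uU = D_\vV - 2$ yields exactly your values. Note that your argument actually establishes the lemma under the weaker hypothesis $\kappa < 1/(m-1)$; the paper's bound $1/(8m)$ is presumably chosen for convenience elsewhere.

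There is one small slip in the bookkeeping paragraph: the claim that each $X$-leaf contributes $|k| \geq 1$ to $p(\tau)$ fails for $\mathbf{1} = X^0$, so you cannot bound the number of $X$-leaves from $p(\tau)$ alone, and your handshake identity does not immediately control possible $\mathbf{1}$-factors. The cleanest fix bypasses the handshake count altogether. Every product node (triple or $\eE^{k/2}$) lies in $\vV$, hence in the tree it is either the root or the unique child of some $\iI$-node; thus the number of product nodes is at most $n_I + 1$. Since each product node has at most $m$ children, the total number of leaves (of any kind) is at most $m(n_I+1)+1$, already bounded. The finitely many labels for $X^k$-leaves then follow from the bound on $p(\tau)$, and the finitely many $\beta$-values at $\eE$-nodes from the bound on $n_E(\tau)$. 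With this replacement your proof is complete.
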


As in \cite{HQ15}, it will be convenient to consider $\eE^{\beta}$ as a linear map such that
$\eE^{\beta}:  \tau \mapsto \eE^{\beta} (\tau)$. 
The problem is that the product $\tau_{1} \cdots \tau_{\ell+3}$ appearing in
\eqref{eq:graded_vector_space} does in general not belong to $\tT$. 
Just as in \cite[Sec.~3.3]{HQ15}, one way to circumvent this problem is to introduce 
the extended regularity structure $\tT_{\ex}$, given by the linear span of 
\begin{align*}
\wW_{\ex} = \wW \cup \big\{\tau_{1} \cdots \tau_{m}: \tau_{j} \in \uU  \big\}. 
\end{align*}
In this way, we can view $\eE^{\beta}$ as a linear map defined on (a subspace of) $\tT_{\ex}$. 

We now start to describe the structure group $\gG$ for $\tT_\ex$. 
For this, we introduce the following three sets of formal symbols: 
\begin{equs}[e:defT+]
\fF_{1} &= \big\{ \1, X \big\}, \qquad \fF_{2} = \big\{ \sJ_{\ell}(\tau): \phantom{1} \tau \in \wW \setminus \{X^k\}, \phantom{1} |\tau| + 2 > \ell \big\}, \\
\fF_{3} &= \big\{ \sE^{\frac{k}{2}}_{\ell}(\tau_{1} \cdots \tau_{k+3}): \phantom{1} \tau_{j} \in \uU, \phantom{1} \frac{k}{2} + \sum_{j} |\tau_{j}| > |\ell| \geq \sum_{j} |\tau_{j}|\big\}\;.  
\end{equs}
We then let $\tT_{+}$ be the commutative algebra generated by the elements in $\fF_{1} \cup \fF_{2} \cup \fF_{3}$ and we define a linear map $\Delta: \tT \rightarrow \tT \otimes \tT_{+}$ 
in the same way as in \cite[Sec.~3.1]{HQ15}.

For any linear functional $g: \tT_{+} \rightarrow \R$, one obtains a linear 
map $\Gamma_{g}: \tT \rightarrow \tT$ by $\Gamma_{g} \tau = (\id \otimes g) \Delta \tau$. 
Denoting by $\gG_{+}$ the set of multiplicative linear functionals $g$ on $\tT_+$, we then set 
\begin{align*}
\gG_{+} = \big\{ g \in \tT_{+}^{*}: \phantom{1} g(\tau \bar{\tau}) = g(\tau) g(\bar{\tau}), \phantom{1} \forall \tau, \bar{\tau} \in \tT_{+} \big\}\;, 
\end{align*}
and we define $\gG$ by
\begin{equation} \label{eq:structure_group}
\gG = \big\{ \Gamma_{g}: g \in \gG_{+}  \big\}\;. 
\end{equation}
It is straightforward to verify that $\gG$ has the desired properties, including the fact
that its elements respect the product structure of $\tT$ in the sense that 
$\Gamma(\tau \bar \tau) = \Gamma \tau \cdot \Gamma \bar \tau$.
Furthermore, $\gG$ preserves not only $\tT_\ex$, but also $\tT$, so that it also serves as
the structure group for $\tT$.

\subsection{Admissible models} \label{sec:admissible_model}

We now start to introduce a class of admissible models 
for our regularity structure. 
As in \cite{Hai14a}, we fix a truncation $K$ of the heat kernel which coincides with it near the 
origin and annihilates polynomials of degree up to $3$. 
The existence of such a kernel $K$ is easy to show, and can be found, for example, in 
\cite[Sec.~5.1]{Hai14a}. 

We equip $\R^{1+3}$ the parabolic metric so that
\begin{equation} \label{eq:parabolic_metric}
|z| = |(t,x)| = |t|^{\frac{1}{2}} + \sum_{j=1}^{3} |x_{j}|.  
\end{equation}
We let $\dD'$ denote the space of Schwartz distributions on $\R^{1+3}$ and $\lL(\tT, \dD')$ the space of linear maps from $\tT$ to $\dD'$. Furthermore, for any test function $\varphi: \R^{1+3} \rightarrow \R$, any $z \in \R^{1+3}$ and $\lambda \in \R^{+}$, we use $\varphi_z^{\lambda}$ to denote
$\varphi_{z}^{\lambda}(z') = \lambda^{-5} \varphi \big( (t'-t){\lambda^{-2}}, (x'-x){\lambda^{-1}}  \big)$. 
We also write $\bB$ for the set of smooth functions $\varphi\colon \R^4 \to \R$ 
that are compactly supported in 
$\{ |z| \leq 1 \}$ whose derivatives up to order three (including the value of the function) 
are uniformly bounded by $1$. 

Recall that a model for $(\tT, \gG)$ consists of a pair $(\Pi, F)$ of functions
\begin{equs}[2]
\Pi: \R^{1+3} &\rightarrow \lL(\tT, \dD') \qquad&\qquad F: \R^{1+3} &\rightarrow \gG \\
z &\mapsto \Pi_{z} \qquad & \qquad z &\mapsto F_{z}
\end{equs}
satisfying the identity
\begin{equation} \label{eq:model_algebraic}
\Pi_{z} F_{z}^{-1} = \Pi_{\bar{z}} F_{\bar{z}}^{-1}, \qquad \forall z, \bar{z},  
\end{equation}
as well as the bounds
\begin{equation} \label{eq:model_analytic}
|(\Pi_{z} \tau)(\varphi_{z}^{\lambda})| \lesssim \lambda^{|\tau|}, \qquad | \Gamma_{z,\bar{z}} \tau |_{\sigma} \lesssim |z-\bar{z}|^{|\tau| - |\sigma|}
\end{equation}
uniformly over all $\varphi \in \bB$, all space-time points $z, \bar{z}$ in compact domains and every $\tau \in \wW$, where we used the shorthand
$\Gamma_{z,\bar{z}} = F_{z}^{-1} \circ F_{\bar{z}}$, and the proportionality constant depends on the compact domain $\fK$. We will write $f_{z}$ for the element in $\gG_{+}$ such that $F_{z} = \Gamma_{f_{z}}$. We will give explicit expressions for $f_{z}$, and will write the notation $(\Pi,f)$ for a model frequently. We also write $|\tau|_\sigma$ for the norm
of the component of $\tau$ in $\tT_\sigma$ (the precise choice of norm does not matter since these
spaces are all finite-dimensional). We define the norm of a model $\fM = (\Pi, f)$ to be the smallest constant that makes both bounds in \eqref{eq:model_analytic} to hold, and denote it by $\$ \fM \$_{\fK}$. Since in most of the situations, $F$ is completely determined by $\Pi$, we sometimes also write $\$ \Pi \$$ instead of $\$ \fM \$$, and we omit the domain $\fK$ wherever no confusion may arise. With these notations, we can define what we mean by an \textit{admissible} model.

\begin{defn}
A model $(\Pi, f)$ is admissible if for every multi-index $k$, one has
\begin{equation} \label{eq:model_polynomial}
(\Pi_{z} X^{k})(\bar{z}) = (\bar{z} - z)^{k}, \qquad f_{z}(X^{k}) = (-z)^{k}
\end{equation}
and for every $\tau \in \wW$ with $\iI(\tau) \in \tT$, one has
\begin{equation} \label{eq:model_integration}
\begin{split}
f_{z}(\sJ_{\ell} \tau) &= - \int D^{\ell}K(z - \bar{z}) (\Pi_{z} \tau) (d \bar{z}), \qquad |\ell| < |\tau| + 2 \\
\Pi_{z} \iI(\tau)(\bar{z}) &= (K * \Pi_{z} \tau)(\bar{z}) + \sum_\ell \frac{(\bar{z} - z)^{\ell}}{\ell !} \cdot f_{z}(\sJ_{\ell} \tau).  
\end{split}
\end{equation}
Here, we set $\sJ_{\ell} \tau = 0$ if $|\ell| \geq |\tau| + 2$, so the sum is always finite. 
\end{defn}

See \cite[Rem.~5.10]{Hai14a} for the correct way of interpreting these identities in case $\Pi_z$ 
contains distributions that are not functions.

\subsection{Canonical lift to $\tT_{\ex}$} \label{sec:canonical_lift}

Given any \textit{smooth} space-time function $\widehat{\xi}$ and any real number $\epsilon$, there is a canonical way to build an admissible model $\sL_{\epsilon}(\widehat{\xi}) = (\Pi^{\epsilon}, f^{\epsilon})$ for the regularity structure $(\tT_{\ex}, \gG)$ as follows. We first set
\begin{align*}
(\Pi^{\epsilon}_{z} \Xi)(\bar{z}) = \widehat{\xi}(\bar{z}), 
\end{align*}
independent of $\epsilon$ and the base point $z$. We then define $\Pi^{\epsilon}_{z} \tau$ recursively for other $\tau \in \wW$ by \eqref{eq:model_integration} as well as the identities
\begin{equation} \label{eq:canonical_multiplication}
(\Pi^{\epsilon}_{z} \tau \bar{\tau})(\bar{z}) = (\Pi^{\epsilon}_{z} \tau)(\bar{z}) \cdot (\Pi^{\epsilon}_{z} \bar{\tau})(\bar{z})
\end{equation}
and
\begin{equation} \label{eq:model_epsilon}
\begin{split}
f^{\epsilon}_{z}(\sE^{\beta}_{\ell}\tau) &= - \epsilon^{\beta} \big( D^{\ell} (\Pi^{\epsilon}_{z} \tau) \big)(z), \\
(\Pi_{z} \eE^{\beta} \tau) (\bar{z}) &= \epsilon^{\beta} (\Pi^{\epsilon}_{z} \tau)(\bar{z}) + \sum_{\ell} \frac{(\bar{z} - z)^{\ell}}{\ell !} \cdot f^{\epsilon}_{z} (\sE^{\beta}_{\ell} \tau). 
\end{split}
\end{equation}
Here, we again adopt the convention $\sE^{\beta}_{\ell} (\tau) = 0$ if $|\ell| \geq \beta  + |\tau|$. This 
construction makes sense only when $\Pi_{z} \tau$ is sufficiently regular, and this is indeed 
the case if $\widehat{\xi}$ is smooth. 
We then have the following fact, the proof of which can be found in \cite[Sec.~3.6]{HQ15}.

\begin{prop}
Let $\widehat{\xi}$ be a smooth space-time function, and $\epsilon \geq 0$. Then, the canonical model $\sL_{\epsilon}(\widehat{\xi}) = (\Pi^{\epsilon}, f^{\epsilon})$ defined via the identities \eqref{eq:model_polynomial} -- \eqref{eq:model_epsilon} is an admissible model. 
\end{prop}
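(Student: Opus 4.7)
The plan is to verify the three defining properties of an admissible model: the explicit identities \eqref{eq:model_polynomial}--\eqref{eq:model_integration}, the algebraic compatibility \eqref{eq:model_algebraic}, and the analytic bounds \eqref{eq:model_analytic}. The first comes for free, since \eqref{eq:model_polynomial} and \eqref{eq:model_integration} appear directly as defining relations in the recursive construction of $(\Pi^\eps,f^\eps)$. The remaining two are proved jointly by induction on the number of operational symbols ($\iI$, $\eE^\beta$, and multiplication) appearing in $\tau\in\wW_\ex$. The base cases $\tau\in\{\1,X^k,\Xi\}$ are trivial, and since $\widehat\xi$ is smooth, the induction preserves continuity of $\Pi^\eps_z\tau$ as a function of its argument, so all pointwise evaluations appearing in \eqref{eq:model_integration} and \eqref{eq:model_epsilon} make sense.

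For the algebraic identity $\Pi^\eps_z F_z^{-1}=\Pi^\eps_{\bar z}F_{\bar z}^{-1}$, the argument follows \cite[Sec.~8]{Hai14a} closely, with the extension to $\eE^\beta$ handled as in \cite{HQ15}. The coproduct $\Delta$ on $\iI(\tau)$ and on $\eE^\beta(\tau_1\cdots\tau_{k+3})$ is arranged precisely so that the polynomial tails in \eqref{eq:model_integration} and \eqref{eq:model_epsilon} coincide with the Taylor polynomials at $z$ of $K*\Pi^\eps_z\tau$ and of $\eps^\beta\Pi^\eps_z(\tau_1\cdots\tau_{k+3})$, respectively, truncated at the relevant order. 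Once this is recognised, the algebraic identity on these symbols reduces to matching two Taylor expansions of the same smooth function around two different base points, which is automatic. Multiplicativity \eqref{eq:canonical_multiplication}, together with the fact that elements of $\gG$ respect the product structure, closes the induction on products.

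The analytic bounds are established in parallel. The pointwise estimate $|(\Pi^\eps_z\tau)(\bar z)|\lesssim|\bar z-z|^{|\tau|}$ is propagated by multiplication using $|\tau\bar\tau|=|\tau|+|\bar\tau|$, and immediately implies the test-function bound $|(\Pi^\eps_z\tau)(\varphi_z^\lambda)|\lesssim\lambda^{|\tau|}$. For $\iI(\tau)$, the counterterms $\sum_\ell (\bar z-z)^\ell f^\eps_z(\sJ_\ell\tau)/\ell!$ subtract the Taylor polynomial at $z$ of $K*\Pi^\eps_z\tau$ up to order $\lfloor|\tau|+2\rfloor$, and the standard kernel-order argument of \cite[Prop.~3.31]{Hai14a} yields the scaling $\lambda^{|\tau|+2}$. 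For $\eE^\beta(\tau_1\cdots\tau_{k+3})$, the analogous subtraction in \eqref{eq:model_epsilon} removes the Taylor polynomial at $z$ of $\Pi^\eps_z(\tau_1\cdots\tau_{k+3})$ up to the order prescribed by $|\ell|<\beta+\sum_j|\tau_j|$; since that function is smooth by the inductive hypothesis, the remainder scales as $\lambda^{\beta+\sum_j|\tau_j|}$, with constants locally uniform on any compact $\fK$. The companion bound on $\Gamma_{z,\bar z}$ then follows from the algebraic identity in the usual way.

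The main point beyond the arguments of \cite{Hai14a,HQ15} is the treatment of $\eE^\beta$, which mirrors that of $\iI$ almost verbatim: the $\eps^\beta$ prefactor plays no qualitative role and enters only the implicit constants. The principal obstacle is essentially combinatorial, namely the bookkeeping of the extended set $\wW_\ex$, whose elements may contain products of arbitrarily many $\uU$-symbols inside $\eE^\beta$; however, Lemma~\ref{le:subcritical} guarantees that at each bounded homogeneity only finitely many basis vectors appear, so the induction closes in finitely many steps at each level.
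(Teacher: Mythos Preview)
Your sketch is correct and follows precisely the approach of \cite{Hai14a,HQ15} that the paper cites in lieu of a proof: the recursive construction guarantees the admissibility identities, the algebraic compatibility is handled by induction via the coproduct structure of $\Delta$ (with $\eE^\beta$ treated in parallel to $\iI$), and the analytic bounds follow because for smooth $\widehat\xi$ every $\Pi^\eps_z\tau$ is a continuous function, so the pointwise estimates $|(\Pi^\eps_z\tau)(\bar z)|\lesssim |\bar z-z|^{|\tau|}$ (trivial when $|\tau|<0$, a genuine Taylor remainder bound when $|\tau|>0$) propagate through products and through the subtractions in \eqref{eq:model_integration}, \eqref{eq:model_epsilon}. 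The paper itself gives no further details beyond the reference, so there is nothing to compare against.
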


Later on, we will consider the situation where $\widehat{\xi} = \xi_{\epsilon}$, a regularised version of the space-time white noise $\xi$, so we are led to the canonical model $\sL_{\epsilon}(\xi_{\epsilon})$. However, it is important to note that at this stage nothing forces the values of the two $\epsilon$'s 
to be identical: it is perfectly legitimate to consider the model $\sL_{\epsilon} (\xi_{\delta})$ for any pair of $(\epsilon, \delta)$.

Also, one would like the linear map $\eE^{\beta}$ to represent the multiplication by $\epsilon^{\beta}$. This is however not quite true in view of \eqref{eq:model_epsilon}, and it suggests that we should introduce a new map $\heE^{\beta}$ on the $\dD^{\gamma}$ space of modelled distributions (see Sec.$3$ in \cite{Hai14a} for a definition) by
\begin{equation} \label{eq:multiplication_epsilon}
(\heE^{\beta} U)(z) = \eE^{\beta} U(z) - \sum_{\ell} \frac{X^{\ell}}{\ell !} f_{z} \big( \sE^{\beta}_{\ell}(U(z)) \big). 
\end{equation}
Then, as long as the model is admissible and satisfies \eqref{eq:model_epsilon}, the map $\heE^{\beta}$ does indeed represent multiplication by $\epsilon^{\beta}$ in the sense
that $\rR \hat \eE^\beta U = \eps^\beta \rR U$ for $\rR$ the reconstruction operator.

\subsection{Renormalisation} \label{sec:renormalisation}

The aim of this section is to build a group $\fR$ of transformations that we 
can use to ``renormalise'' our models. It is crucial for our purpose that such a 
renormalisation procedure satisfies the following three properties: 
\begin{enumerate}
\item $\fR$ acts on the space $\sM$ of \textit{admissible} models.
\item $\fR$ is sufficiently rich so that one can find elements $M_\eps \in \fR$
such that $M_\eps \sL_\eps(\xi_\eps)$ converges to a limit in $\sM$, where
$\sL_\eps$ denotes the ``canonical lift'' of the regularised noise $\xi_\eps$.
\item Solving the fixed point problem \eqref{eq:abstract_eq} for a model
of the type $M \sL_\eps(\eta)$ for a smooth space-time function $\eta$ and $M \in \fR$
leads to the solution of a modified PDE. 
\end{enumerate}

The transformations $M \in \fR$ we consider here will be composed
by two linear maps $M_{0}$ and $\wM$ on $\tT_\ex$.
The map $\wM$ encodes ``Wick renormalisation'', while $M_{0}$ has 
the interpretation as mass 
renormalisation in the quantum field theory. 
From now on, we will use the shorthand $\Psi = \iI(\Xi)$. We start with the standard Wick renormalisation map $\wM$ on $\tT_{\ex}$. Define the generator $\wL$ by
\begin{align*}
\wL \Xi = \wL X^{k} = 0, \qquad \wL \Psi^{k} = \begin{pmatrix} k \\ 2 \end{pmatrix} \Psi^{k-2}, 
\end{align*}
and extend this to the whole of $\tT_{\ex}$ by
\begin{align*}
\wL (\tau \iI(\bar \tau)) = \wL(\tau) \iI(\bar \tau) + \tau \iI (\wL \bar \tau)\;, 
\end{align*}
for $\bar \tau \neq \Xi$, as well as
\begin{align*}
\wL \iI(\tau) = \iI (\wL \tau), \phantom{11} \wL (\eE^{\beta} \tau) = \eE^\beta(\wL \tau), \phantom{11} \wL(X^{k}\tau) = X^{k}\wL\tau\;. 
\end{align*}
The map $\wM: \tT_{\ex} \rightarrow \tT_{\ex}$ is then defined by
\begin{equation} \label{eq:Wick}
\wM = \exp (-C_{1} \wL)\;. 
\end{equation}
The definition of $\wL$ ensures that $\wM$ commutes with $X^{k}$ as well as with the abstract integration maps $\iI$ and $\eE^\beta$. $\wM$ has the interpretation as Wick renormalisation in the sense that
\begin{equation} \label{eq:Wick_Hermite}
\wM \Psi^{k} = C_{1}^{\frac{k}{2}} H_{k}(\Psi/\sqrt{C_{1}}) =: H_{k}(\Psi; C_{1})\;, 
\end{equation}
where $H_{k}(\cdot)$ is the $k$-th Hermite polynomial whose leading order coefficient is normalised to $1$. For example, we have
\begin{align*}
H_{1}(\Psi; C_1) = \Psi, \qquad H_{2}(\Psi;C_1) = \Psi^2 - C_1, \qquad H_{3}(\Psi;C_1) = \Psi^3 - 3C_1 \Psi. 
\end{align*}
Note that although we will always consider the case where $C_{1} \geq 0$, the above expression $H_{k}(\Psi; C_{1})$ actually does not require $C_{1}$ to be positive. 

We now describe the effect of $\wM$ on the canonical model $(\Pi, f)$. Following 
\cite[Sec.~8.1]{Hai14a} and \cite[Sec.~5.2]{HQ15}, for the map $\wM$ defined above, there is a unique pair of linear maps
\begin{align*}
\wDelta: \tT_{\ex} \rightarrow \tT_{\ex} \otimes \tT_{+}, \qquad \widehat{M}^{\Wck}: \tT_{+} \rightarrow \tT_{+}
\end{align*}
satisfying
\begin{equation} \label{eq:Wick_associated_maps}
\begin{split}
\widehat{M}^{\Wck} \sJ_{\ell} &= \mM (\sJ_{\ell} \otimes \id) \wDelta, \\
\widehat{M}^{\Wck} \sE^\beta_{\ell} &= \mM (\sE^\beta_{\ell} \otimes \id) \wDelta, \\
(\id \otimes \mM)(\Delta \otimes \id) \wDelta &= (\wM \otimes \widehat{M}^{\Wck}) \Delta, \\
\widehat{M}^{\Wck}(\tau_{1} \tau_{2}) &= (\widehat{M}^{\Wck} \tau_{1}) (\widehat{M}^{\Wck} \tau_{2}), \qquad \widehat{M}^{\Wck} X^{k} = X^{k}, 
\end{split}
\end{equation}
where $\mM: \tT_{+} \rightarrow \tT_{+}$ denotes the multiplication in the Hopf algebra $\tT_{+}$. 
As in \cite[Sec.~5.2]{HQ15}, one can verify that both $\widehat{M}^\Wck$ and $\wDelta$
have the relevant triangular structure, so that if, given an admissible 
model $(\Pi, f)$, we define $(\wPi, \wf)$ by
\begin{equation} \label{eq:Wick_model}
\wPi_{z} \tau = (\Pi_{z} \otimes f_{z}) \wDelta \tau, \qquad \wf_{z}(\sigma) = f_{z} (\widehat{M}^{\Wck} \sigma)\;,
\end{equation}
then $(\wPi, \wf)$ is again an admissible model. Furthermore, 
as a consequence of the second identity in \eqref{eq:Wick_associated_maps} 
and the fact that $M^\Wck$ commutes with $\eE^\beta$, 
if $(\Pi, f)$ satisfies \eqref{eq:model_epsilon}  for some $\eps$,
then so does $(\wPi, \wf)$. 

We now turn to describing the map $M_{0}$. For $n \geq 2$, we define 
linear maps $L_{n}$ and $L_{n}'$ on $\tT_\ex$ by setting
\begin{align*}
L_{n}: \qquad &\eE^{\frac{n}{2}-1} \big(\Psi^{n} \iI(\eE^{\frac{n}{2}-1} \Psi^{n}) \big) \mapsto n! \cdot \1, \\
&\eE^{\frac{n}{2}-1} \big( \Psi^{n} \iI(\eE^{\frac{n}{2}-1} \Psi^{n+1}) \big) \mapsto (n+1)! \cdot \Psi, \\
&\eE^{\frac{n}{2}-\frac{1}{2}} \big( \Psi^{n} \iI(\eE^{\frac{n}{2}-\frac{3}{2}} \Psi^{n}) \big) \mapsto n! \cdot \1, \quad n \geq 3, \\
&\eE^{\frac{n}{2}-\frac{1}{2}} \big( \Psi^{n+1} \iI( \eE^{\frac{n}{2}-\frac{3}{2}}\Psi^{n} ) \big) \mapsto (n+1)! \cdot \Psi, \quad n \geq 3, \\[.5em]
L_{n}': \qquad &\eE^{\frac{n}{2}-1} \big( \Psi^{n} \iI(\eE^{\frac{n}{2} - \frac{3}{2}} \Psi^{n}) \big) \mapsto n! \cdot 1, \quad n \geq 3, 
\end{align*}
(we use the convention $\eE^0 = \id$) and $L_{n} \tau = 0$, $L_{n}' \tau = 0$ for any other basis vector $\tau \in \wW$. 
Given these maps, we then consider maps on $\tT_\ex$ of the form
\begin{align*}
M_{0} := \exp \bigg(- \sum_{n \geq 2} C_{n} L_{n} - \sum_{n \geq 2} C_{n}' L_{n}' \bigg). 
\end{align*}
As we will see in \eqref{eq:renormalised_equation}, at the level of abstract equation, $M_0$ has the simple effect of adding a linear term to the right hand side of the equation. 
Actually, $M_{0}$ is equivalently given by
\begin{align*}
M_{0} = \id - \sum_{n \geq 2} C_{n} L_{n} - \sum_{n \geq 2} C_{n}' L_{n}'\;.
\end{align*}
Furthermore, it commutes with $\gG$ in the sense that $M_{0} \Gamma \tau = \Gamma M_{0} \tau$ for any $\tau \in \tT$ and $\Gamma \in \gG$. As a consequence, given an admissible model 
$(\bar{\Pi}, \bar{f})$, if we set 
\begin{equation} \label{eq:mass_model}
\bar{\Pi}_{z}^{M_{0}} \tau := \bar{\Pi}_{z} M_{0} \tau, \qquad \bar{f}_{z}^{M_{0}} \sigma = \bar{f}_{z}(\sigma)\;,
\end{equation}
then $(\bar{\Pi}^{M_{0}}, \bar{f}^{M_{0}})$ is also an admissible model.
Given $M = (M_{0},\wM)$ with $M_0$ and $\wM$ as above, we then define
the renormalised model $(\Pi^{M}, f^{M})$ by
\begin{equation} \label{eq:renormalised_model}
\Pi_{z}^{M} \tau = (\Pi_{z} \otimes f_{z}) \wDelta (M_{0}\tau), \qquad f_{z}^{M}(\sigma) = \wf_{z}(\widehat{M}^{\Wck} \sigma). 
\end{equation}

\begin{rmk}
	Note that although in many cases one has $(\Pi_{z}^{M} \tau)(z) = (\Pi_{z} M \tau)(z)$, this is in general not true. For example, for $\tau = \eE \Psi^{4}$, we have $(\Pi_{z}^{M} \tau)(z) = \epsilon \big(\widehat{\xi}^{4}(z) - 6C_{1} \widehat{\xi}^{2}(z) + 3C_{1}^{2}\big)$, while $(\Pi_{z} M \tau)(z) = \epsilon \big(\widehat{\xi}^{4}(z) - 6C_{1} \widehat{\xi}^{2}(z) \big)$. 
\end{rmk}

\section{Abstract fixed point problem}
\label{sec:sol}

In this section, we translate \eqref{e:rescaled} into a fixed point problem in a suitable
space of modelled distributions. It is natural to consider the fixed point problem
\begin{equation} \label{eq:abstract_eq}
\Phi = \pP \1_{+} \bigg(  \Xi - \sum_{j=4}^{m}  \lambda_{j} \qQ_{\leq 0} \widehat{\eE}^{\frac{j-3}{2}} \big(\qQ_{\leq 0} (\Phi^{j}) \big) - \sum_{j=0}^{3} \lambda_{j} \qQ_{\leq 0} (\Phi^{j}) \bigg) + \widehat{P} u_{0}, 
\end{equation}
where $\qQ_{\leq \alpha}$ denote the projection onto the subspace $\bigoplus_{\beta \leq \alpha} \tT_{\beta}$ in $\tT_{\text{ex}}$, $\widehat{P} u_{0}$ is the canonical lift of the solution to the deterministic heat equation with initial data $u_{0}$ to the regularity structure, and $\pP$ denotes the operator given by
\begin{align*}
\pP = \kK + \widehat{R} \rR, 
\end{align*}
where $\kK$ is the abstract integration operator defined from the truncated heat kernel $K$ as in \cite[Sec.~4]{Hai14a}, $\rR$ is the reconstruction operator, and $\widehat{R} u$ is the Taylor expansion of the smooth function $(P-K) * u$ up to order $\gamma$. 

To solve such a fixed point problem, at first glance, it seems that one can simply follow the procedure 
in  \cite[Sec.~7]{Hai14a} to obtain a unique solution to \eqref{eq:abstract_eq} in a space
$\dD^{\gamma,\eta}$ as in \cite[Sec.~6]{Hai14a} for 
suitable $\gamma$ and $\eta$. Unfortunately, as in \cite[Sec.~4]{HQ15}, 
this argument only works for sufficiently regular initial data (it needs
to be ``almost continuous'' for large values of $m$). 
Since the dynamical $\Phi^4_3$ model only has regularity $\cC^{\eta}$ for $\eta < -\frac{1}{2}$, 
this would prevent us from using a continuation argument to control the 
convergence of our models on any fixed time interval. In addition, such a continuation argument also requires one to be able to evaluate the reconstructed solution $\rR \Phi$ in a suitable space of distributions at any fixed time. However, as one can easily see, the solution to \eqref{eq:abstract_eq} contains the term $\Psi = \iI(\Xi)$ which has negative homogeneity, and a priori there is no clear way to give meaning to $\rR \Psi$ at any fixed time $t$. The second issue is not a serious problem here since, for the natural model constructed from space-time white noise, 
$\rR \Psi$ can indeed be regarded as a continuous function (in time) in a suitable space of distributions.
(See for example \cite{Hai14a,EJS13}.)

To resolve the first issue, we introduce $\eps$-dependent norms to enforce suitable control on both our admissible models and the initial condition as $\epsilon \rightarrow 0$. In a way, this allows us to ``trade'' the singularities near $t = 0$ and at small scales for powers of $\eps$. 

In what comes below, we will mainly follow \cite[Sec.~4]{HQ15} to build such weighted spaces. It turns out that the algebraic structure of these spaces are essentially the same as those in introduced in \cite{HQ15}, except that the values of $\gamma$ and $\eta$ are different. We will therefore mostly give statements and refer to \cite{HQ15} for detailed proofs.

\subsection{The $\epsilon$-dependent spaces and models}

Below, we use $\varphi$ to denote a space-time test function belonging to $\bB$, 
$\phi$ to denote such a test function that furthermore integrates to $0$, 
and $\psi$ to denote a test function that annihilates affine functions of the spatial variables. 

Recall that our definition of an admissible model in the previous section does not specify any 
relationship between its actions on $\tau$ and $\eE^{\beta}(\tau)$. In order to formulate the cancellation 
of the singularity in time by the small parameter $\epsilon$ in the limiting process 
$\epsilon \rightarrow 0$, we introduce the space of models $\sM_{\epsilon}$ which consists of 
all \textit{admissible models} $(\Pi, f)$ with the further restriction that
\begin{equs}[2] \label{eq:various_model_bounds}
\big|f_{z}(\sE^{\beta}_{\ell}(\tau)) \big| &\lesssim \epsilon^{\beta - |\ell| + |\tau|}, \quad&\quad \tau &\in \wW\;, \\
\big|\langle \Pi_{z}\tau, \psi_{z}^{\lambda} \rangle \big| &\lesssim \lambda^{\zeta} \cdot \epsilon^{|\tau|-\zeta}, \quad&\quad \tau &\in \uU, \phantom{1} \zeta = \frac{6}{5}\;.
\end{equs}
Here, all the bounds are to hold uniformly over all space-time points $z$ in compact sets, all $\lambda \in (0, \epsilon)$ 
and all test functions $\psi \in \bB$ that annihilate affine functions. We also require that, for 
some sufficiently large $\eta < -\frac{1}{2}$ (to be fixed below),
\begin{align*}
\sup_{t \in [0,1]} \| \Pi_0 \Psi (t, \cdot) \|_{\cC^{\eta}} < +\infty. 
\end{align*}
We will verify later in Section 5 that the models considered 
in this article do indeed belong to $\sM_{\epsilon}$ with uniform controls as $\eps \to 0$. 

We let $\|\Pi\|_{\epsilon}$ denote the smallest proportionality constant for both bounds in \eqref{eq:various_model_bounds}, and define a ``norm'' on $\sM_{\epsilon}$ by
\begin{align*}
\$ \Pi \$_{\epsilon} := \$ \Pi \$ + \|\Pi\|_{\epsilon} +  \sup_{t \in [0,1]} \| \Pi_0 \Psi (t, \cdot) \|_{\cC^{\eta}}, 
\end{align*}
where $\$ \Pi \$$ is the usual ``norm'' 
on admissible models introduced in Section~\ref{sec:admissible_model}. Again, these norms all depend on the compact set $\fK$ where the supremum of $z$ is taken over, which we have omitted for notational simplicity. 

\begin{rmk}
This is of course an abuse of notation since $\$ \Pi \$$ and $\|\Pi\|_{\epsilon}$ both depend
not only on $\Pi$ but also on $F$, and $F$ can in general not be recovered uniquely
from $\Pi$ and the knowledge that the model is admissible (unlike in the situations
considered in \cite{Hai14a}). We chose to nevertheless keep this notation for the sake of conciseness. Also, the norm $\$ \Pi \$_{\epsilon}$ depends not only on $\epsilon$ but also on $\eta$. Since we will fix the value $\eta<-\frac{1}{2}$ below, we omit $\eta$ in the notation. 
\end{rmk}

We compare two models in $\sM_{\epsilon}$ by
\begin{align*}
\$  \Pi; \bar{\Pi} \$_{\epsilon} = \$  \Pi; \bar{\Pi} \$ + \| \Pi - \bar{\Pi} \|_{\epsilon} + \sup_{t \in [0,1]} \| \Pi_0 \Psi(t, \cdot) - \bar\Pi_0 \Psi(t, \cdot) \|_{\cC^{\eta}}. 
\end{align*}
We also denote by $\sM_{0}$ the class of admissible models such that $f_{z}(\sE^{\beta}_{\ell}(\tau)) \equiv 0$. It is natural to compare two elements $(\Pi^{(\epsilon)}, \Gamma^{(\epsilon)}) \in \sM_{\epsilon}$ and $(\Pi, \Gamma) \in \sM_{0}$ by
\begin{align*}
\$ \Pi^{(\epsilon)}; \Pi \$_{\epsilon;0} = \$  \Pi^{(\epsilon)}; \Pi \$ + \| \Pi^{(\epsilon)} \|_{\epsilon} + \sup_{t \in [0,1]} \| \Pi_0^{(\epsilon)}\Psi(t, \cdot) - \Pi_0 \Psi(t, \cdot) \|_{\cC^{\eta}}. 
\end{align*}
Note that $\sM_{\epsilon}$ and $\sM_{\epsilon'}$ consists of exactly the same class of models for each $\epsilon, \epsilon' > 0$, but with different scales on their norms. The point here is that we will consider models with $\$ \Pi^{(\epsilon)}; \Pi \$_{\epsilon;0} \rightarrow 0$ for some limiting model $\Pi$. We first give a useful lemma.

\begin{lem} \label{le:model_sandwich}
There exists $C>0$ such that, for $\Pi \in \sM_{\epsilon}$ and $\tau \in \uU$, we have
\begin{equation} \label{eq:implied_model_bounds}
|\langle  \Pi_{z}\tau, \varphi_{z}^{\lambda} \rangle| < C \| \Pi \|_{\epsilon} \epsilon^{|\tau|} \;,\qquad
|\langle  \Pi_{z}\tau, \phi_{z}^{\lambda} \rangle| < C \| \Pi \|_{\epsilon} \lambda \epsilon^{|\tau|-1}\;, 
\end{equation}
uniformly over all $\lambda < \epsilon < 1$, all space-time points $z$ in compact sets and all test functions $\varphi, \phi \in \bB$ with the further restriction that $\phi$ integrates to $0$. 
\end{lem}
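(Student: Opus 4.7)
\medskip

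\noindent\textbf{Proof plan.} The plan is to trade the (potentially divergent) admissibility bound $\lambda^{|\tau|}$ for a bound depending only on the mollification scale $\eps$, by exploiting the refined estimate built into $\sM_\eps$ for test functions annihilating affine polynomials. I will prove the two inequalities together, with the second serving as the main building block.

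\smallskip

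\noindent\emph{Proof of the second bound.} For $\phi \in \bB$ with $\int \phi = 0$, decompose
\begin{equation*}
\phi_z^\lambda \;=\; \sigma_z^\eps \,+\, \rR_\lambda, \qquad
\sigma_z^\eps \;=\; \frac{\lambda}{\eps}\sum_{i=1}^{3} B_i\,(\rho_i^x)_z^\eps + \frac{\lambda^2}{\eps^2} B_0\,(\rho^t)_z^\eps,
\end{equation*}
where $B_i = \int y_i \phi(y)\,dy$, $B_0 = \int y_0 \phi(y)\,dy$ are bounded uniformly over $\phi \in \bB$, and $\rho_i^x, \rho^t$ are fixed bumps in $\bB$ with dual moment properties $\int \rho_i^x = 0$, $\int y_j \rho_i^x = \delta_{ij}$, $\int y_0 \rho_i^x = 0$, and $\int \rho^t = 0$, $\int y_j \rho^t = 0$, $\int y_0 \rho^t = 1$. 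Because $\int \phi = 0$, no zeroth-order moment correction appears, so the coefficients of $\sigma_z^\eps$ are $O(\lambda/\eps)$ and $O((\lambda/\eps)^2)$; applying the bare admissibility bound \eqref{eq:model_analytic} to each scale-$\eps$ bump yields the contribution $\lesssim \$\Pi\$\,\lambda\,\eps^{|\tau|-1}$. By construction $\rR_\lambda$ annihilates every affine function of $z'-z$; splitting it once more into a genuine scale-$\lambda$ affine-annihilating piece (for which the $\sM_\eps$ bound yields $\|\Pi\|_\eps\lambda^{6/5}\eps^{|\tau|-6/5} = \|\Pi\|_\eps\lambda(\lambda/\eps)^{1/5}\eps^{|\tau|-1} \leq \|\Pi\|_\eps\lambda\,\eps^{|\tau|-1}$) plus a scale-$\eps$ correction with small coefficients controlled by admissibility, we obtain the desired estimate.

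\smallskip

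\noindent\emph{Proof of the first bound.} For $\varphi \in \bB$, use the scaling-derivative identity
\begin{equation*}
\varphi_z^\lambda \;=\; \varphi_z^\eps \,-\, \int_\lambda^\eps \mu^{-1}\,\tilde\varphi_z^\mu\,d\mu,
\qquad
\tilde\varphi(y) := -5\varphi(y) - 2y_0 \partial_0\varphi(y) - \sum_{i=1}^{3} y_i \partial_i\varphi(y).
\end{equation*}
A direct integration-by-parts computation gives $\int \tilde\varphi(y)\,dy = (-5 + 2 + 3)\int\varphi = 0$, so each $\tilde\varphi_z^\mu$ is of the form treated in the second bound, with $\tilde\varphi \in \bB$ up to a fixed multiplicative constant. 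Combining the admissibility bound on $\varphi_z^\eps$, which yields $\$\Pi\$\,\eps^{|\tau|}$, with the just-established estimate $|\langle \Pi_z\tau, \tilde\varphi_z^\mu\rangle| \lesssim \|\Pi\|_\eps\,\mu\,\eps^{|\tau|-1}$ gives
\begin{equation*}
|\langle \Pi_z\tau, \varphi_z^\lambda\rangle| \;\lesssim\; \$\Pi\$\,\eps^{|\tau|} + \|\Pi\|_\eps \int_\lambda^\eps \mu^{-1}\cdot \mu \,\eps^{|\tau|-1}\,d\mu \;\leq\; C\,\|\Pi\|_\eps\,\eps^{|\tau|}.
\end{equation*}

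\smallskip

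\noindent\emph{Main obstacle.} The delicate step is the analysis of the residual $\rR_\lambda$ in the second bound: it annihilates affine polynomials but is supported on the ball of radius $\eps$ while oscillating at scale $\lambda$, so it is not directly of the form $\psi_z^\mu$ with $\psi \in \bB$. The resolution is to split it into its genuinely scale-$\lambda$ affine-annihilating part (treated by $\sM_\eps$) and a slowly varying scale-$\eps$ correction (treated by admissibility), exploiting that the coefficients of the correction are small precisely because $\int \phi = 0$.
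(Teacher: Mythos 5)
Your reduction of the first bound to the second is sound and is essentially the paper's strategy in continuous form: the paper telescopes $\varphi_z^\lambda$ over dyadic scales between $\lambda$ and $\eps$, and your identity $\varphi_z^\lambda = \varphi_z^\eps - \int_\lambda^\eps \mu^{-1}\tilde\varphi_z^\mu\,d\mu$ is its infinitesimal version (minor caveat: $\tilde\varphi$ involves first derivatives of $\varphi$, so putting $\tilde\varphi$ in $\bB$ requires one more derivative of $\varphi$ than $\bB$ provides; the dyadic differences used in the paper avoid this loss).

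The second bound is where the real work lies, and there your argument has a genuine gap: a single moment-correction split cannot produce the two properties you need at once. If the ``genuine scale-$\lambda$ affine-annihilating piece'' is $\phi_z^\lambda - \sigma_z^\lambda$, with $\sigma_z^\lambda$ the scale-$\lambda$ bump combination carrying the first moments of $\phi_z^\lambda$ (coefficients of order $1$), then the leftover is $\sigma_z^\lambda - \sigma_z^\eps$, which is \emph{not} a scale-$\eps$ correction with small coefficients: it contains scale-$\lambda$ bumps with $\CO(1)$ coefficients that integrate to zero but do not annihilate affine functions, so the only available estimate is the bare bound \eqref{eq:model_analytic}, giving $\lambda^{|\tau|}$; for $\tau \in \uU$ with $|\tau| < 1$ (e.g.\ $\tau = \Psi$, $|\Psi| = -\tfrac12-\kappa$) one has $\lambda^{|\tau|} = (\eps/\lambda)^{1-|\tau|}\,\lambda\,\eps^{|\tau|-1} \gg \lambda\,\eps^{|\tau|-1}$, so this piece destroys the claimed bound. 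If instead the leftover is taken to be $\sigma_z^\eps$ itself, then the affine-annihilating piece is $R_\lambda = \phi_z^\lambda - \sigma_z^\eps$, which is supported at scale $\eps$ while oscillating at scale $\lambda$, hence is not of the form $\psi_z^\mu$ with $\psi \in \bB$ for any $\mu$ (rescaling it to scale $\eps$ costs a factor $(\eps/\lambda)^5$ in the sup norm), so the refined bound in \eqref{eq:various_model_bounds} does not apply to it either. This is exactly the obstacle you flag, but the proposed resolution reproduces it rather than removing it: the first moment of $\phi_z^\lambda$ has to be transferred from scale $\lambda$ to scale $\eps$ \emph{gradually}, e.g.\ by telescoping $\sigma_z^{\lambda 2^k} - \sigma_z^{\lambda 2^{k+1}}$ over dyadic $k$, each difference annihilating affine functions and carrying a coefficient $\CO(2^{-k})$, so that the $\zeta = \tfrac65 > 1$ bound sums to $\lambda\,\eps^{|\tau|-1}$ and only a top-scale piece with prefactor $\approx \lambda/\eps$ is left to the bare bound. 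This multiscale summation is precisely the paper's proof, which writes $\phi_z^\lambda = \sum_{k=0}^{N-1}\bigl(2^{-k}\phi_z^{\lambda 2^k} - 2^{-(k+1)}\phi_z^{\lambda 2^{k+1}}\bigr) + 2^{-N}\phi_z^{\lambda 2^N}$ with the weights $2^{-k}$ chosen so that every difference kills affine functions; it cannot be short-circuited by one split.
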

\begin{proof}
We first prove the second bound. Let $\phi$ be a test function integrating to $0$, let $\lambda \in (0,1]$, 
and let $N$ be the integer such that
\begin{equ}[e:defN]
\lambda \cdot 2^{N} \leq \epsilon < \lambda \cdot 2^{N+1}. 
\end{equ}
We then write $\phi_{z}^{\lambda}$ as a telescope sum by
\begin{align*}
\phi_{z}^{\lambda} = \sum_{k=0}^{N-1} (2^{-k} \phi_{z}^{\lambda \cdot 2^{k}} - 2^{-(k+1)} \phi_{z}^{\lambda \cdot 2^{k+1}}) + 2^{-N} \cdot \phi_{z}^{\lambda \cdot 2^{N}}
=: \sum_{k=0}^{N-1}\delta\phi_{z}^{\lambda,k}  + 2^{-N} \cdot \phi_{z}^{\lambda \cdot 2^{N}}\;. 
\end{align*}
For each $k$ appearing in this sum, 
$\delta\phi_{z}^{\lambda,k}$ is localised at scale $\lambda \cdot 2^{k} < \epsilon$ and 
integrates to $0$ since the function $\phi$ does. Furthermore, the factor $2^{-k}$ is chosen such that 
the integral of $2^{-k} \phi_{z}^{\lambda \cdot 2^{k}}$ against linear functions does not depend on $k$, 
so that $\delta\phi_{z}^{\lambda,k}$ annihilates all affine functions. 
Thus, we can use the second bound in \eqref{eq:various_model_bounds} to deduce that for each $k$, we have
\begin{align*}
\bigl|\langle \Pi_{z}\tau,  \delta\phi_{z}^{\lambda,k} \rangle\bigr| < C\| \Pi \|_{\epsilon} 2^{-k}  (\lambda  2^{k})^{\zeta} \cdot \epsilon^{|\tau| - \zeta}\;. 
\end{align*}
Summing over $k$ from $0$ to $N-1$, and using the fact that $\zeta > 1$ and 
$\lambda \cdot 2^{N} \sim \epsilon$, we 
conclude that 
$\sum_{k=0}^{N-1} \langle \Pi_{z}\tau, \delta\phi_{z}^{\lambda,k} \rangle < C \| \Pi \|_{\epsilon} \lambda \epsilon^{|\tau| -1}$. 
The same bound holds for the term $2^{-N} \cdot \phi_{z}^{\lambda \cdot 2^{N}}$ as a direct consequence of  \eqref{eq:model_analytic}, so we obtain the second bound in \eqref{eq:implied_model_bounds}. 

To prove the first one, fix a test function $\varphi$, and write it as
\begin{equation} \label{eq:telescope}
\varphi_{z}^{\lambda} = \sum_{k=0}^{N-1} (\varphi_{z}^{\lambda \cdot 2^{k}} - \varphi_{z}^{\lambda \cdot 2^{k+1}}) + \varphi^{\lambda \cdot 2^{N}}_{z}. 
\end{equation}
This time, each function in the parenthesis integrates to $0$ so we can use the second bound just proved above, and the first one follows easily.
\end{proof}

We now turn to dealing with the irregularity of the initial data. At this point, our definitions 
start to differ from those in \cite{HQ15} in order to encode the regularities of 
terms in \eqref{eq:abstract_eq}. We first introduce a new space for the initial condition $u_{0}$.

\begin{defn} \label{defn:weighted_function_space}
Let $\gamma \in (1,2)$, $\eta < 0$ and $\eps > 0$. The space $\cC^{\gamma,\eta}_{\epsilon}$ consists of $\cC^{\gamma}$ functions $f\colon \R^3 \to \R$ with norm
\begin{equation} \label{eq:weighted_function_space}
\| f \|_{\gamma, \eta; \epsilon} = \| f^{(\epsilon)} \|_{\cC^{\eta}} + \epsilon^{-\eta}\| f^{(\epsilon)} \|_{\infty} + \epsilon^{\gamma-\eta}  \sup_{|x-y|<\epsilon} \frac{|Df^{(\epsilon)}(x) - Df^{(\epsilon)}(y)|}{|x-y|^{\gamma-1}}. 
\end{equation}
Furthermore, we set $\cC^{\gamma,\eta}_{0} = \cC^{\eta}$. The distance between two elements $f^{(\epsilon)} \in \cC^{\gamma,\eta}_{\eps}$ and $f \in \cC^{\eta}$ is defined by
\begin{equation} \label{eq:weighted_function_difference}
\| f^{(\epsilon)}; f \|_{\gamma, \eta; \epsilon} = \| f^{(\epsilon)} - f \|_{\cC^{\eta}} + \epsilon^{-\eta} \|f^{(\epsilon)}\|_{\infty} +  \epsilon^{\gamma-\eta}  \sup_{|x-y|<\epsilon} \frac{|Df^{(\epsilon)}(x) - Df^{(\epsilon)}(y)|}{|x-y|^{\gamma-1}}. 
\end{equation}
\end{defn}

The reason we do not include a bound on $\|Df^{(\epsilon)}\|_{\infty}$ on the right hand side is 
that such a bound follows the bounds on $\|f^{(\epsilon)}\|_{\infty}$ and $\|Df^{(\epsilon)}\|_{\cC^{\gamma-1}}$. More precisely, one has

\begin{lem}
There exists a constant $C$ such that, for every 
$f^{(\epsilon)} \in \cC^{\gamma,\eta}_{\epsilon}$ one has 
\begin{equation} \label{eq:implied_function_bounds}
\| Df^{(\epsilon)} \|_{\infty} < C \| f^{(\epsilon)} \|_{\gamma,\eta;\epsilon} \cdot \epsilon^{\eta - 1}\;.
\end{equation}
\end{lem}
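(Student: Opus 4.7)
The plan is to prove this by a standard one-scale interpolation inequality, exploiting that the $\eps$-dependent norm \eqref{eq:weighted_function_space} controls $\|f^{(\eps)}\|_\infty$ and a Hölder-type modulus for $Df^{(\eps)}$ at scales smaller than $\eps$. The idea is that the sup-norm of $Df^{(\eps)}$ can always be recovered from finite differences of $f^{(\eps)}$ at some scale $h$, plus an error controlled by the Hölder modulus of $Df^{(\eps)}$ on the interval $[0,h]$; choosing $h = \eps$ makes both contributions of the same order $\eps^{\eta-1}$.

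More concretely, fix a point $x_0 \in \R^3$ and a unit vector $e \in \R^3$. For any $h \in (0,\eps]$, I would write, via the fundamental theorem of calculus,
\begin{equation*}
f^{(\eps)}(x_0 + h e) - f^{(\eps)}(x_0) = h\, Df^{(\eps)}(x_0)\cdot e + \int_0^h \bigl(Df^{(\eps)}(x_0 + t e) - Df^{(\eps)}(x_0)\bigr)\cdot e\, dt,
\end{equation*}
and solve for $Df^{(\eps)}(x_0)\cdot e$. The first term on the right is bounded by $2\|f^{(\eps)}\|_\infty$. For the integral, since $|te| = t \le h \le \eps$, I can apply the restricted Hölder bound built into \eqref{eq:weighted_function_space}, giving an integrand bounded by $\eps^{\eta - \gamma}\|f^{(\eps)}\|_{\gamma,\eta;\eps}\cdot t^{\gamma-1}$ up to a constant; after integration this yields $\frac{C}{\gamma}h^{\gamma-1}\eps^{\eta-\gamma}\|f^{(\eps)}\|_{\gamma,\eta;\eps}$.

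Combining, I obtain
\begin{equation*}
|Df^{(\eps)}(x_0)\cdot e| \;\le\; \frac{2\|f^{(\eps)}\|_\infty}{h} + C\, h^{\gamma-1}\eps^{\eta-\gamma}\|f^{(\eps)}\|_{\gamma,\eta;\eps}.
\end{equation*}
Using $\|f^{(\eps)}\|_\infty \le \eps^{\eta}\|f^{(\eps)}\|_{\gamma,\eta;\eps}$ from the second term of \eqref{eq:weighted_function_space} and taking $h = \eps$, both contributions are of order $\eps^{\eta-1}\|f^{(\eps)}\|_{\gamma,\eta;\eps}$, which is exactly \eqref{eq:implied_function_bounds}. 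Taking the supremum over $x_0$ and over unit vectors $e$ (or applying this componentwise to $\partial_i f^{(\eps)}$) concludes the argument.

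I do not expect any real obstacle here: the bound is essentially a Landau–Kolmogorov type interpolation at a single length scale, and the restriction $|x-y|<\eps$ in the Hölder seminorm of \eqref{eq:weighted_function_space} is perfectly compatible with the choice $h=\eps$. The only mild point to be careful about is that the scaling exponents conspire so that choosing $h=\eps$ is both admissible (so that the Hölder bound applies) and optimal (so that the two terms balance); this is what makes the final power $\eps^{\eta-1}$ emerge cleanly.
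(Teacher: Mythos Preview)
Your argument is correct and is precisely the standard interpolation the paper has in mind; indeed, the paper itself gives no proof and simply writes ``The proof is straightforward and we leave it as an exercise.'' Your choice $h=\eps$ balances the two contributions exactly as required, and the only cosmetic point is that the Hölder seminorm in \eqref{eq:weighted_function_space} is over $|x-y|<\eps$ strictly, so one should take $h$ just below $\eps$ (or invoke continuity), which changes nothing.
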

\begin{proof}
The proof is straightforward and we leave it as an exercise.
\end{proof}

One should think of functions in $\cC^{\gamma,\eta}_{\epsilon}$ as behaving like elements of $\cC^{\eta}$ at large scales, while being of class $\cC^{\gamma}$ at small scales, with $\eps$
determining where the cutoff between ``small'' and ``large'' lies. 
The reason why only $f^{(\epsilon)}$ appears in the last two terms of \eqref{eq:weighted_function_difference} is that these two quantities 
are not even finite for general $f \in \cC^{\eta}$.

Following \cite[Sec.~3.5]{HQ15}, we define $\dD^{\gamma,\eta}$ space to be the set of functions $U$ taking values in $\tT$ with norm
\begin{align*}
\|U\|_{\gamma,\eta} := &\sup_{z} \sup_{|\tau|<\gamma} |U(z)|_{\tau} + \sup_{z} \sup_{|\tau|<\gamma} \frac{|U(z)|_{\tau}}{\sqrt{|t|}^{(\eta-|\tau|) \wedge 0}} \\
&+ \sup_{|z-z'|<\sqrt{|t| \wedge |t'|}} \sup_{|\tau| < \gamma} \frac{|U(z) - \Gamma_{z,z'}U(z')|_{\tau}}{|z-z'|^{\gamma-|\tau|} \sqrt{|t| \wedge |t'|}^{\eta - \gamma}}, 
\end{align*}
where $|z-z'|$ is measured in the parabolic distance defined in \eqref{eq:parabolic_metric}. Note that this definition is slightly different from the original one in \cite{Hai14a} in the sense that it allows $U(z)$ to have components in $\tT_{\geq \gamma}$. We now introduce the weighted spaces $\dD^{\gamma,\eta}_{\epsilon}$ that are suitable for our fixed point problem. 

\begin{defn}\label{def:modelled_distribution}
For each $\epsilon, \gamma, \eta$, and each model $(\Pi, \Gamma) \in \sM_{\epsilon}$, the space $\dD^{\gamma,\eta}_{\epsilon}$ consists of modelled distributions $U$ with norm given by
\begin{align*}
\| U \|_{\gamma,\eta;\epsilon} &= \| U \|_{\gamma,\eta} + \sup_{z} \sup_{\tau} \frac{|U(z)|_{\tau}}{\epsilon^{(\eta - |\tau|) \wedge 0}} + \sup_{(z,z') \in D_{\epsilon}} \sup_{|\tau| < \gamma} \frac{|U(z) - \Gamma_{z,z'}U(z')|_{\tau} }{|z-z'|^{\gamma - |\tau|} \epsilon^{\eta - \gamma} }\;.
\end{align*}
Here, the supremum is taken over all space-time points $(z,z') \in D_{\epsilon}$, defined by
\begin{align*}
D_{\epsilon} = \big\{ (z,z'): |z-z'| < \epsilon \wedge \sqrt{|t| \wedge |t'|}  \big\}\;, 
\end{align*}
where $z=(t,x)$, $z'= (t',x')$,
and $\| \cdot \|_{\gamma,\eta}$ is the norm of the usual $\dD^{\gamma,\eta}$ spaces introduced in \cite[Sec.~6]{Hai14a}. 
\end{defn}

In short, the above definition says that modelled distributions $U$ in $\dD^{\gamma,\eta}_{\epsilon}$ satisfy the bounds
\begin{align*}
|U(z)|_{\tau} &\lesssim (\epsilon + \sqrt{|t|})^{(\eta - |\tau|) \wedge 0}, \\
|U(z) - \Gamma_{z,z'}U(z')|_{\tau} &\lesssim |z-z'|^{\gamma-|\tau|} (\epsilon + \sqrt{|t| \wedge |t'|})^{\eta - \gamma}. 
\end{align*}
Note that $\dD^{\gamma,\eta}_{\epsilon}$ is a linear space once the model is fixed, and so the distance between two elements can be simply compared by $\| U - \bar{U} \|_{\gamma,\eta;\epsilon}$. Also, in all the cases we consider below, $\eta$ is always smaller than the regularity of the sector in consideration. Thus, we will have $\eta < |\tau|$, and can simply replace $(\eta - |\tau|) \wedge 0$ by $\eta - |\tau|$ in all the situations below. Similar as before, we compare two elements $U^{(\epsilon)} \in \dD^{\gamma,\eta}_{\epsilon}$ and $U \in \dD^{\gamma,\eta}$ by
\begin{align*}
\| U^{(\epsilon)}; U \|_{\gamma,\eta;\epsilon} &= \| U^{(\epsilon)}; U \|_{\gamma,\eta} + \sup_{z} \sup_{\tau} \frac{|U^{(\epsilon)}(z)|_{\tau}}{\epsilon^{(\eta-|\tau|) \wedge 0}} \\
&\phantom{1}+ \sup_{(z,z') \in D_{\eps}} \sup_{|\tau| < \gamma} \frac{|U^{(\epsilon)}(z) - \Gamma_{z,z'} U^{(\epsilon)}(z')|}{|z-z'|^{\gamma-|\tau|} \epsilon^{\eta - \gamma}}. 
\end{align*}
The reason why only $U^{(\epsilon)}$ appears on the latter two terms on the right hand side above is the same as before: these quantities are in general not finite for $U \in \dD^{\gamma,\eta}$. The main motivation for the introduction of these $\epsilon$-dependent spaces is that they contain the solution
to the heat equation with initial condition in $\cC_\eps^{\gamma,\eta}$, with bounds independent of $\eps$. This is the content of the following proposition, the proof of which is identical
to that of \cite[Prop.~4.7]{HQ15}, so we do not repeat the details here.

\begin{prop} \label{pr:lift_initial}
Let $\eta < 0$, $\gamma\in (1,2)$, $\eps \in (0,1]$, and $u \in \cC^{\gamma, \eta}_{\epsilon}$. Let $\widehat{P} u$ denote the canonical lift of the harmonic extension of $u$ via its truncated Taylor expansion of order $\gamma$. 
Then, $\widehat{P} u \in \dD^{\gamma,\eta}_{\epsilon}$ and one has
\begin{align*}
\| \widehat{P} u \|_{\gamma,\eta;\epsilon} < C \| u \|_{\gamma,\eta;\epsilon}. 
\end{align*}
Furthermore, if $u^{(\epsilon)} \in \cC^{\gamma, \eta}_{\epsilon}$ and $u \in \cC^{\eta}$, then one has
\begin{align*}
\| \widehat{P}u^{(\epsilon)}; \widehat{P}u \|_{\gamma,\eta;\epsilon} < C \| u^{(\epsilon)}; u \|_{\gamma,\eta;\epsilon}. 
\end{align*}
\end{prop}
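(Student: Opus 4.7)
The plan is to reduce the claim to heat-kernel estimates on $Pu$ and its derivatives. By construction, the components of $\widehat{P}u$ lie entirely in the polynomial sector, with $\widehat{P}u(z)|_{X^k} = D^k(Pu)(z)/k!$ for $|k| < \gamma$, where $Pu$ denotes the harmonic extension of $u$. Since $\Gamma_{z,z'}$ acts on polynomials by translation, the component $(\widehat{P}u(z) - \Gamma_{z,z'}\widehat{P}u(z'))|_{X^k}$ is precisely the order-$\lfloor\gamma\rfloor$ Taylor remainder of $D^k(Pu)$ based at $z'$ and evaluated at $z$. Thus the two bounds defining $\dD^{\gamma,\eta}_{\epsilon}$ in Definition \ref{def:modelled_distribution} translate respectively into the pointwise estimate
\begin{equ}
|D^k(Pu)(z)| \lesssim \|u\|_{\gamma,\eta;\epsilon}(\epsilon + \sqrt{|t|})^{\eta - |k|}, \qquad |k| < \gamma,
\end{equ}
and a corresponding Taylor remainder estimate controlled by $|z-z'|^{\gamma - |k|}(\epsilon + \sqrt{|t|\wedge|t'|})^{\eta - \gamma}$ for $(z,z') \in D_{\epsilon}$.

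For the pointwise estimate, I split into the two regimes dictated by the definition of $\cC^{\gamma,\eta}_{\epsilon}$. When $\sqrt{|t|} \geq \epsilon$, the heat kernel has effective width $\sqrt{|t|} \geq \epsilon$, so the fine-scale structure of $u$ is irrelevant and the classical bound $|D^k P_t u(x)| \lesssim t^{(\eta - |k|)/2}\|u\|_{\cC^{\eta}}$ yields the claim. When $\sqrt{|t|} < \epsilon$, the $\cC^{\eta}$ norm alone is too weak; here I exploit the fine-scale control on $u^{(\epsilon)}$ by writing $D^k P_t u = D^k P_t (u - u^{(\epsilon)}) + D^k P_t u^{(\epsilon)}$. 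The first term is still controlled by the $\cC^{\eta}$ bound on $u - u^{(\epsilon)}$, while for the second I use $\|u^{(\epsilon)}\|_\infty \lesssim \epsilon^{\eta}$, $\|Du^{(\epsilon)}\|_\infty \lesssim \epsilon^{\eta-1}$ from \eqref{eq:implied_function_bounds}, and for $|k|$ close to $\gamma$ the Hölder seminorm $[Du^{(\epsilon)}]_{\cC^{\gamma-1}} \lesssim \epsilon^{\eta - \gamma}$. Commuting $D^j$ past $P_t$ and bounding by convolution with the heat kernel produces the bound $\epsilon^{\eta - |k|}$ uniformly for $t \leq \epsilon^2$, matching $(\epsilon + \sqrt{|t|})^{\eta-|k|}$ at the crossover.

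The Taylor remainder bound follows similarly. For $(z,z') \in D_{\epsilon}$ one has $|z-z'| < \epsilon \wedge \sqrt{|t| \wedge |t'|}$, so integrating the standard Taylor integral remainder reduces matters to bounding $D^{k}(Pu)$ for $|k|$ up to $\lfloor\gamma\rfloor + 1$, or equivalently a $\cC^{\gamma - \lfloor\gamma\rfloor}$ Hölder seminorm of $D^{\lfloor\gamma\rfloor}(Pu)$, at points in a parabolic ball around $z'$ of radius $\sqrt{|t'|}$. Applying the pointwise step with these higher orders, and using that $\sqrt{|t|}$ and $\sqrt{|t'|}$ are comparable on $D_{\epsilon}$ (since $|z-z'| \leq \sqrt{|t| \wedge |t'|}$), yields the $|z-z'|^{\gamma-|k|}(\epsilon + \sqrt{|t|\wedge|t'|})^{\eta-\gamma}$ bound. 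The second statement of the proposition, comparing $\widehat{P}u^{(\epsilon)}$ to $\widehat{P}u$, follows by linearity of the harmonic extension applied to $u^{(\epsilon)} - u$ and the same argument: the $\cC^{\eta}$ part of the norm controls the large-scale contribution, and the $L^\infty$ and Hölder parts of $u^{(\epsilon)}$ alone control the small-scale contribution, which is exactly why these latter quantities appear without the corresponding terms for $u$ in \eqref{eq:weighted_function_difference}.

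The main obstacle is the careful bookkeeping across the crossover scale $\sqrt{|t|} \sim \epsilon$: one must check that the bounds obtained in the two regimes are consistent and that no logarithmic losses appear when interpolating. Beyond this, the argument is essentially a verbatim adaptation of \cite[Prop.~4.7]{HQ15}, with the only genuine change being the range of admissible $(\gamma,\eta)$ accommodating the regularity of the $\Phi^4_3$ solution, namely $\gamma \in (1,2)$ and $\eta < -\tfrac{1}{2}$.
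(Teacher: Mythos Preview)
Your approach is correct and is precisely the argument of \cite[Prop.~4.7]{HQ15} to which the paper defers: reduce the $\dD^{\gamma,\eta}_{\epsilon}$ bounds to pointwise and Taylor-remainder estimates on $D^k(Pu)$, then split according to whether $\sqrt{|t|}$ lies above or below $\epsilon$, using the $\cC^{\eta}$ norm in the former regime and the weighted $L^\infty$/H\"older control in the latter.

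There is one notational slip worth flagging. In the regime $\sqrt{|t|} < \epsilon$ for the \emph{first} statement you introduce a decomposition $D^k P_t u = D^k P_t(u - u^{(\epsilon)}) + D^k P_t u^{(\epsilon)}$, but in that statement there is only a single function $u \in \cC^{\gamma,\eta}_{\epsilon}$; there is no separate $u^{(\epsilon)}$, so the decomposition is vacuous. The argument you actually want is simpler: bound $D^k P_t u$ directly using $\|u\|_\infty \lesssim \epsilon^{\eta}$, $\|Du\|_\infty \lesssim \epsilon^{\eta-1}$, and the small-scale H\"older seminorm of $Du$. Moreover, had $u$ and $u^{(\epsilon)}$ genuinely been distinct, the $\cC^{\eta}$ bound on the first piece would only give $t^{(\eta-|k|)/2}$, which for $\sqrt{|t|} < \epsilon$ is \emph{larger} than the required $\epsilon^{\eta-|k|}$, so the argument as literally written would not close. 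Once this is corrected the remainder of your sketch goes through.
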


The following proposition will be needed later when we continue local solutions 
up to their (potential) explosion time. It says that the initial data of 
the restarted solution still belongs to the $\cC^{\gamma,\eta}_{\epsilon}$ 
space with norms uniform in $\epsilon$.

\begin{prop} \label{pr:restart_solution}
Let $\gamma \in (1, \frac{6}{5})$ and $\eta \in (-\frac{m+1}{2m+1}, -\frac{1}{2})$. Let $(\Pi^{\epsilon}, f^{\epsilon}) \in \sM_{\epsilon}$. Let $\uU$ be a sector of the regularity structure as defined in \eqref{eq:graded_vector_space}. If $\rR^{\epsilon}$ is the associated reconstruction map for $\dD^{\gamma,\eta}_{\epsilon}(\uU)$ and $U^{(\epsilon)} \in \dD^{\gamma,\eta}_{\epsilon}(\uU)$ is the abstract solution to \eqref{eq:abstract_eq}, then for every $t > 0$, $u_{t}^{(\epsilon)} := \rR^{\epsilon} U^{(\epsilon)}(t, \cdot) $ belongs to $\cC^{\gamma,\eta}_{\epsilon}$ with
\begin{align*}
\| u_{t}^{(\epsilon)} \|_{\gamma,\eta;\epsilon} < C \| U^{(\epsilon)} \|_{\gamma,\eta;\epsilon} \$ \Pi^{(\epsilon)} \$_{\epsilon}. 
\end{align*}
If $(\Pi, f)$ is another such model with reconstruction operator $\rR$, and $U \in \dD^{\gamma,\eta}$ solves \eqref{eq:abstract_eq} based on $\Pi$, then $u_{t} := \rR U (t, \cdot)$ belongs to $\cC^{\eta}$ and one has
\begin{align*}
\| u_{t}^{(\epsilon)}; u_{t} \|_{\gamma,\eta;\epsilon} \lesssim \|U^{(\epsilon)}; U\|_{\gamma,\eta;\epsilon} (\$ \Pi \$ + \$ \Pi^{\epsilon} \$_{\epsilon}) +  \$ \Pi^{\epsilon}; \Pi \$_{\epsilon,0} (\|U^{(\epsilon)}\|_{\gamma,\eta;\epsilon} + \|U\|_{\gamma,\eta})\;. 
\end{align*}
\end{prop}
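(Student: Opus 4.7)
The plan is to establish each of the three terms comprising the $\cC^{\gamma,\eta}_\epsilon$ norm of Definition~\ref{defn:weighted_function_space} separately, exploiting the interplay between the $\epsilon$-dependent enhancements in $\dD^{\gamma,\eta}_\epsilon$ and those in $\sM_\epsilon$ afforded by Lemma~\ref{le:model_sandwich}.

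First I would dispatch the $\cC^\eta$ piece: this is the standard reconstruction/restriction statement for $\dD^{\gamma,\eta}$ applied at the fixed positive time $t$; under the inclusion $\dD^{\gamma,\eta}_\epsilon(\uU)\hookrightarrow\dD^{\gamma,\eta}(\uU)$ it follows from \cite[Sec.~6]{Hai14a}, with the additional contribution $\sup_{t\in[0,1]}\|\Pi_0\Psi(t,\cdot)\|_{\cC^\eta}$ baked into $\$\Pi^{(\epsilon)}\$_\epsilon$ handling the negative-homogeneity element $\Psi\in\uU$.

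For the pointwise bound $\|u_t^{(\epsilon)}\|_\infty\lesssim\epsilon^\eta\|U^{(\epsilon)}\|_{\gamma,\eta;\epsilon}\$\Pi^{(\epsilon)}\$_\epsilon$, the key observation is that on $\sM_\epsilon$ each $(\Pi^{(\epsilon)}_z\tau)(z)$ for $\tau\in\uU$ is a well-defined number of size $\lesssim\|\Pi^{(\epsilon)}\|_\epsilon\cdot\epsilon^{|\tau|}$, even when $|\tau|<0$. This follows by writing $(\Pi^{(\epsilon)}_z\iI(\sigma))(z)=(K*\Pi^{(\epsilon)}_z\sigma)(z)$ via \eqref{eq:model_integration}, decomposing $K=\sum_n K_n$ dyadically, applying Lemma~\ref{le:model_sandwich} to the pairings at scales $2^{-n}<\epsilon$ and using \eqref{eq:model_analytic} at the larger scales. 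Since $(\Pi_z^{(\epsilon)} X^k)(z)=0$ for $|k|\ge 1$, expanding $u_t^{(\epsilon)}(x)=\sum_\tau U^{(\epsilon)}(z)_\tau(\Pi^{(\epsilon)}_z\tau)(z)$ at $z=(t,x)$ and combining with $|U^{(\epsilon)}(z)_\tau|\lesssim\epsilon^{\eta-|\tau|}$ sums to the target bound over the finitely many $\tau\in\uU$ with $|\tau|<\gamma$.

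The main work is the Hölder-$(\gamma-1)$ seminorm on $Du_t^{(\epsilon)}$ at scales below $\epsilon$. For $x,y$ with $|x-y|<\epsilon\wedge\sqrt{t}$, setting $z_x=(t,x)$, $z_y=(t,y)$, I would use
\[
u_t^{(\epsilon)}(y')-(\Pi^{(\epsilon)}_{z_x}U^{(\epsilon)}(z_x))(y')=\bigl(\Pi^{(\epsilon)}_{z_x}[\Gamma_{z_x,z_y}U^{(\epsilon)}(z_y)-U^{(\epsilon)}(z_x)]\bigr)(y')
\]
for $y'$ near $z_x$, both sides being classical functions by the previous step. Differentiating in $y'$ and evaluating at $y'=z_y$, the $X_i$-polynomial component of $\Gamma_{z_x,z_y}U^{(\epsilon)}(z_y)-U^{(\epsilon)}(z_x)$ produces the principal contribution to $Du_t^{(\epsilon)}(x)-Du_t^{(\epsilon)}(y)$; its size is controlled by $|x-y|^{\gamma-1}\epsilon^{\eta-\gamma}\|U^{(\epsilon)}\|_{\gamma,\eta;\epsilon}$ directly from the $D_\epsilon$-portion of Definition~\ref{def:modelled_distribution}. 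The positive-homogeneity contributions $\iI(\sigma)$ require a dyadic decomposition of $DK$ combined with Lemma~\ref{le:model_sandwich}; the hypothesis $\gamma<6/5=\zeta$ is precisely what makes the resulting geometric sums converge with power $\epsilon^{\eta-\gamma}|x-y|^{\gamma-1}$.

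The distance bound for a pair $(U^{(\epsilon)},U)$ and $(\Pi^{(\epsilon)},\Pi)\in\sM_\epsilon\times\sM_0$ follows by applying the same arguments linearly to the differences, tracking $\|U^{(\epsilon)};U\|_{\gamma,\eta;\epsilon}$ and $\$\Pi^{(\epsilon)};\Pi\$_{\epsilon,0}$ contributions separately, exactly paralleling \cite[Prop.~4.7]{HQ15}. The hard step is the third one: the bound on $Du_t^{(\epsilon)}$ mixes small-scale enhancements of both the model and the modelled distribution, and verifying that the interaction of $\gamma$, $\eta$ and $\zeta$ produces genuine uniform-in-$\epsilon$ classical $\cC^{\gamma-1}$ regularity at scales $<\epsilon$ is the technical crux — it is precisely here that the exponent $\zeta=6/5$ in \eqref{eq:various_model_bounds} is dictated.
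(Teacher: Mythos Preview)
Your overall decomposition into the three pieces of the $\cC^{\gamma,\eta}_\epsilon$ norm matches the paper, and your treatment of the $\cC^\eta$ piece (splitting off $\Psi$ and using the built-in $\sup_t\|\Pi_0\Psi(t,\cdot)\|_{\cC^\eta}$) is exactly what the paper does. Your argument for the $L^\infty$ piece is a correct repackaging of the same estimate; the paper simply tests $u^{(\epsilon)}$ against $\varphi_z^\lambda$ at all scales and combines Lemma~\ref{le:model_sandwich} for $\lambda<\epsilon$ with the ordinary model bound for $\lambda\ge\epsilon$, rather than dyadically decomposing $K$.

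The gap is in your handling of the $\cC^{\gamma-1}$ seminorm of $Du_t^{(\epsilon)}$. Your displayed identity
\[
u_t^{(\epsilon)}(y')-(\Pi^{(\epsilon)}_{z_x}U^{(\epsilon)}(z_x))(y')=\bigl(\Pi^{(\epsilon)}_{z_x}[\Gamma_{z_x,z_y}U^{(\epsilon)}(z_y)-U^{(\epsilon)}(z_x)]\bigr)(y')
\]
is not an identity of functions of $y'$: the right-hand side equals $(\Pi^{(\epsilon)}_{z_y}U^{(\epsilon)}(z_y))(y')-(\Pi^{(\epsilon)}_{z_x}U^{(\epsilon)}(z_x))(y')$, and equality with the left-hand side would force $(\rR^\epsilon U^{(\epsilon)})(y')=(\Pi^{(\epsilon)}_{z_y}U^{(\epsilon)}(z_y))(y')$ for \emph{all} $y'$, which only holds at the single point $y'=z_y$. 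You therefore cannot differentiate the relation in $y'$ and read off $Du_t^{(\epsilon)}(x)-Du_t^{(\epsilon)}(y)$. Replacing $z_y$ by $z_{y'}$ gives a correct identity, but then the base point moves with $y'$ and the differentiation becomes genuinely delicate; the argument does not close as written.

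The paper avoids pointwise differentiation entirely and instead works distributionally. To bound the $\cC^{\gamma-1}$ seminorm of $Du^{(\epsilon)}$ at scales $\lambda<\epsilon$ one tests $Du^{(\epsilon)}$ against $\phi_z^\lambda$ with $\int\phi=0$, integrates by parts to $\lambda^{-\gamma}|\langle u^{(\epsilon)},(D\phi)_z^\lambda\rangle|$, and splits into the reconstruction error $\lambda^{-\gamma}|\langle u^{(\epsilon)}-\Pi_z^{(\epsilon)}U^{(\epsilon)}(z),(D\phi)_z^\lambda\rangle|$, which is bounded uniformly by the reconstruction theorem, plus the model term $\lambda^{-\gamma}|\langle\Pi_z^{(\epsilon)}U^{(\epsilon)}(z),(D\phi)_z^\lambda\rangle|$. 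For the latter one observes that $\int\phi=0$ forces $D\phi$ to annihilate affine functions, so the second bound of \eqref{eq:various_model_bounds} applies directly, yielding $\lambda^{\zeta-\gamma}\epsilon^{-\frac12-\kappa-\zeta}$ with the lowest homogeneity $-\frac12-\kappa$ of $\uU$. No dyadic decomposition of $DK$ is needed, and the $D_\epsilon$ portion of Definition~\ref{def:modelled_distribution} plays no role in this step.
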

\begin{proof}
We first prove the first claim of the proposition. For that, we bound separately the three terms 
appearing in the definition \eqref{eq:weighted_function_space} of the spaces $\cC^{\gamma,\eta}_{\epsilon}$. 
We first notice that any solution $U^{(\epsilon)}$ to \eqref{eq:abstract_eq} is necessarily of the form
\begin{align*}
U^{(\epsilon)}(z) = \Psi + V^{(\epsilon)}(z)\;.
\end{align*}
Since
the structure group acts trivially on $\Psi$, the constant function $\Psi$ belongs to 
all spaces $\dD^{\gamma,\eta}_{\epsilon}$, so that if $U^{(\epsilon)} \in \dD^{\gamma,\eta}_{\epsilon}$, then 
so does $V^{(\eps)}$.
Since, in the above decomposition, $V^{(\epsilon)}(z)$ belongs to the linear span of 
$\{\1\}\cup \{\tau\,:\,|\tau| > 0\}$, the desired bound for 
$\| \rR^{\epsilon} V(t, \cdot) \|_{\cC^{\eta}}$ follows from \cite[Prop.~3.28]{Hai14a}.
Regarding the term $\Psi$, one has $\rR^{\epsilon} \Psi = \Pi_0^{\epsilon} \Psi$ 
so that, by the definition of $\sM_{\epsilon}$, we have
\begin{align*}
\sup_{t \in [0,1]} \| \big(\rR^{\epsilon} \Psi\big)(t, \cdot) \|_{\cC^{\eta}} < C \$ \Pi^{\epsilon} \$_{\epsilon}, 
\end{align*}
and the required bound for $\|u_{t}^{(\epsilon)}\|_{\cC^{\eta}}$ thus follows. 

For the remaining two terms on the right hand side of \eqref{eq:weighted_function_space}, we will prove a stronger bound by showing $u^{(\epsilon)} = \rR^{\epsilon} U^{(\epsilon)}$ is a space-time function with desired regularity, rather just being a function in space for fixed time.

For the second term, since the lowest homogeneity in $\uU$ is $-\frac{1}{2} - \kappa$, an application of the reconstruction theorem together with Lemma~\ref{le:model_sandwich} gives
\begin{align*}
\sup_{\lambda < \epsilon} \sup_{z} \sup_{\varphi \in \bB} |\langle u^{(\epsilon)}, \varphi_{z}^{\lambda} \rangle| < C \|U^{(\epsilon)}\|_{\gamma,\eta;\epsilon} \$ \Pi^{\epsilon} \$_{\epsilon} \cdot \epsilon^{-\frac{1}{2} - \kappa}. 
\end{align*}
On the other hand, it follows directly from the definition of a model that
\begin{align*}
\sup_{\lambda \geq \epsilon} \sup_{z} \sup_{\varphi \in \bB} \lambda^{\frac{1}{2} + \kappa} |\langle u^{(\epsilon)}, \varphi_{z}^{\lambda} \rangle| < C \|U^{(\epsilon)}\|_{\gamma,\eta;\epsilon} \$ \Pi^{\epsilon} \$_{\epsilon}. 
\end{align*}
Combining the above two bounds and using the fact that $\kappa$ is arbitrarily small so that $\eta < -\frac{1}{2} - \kappa$, we conclude that $u^{(\epsilon)}$ is a continuous function with
\begin{align*}
\epsilon^{-\eta} \|u^{(\epsilon)}\|_{\infty} < C \epsilon^{-\eta-\frac{1}{2}-\kappa} \cdot \|U^{(\epsilon)}\|_{\gamma,\eta;\epsilon} \$ \Pi^{\epsilon} \$_{\epsilon}. 
\end{align*}
We now turn to the third term on the right hand side of \eqref{eq:weighted_function_space}. In order to show $Du^{(\epsilon)} \in \cC^{\gamma-1}$, we test it against test functions that integrate to $0$. Using the definition of the distributional derivative and then the triangle inequality, we get
\begin{align*}
\lambda^{1-\gamma} |\langle Du^{(\epsilon)}, \phi_{z}^{\lambda} \rangle| \leq \lambda^{-\gamma}  |\langle \Pi_{z}^{\epsilon} U^{(\epsilon)}(z), (D\phi)_{z}^{\lambda} \rangle| + \lambda^{-\gamma} |\langle u^{(\epsilon)} - \Pi_{z}^{\epsilon}U^{(\epsilon)}(z), (D \phi)_{z}^{\lambda}  \rangle|. 
\end{align*}
It follows from the reconstruction theorem that the second term on the right hand side above is uniformly bounded by a constant. For the first term, since the assumption that $\phi$ integrates to $0$ implies $D \phi$ annihilates affine functions, we can use the second bound in \eqref{eq:various_model_bounds} to obtain
\begin{align*}
\lambda^{-\gamma}  |\langle \Pi_{z}^{\epsilon} U^{(\epsilon)}(z), (D\phi)_{z}^{\lambda} \rangle| < C \|U^{(\epsilon)}\|_{\gamma,\eta;\epsilon} \$ \Pi^{\epsilon} \$_{\epsilon} \cdot \lambda^{\zeta-\gamma} \epsilon^{-\frac{1}{2}-\kappa-\zeta}, 
\end{align*}
where we again used the fact that the lowest homogeneity in $\uU$ is $-\frac{1}{2}-\kappa$. The desired bound then follows immediately. 

For the second claim, the only problematic term is $\|u_{t}^{(\epsilon)}; u_{t}\|_{\cC^{\eta}}$, but again the desired bound for this term follows in the same way as $\|u_{t}^{(\epsilon)}\|_{\cC^{\eta}}$. 
\end{proof}

Before we proceed to further properties of the $\dD^{\gamma,\eta}_{\epsilon}$ spaces, we first make a few remarks about these spaces and our notation.

\begin{itemize}
	\item The set $D_{\eps}$ in Definition~\ref{def:modelled_distribution} is taken to be $\{ |z-z'| < \epsilon \wedge \sqrt{|t| \wedge |t'|} \}$. This is sufficient since for the pairs $(z,z')$ such that $\epsilon \leq |z-z'| < \sqrt{|t| \wedge |t'|}$, we have $\epsilon + \sqrt{|t| \wedge |t'|} < 2 \sqrt{|t| \wedge |t'|}$, so the bound on the last term in Definition~\ref{def:modelled_distribution} follows automatically from the bound on $\| \cdot \|_{\gamma,\eta}$.
	
\item We use the notation $\| F - \bar F \|_{\gamma,\eta;\epsilon}$ to compare two functions in the \textit{same} $\dD^{\gamma,\eta}_{\epsilon}$ space with the \textit{same} underlying model. On the other hand, whenever we write $\| F; \bar F \|_{\gamma,\eta;\epsilon}$, it should be understood that we are comparing $F \in \dD^{\gamma,\eta}_{\epsilon}$ with $\bar F \in \dD^{\gamma,\eta}$, typically based
on a different model. As we will 
never compare two functions belonging to $\dD^{\gamma,\eta}_{\epsilon}$ spaces with the same 
$\epsilon$ but different underlying models, these notations are sufficient. 
\end{itemize}

It turns out that these spaces behave as expected under multiplication and action of $\widehat{\eE}^{k}$ and $\pP$. We state a few of the properties we will be using later;
all the proofs can be found in \cite[Sec.~4.3]{HQ15}.

\begin{prop} \label{pr:multiplication}
Let $U_{i} \in \dD^{\gamma_{i}, \eta_{i}}_{\epsilon}(V^{(i)})$ for $i = 1, 2$, where $V^{(1)}$ and $V^{(2)}$ are sectors of respective regularities $\alpha_{1}$ and $\alpha_{2}$. If
\begin{align*}
\gamma = (\gamma_{1}+\alpha_{2}) \wedge (\gamma_{2}+\alpha_{1}), \qquad \eta = (\eta_{1}+\alpha_{2}) \wedge (\eta_{2}+\alpha_{1}) \wedge (\eta_{1}+\eta_{2}), 
\end{align*}
then their pointwise product $U = U_{1} U_{2}$ is in $\dD^{\gamma,\eta}_{\epsilon}$ with
\begin{align*}
\| U \|_{\gamma, \eta; \epsilon} < C  \|U_{1}\|_{\gamma_{1}, \eta_{1}; \epsilon}  \|U_{2}\|_{\gamma_{2}, \eta_{2}; \epsilon}.  
\end{align*}
Furthermore, if $\bar{U}_{i} \in \dD^{\gamma_{i}, \eta_{i}}_{\epsilon}$, then $\bar{U} = \bar{U}_{1} \bar{U}_{2} \in \dD^{\gamma,\eta}$ with the same $\eta, \gamma$ as above, and we have
\begin{align*}
\| U; \bar{U} \|_{\gamma,\eta;\epsilon} < C \big( \| U_{1}; \bar{U}_{1} \|_{\gamma,\eta;\epsilon} + \| U_{2}; \bar{U}_{2} \|_{\gamma,\eta;\epsilon} + \| \Gamma - \bar{\Gamma} \| \big), 
\end{align*}
where $C$ is proportional to $\sum_{i} (\| U_{i} \| + \| \bar{U}_{i} \|) + \| \Gamma \| + \| \bar{\Gamma} \|$. 
\end{prop}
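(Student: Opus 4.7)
The plan is to reduce the statement to a careful case-by-case verification of the two $\eps$-specific terms in the norm of Definition~\ref{def:modelled_distribution}, since the classical $\dD^{\gamma,\eta}$ part of the bound is exactly the multiplication theorem \cite[Prop.~6.12]{Hai14a}. So from now on I only need to control the pointwise weight $\eps^{-(\eta-|\tau|)\wedge 0}|U(z)|_\tau$ and the translation weight $\eps^{-(\eta-\gamma)}|z-z'|^{-(\gamma-|\tau|)}|U(z)-\Gamma_{z,z'}U(z')|_\tau$ on the domain $D_\eps$.

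For the pointwise bound, I use multiplicativity $U(z)=U_1(z)\,U_2(z)$ in $\tT$. The component of homogeneity $|\tau|<\gamma$ is a finite sum over pairs $(\tau_1,\tau_2)$ with $\tau_i \in V^{(i)}$ and $|\tau_1|+|\tau_2|=|\tau|$, so in particular $|\tau_i|\le |\tau|-\alpha_{3-i}$. From the $\dD^{\gamma_i,\eta_i}_\eps$ bound, $|U_i(z)|_{\tau_i}\lesssim \|U_i\|_{\gamma_i,\eta_i;\eps}\,\eps^{(\eta_i-|\tau_i|)\wedge 0}$, and I want to show the product is $\lesssim \eps^{(\eta-|\tau|)\wedge 0}$. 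I distinguish three cases according to the signs of $\eta_i-|\tau_i|$: if both are non-negative the bound is trivial, if both are negative I get $\eps^{(\eta_1+\eta_2)-|\tau|}\le \eps^{\eta-|\tau|}$ using $\eta\le \eta_1+\eta_2$ and $\eps<1$, and the mixed case uses $\eta_1+|\tau_2|\ge \eta_1+\alpha_2\ge \eta$ (and symmetrically). This gives the pointwise weight.

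For the translation bound on $D_\eps$, I use the algebra morphism property of $\Gamma$ to split
\begin{equation*}
U(z)-\Gamma_{z,z'}U(z') = \bigl(U_1(z)-\Gamma_{z,z'}U_1(z')\bigr)U_2(z) + \Gamma_{z,z'}U_1(z')\bigl(U_2(z)-\Gamma_{z,z'}U_2(z')\bigr).
\end{equation*}
Projecting onto $\tT_\tau$ and summing over $\tau_1+\tau_2$ as before, the first piece is bounded by
$|z-z'|^{\gamma_1-|\tau_1|}\eps^{\eta_1-\gamma_1}\cdot\eps^{(\eta_2-|\tau_2|)\wedge 0}$
(when $|\tau_1|<\gamma_1$; the opposite case uses the pointwise bound on both factors directly). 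Here I crucially exploit $|z-z'|<\eps$ on $D_\eps$ to rewrite $|z-z'|^{\gamma_1-|\tau_1|}=|z-z'|^{\gamma-|\tau|}\cdot |z-z'|^{(\gamma_1-\gamma)+|\tau_2|}$ and trade the non-negative leftover exponent (non-negative thanks to $\gamma\le \gamma_1+\alpha_2$ and $|\tau_2|\ge \alpha_2$) for powers of $\eps$. The resulting $\eps$-exponent simplifies to either $\eta_1-\gamma+|\tau_2|$ or $\eta_1+\eta_2-\gamma$, both bounded below by $\eta-\gamma$ through the defining minima of $\eta$. The second piece is handled identically after controlling $\Gamma_{z,z'}U_1(z')$ in $\tT_{\tau_1}$ by a telescoping argument using the $\dD^{\gamma_1,\eta_1}_\eps$ bound on $U_1$ together with the model bound on $\Gamma$.

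For the stability claim, the decomposition $U-\bar U=(U_1-\bar U_1)U_2+\bar U_1(U_2-\bar U_2)$ reduces the task to applying the bound just proved to each piece, with one factor carrying the difference norm $\|U_i;\bar U_i\|_{\gamma_i,\eta_i;\eps}$ and the other a plain norm; the $\|\Gamma-\bar\Gamma\|$ term arises from comparing the two actions of the structure group in the translation bound via a further telescoping. The main obstacle, and the place where one has to be most careful, is the case analysis for the translation term: keeping track of which of the three defining minima of $\eta$ saturates in each region of the parameter space $(\tau_1,\tau_2)$ while respecting the constraint $|z-z'|<\eps$, and ensuring that the conversion $|z-z'|^a \rightsquigarrow \eps^a$ is only invoked with $a\ge 0$ so that no spurious factors appear. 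Once this bookkeeping is done, the estimates of \cite[Sec.~4]{HQ15} carry over mutatis mutandis.
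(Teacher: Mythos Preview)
Your proposal is correct and follows essentially the same approach as the paper, which does not give its own proof but defers to \cite{HQ15}; your sketch is precisely the natural adaptation of \cite[Prop.~6.12]{Hai14a} to the $\eps$-weighted norms, with the expected case analysis on the signs of $\eta_i-|\tau_i|$ and the key use of $|z-z'|<\eps$ on $D_\eps$ to convert surplus powers of $|z-z'|$ into powers of $\eps$. One small remark: in the stability part, the decomposition $U-\bar U=(U_1-\bar U_1)U_2+\bar U_1(U_2-\bar U_2)$ is schematic since $U_i$ and $\bar U_i$ are built on different models, so the actual argument must compare at the level of components and insert $\Gamma_{z,z'}$ versus $\bar\Gamma_{z,z'}$ separately---which you do acknowledge when you say the $\|\Gamma-\bar\Gamma\|$ term appears via a further telescoping.
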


\begin{prop} \label{pr:projection}
	Let $U \in \dD^{\gamma,\eta}_{\epsilon}$ with $\eta \leq \gamma$. If $\alpha \geq \gamma$, then $\qQ_{\leq \alpha} U \in \dD^{\gamma,\eta}_{\epsilon}$ with
	\begin{align*}
	\| \qQ_{\leq \alpha} U \|_{\gamma,\eta;\epsilon} \lesssim \| U \|_{\gamma,\eta;\epsilon}. 
	\end{align*}
\end{prop}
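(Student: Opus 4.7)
The plan is to check each of the three ingredients in $\|\cdot\|_{\gamma,\eta;\epsilon}$ separately and, in each case, reduce the estimate for $\qQ_{\leq\alpha}U$ to the corresponding estimate for $U$ plus an ``error'' coming from the components with $|\sigma|>\alpha$ that the projection removes. The pointwise bounds for $|\qQ_{\leq\alpha}U(z)|_\tau$ are immediate: if $|\tau|\le\alpha$ then $(\qQ_{\leq\alpha}U)(z)_\tau=U(z)_\tau$, and if $|\tau|>\alpha$ the component vanishes, so both the contribution to $\|\cdot\|_{\gamma,\eta}$ and the $\epsilon^{(\eta-|\tau|)\wedge 0}$-weighted pointwise supremum are controlled by the corresponding quantities for $U$.

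The real content is the translation bound on $\qQ_{\leq\alpha}U(z)-\Gamma_{z,z'}\qQ_{\leq\alpha}U(z')$. The key identity, obtained by adding and subtracting $\qQ_{\leq\alpha}\Gamma_{z,z'}U(z')$, is
\begin{equation*}
\qQ_{\leq\alpha}U(z)-\Gamma_{z,z'}\qQ_{\leq\alpha}U(z')
=\qQ_{\leq\alpha}\bigl(U(z)-\Gamma_{z,z'}U(z')\bigr)+\bigl[\qQ_{\leq\alpha},\Gamma_{z,z'}\bigr]U(z').
\end{equation*}
For $|\tau|<\gamma$, the first term is controlled by $\|U\|_{\gamma,\eta;\epsilon}$ directly from the definition. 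Using the upper-triangularity of $\Gamma_{z,z'}-\id$, the commutator contribution in homogeneity $\tau$ reduces to
\begin{equation*}
\sum_{|\sigma|>\alpha}(\Gamma_{z,z'}\sigma)_\tau\,U(z')_\sigma,
\end{equation*}
and since $\alpha\ge\gamma>|\tau|$, each term is estimated by $|z-z'|^{|\sigma|-|\tau|}\,|U(z')_\sigma|$ via $|\Gamma_{z,z'}\sigma|_\tau\lesssim|z-z'|^{|\sigma|-|\tau|}$.

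It remains to absorb these error terms into the desired right-hand side. For the standard $\dD^{\gamma,\eta}$ norm (pairs with $|z-z'|<\sqrt{|t|\wedge|t'|}$), write
\begin{equation*}
|z-z'|^{|\sigma|-|\tau|}\,\sqrt{|t'|}^{\eta-|\sigma|}
=|z-z'|^{\gamma-|\tau|}\,\sqrt{|t'|}^{\,\eta-\gamma}\cdot
\Bigl(\tfrac{|z-z'|}{\sqrt{|t'|}}\Bigr)^{|\sigma|-\gamma},
\end{equation*}
and notice that $|\sigma|-\gamma>0$ and $|z-z'|\le\sqrt{|t'|}$ make the last factor $\le 1$, while $\sqrt{|t'|}^{\,\eta-\gamma}\le\sqrt{|t|\wedge|t'|}^{\,\eta-\gamma}$ since $\eta-\gamma<0$. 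For the $\epsilon$-weighted piece (pairs in $D_\epsilon$), one performs the same factorisation with $\epsilon+\sqrt{|t'|}$ in place of $\sqrt{|t'|}$ and uses $|z-z'|\le\epsilon\le\epsilon+\sqrt{|t'|}$ together with $|\sigma|-\gamma>0$ to get $|z-z'|^{|\sigma|-\gamma}(\epsilon+\sqrt{|t'|})^{\gamma-|\sigma|}\le 1$, so that the commutator error is bounded by $|z-z'|^{\gamma-|\tau|}\epsilon^{\eta-\gamma}\|U\|_{\gamma,\eta;\epsilon}$. Summing over the finitely many $\sigma\in\wW$ with $|\sigma|<2$ (using Lemma~\ref{le:subcritical}) yields the claimed estimate. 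The only mild subtlety is the book-keeping of the exponents in this final factorisation; everything else is a direct consequence of the upper-triangularity of $\gG$ together with $\alpha\ge\gamma$.
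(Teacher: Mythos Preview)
The paper does not prove this proposition itself; it is one of the results whose proof is deferred to \cite{HQ15} (see the sentence preceding Proposition~\ref{pr:multiplication}). Your commutator argument is the standard one and is correct.

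The only point worth flagging is the step where, for the unweighted $\dD^{\gamma,\eta}$ translation estimate, you invoke $|U(z')|_\sigma\lesssim\sqrt{|t'|}^{\,\eta-|\sigma|}$ with $|\sigma|>\alpha\ge\gamma$. In the paper's displayed formula for $\|U\|_{\gamma,\eta}$ the weighted pointwise supremum is literally restricted to $|\tau|<\gamma$, so taken at face value it gives no control of components above $\gamma$; the $\epsilon$-weighted pointwise term (which does range over all $\tau$) only yields $|U(z')|_\sigma\lesssim\epsilon^{\eta-|\sigma|}$, and that alone is \emph{not} enough to recover the weight $\sqrt{|t|\wedge|t'|}^{\,\eta-\gamma}$ when $\sqrt{|t|\wedge|t'|}>\epsilon$. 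However, the ``in short'' description immediately after Definition~\ref{def:modelled_distribution} and the setup of \cite{HQ15} make clear that the intended pointwise bound is $|U(z)|_\tau\lesssim(\epsilon+\sqrt{|t|})^{(\eta-|\tau|)\wedge 0}$ for \emph{all} $\tau$, and under that reading your factorisation goes through verbatim. For the $\epsilon$-weighted translation piece there is no such subtlety: the second term in $\|\cdot\|_{\gamma,\eta;\epsilon}$ already gives $|U(z')|_\sigma\le\epsilon^{\eta-|\sigma|}\|U\|_{\gamma,\eta;\epsilon}$, and combining this with $|z-z'|<\epsilon$ on $D_\epsilon$ is all that is needed.
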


\begin{prop} \label{pr:epsilon}
Let $U \in \dD^{\gamma, \eta}_{\epsilon}$ with $\gamma \in (-\beta, 1-\beta)$. Then $\widehat{\eE}^{\beta} U \in \dD^{\gamma', \eta'}_{\epsilon}$ with
\begin{align*}
\gamma' = (\gamma + \beta) \wedge \inf_{|\tau| < \gamma} (\gamma - |\tau|), \qquad \eta' = \eta + \beta, 
\end{align*}
and we have the bound
\begin{align*}
\| \widehat{\eE}^{\beta}U \|_{\gamma', \eta'; \epsilon} < C (1 + \| \Pi \|_{\epsilon}) \| U \|_{\gamma, \eta; \epsilon}. 
\end{align*}
In addition, if $\bar{U} \in \dD^{\gamma,\eta}$ with model $\bar{\Pi} \in \sM_{0}$, we have
\begin{align*}
\| \widehat{\eE}^{\beta}U; \widehat{\eE}^{\beta}\bar{U} \|_{\gamma', \eta'; \epsilon} < C (1 + \| \Pi \|_{\epsilon}) \big( \| U; \bar{U} \|_{\gamma,\eta;\epsilon}  + |\!|\!| \Pi; \bar{\Pi} |\!|\!|_{\epsilon;0} \big)
\end{align*}
with the same $\gamma'$ and $\eta'$. 
\end{prop}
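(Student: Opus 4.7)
The plan is to mimic the proof of the analogous statement in \cite[Sec.~4]{HQ15}, with the straightforward modifications needed to accommodate the extra $\epsilon$-weight in the norm of $\dD^{\gamma,\eta}_{\epsilon}$. The starting point is the definition \eqref{eq:multiplication_epsilon}, which we write as $(\widehat{\eE}^{\beta} U)(z) = \eE^{\beta} U(z) + P(z)$, where $P(z) = -\sum_{\ell} \frac{X^{\ell}}{\ell!} f_{z}\bigl(\sE^{\beta}_{\ell}(U(z))\bigr)$ is a polynomial correction. The key feature is that $\eE^{\beta}$ simply shifts homogeneities by $\beta$, while the constraint $|\ell| < \beta + |\tau|$ built into $\sE^{\beta}_{\ell}$ ensures that the polynomial correction only contributes at homogeneities strictly below $\gamma + \beta$; the choice $\gamma' = (\gamma + \beta) \wedge \inf_{|\tau| < \gamma}(\gamma - |\tau|)$ is precisely what is needed to accommodate both contributions.

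For the pointwise bounds at some homogeneity $\sigma < \gamma'$, we argue component by component. The part coming from $\eE^{\beta} U(z)$ gives $|\eE^{\beta} U(z)|_{\sigma} = |U(z)|_{\sigma - \beta}$, and the $\dD^{\gamma,\eta}_{\epsilon}$ bound $|U(z)|_{\sigma-\beta} \lesssim \|U\|_{\gamma,\eta;\epsilon} (\epsilon + \sqrt{|t|})^{(\eta - \sigma + \beta) \wedge 0}$ is exactly the required bound with $\eta' = \eta + \beta$. For the polynomial correction at homogeneity $|\ell|$, we combine the bound $|f_{z}(\sE^{\beta}_{\ell}\tau)| \lesssim \|\Pi\|_{\epsilon} \epsilon^{\beta - |\ell| + |\tau|}$ from the definition of $\sM_{\epsilon}$ with the pointwise estimate on $|U(z)|_{\tau}$. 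Since $|\ell| < \beta + |\tau|$ forces the exponent $\beta - |\ell| + |\tau|$ to be strictly positive, we may freely replace $\epsilon$ by $\epsilon + \sqrt{|t|}$ in that factor, and the exponents combine to yield $(\epsilon + \sqrt{|t|})^{\eta' - |\ell|}$ as required, with the prefactor $1 + \|\Pi\|_{\epsilon}$ arising from the $\sM_\epsilon$ input.

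The translation bound is the more delicate step, and is where I expect the main obstacle. The central algebraic fact is that, for any admissible model, $\Gamma_{z,z'}$ acts on $\eE^{\beta}\tau$ via an identity combining an $\eE^{\beta}$-shifted $\Gamma_{z,z'}\tau$ with polynomial remainders involving the values $f_{z}(\sE^{\beta}_{\ell}\cdot)$ and $f_{z'}(\sE^{\beta}_{\ell}\cdot)$. This identity is designed precisely so that when combined with the polynomial correction $P$, the difference $\widehat{\eE}^{\beta}U(z) - \Gamma_{z,z'}\widehat{\eE}^{\beta}U(z')$ reduces algebraically to $\eE^{\beta}[U(z) - \Gamma_{z,z'}U(z')]$ plus polynomial terms in $X^{\ell}$ whose coefficients are controlled by $f_{z}\bigl(\sE^{\beta}_{\ell}[U(z) - \Gamma_{z,z'}U(z')]\bigr)$. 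Applying the $\dD^{\gamma,\eta}_{\epsilon}$ translation bound on $U$ together with the $\sM_{\epsilon}$ bound on $f_{z}(\sE^{\beta}_{\ell}\cdot)$ then delivers the estimate on $D_{\epsilon}$-pairs; for pairs with $|z-z'| \geq \epsilon$ (but still $|z-z'| < \sqrt{|t| \wedge |t'|}$), the standard $\dD^{\gamma,\eta}$ bound suffices since the $\epsilon$-dependent weight becomes inactive. The careful bookkeeping between these two regimes, coupled with ensuring that all the exponents balance in a way compatible with the choice of $\gamma'$, is the main technical burden.

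The comparison statement $\|\widehat{\eE}^{\beta} U ; \widehat{\eE}^{\beta} \bar U\|_{\gamma',\eta';\epsilon}$ follows from the same analysis by linearity, exploiting the crucial simplification that for $\bar\Pi \in \sM_{0}$ one has $\bar f_{z}(\sE^{\beta}_{\ell}\cdot) \equiv 0$, so that $\widehat{\eE}^{\beta} \bar U = \eE^{\beta} \bar U$. Writing $\widehat{\eE}^{\beta} U - \widehat{\eE}^{\beta} \bar U = \eE^{\beta}(U - \bar U) + P(z)$ and splitting the polynomial correction between its $(U - \bar U)$ content and its $(f - \bar f)$ content produces the two terms $\|U; \bar U\|_{\gamma,\eta;\epsilon}$ and $\$ \Pi ; \bar\Pi \$_{\epsilon;0}$ on the right-hand side, each weighted by $1 + \|\Pi\|_{\epsilon}$.
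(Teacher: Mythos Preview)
Your proposal is correct and matches the paper's approach exactly: the paper does not give an independent proof of this proposition but simply states that ``most of the proofs can be found in \cite{HQ15}'', and your outline is precisely the HQ15 argument adapted to the present $\epsilon$-weighted setting. The decomposition $\widehat{\eE}^{\beta}U = \eE^{\beta}U + P$, the use of the $\sM_{\epsilon}$ bound on $f_{z}(\sE^{\beta}_{\ell}\tau)$, the algebraic cancellation in the translation step, and the simplification $\bar f_{z}(\sE^{\beta}_{\ell}\cdot)\equiv 0$ for $\bar\Pi\in\sM_{0}$ are all exactly as intended.
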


\begin{prop} \label{pr:integration}
Let $U \in \dD^{\gamma,\eta}_{\epsilon}(V)$, where $V$ is a sector of regularity $\alpha$ with $-2 < \eta < \gamma \wedge \alpha$. Then, provided that $\gamma$ and $\eta$ are not integers, we have $\pP U \in \dD^{\bar{\gamma}, \bar{\eta}}_{\epsilon}$ with $\bar{\gamma} = \gamma + 2$ and $\bar{\eta} = \eta + 2$, and we have the bound
\begin{align*}
\| \pP U \|_{\bar{\gamma}, \bar{\eta}; \epsilon} < C \| U \|_{\gamma,\eta;\epsilon}. 
\end{align*}
Furthermore, if $\bar{U} \in \dD^{\gamma,\eta}_{0}$, then we also have
\begin{align*}
\| \pP U; \pP \bar{U} \|_{\bar{\gamma}, \bar{\eta}; \epsilon} < C \big( \| U; \bar{U} \|_{\gamma,\eta;\epsilon} + \$  \Pi, \bar{\Pi} \$_{\epsilon} \big). 
\end{align*}
\end{prop}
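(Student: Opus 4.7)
The plan is to adapt the multilevel Schauder estimate of \cite[Thm.~5.12]{Hai14a} (as already adapted in \cite[Prop.~4.8]{HQ15} to similar $\eps$-weighted spaces) to the present situation, and to verify the two additional bounds built into Definition~\ref{def:modelled_distribution}. Since $\pP = \kK + \hat R\rR$, and the smooth part $\hat R\rR U$ contributes only polynomial terms that are easily handled via the bound $\|\rR U\|_{\cC^\eta}\lesssim \$\Pi\$\|U\|_{\gamma,\eta}$, the whole difficulty is concentrated on the abstract integration operator $\kK$ acting on the negative homogeneity part of $U$. I would first record that the classical (unweighted) bound $\|\pP U\|_{\bar\gamma,\bar\eta}\lesssim \|U\|_{\gamma,\eta}$ from \cite[Sec.~6]{Hai14a} is already available. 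What has to be shown is the sharpening of this bound when $U$ enjoys the extra $\eps$-dependent controls of $\dD^{\gamma,\eta}_\eps$ and $\Pi$ lies in $\sM_\eps$.

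The core of the argument is a dyadic decomposition of the truncated heat kernel $K$ into pieces $K_n$ supported on parabolic annuli of scale $2^{-n}$, as in \cite[Sec.~5]{Hai14a}. Writing $\kK U(z) = \sum_n \kK_n U(z)$, one splits the sum at the critical scale $n_\eps$ defined by $2^{-n_\eps}\sim \eps$. For $n \leq n_\eps$ (``large scales''), the contribution is controlled exactly as in the unweighted case, since the scales involved are $\geq \eps$ and one can use $\$\Pi\$$ together with $\|U\|_{\gamma,\eta}$. For $n > n_\eps$ (``small scales''), each $K_n$ localises at scale $\lambda = 2^{-n}<\eps$, and one must pair it against $\Pi_z \tau$ with $\tau\in\uU$ or against Taylor remainders; here one appeals to Lemma~\ref{le:model_sandwich} and to the second bound in \eqref{eq:various_model_bounds} to replace the classical bound $\lambda^{|\tau|}$ by $\lambda^\zeta\eps^{|\tau|-\zeta}$ (for centred test functions) or $\lambda\eps^{|\tau|-1}$. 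Summing the geometric series in $n>n_\eps$, these small-scale contributions produce exactly the factor $\eps^{(\bar\eta - |\tau|)\wedge 0}$ for the pointwise bound on $|\pP U(z)|_\tau$, respectively the factor $|z-z'|^{\bar\gamma - |\tau|}\eps^{\bar\eta-\bar\gamma}$ for the translation bound on $D_\eps$.

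The two $\eps$-weighted bounds in Definition~\ref{def:modelled_distribution} are then verified separately. For the pointwise bound one writes $\kK_n U(z)$ as an integral of $(\Pi_z U(z))$ against (derivatives of) $K_n$ minus the corresponding Taylor polynomial, and uses the strategy above, taking care that for $\tau\in\uU$ with $|\tau|$ below the regularity of $\uU$ one obtains the sharper scaling dictated by $\zeta=6/5$. For the translation bound on $(z,z')\in D_\eps$, one uses the standard identity expressing $\kK U(z) - \Gamma_{z,z'}\kK U(z')$ as a sum of three pieces (the action on $U(z)-\Gamma_{z,z'}U(z')$, the ``polynomial'' remainder, and the negative-homogeneity contributions of $\Pi_{z'}U(z')$ against $K_n(z-\cdot)-K_n(z'-\cdot)$); on each piece the $\eps$-weighted controls enter exactly in the range $|z-z'|<\eps$, precisely where they improve the classical bound. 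The comparison bound for two models $(\Pi,F)$ and $(\bar\Pi,\bar F)$ is obtained by the usual telescoping argument, writing the difference as pieces each of which involves either a model difference (controlled by $\$\Pi;\bar\Pi\$_{\eps;0}$) or a modelled distribution difference (controlled by $\|U;\bar U\|_{\gamma,\eta;\eps}$), and applying the bounds above to each piece.

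The main obstacle I anticipate is bookkeeping at the small-scale end: ensuring that the negative homogeneity exponents in $\uU$, the threshold $\zeta = 6/5$, and the regularity $\bar\eta-\bar\gamma = \eta-\gamma$ combine in the right way so that the sums over $n > n_\eps$ converge and produce exactly the target power of $\eps$, rather than a weaker one. The constraint $\eta < \gamma\wedge\alpha$ (together with $\eta > -2$) is precisely what makes these geometric sums converge at both ends of the dyadic scale, and verifying this is the only really delicate point; once it is in place the argument is structurally identical to \cite[Prop.~4.8]{HQ15}.
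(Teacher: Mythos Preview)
Your proposal is correct and aligns with the paper's treatment: the paper does not give a self-contained proof of this proposition but simply states that ``most of the proofs can be found in \cite{HQ15}'', and your outline is precisely the argument of \cite[Prop.~4.8]{HQ15} specialised to the present values of $\gamma,\eta,\zeta$. In particular the dyadic split at scale $2^{-n_\eps}\sim\eps$, the use of Lemma~\ref{le:model_sandwich} on the small-scale pieces, and the telescoping for the two-model comparison are exactly the ingredients one finds there, so there is nothing to add.
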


\subsection{Solution to the fixed point problem and convergence}

We now have all the ingredients in place to build our solution with uniform (in $\epsilon$) bounds in suitable $\dD^{\gamma,\eta}_{\epsilon}$ spaces. The equation we consider is of a general form
that it sufficiently flexible to cover all the concrete cases to be considered later. 
We first show the existence and uniqueness of local solutions.

\begin{thm}\label{th:fixed_pt}
Let $m \geq 1$, $\gamma \in (1, \frac{6}{5})$, $\eta \in \big( -\frac{m+1}{2m+1}, -\frac{1}{2} \big)$, and $\kappa > 0$ be sufficiently small. Let $\phi_{0} \in \cC^{\gamma, \eta}_{\epsilon}$, and consider the equation
\begin{equation} \label{eq:fixed_pt}
\Phi = \pP \1_{+} \bigg(  \Xi - \sum_{j=4}^{m} \lambda_{j} \qQ_{\leq 0} \heE^{\frac{j-3}{2}} (\qQ_{\leq 0} \Phi^{j}) - \sum_{j=0}^{3} \lambda_{j} \qQ_{\leq 0} (\Phi^{j}) \bigg) + \widehat{P} \phi_{0}. 
\end{equation}
Then, for every sufficiently small $\epsilon$ and every model in $\sM_{\epsilon}$, there exists $T > 0$ such that the equation \eqref{eq:fixed_pt} has a unique solution in $\dD^{\gamma,\eta}_{\epsilon}$ up to time $T$. Moreover, $T$ can be chosen uniformly over any fixed bounded set of initial data in $\cC^{\gamma,\eta}_{\epsilon}$, any bounded set of models in $\sM_{\epsilon}$, any bounded set of parameters $\lambda_{j}^{(\epsilon)}$, and all sufficiently small $\epsilon$. 

Let $\phi_{0}^{(\epsilon)}$ be a sequence of elements in $\cC^{\gamma,\eta}_{\epsilon}$ such that $\| \phi_{0}^{(\epsilon)}; \phi_{0} \|_{\gamma,\eta;\epsilon} \rightarrow 0$ for some $\phi_{0} \in \cC^{\eta}$, $\Pi^{\epsilon} \in \sM_{\epsilon}$, $\Pi \in \sM_{0}$ be models such that $|\!|\!| \Pi^{\epsilon}; \Pi |\!|\!|_{\epsilon;0} \rightarrow 0$, and let $\lambda_{j}^{(\epsilon)} \rightarrow \lambda_{j}$ for each $j$. If $\Phi \in \dD^{\gamma,\eta}$ solves the fixed point problem \eqref{eq:fixed_pt}with model $\Pi$, initial data $\phi_{0}$ and coefficients $\lambda_{j}$ up to time $T$, then for all small enough $\epsilon$, there is a unique solution $\Phi^{(\epsilon)} \in \dD^{\gamma,\eta}_{\epsilon}$ to \eqref{eq:fixed_pt} with $\Pi^{\epsilon}, \phi_{0}^{(\epsilon)}$ and $\lambda_{j}^{(\epsilon)}$ up to the same time $T$, and we have
\begin{align*}
\lim_{\epsilon \rightarrow 0}  \| \Phi^{(\epsilon)}; \Phi  \|_{\gamma,\eta;\epsilon} = 0\;,\quad
\lim_{\eps \to 0} \sup_{t \in [0,T]} \|(\rR^{(\epsilon)} \Phi^{(\epsilon)})(t,\cdot)-(\rR \Phi)(t,\cdot)\|_\eta = 0\;. 
\end{align*}
\end{thm}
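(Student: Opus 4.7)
The plan is to establish both assertions by a Banach fixed-point argument for the map
\[
\mM_T(\Phi) := \pP\1_{+}\Bigl(\Xi - \sum_{j=4}^{m}\lambda_j \qQ_{\leq 0}\heE^{(j-3)/2}\qQ_{\leq 0}(\Phi^j) - \sum_{j=0}^{3}\lambda_j\qQ_{\leq 0}(\Phi^j)\Bigr) + \widehat{P}\phi_0,
\]
restricted to $[0,T]\times\TT^3$, acting on a closed ball of $\dD^{\gamma,\eta}_{\epsilon}$ of radius $2R$ with $R = C\bigl(\|\phi_0\|_{\gamma,\eta;\epsilon} + \$\Pi\$_{\epsilon}\bigr)$. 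The four pillars are already at hand: Proposition~\ref{pr:lift_initial} controls the initial-condition lift $\widehat{P}\phi_0$, and Propositions~\ref{pr:multiplication}--\ref{pr:integration} control, respectively, multiplication, the projection $\qQ_{\leq 0}$, the action of $\heE^{\beta}$, and the abstract integration $\pP$.

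First I would verify that $\mM_T$ maps the ball into itself. Writing $\Phi = \Psi + (\Phi-\Psi)$ with leading symbol of homogeneity $-\tfrac{1}{2}-\kappa$, iterated use of Proposition~\ref{pr:multiplication} places each $\Phi^j$ into a space $\dD^{\gamma_j,\eta_j}_{\epsilon}(V_j)$ with sector regularity $j(-\tfrac{1}{2}-\kappa)$. After $\qQ_{\leq 0}$ (which costs nothing by Proposition~\ref{pr:projection}) followed by $\heE^{(j-3)/2}$ (Proposition~\ref{pr:epsilon}), the sector regularity of the integrand becomes $-\tfrac{3}{2}-j\kappa$; this is precisely the cancellation encoding the subcriticality of the equation (Lemma~\ref{le:subcritical}). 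The subsequent integration $\pP$ of Proposition~\ref{pr:integration} then returns a modelled distribution compatible with the fixed-point target $\dD^{\gamma,\eta}_{\epsilon}$, provided $\eta > -\tfrac{m+1}{2m+1}$, $\gamma \in (1, \tfrac{6}{5})$, and $\kappa$ is sufficiently small. The same chain of estimates yields Lipschitz bounds on $\mM_T$, and the short-time smoothing of $\pP$ produces a factor $T^\theta$ for some $\theta > 0$; choosing $T = T(R)$ small enough makes $\mM_T$ a strict contraction on the ball, giving local existence, uniqueness, and uniformity of $T$ on bounded sets of initial data, models, and coefficients.

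Finally, for the convergence statement I would subtract the fixed-point equations for $\Phi^{(\epsilon)}$ and $\Phi$ and apply the second halves of Propositions~\ref{pr:multiplication}--\ref{pr:integration} term by term, which produces a bound of the form
\[
\$\Phi^{(\epsilon)};\Phi\$_{\gamma,\eta;\epsilon} \leq C\,T^\theta\,\$\Phi^{(\epsilon)};\Phi\$_{\gamma,\eta;\epsilon} + C\Bigl(\|\phi_0^{(\epsilon)};\phi_0\|_{\gamma,\eta;\epsilon} + \$\Pi^{\epsilon};\Pi\$_{\epsilon;0} + \sum_{j}|\lambda_j^{(\epsilon)}-\lambda_j|\Bigr)
\]
on a short interval $[0,T_0]$ where the contraction is effective; absorbing the first term into the left-hand side gives $\$\Phi^{(\epsilon)};\Phi\$_{\gamma,\eta;\epsilon} \to 0$ on $[0,T_0]$. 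To extend convergence to the full existence time $T$ of $\Phi$, I would iterate, using Proposition~\ref{pr:restart_solution} at each restart time $t_k$ to ensure that the reconstructed data $\rR^{\epsilon}U^{(\epsilon)}(t_k,\cdot)$ belongs to $\cC^{\gamma,\eta}_{\epsilon}$ with uniform-in-$\epsilon$ norm and converges to $\rR U(t_k,\cdot)$ in the $\cC^{\gamma,\eta}_{\epsilon};\cC^{\eta}$ distance; the local convergence can then be reapplied on each successive subinterval. The hard part is the book-keeping in the first step, since the admissible range for $(\gamma,\eta,\kappa)$ shrinks as $m$ grows and must simultaneously satisfy the hypotheses of all four propositions; the key identity making everything consistent is $j\bigl(-\tfrac{1}{2}-\kappa\bigr) + \tfrac{j-3}{2} = -\tfrac{3}{2} - j\kappa$, which is uniform in $j$ and is ultimately what allows the construction to close.
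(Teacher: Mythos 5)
Your proposal is correct and follows essentially the same route as the paper: a contraction argument for $\mM_T$ on a ball in $\dD^{\gamma,\eta}_{\epsilon}$ built from Propositions~\ref{pr:lift_initial}--\ref{pr:integration}, followed by comparison of the two fixed points and iteration via Proposition~\ref{pr:restart_solution} to reach the full existence time of $\Phi$. The only cosmetic difference is that in the $\epsilon$-weighted spaces the smoothing factor one actually extracts is $(T+\epsilon)^{\theta}$ rather than $T^{\theta}$, which is why the contraction (and hence the uniform existence time) requires $\epsilon$ small as well, exactly as stated in the theorem.
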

\begin{proof}
We first prove that the fixed point problem \eqref{eq:fixed_pt} can be solved in $\dD^{\gamma,\eta}_{\epsilon}$ with local existence time uniform in $\epsilon$. Let $\mM_{T}^{(\epsilon)}$ denote the map
\begin{equation} \label{eq:fixed_pt_map}
\mM^{(\epsilon)}_T (\Phi) = \pP \1_{+} \bigg(  \Xi - \sum_{j=4}^{m} \lambda_{j} \qQ_{\leq 0} \heE^{\frac{j-3}{2}} (\qQ_{\leq 0} \Phi^{j}) - \sum_{j=0}^{3} \lambda_{j} \qQ_{\leq 0} (\Phi^{j}) \bigg) + \widehat{P} \phi_{0}\;,
\end{equation}
where $T$ denotes the length of the time interval on which the argument $\Phi$ is defined. Note that although the terms in \eqref{eq:fixed_pt_map} does not explicitly depend on $T$, their domains of definition and norms $\|\cdot\|_{\gamma,\eta;\epsilon}$ do depend on it. We will show that, for $T$ sufficiently small, $\mM_{T}^{(\epsilon)}$ is a contraction mapping a centered ball in $\dD^{\gamma,\eta}_{\epsilon}$ of a large enough radius $\Lambda$ into a ball of radius $\frac{\Lambda}{2}$. 
In what follows, we will omit the subscript $T$ and simply denote this map as $\mM^{(\epsilon)}$.

We first show that $\mM^{(\epsilon)}$ maps $\dD^{\gamma,\eta}_{\epsilon}$ into itself. By Proposition~\ref{pr:lift_initial}, we have $\widehat{P}\phi_{0}^{(\epsilon)} \in \dD^{\gamma,\eta}_{\epsilon}$. In addition, the noise term $\pP \1_{+} \Xi$ also belongs to $\dD^{\gamma,\eta}_{\epsilon}$. As for the non-linearity, if $j \leq 3$, it is straightforward to see that $\Phi^{j} \in \dD^{\delta, 3 \eta}_{\epsilon}$ for some positive $\delta$. We can choose $\delta$ small enough so that there is no basis vector with homogeneity between $0$ and $2 \delta$, and Proposition~\ref{pr:projection} then implies that $\qQ_{\leq 0} (\Phi^{j}) = \qQ_{\leq 2 \delta} (\Phi^{j}) \in \dD^{\delta,3 \eta}_{\epsilon}$. As an immediate application of Proposition~\ref{pr:integration}, one sees that the map
\begin{align*}
\Phi \mapsto \pP \1_{+} \bigg( \sum_{j=0}^{2} \lambda_{j} \qQ_{\leq 0} (\Phi^{j}) \bigg)
\end{align*}
is locally Lipschitz from $\dD^{\gamma,\eta}_{\epsilon}$ into $\dD^{\delta+2,3 \eta+2}_{\epsilon}$. By the assumption on the range of $\gamma$ and $\eta$, we have
\begin{equ}
\delta + 2 > \gamma, \qquad 3 \eta+2 > \eta, 
\end{equ}
so we have the natural embedding $\dD^{\delta+2, 3\eta+2}_{\epsilon} \emb \dD^{\gamma,\eta}_{\epsilon}$, and hence the map is locally Lipschitz from $\dD^{\gamma,\eta}_{\epsilon}$ into itself. Moreover, since the kernel is non-anticipative, by \cite[Thm.$7.1$, Lem.~7.3]{Hai14a} and the definition of the $\dD^{\gamma,\eta}_{\epsilon}$ space, we know the local Lipschitz constant is bounded by $(T+\epsilon)^{\theta}$ for some positive $\theta$. More precisely, there exists $C, \theta > 0$ such that
\begin{align*}
\bigg\| \pP \1_{+} \bigg( \sum_{j=0}^{3} \lambda_{j} \qQ_{\leq 0}(\Phi^{j}) \bigg) \bigg\|_{\gamma,\eta;\epsilon} < C (T+\epsilon)^{\theta} \|\Phi\|_{\gamma,\eta;\epsilon}. 
\end{align*}
We now turn to the nonlinear term $\pP \1_{+} \big( \qQ_{\leq 0} \heE^{\frac{j-3}{2}} \qQ_{\leq 0} (\Phi^{j}) \big)$ for $j \geq 4$. Let
\begin{align*}
\gamma_{j} = \gamma - \frac{j-1}{2} - (j-1) \kappa, \qquad \eta_{j} = j \eta, \qquad \bar{\eta}_{j} = j \eta + \frac{j-3}{2}. 
\end{align*}
Then by Propositions~\ref{pr:multiplication} and~\ref{pr:projection}, we have $\qQ_{\leq 0}(\Phi^{j}) \in \dD^{\gamma_{j}, \eta_{j}}_{\epsilon}$ with
\begin{align*}
\| \qQ_{\leq 0}(\Phi^{j}) \|_{\gamma_{j},\eta_{j};\epsilon} < C \| \Phi \|_{\gamma,\eta;\epsilon}^{j}. 
\end{align*}
The assumption $\gamma > 1$ implies $\gamma_{j} > - \frac{j-3}{2}$ if $\kappa$ is sufficiently small, so applying Proposition~\ref{pr:epsilon} with $\beta = \frac{j-3}{2}$, we know that there exists $\delta > 0$ such that $\heE^{\frac{j-3}{2}} \qQ_{\leq 0}(\Phi^{j}) \in \dD^{\delta, \bar{\eta}_{j}}_{\epsilon}$ with
\begin{align*}
\| \heE^{\frac{j-3}{2}} \qQ_{\leq 0}(\Phi^{j}) \|_{\delta,\bar{\eta}_{j};\epsilon} < C (1 + \| \Pi \|_{\epsilon}) \|\Phi\|_{\gamma,\eta;\epsilon}^{j}. 
\end{align*}
Similar as before, we can again choose $\delta$ to be small enough so that $\qQ_{\leq 0} \heE^{\frac{j-3}{2}} \qQ_{\leq 0}(\Phi^{j}) = \qQ_{\leq 2 \delta} \heE^{\frac{j-3}{2}} \qQ_{\leq 0}(\Phi^{j})$ also belongs to $\dD^{\delta,\bar{\eta}_{j}}_{\epsilon}$ with the same bound. Since $\bar{\eta}_{j} > -2$, an application of Proposition~\ref{pr:integration} implies that there exists $\theta > 0$ such that
\begin{align*}
\| \pP \1_{+} \heE^{\frac{j-3}{2}} \big(\qQ_{\leq 0} (\Phi^{j}) \big) \|_{\gamma,\eta;\epsilon} < C (T + \epsilon)^{\theta} (1 + \|\Pi\|_{\epsilon}) \|\Phi\|_{\gamma,\eta;\epsilon}^{j+2}. 
\end{align*}
This shows $\mM^{(\epsilon)}$ indeed maps $\dD^{\gamma,\eta}_{\epsilon}$ into itself. In particular, if $\Lambda$ is big enough with
\begin{align*}
\|\Phi\|_{\gamma,\eta;\epsilon} < \Lambda, \qquad \|u_{0}^{(\epsilon)}\|_{\gamma,\eta;\epsilon} < \frac{\Lambda}{C}, 
\end{align*}
then we can choose $T$ small enough depending on $\Lambda$, $\|\Pi\|_{\epsilon}$ and $\lambda_{j}^{(\epsilon)}$'s only such that
\begin{align*}
\| \mM^{(\epsilon)} (\Phi) \|_{\gamma,\eta;\epsilon} < \frac{\Lambda}{2}. 
\end{align*}
In order to show $\mM^{(\epsilon)}$ is also a contraction for small $T$, we first note that since there is only one model concerned in $\sM_{\epsilon}$, we can simply compare the difference $\mM^{(\epsilon)} (\Phi) - \mM^{(\epsilon)} (\tilde{\Phi})$ for two elements $\Phi, \tilde{\Phi} \in \dD^{\gamma,\eta}_{\epsilon}$. In fact, we have
\begin{align*}
\mM^{(\epsilon)} (\Phi) - \mM^{(\epsilon)} (\tilde{\Phi}) = &- \sum_{j=4}^{m} \sum_{k=0}^{j-1} \lambda_{j} \pP \1_{+} \qQ_{\leq 0} \heE^{\frac{j-3}{2}} \qQ_{\leq 0} \big( (\Phi-\tilde{\Phi}) \Phi^{j-1-k} \tilde{\Phi}^{k} \big) \\
&- \qQ_{\leq 0} (\Phi - \tilde{\Phi} ) \big( \lambda_{3} (\Phi^{2} + \Phi \tilde{\Phi} + \tilde{\Phi}^{2}) + \lambda_{2}  (\Phi + \tilde{\Phi}) + \lambda_{1} \big). 
\end{align*}
By linearity, $\Phi - \tilde{\Phi} \in \dD^{\gamma,\eta}_{\epsilon}$, so all the bounds obtained above also apply for $\mM^{(\epsilon)}(\Phi) - \mM^{(\epsilon)}(\tilde{\Phi})$ except that one power of $\|\Phi\|_{\gamma,\eta;\epsilon}$ is replaced by $\|\Phi - \tilde{\Phi}\|_{\gamma,\eta;\epsilon}$. Thus, we get
\begin{align*}
&\phantom{111} \|\mM^{(\epsilon)} (\Phi) - \mM^{(\epsilon)} (\tilde{\Phi})\|_{\gamma,\eta;\epsilon} \\
&< C (T + \epsilon)^{\theta} \|\Phi - \tilde{\Phi}\|_{\gamma,\eta;\epsilon} (1 + \|\Pi\|_{\epsilon}) (1 + \|\Phi\|_{\gamma,\eta;\epsilon} + \|\tilde{\Phi}\|_{\gamma,\eta;\epsilon})^{m-1}. 
\end{align*}
Again, if we restrict ourselves to centered balls with radius $\Lambda$ in $\dD^{\gamma,\eta}_{\epsilon}$, then as soon as we choose
\begin{equation} \label{eq:short_existence_time}
(T+\epsilon)^{\theta} < \frac{1}{C(1 + \|\Pi\|_{\epsilon}) (1 + 2 \Lambda)^{m-1}}, 
\end{equation}
the map $\mM^{(\epsilon)} = \mM^{(\epsilon)}_{T}$ is a contraction and there is a unique solution to \eqref{eq:fixed_pt}. This is possible for all small $\epsilon$. In addition, it is clear that if the coefficients $\lambda_{j}^{(\epsilon)}$'s, the norms $\|\Pi\|_{\epsilon}$ and $\|u_{0}^{(\epsilon)}\|_{\gamma,\eta;\epsilon}$ are uniformly bounded as $\epsilon \rightarrow 0$, then this short existence time $T$ could be chosen independent of $\epsilon$ provided $\epsilon$ is small enough. 

We now turn to the second part of the theorem, namely the convergence of local solutions $\Phi^{(\epsilon)}$ to $\Phi$ up to the time $T$ when $\Phi$ is defined. By the arguments above, there exists a time $S < T$ such that \eqref{eq:fixed_pt} has a fixed point solution $\Phi^{(\epsilon)}$ in $\dD^{\gamma,\eta}_{\epsilon}$ up to time $S$ for all small $\epsilon$. We first show the convergence of $\Phi^{(\epsilon)}$ to $\Phi$ up to time $S$, and iterate the relative bounds to get existence and convergence to time $T$. 

Let $\mM^{(\epsilon)}: \dD^{\gamma,\eta}_{\epsilon} \rightarrow \dD^{\gamma,\eta}_{\epsilon}$ denote the map
\begin{align*}
\mM: \Phi \mapsto \pP \1_{+} \bigg(  \Xi - \sum_{j=4}^{m} \lambda_{j}^{(\epsilon)} \qQ_{\leq 0} \heE^{\frac{j-3}{2}} \big(\qQ_{\leq 0}(\Phi^{j}) \big) - \sum_{j=0}^{3} \lambda_{j}^{(\epsilon)} \qQ_{\leq 0} (\Phi^{j}) \bigg) + \widehat{P} \phi_{0}^{(\epsilon)}. 
\end{align*}
up to time $S$, and $\mM: \dD^{\gamma,\eta} \rightarrow \dD^{\gamma,\eta}$ be the map of the same form except that $\lambda_{j}^{(\epsilon)}$ and $\phi_{0}^{(\epsilon)}$ are replaced by $\lambda_{j}$ and $\phi_{0}$. Following the same line of argument as in the proof for the first half, we have
\begin{align*}
&\phantom{1111} \| \mM^{(\epsilon)}(\Phi^{(\epsilon)}); \mM(\Phi) \|_{\gamma,\eta;\epsilon} \\
&\lesssim (S + \epsilon)^{\theta} \| \Phi^{(\epsilon)}; \Phi \|_{\gamma,\eta;\epsilon} + \sup_{j} |\lambda_{j}^{(\epsilon)} - \lambda_{j}| + \$ \Pi^{(\epsilon)}; \Pi \$_{\epsilon,0} + \| \phi_{0}^{(\epsilon)}; \phi_{0} \|_{\gamma, \eta;\epsilon} , 
\end{align*}
where the proportionality constant depends on the norm of the relevant models, the size of the ball in $\dD^{\gamma,\eta}_{\epsilon}$, the initial data and the coefficients. For small enough $S$, using the knowledge that $\Phi^{(\epsilon)}$ and $\Phi$ are the fixed points in $\dD^{\gamma,\eta}_{\epsilon}$ and $\dD^{\gamma,\eta}_{0}$ respectively, we easily get
\begin{align} \label{eq:iterate_contraction}
\| \Phi^{(\epsilon)}; \Phi \|_{\gamma,\eta;\epsilon} \lesssim \sup_{j} |\lambda_{j}^{(\epsilon)} - \lambda_{j}| + \$ \Pi^{(\epsilon)}; \Pi \$_{\epsilon,0} + \| u_{0}^{(\epsilon)}; u_{0} \|_{\gamma,\eta;\epsilon}. 
\end{align}
This gives the desired convergence of $\| \Phi^{(\epsilon)}; \Phi \|_{\gamma,\eta;\epsilon}$ to $0$ up to time $S$. We now need to extend the solutions to time $T$, up to when the solution $\Phi$ to \eqref{eq:fixed_pt} is defined with model $\Pi \in \sM_{0}$. It suffices to have bounds for $\rR^{(\epsilon)} \Phi^{(\epsilon)} (t, \cdot)$ and $\rR^{(\epsilon)} \Phi^{(\epsilon)} (t, \cdot) - (\rR \Phi)(t, \cdot)$ in $\cC^{\gamma, \eta}_{\epsilon}$ so that we can restart the solution from time $t$. In fact, these are precisely what we obtained in Proposition~\ref{pr:restart_solution}. Thus, one could iterate \eqref{eq:iterate_contraction} up to time $T$, and this completes the proof. 
\end{proof}

\subsection{Renormalised equation}

We now turn to studying the effect of the renormalisation maps defined in Section~\ref{sec:renormalisation} on the solutions to the fixed point problem \eqref{eq:fixed_pt}. For simplicity, we write
\begin{align*}
\fF := \sum_{j=3}^{m} \lambda_{j} \eE^{\frac{j-3}{2}} \Psi^{j}, 
\end{align*}
and, for $n \geq 1$, we define the $n$-th `derivative' of $\fF$ to be
\begin{align*}
\fF^{(n)} := \sum_{j=3}^{m} j(j-1) \cdots (j-n+1) \lambda_{j} \eE^{\frac{j-3}{2}} \Psi^{j-n}. 
\end{align*}
If $(\bar{\Pi}, \bar{f})$ is an admissible model and $\gamma \in (1,\frac{6}{5})$, then the solution to the fixed point problem \eqref{eq:fixed_pt} in $\dD^{\gamma,\eta}_{\epsilon}$ necessarily has the form
\begin{equation} \label{eq:abstract_solution}
\Phi = \Psi + \varphi \cdot \1 - \iI(\fF) - \lambda_{2} \iI(\Psi^{2}) - \varphi \cdot \iI(\fF') + \varphi' \cdot X = \Psi + U, 
\end{equation}
where $U$ denotes the part that contains all basis vectors except $\Psi$. Therefore, the right hand side of \eqref{eq:fixed_pt} (including all terms with homogeneities up to $0$) can be written as
\begin{equs} \label{eq:rhs}
\hH(z) &= \Xi - \sum_{j=4}^{m} \lambda_{j} \heE^{\frac{j-3}{2}} \Phi^{j} - \sum_{j=0}^{3} \lambda_{j} \Phi^{j} \\
&= \Xi - \fF - \lambda_{2} \Psi^{2} - \varphi \cdot \fF' - \frac{1}{2} \varphi^{2} \cdot \fF'' - (\lambda_{1} + 2 \varphi \lambda_{2}) \Psi + \fF' \iI(\fF) \\
&+ \lambda_{2} \fF' \iI(\Psi^{2}) + \varphi \cdot \fF' \iI(\fF') + \varphi \cdot \fF'' \iI(\fF) + 2 \lambda_{2} \Psi \iI(\fF) - \varphi' \fF'X - \frac{1}{6} \varphi^{3} \cdot \fF''' \\
&+ \bigg( \lambda_{0} + \lambda_{1} \varphi + \lambda_{2} \varphi^{2} + \sum_{j \geq 4} \sum_{n=4}^{j} \begin{pmatrix} j \\ n \end{pmatrix} \varphi^{n} \overline{f}_{z}\big( \sE^{\frac{j-3}{2}}_{0} (\Psi^{j-n} U(z)^{n})  \big) \bigg)  \cdot \1. 
\end{equs}
We then have the following theorem, the proof of which 
is essentially the same as that in \cite[Sec.~5]{HQ15}, so we 
omit the details here.


\begin{thm} \label{th:renormalised_equation}
	Let $\phi_{0} \in \cC^{1}$, $\epsilon \geq 0$, and $\widehat{\xi}$ be a smooth space-time function. Let $(\Pi, f) = \sL_{\epsilon}(\widehat{\xi})$ be the canonical model as in Section 2, $M = (M_{0}, \wM)$, and $(\Pi^{M}, f^{M}) = M \sL_{\epsilon}(\widehat{\xi})$ be the renormalised model described in Section~\ref{sec:renormalisation}. If $\Phi \in \dD^{\gamma,\eta}_{\epsilon}$ is the local solution to the fixed point problem \eqref{eq:fixed_pt} for the model $(\Pi^{M}, f^{M})$, then the function $u = \rR^{M} \Phi$ is the classical solution to the PDE
	\begin{equation} \label{eq:renormalised_equation}
	\partial_{t} u = \Delta u - \sum_{j=4}^{m} \lambda_{j} \epsilon^{\frac{j-3}{2}} H_{j}(u;C_{1}) - \sum_{j=0}^{3} \lambda_{j} H_{j}(u; C_{1}) - (C u + C' + 6 \lambda_{2} \lambda_{3} C_{2}) + \widehat{\xi}
	\end{equation}
	with initial data $\phi_{0}$, and the constants $C$ and $C'$ are given by
	\begin{equation} \label{eq:renormalisation_constant}
	\begin{split}
	C &= \sum_{n=2}^{m-1} (n+1)^{2} n! \cdot \lambda_{n+1}^{2} C_{n} + \sum_{n=3}^{m-2} (n+2)! \cdot \lambda_{n} \lambda_{n+2} C_{n}\;, \\
	C' &= \sum_{n=3}^{m-1} (n+1)! \cdot \lambda_{n} \lambda_{n+1} C_{n}'\;. 
	\end{split}
	\end{equation}
\end{thm}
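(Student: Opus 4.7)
The plan is to reconstruct the PDE satisfied by $u = \rR^M\Phi$ by evaluating the right-hand side of \eqref{eq:fixed_pt} at its basepoint via the renormalised model. Since $\widehat\xi$ is smooth, every element of $\tT_\ex$ is represented by a genuine space-time function, so the reconstruction of the modelled distribution $\hH$ in \eqref{eq:rhs} coincides with $z \mapsto \bigl(\Pi^M_z\,\hH(z)\bigr)(z)$. Combining this with the identity $\rR^M \pP\1_+ = P * \rR^M(\1_+\,\cdot\,)$ that follows from the definition of $\pP$, the function $u = \rR^M\Phi$ solves the heat equation with forcing $(\Pi^M_z\hH(z))(z)$, so the whole task reduces to identifying this pointwise evaluation for each basis vector appearing in \eqref{eq:rhs}.

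The Wick part is the action of $\wM$. By definition \eqref{eq:renormalised_model} one has $\Pi^M_z\tau = (\Pi_z\otimes f_z)\wDelta(M_0\tau)$, and the first identity of \eqref{eq:Wick_model} combined with \eqref{eq:Wick_Hermite} shows that when $\tau = \eE^{(j-3)/2}\Psi^j$ is evaluated at $z$ and then multiplied by a polynomial in $\varphi = u(z)$, the result is $\epsilon^{(j-3)/2} H_j(u;C_1)$ (using $\bar\Pi_z\Psi(z)=\widehat\xi$ after Wick renormalisation and the identity $\sum_k \binom{j}{k}\varphi^{j-k}H_k(\Psi;C_1) = H_j(\varphi+\Psi;C_1)$). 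Applied to $\fF + \lambda_2\Psi^2 + \varphi\fF' + \tfrac12\varphi^2\fF'' + (\lambda_1+2\lambda_2\varphi)\Psi + \tfrac16\varphi^3\fF''' + \lambda_0 + \lambda_1\varphi + \lambda_2\varphi^2$, this gives exactly the terms $\sum_{j=0}^{3}\lambda_j H_j(u;C_1) + \sum_{j=4}^{m}\lambda_j\epsilon^{(j-3)/2}H_j(u;C_1)$ of \eqref{eq:renormalised_equation}.

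The constants come from $M_0$. I would go through the ``resonant'' quadratic-in-$\Psi$ terms of \eqref{eq:rhs}, namely $\fF'\iI(\fF)$, $\lambda_2\fF'\iI(\Psi^2)$, $2\lambda_2\Psi\iI(\fF)$, $\varphi\fF'\iI(\fF')$, $\varphi\fF''\iI(\fF)$, and match them against the defining patterns of $L_n$, $L_n'$ in Section~\ref{sec:renormalisation}. For example, $\fF'\iI(\fF)$ expands into a sum over $j,k$ of $jk\lambda_j\lambda_k\eE^{(j+k-6)/2}(\Psi^{j-1}\iI(\Psi^k))$; the terms whose $M_0$-image lies in $\langle\1\rangle$ are precisely those with $k = j-1$, producing the coefficients in the first sum of \eqref{eq:renormalisation_constant} after combining combinatorial factors (the $n!$ from $L_n$, the factor $(n+1)$ from $\fF'$, and the symmetry of Wick contractions giving the second $(n+1)$). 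Similarly, $\varphi\fF''\iI(\fF)$ and $\varphi\fF'\iI(\fF')$ contribute the second sum in the formula for $C$ when projected onto $\Psi\cdot\1$ and then paired with $\varphi$; the terms $L_n'$ with the $\eps^{n/2-3/2}$-scaling capture the contributions to $C'$ from cross-terms with mismatched $\eE$-powers; and the isolated term $2\lambda_2\Psi\iI(\fF)$ produces the extra constant $6\lambda_2\lambda_3 C_2$ through the $j=3$ contraction in $\fF$.

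The conceptual core is already in place from \cite[Sec.~5]{HQ15}, so the main obstacle is purely combinatorial bookkeeping: tracking how each summand of $\fF$, $\fF'$, $\fF''$, $\fF'''$ pairs up through $\iI$, which pairings produce an $\eE^\beta$-power matching the indices expected by the maps $L_n, L_n'$, and assembling the Wick contraction counts (factors $n!$, $(n+1)!$) together with the $\lambda$-coefficients. One must also verify that all other terms in \eqref{eq:rhs} which are not of the resonant form are annihilated by $M_0$ (true by construction of $L_n, L_n'$) and that the reconstruction is compatible with the projection $\qQ_{\leq 0}$ (which holds because the thrown-away components have positive homogeneity and vanish at the basepoint). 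Once this bookkeeping is carried out, the stated formulas \eqref{eq:renormalised_equation}–\eqref{eq:renormalisation_constant} follow.
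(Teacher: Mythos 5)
Your overall strategy is the same one the paper (following \cite[Sec.~5]{HQ15}) has in mind: since $\widehat\xi$ is smooth, the reconstruction of the right-hand side is the pointwise evaluation $z\mapsto(\Pi^M_z\hH(z))(z)$, the Wick part $\wM$ reassembles into Hermite polynomials of $u(z)=\varphi(z)+\Psi_\eps(z)$ via the binomial identity, the $\heE^\beta$-corrections account for genuine multiplication by $\eps^\beta$, and the constants come from pairing the quadratic-in-$\Psi$ resonant terms of \eqref{eq:rhs} with the patterns defining $L_n$ and $L_n'$. That skeleton is correct, and it is exactly what the omitted proof consists of.

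However, the combinatorial matching you sketch --- which is the entire content of \eqref{eq:renormalisation_constant} --- is wrong in several places. In $\fF'\iI(\fF)$ the symbols are $j\lambda_j\lambda_k\,\eE^{(j-3)/2}\bigl(\Psi^{j-1}\iI(\eE^{(k-3)/2}\Psi^{k})\bigr)$ (not $jk\lambda_j\lambda_k$), and the ones whose $M_0$-image lies in $\langle\1\rangle$ are indeed those with $k=j-1$, but these are precisely the $L_n'$-patterns $\eE^{n/2-1}(\Psi^n\iI(\eE^{n/2-3/2}\Psi^n))$ and they produce $C'=\sum(n+1)!\lambda_n\lambda_{n+1}C_n'$, \emph{not} the first sum of $C$: a $\lambda_{n+1}^2$ coefficient forces $k=j$. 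The first sum of $C$ arises from the $k=j=n+1$ terms of $\fF'\iI(\fF)$, which $L_n$ maps to $(n+1)!\,\Psi$ (giving the $\Psi_\eps$-part of $Cu$), combined with the $\varphi$-weighted terms of $\varphi\,\fF'\iI(\fF')$, which $L_n$ maps to $n!\,\1$; similarly the second sum of $C$ comes from the $k=j-2$ terms of $\fF'\iI(\fF)$ (mapped to $(n+1)!\,\Psi$) together with $\varphi\,\fF''\iI(\fF)$ (mapped to $n!\,\1$) --- so your assignment of $\varphi\fF'\iI(\fF')$ to the second sum is also off. Finally, the constant $6\lambda_2\lambda_3C_2$ does not come from $2\lambda_2\Psi\iI(\fF)$: the symbol $\Psi\iI(\Psi^3)$ is annihilated by all $L_n,L_n'$, and after Wick ordering it has no zeroth-chaos component (its chaos decomposition only has components of order $4$ and $2$, cf.\ \eqref{eq:graph_expression}). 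The correct source is $\lambda_2\fF'\iI(\Psi^2)$, whose $j=3$ term $3\lambda_2\lambda_3\,\Psi^2\iI(\Psi^2)$ is hit by $L_2$ ($\Psi^2\iI(\Psi^2)\mapsto 2!\,\1$ with constant $C_2$), yielding $3\lambda_2\lambda_3\cdot 2\cdot C_2=6\lambda_2\lambda_3C_2$. As written, carrying out your bookkeeping would not reproduce \eqref{eq:renormalised_equation}--\eqref{eq:renormalisation_constant}, so this part needs to be redone carefully even though the framework is the right one.
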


\section{Convergence of the renormalised models}
\label{sec:convModels}

In this section, we will show how to choose the correct constants so that the action of the 
renormalisation maps built in Section~\ref{sec:renormalisation} on the canonical model 
yields convergence to a limit, and we will also identify the limiting model. The 
identification of the limiting equation will be given in Section~\ref{sec:limits}.

\subsection{Main statement and convergence criterion} \label{sec:main_statement}

Let $\xi$ denote space-time white noise on $\R\times \T^3$. Fix a smooth compactly supported function $\rho: \R^{1+3} \rightarrow \R$ integrating to $1$, and set
\begin{equation} \label{eq:convolution_noise}
\rho_{\epsilon}(t,x) = \epsilon^{-5} \rho(t/\epsilon^{2}, x/\epsilon), \qquad \xi_{\epsilon} = \rho_{\epsilon} * \xi, 
\end{equation}
where `$*$' denotes space-time convolution. Then, the correlation of $\xi_{\epsilon}$ is
\begin{align*}
\E \xi_{\epsilon}(s,x) \xi_{\epsilon}(t,y) = \int_{\R} \int_{\TT^{3}} \rho_{\epsilon}(s-u,x-z) \rho_{\epsilon}(t-u,y-z) dz du. 
\end{align*}
If the noise $\widehat{\xi}$ is obtained from the convolution of the space time white noise defined on $\R \times (\epsilon^{-1} \T)^{3}$ with the mollifier $\rho$, then we actually have
\begin{align*}
\xi_{\epsilon} (t,x) \stackrel{\text{law}}{=} \epsilon^{-\frac{5}{2}} \widehat{\xi} (t/\epsilon^{2}, x/\epsilon). 
\end{align*}
From now on, we will always assume that the noise $\xi_{\epsilon}$ relates to $\xi$ by \eqref{eq:convolution_noise}. When we consider scale $\alpha < 1$ later, we simply replace $\epsilon$ by $\epsilon^{\alpha}$ in that expression. 
We also let
\begin{align*}
K_{\epsilon} = K * \rho_{\epsilon}, \qquad G_{\epsilon} = K_{\epsilon} * K_{\epsilon}, 
\end{align*}
where $K$ coincides the heat kernel in $\{|z| < 1\}$, has compact support, and annihilates polynomials up to degree $3$, as introduced at the beginning of Section~\ref{sec:admissible_model}. A crucial ingredient in proving the correct behaviour of various stochastic objects arising from the equation are the following bounds for the kernels $K_{\epsilon}$ and $G_{\epsilon}$. The proof can be found in \cite[Lemma $10.14$]{Hai14a}.

\begin{prop} \label{pr:kernel_convolution}
We have
\begin{equ}
D^{\ell} K_{\epsilon}(z) \lesssim (|z| + \epsilon)^{-3 - |\ell|}, \qquad G_{\epsilon}(z) \lesssim (|z|+\epsilon)^{-1}, 
\end{equ}
uniformly over all $\epsilon < 1$ and space-time points $z$ with $|z| < 1$. 
\end{prop}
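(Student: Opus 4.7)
The plan is to prove both bounds by straightforward size estimates that exploit parabolic scaling. Recall that $\R^{1+3}$ has parabolic dimension $5$, that the unmollified heat kernel $P$ satisfies $|D^\ell P(z)| \lesssim |z|^{-3-|\ell|}$ in the parabolic norm, and that $\rho_\epsilon$ is supported in a parabolic ball $\{|w| \lesssim \epsilon\}$ with $\|D^\ell \rho_\epsilon\|_\infty \lesssim \epsilon^{-5-|\ell|}$. Since $K$ coincides with $P$ in $\{|z|<1\}$ and has compact support, the pointwise bound for $K$ is inherited from that of $P$ (the region $|z|\gtrsim 1$ is trivial).

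For the first bound, I would split according to whether $|z| \geq 2\epsilon$ or $|z| < 2\epsilon$. In the former case, write $D^\ell K_\epsilon = (D^\ell K) * \rho_\epsilon$; since $|w| \leq \epsilon \leq |z|/2$ throughout the support of $\rho_\epsilon$, one has $|z-w| \gtrsim |z|$ and therefore $|D^\ell K_\epsilon(z)| \lesssim |z|^{-3-|\ell|}$. In the latter case, I would instead move all derivatives onto the mollifier: $D^\ell K_\epsilon = K * D^\ell \rho_\epsilon$, so that
\[
|D^\ell K_\epsilon(z)| \lesssim \epsilon^{-5-|\ell|} \int_{|u| \lesssim \epsilon} |u|^{-3}\, du \lesssim \epsilon^{-3-|\ell|},
\]
where I used the parabolic volume computation $\int_{|u| \leq R} |u|^{-3}\, du \sim R^2$. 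Since $\epsilon \sim |z|+\epsilon$ in the case $|z|<2\epsilon$ and $|z| \sim |z|+\epsilon$ in the case $|z|\geq 2\epsilon$, the two cases combine to the uniform bound $(|z|+\epsilon)^{-3-|\ell|}$.

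For the second bound, I would plug the $\ell=0$ case of the first estimate into the convolution defining $G_\epsilon$ and estimate
\[
|G_\epsilon(z)| \lesssim \int_{\R^{1+3}} (|z-w|+\epsilon)^{-3}(|w|+\epsilon)^{-3}\, dw.
\]
Setting $R = |z|+\epsilon$, I would decompose the domain of integration into the two ``near'' regions $\{|w| \leq R/2\}$ and $\{|z-w| \leq R/2\}$, together with the complementary ``far'' region. In each near region one of the two factors is roughly constant, of size $\lesssim R^{-3}$, while the remaining factor integrates over a parabolic ball of radius $R$ giving a contribution $\lesssim R^2$ by the same volume identity, so each near region contributes $\lesssim R^{-1}$. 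In the far region both $|w|$ and $|z-w|$ are $\gtrsim R$ and comparable, so the integrand is $\lesssim |w|^{-6}$; the integral $\int_{|w|\gtrsim R}|w|^{-6}\,dw$ is $\lesssim R^{-1}$ by scaling (the parabolic dimension is $5$). Summing the three contributions yields $|G_\epsilon(z)| \lesssim R^{-1} = (|z|+\epsilon)^{-1}$.

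I expect no essential obstacle: everything reduces to the parabolic volume identity $\int_{|u| \leq R} |u|^{-\alpha}\, du \sim R^{5-\alpha}$ for $\alpha < 5$, together with care in choosing which factor to differentiate and how to split the domain of integration. This is precisely why the authors advertise the statement as an ``easy proposition''.
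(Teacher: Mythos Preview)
The paper does not prove this proposition, merely labelling it ``easy'' and omitting the argument; your sketch is correct and is exactly the standard approach one would supply. (Your far-region claim that $|w|$ and $|z-w|$ are comparable is indeed justified, since $|z|\le R<2\min(|w|,|z-w|)$ there forces $|w|\lesssim|z-w|+|z|\lesssim|z-w|$ and symmetrically.)
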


\begin{rmk} \label{rm:parabolic_degree}
Here, $\ell = (\ell_{0}, \ell_{1}, \ell_{2}, \ell_{3})$ is a multi-index, and $|\ell| = 2 \ell_{0} + \sum_{i=1}^{3} \ell_{i}$ reflects the parabolic scaling. In what follows, we will always use the notation $|\cdot|$ to denote the parabolic degree of such indices. Also note that we do not require bounds on the derivatives of $G_{\epsilon}$, since none of the appearances of this kernel carries any renormalisation (in the sense that will become clear later). 
\end{rmk}

The main theorem of this section is the following.

\begin{thm} \label{th:main_convergence}
Let $M_{\epsilon} \in \fR$ denote the renormalisation map
\begin{equs}
M_{\epsilon} =\Big( \exp \big( - \sum_{n \geq 2} C_{n}^{(\epsilon)} L_{n} - \sum_{n \geq 3} C_{n}'^{(\epsilon)} L_{n}'  \big), \phantom{1} \exp \big(-C_{1}^{(\epsilon)} L_{1} \big) \Big),
\end{equs}
with $L_{n}$ and $L_{n}'$ as in Section~\ref{sec:renormalisation}. Let $\sL_{\epsilon}(\xi_{\epsilon})$ be canonical lift of $\xi_\eps$ to the regularity structure $\tT$ 
as in Section~\ref{sec:canonical_lift}, and consider the sequence of models
\begin{equ}
\fM_{\epsilon} := M_{\epsilon} \sL_{\epsilon}(\xi_{\epsilon})\;. 
\end{equ}
Then, there exists a choice of constants $C_{n}^{(\epsilon)}$, $C_{n}'^{(\epsilon)}$, and a random model $\fM \in \sM_{0}$ such that
\begin{equ}
\$ \fM_{\epsilon}; \fM \$_{\epsilon,0} \rightarrow 0
\end{equ}
in probability as $\epsilon \rightarrow 0$. Furthermore, the limiting model $\fM = (\hPi, \widehat{f})$ satisfies $\hPi_{z} \tau = 0$ for every $z$ and every basis vector $\tau$ that contains an occurrence of $\eE^\beta$ for some $\beta > 0$. 
\end{thm}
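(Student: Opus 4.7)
The plan is to adapt the pattern for convergence of models established in \cite[Sec.~10]{Hai14a} and in \cite[Sec.~6]{HQ15}. Convergence in probability in $\sM_\epsilon$ to a limit $\fM \in \sM_0$ will follow from a Kolmogorov-type criterion which, for each basis vector $\tau \in \wW$ of negative homogeneity, requires uniform second-moment bounds of the form
\begin{equation*}
\E \bigl|\langle \hPi_z^{\epsilon} \tau, \varphi_z^\lambda\rangle\bigr|^{2} \lesssim \lambda^{2|\tau| + \kappa}, \quad \E \bigl|\langle (\hPi_z^{\epsilon} - \hPi_z^{\epsilon'}) \tau, \varphi_z^\lambda\rangle\bigr|^{2} \lesssim \epsilon^{\theta}\,\lambda^{2|\tau| + \kappa - \theta},
\end{equation*}
for some $\theta, \kappa > 0$, uniformly over $\lambda \in (0,1]$, $z$ in compacts, $\varphi \in \bB$ and $0 < \epsilon < \epsilon' \le 1$, together with the rescaled versions against test functions $\psi \in \bB$ annihilating affine polynomials that encode the extra piece $\| \cdot \|_\epsilon$ of the $\sM_\epsilon$ norm. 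Higher moments follow by Gaussian hypercontractivity since each $\hPi_z^\epsilon\tau$ lies in a finite sum of Wiener chaoses.

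To estimate the second moments I would use the standard graphical approach. The factor $\wM = \exp(-C_1^{(\epsilon)}\wL)$ replaces every $\Psi^k$ inside $\tau$ by the Hermite polynomial $H_k(\Psi; C_1^{(\epsilon)})$, thereby removing all self-contractions at coincident points. What remains in each chaos component is an iterated integral of products of $K_\epsilon$ (one for each $\iI$) and $G_\epsilon = K_\epsilon * K_\epsilon$ (one for each contraction between integrated variables), which is bounded by power counting from Proposition \ref{pr:kernel_convolution} together with the subcriticality estimate of Lemma \ref{le:subcritical} (this is where $\kappa < 1/(8m)$ is used). The mass-renormalisation $M_0$ is needed only for the finitely many symbols whose \emph{zeroth} chaos still diverges after Wick ordering. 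Combining \eqref{eq:graded_vector_space} with the homogeneity bookkeeping shows these are precisely the ``dumbbell'' symbols singled out in the definitions of $L_n, L_n'$: for each such $\tau$, the divergent expectation arises from the full pairwise contraction of the $\Psi$ factors through the inner kernel and evaluates to an $n!$ (or $(n+1)!$) multiple of a (logarithmically) divergent integral of $G_\epsilon$ against $\rho_\epsilon$-convolutions. Setting $C_n^{(\epsilon)}, C_n'^{(\epsilon)}$ equal to those divergent scalars produces uniform bounds on the corresponding zeroth chaos, while all other chaos components are already controlled at the Wick level.

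The vanishing claim that $\hPi_z\tau = 0$ whenever $\tau$ contains a factor $\eE^\beta$ with $\beta > 0$ follows directly from the structure of \eqref{eq:model_epsilon}: both the bulk term $\epsilon^\beta(\Pi_z^\epsilon \cdot)$ and the Taylor correction involving $f^\epsilon_z(\sE^\beta_\ell \cdot)$ carry an explicit prefactor $\epsilon^\beta$, and the map $M_0$ only modifies such a symbol by absorbing a bounded amount. After renormalisation the whole expression $(\Pi_z^{M_\epsilon}\eE^\beta\sigma)(\bar z)$ is therefore $\epsilon^\beta$ times a uniformly bounded object in every chaos and vanishes pointwise as $\epsilon \to 0$. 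The main technical obstacle I anticipate lies in those symbols of negative homogeneity where a Wick sub-divergence and a mass sub-divergence coexist (for example $\eE^\beta(\Psi^n\iI(\eE^\beta\Psi^n))$ with $\beta > 0$): in the difference bound one must simultaneously exploit the cancellations of $C_1^{(\epsilon)} - C_1^{(\epsilon')}$ and $C_n^{(\epsilon)} - C_n^{(\epsilon')}$ to extract the additional $\epsilon^\theta$ factor, relying on a dyadic decomposition of $K_\epsilon - K_{\epsilon'}$ and $G_\epsilon - G_{\epsilon'}$ as in \cite[Sec.~6]{HQ15} to show that the differences of the divergent integrals themselves enjoy a better scaling than either piece alone.
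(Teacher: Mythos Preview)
Your overall plan is correct and matches the paper's strategy: reduce to a Kolmogorov criterion (the paper's Proposition~\ref{pr:convergence_criterion}), verify second-moment bounds chaos-by-chaos using the graphical calculus and power counting of \cite{HQ15}, and observe that the action on $\eE$-free symbols is exactly the $\Phi^4_3$ model. There are, however, two genuine gaps.

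First, your argument for the vanishing claim is circular. You write that after renormalisation $(\Pi_z^{M_\epsilon}\eE^\beta\sigma)(\bar z)$ is ``$\epsilon^\beta$ times a uniformly bounded object'' and hence vanishes. But the inner object is \emph{not} uniformly bounded: for instance, for $\tau = \eE^{\frac{n}{2}-1}\bigl(\Psi^n\iI(\eE^{\frac{n}{2}-1}\Psi^n)\bigr)$ the zeroth-chaos component before $M_0$ acts is $n!\,\epsilon^{n-2}\int K(z)G_\epsilon^n(z)\,dz$, and this integral diverges like $\epsilon^{2-n}$, so the whole expression is $\oO(1)$, not $o(1)$. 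The role of $M_0$ is precisely to subtract this $\oO(1)$ constant; what remains is \emph{not} of the form $\epsilon^\beta\times(\text{bounded})$ but rather a difference of two diverging objects whose smallness has to be established by a separate estimate. The paper does this in Section~4.4 by a careful case analysis in $k,n,\ell$ (the chaos order being $k+n-2\ell$), distributing the available powers of $\epsilon$ among the edges of the second-moment graph so that Assumption~\ref{as:graph_assump} is satisfied and an extra factor $\epsilon^\delta$ survives. This case analysis is the actual content of the proof and cannot be replaced by the heuristic you give.

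Second, two smaller points. Your claim that $M_0$ is needed ``only for symbols whose \emph{zeroth} chaos still diverges'' is incorrect: inspect the definition of $L_n$ in Section~\ref{sec:renormalisation} and you will see several generators mapping to $\Psi$ rather than $\1$; these cure divergences in the \emph{first} Wiener chaos (treated in the paper under the headings $k=n+1,\ell=n$ and $n=k+1,\ell=k$). And you omit entirely the bounds \eqref{eq:group_bound} on $\widehat f_z^\epsilon(\sE^\beta_\ell\tau)$ and \eqref{eq:positive_bound} on $\hPi_z^\epsilon\tau$ for $\tau\in\uU$ tested against functions annihilating affine polynomials; these are part of the convergence criterion and require their own arguments (Section~4.5 in the paper). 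Finally, your Cauchy formulation comparing $\epsilon$ to $\epsilon'$ is workable but unnecessary: since the limit on $\eE$-symbols is identically zero, the paper simply proves $\E|\hPi_z^\epsilon\tau(\varphi_z^\lambda)|^2\lesssim\epsilon^\theta\lambda^{2|\tau|+\theta}$ directly.
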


The readers may have already realised that with proper choices of $C_{1}^{(\epsilon)}$ and $C_{2}^{(\epsilon)}$, the action of the model $\fM_{\epsilon}$ on basis vectors without an appearance of $\eE$ is exactly as those in the regularised $\Phi^4_3$ equation (see \cite[Section 10]{Hai14a} for details). Thus, the action of the limiting model $\fM$ on those basis vectors is precisely the same as that of the limiting $\Phi^4_3$ model.

However, the effect of the models $\fM_{\epsilon}$ on symbols that contain $\eE$'s is more complicated. In order to prove Theorem~\ref{th:main_convergence}, we first give a useful criterion for the convergence of models in $\sM_{\epsilon}$. The proof of this criterion is essentially the same as Propositions 6.2 and 6.3 in \cite{HQ15}, so we only give the statement without proofs.

\begin{prop} \label{pr:convergence_criterion}
Let $(\tT, \gG)$ be the regularity structure given in Section~\ref{sec:structure}, and consider a family of random models $(\widehat{\Pi}^{\epsilon}, \widehat{f}^{\epsilon})$ in $\sM_{\epsilon}$. Assume there exists $\theta > 0$ such that for every test function $\varphi \in \bB$, every $\tau \in \wW$ with $|\tau| < 0$, every space-time point $z$ and every $\lambda \in (0,1)$, there exists a random variable $(\hPi_{z} \tau)(\varphi_{z}^{\lambda})$ such that
\begin{equation} \label{eq:negative_bound}
\E |(\hPi_{z}^{\epsilon} \tau)(\varphi_{z}^{\lambda})|^{2} \lesssim \lambda^{2|\tau| + \theta}, \qquad \E |(\hPi_{z}^{\epsilon} \tau - \hPi_{z} \tau)(\varphi_{z}^{\lambda})|^{2} \lesssim \epsilon^{\theta} \lambda^{2 |\tau| + \theta}. 
\end{equation}
Assume furthermore that for every $\eE^{\beta}(\tau) \in \wW$ with $\beta + |\tau| > 0$, one has
\begin{equation} \label{eq:group_bound}
\E |D^{\ell} \widehat{f}_{z}^{\epsilon}(\sE^{\beta}_{0} \tau)| \lesssim \epsilon^{|\tau| + \beta - |\ell| + \theta}
\end{equation}
for some positive $\theta$, and that for any $\tau \in \uU$, one has the bound
\begin{equation} \label{eq:positive_bound}
\E |(\hPi_{z}^{\epsilon} \tau)(\psi_{z}^{\lambda})| \lesssim \lambda^{\zeta + \theta} \epsilon^{|\tau|-\zeta}, \quad \zeta = \frac{6}{5}, 
\end{equation}
for all test functions $\psi \in \bB$ that annihilate affine functions, uniformly over $\lambda \in (0, \epsilon)$. Then, there exists a random model $(\hPi, \widehat{f}) \in \sM_{0}$ such that $\$ \hPi^{\epsilon}, \hPi \$_{\epsilon;0} \rightarrow 0$ in probability as $\epsilon \rightarrow 0$. 
\end{prop}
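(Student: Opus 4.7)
The plan is to follow closely the argument of Propositions~6.2 and~6.3 in \cite{HQ15}: the three hypotheses provide $L^2$ bounds on pointwise pairings and on $f$-components, and the task is to upgrade these to almost-sure uniform bounds matching the three summands of $\$\cdot\$_{\epsilon;0}$ (of which only the first requires convergence to a limit; the other two merely require uniform-in-$\epsilon$ boundedness). The key enabling fact is hypercontractivity: since each $\hPi^\epsilon_z\tau$ is built from the canonical lift of the regularised Gaussian noise $\xi_\epsilon$ composed with the renormalisation maps, it lies in a finite inhomogeneous Wiener chaos whose order depends only on the number of occurrences of $\Xi$ in $\tau$. Consequently all $L^p$ norms with $p<\infty$ are equivalent to $L^2$ norms, and \eqref{eq:negative_bound}--\eqref{eq:positive_bound} upgrade, with the same scaling in $\lambda$ and $\epsilon$, to moment bounds of arbitrary order.

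I would then treat the three summands separately. For $\$\hPi^\epsilon;\hPi\$$, a wavelet / Kolmogorov chaining argument over dyadic scales $\lambda$ and base points $z$ in compacts—carried out as in \cite[Sec.~10]{Hai14a}—turns the $L^p$ version of the first bound in \eqref{eq:negative_bound} into an almost-sure uniform bound of the correct order, the small exponent loss being absorbed by $\theta>0$; the difference bound in \eqref{eq:negative_bound} then yields the Cauchy property in probability and identifies a limit $\hPi$ with $\$\hPi^\epsilon;\hPi\$\to 0$. For $\|\hPi^\epsilon\|_\epsilon$, the $f$-part is immediate from \eqref{eq:group_bound}, while the pairing part follows from \eqref{eq:positive_bound} by the same chaining procedure, now restricted to dyadic scales $\lambda<\epsilon$ and to wavelets annihilating affine polynomials. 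For the third summand, observe that $\hPi^\epsilon_0\Psi(t,\cdot)$ is, up to a smooth polynomial correction in the base point, the value at time $t$ of the stationary solution to \eqref{eq:linearised} driven by $\xi_\epsilon$; this is a spatially-stationary centred Gaussian field whose covariance converges locally uniformly, so a classical Kolmogorov argument in the spatial variable gives both uniform-in-$\epsilon$ bounds and convergence of differences in $\cC^\eta$ for any $\eta<-\tfrac{1}{2}$. The limiting pair $(\hPi,\widehat{f})\in\sM_0$ is assembled from the almost-sure subsequential limits on a countable dense family of test functions, extended by density using the uniform bounds, together with the admissibility identities \eqref{eq:model_integration} and the prescription $\widehat{f}_z(\sE^\beta_\ell\tau)=0$ forced by the definition of $\sM_0$; all algebraic relations pass to the limit by continuity of integration against $K$, pointwise multiplication, and evaluation.

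The main technical obstacle will be the chaining argument producing the positive-homogeneity bound inside $\|\hPi^\epsilon\|_\epsilon$: since $\epsilon$ simultaneously controls the range of allowed scales and the target exponent, the wavelet telescoping (of the type carried out in Lemma~\ref{le:model_sandwich}) must be restricted to affine-annihilating test functions—so as to exploit $\zeta=\tfrac{6}{5}>1$ and extract the correct power of $\lambda$—while tracking the interplay between powers of $\epsilon$ and $\lambda$ precisely enough that a strictly positive $\theta$ is sufficient to absorb the logarithmic losses from summing over dyadic scales and spatial translations.
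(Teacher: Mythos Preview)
Your proposal is correct and matches the paper's approach: the paper itself does not give a proof, stating only that ``the proof of this criterion is essentially the same as Propositions 6.2 and 6.3 in \cite{HQ15}, so we only give the statement without proofs.'' Your outline—hypercontractivity to upgrade the $L^2$ bounds to arbitrary moments, then Kolmogorov/wavelet chaining as in \cite[Sec.~10]{Hai14a} for the model norm, direct use of \eqref{eq:group_bound} for the $f$-components, the restricted telescoping for $\|\hPi^\epsilon\|_\epsilon$, and a separate Gaussian Kolmogorov argument for the $\Psi$ term—is exactly the content of those propositions in \cite{HQ15}.
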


\begin{rmk}
Later, we will consider $(\widehat{\Pi}^{\epsilon}, \widehat{f}^{\epsilon}) = M_{\epsilon} \sL_{\epsilon} (\xi_{\epsilon})$ as in Theorem~\ref{th:main_convergence} with proper renormalisation constants $C_{j}^{(\epsilon)}$'s defined in the next subsection. It is straightforward to see that they indeed belong to $\sM_{\epsilon}$. For the limiting model $\fM$, its action on basis vectors without any appearance of $\eE$ is exactly the same as in the standard $\Phi^4_3$ equation (in fact, these are precisely the terms that appears in $\Phi^4_3$). Its action on terms containing a factor of $\eE^{\beta}$ will yield $0$. Thus, in addition to \eqref{eq:group_bound}, \eqref{eq:positive_bound}, it suffices to prove the second bound in \eqref{eq:negative_bound} for $\tau$ containing 
at least one factor of $\eE$, and with $\widehat{\Pi}_{z} \tau = 0$. 
\end{rmk}

\subsection{Graphical notations and preliminary bounds}

The remainder of this section is devoted to the proof that the random models
$M_\eps \sL_\eps(\xi_\eps)$ as in Theorem~\ref{th:main_convergence} do indeed satisfy the convergence
criterion of Proposition~\ref{pr:convergence_criterion}.
Since we are in a translation invariant setting, it suffices bound the random variables $(\hPi_{0}^{\epsilon} \tau)(\varphi_{0}^{\lambda})$ for various basis vectors $\tau$. 
All these random variables belong to some finite order Wiener chaos. Following \cite{HP14,HQ15}, we use a graphical notation to represent the kernels 
for homogeneous Wiener chaos of finite order. Each node in the graph represents a 
space-time variable in $\R^{1+3}$: the special green node \tikz[baseline=-3] \node [root] {}; 
represents the origin, which is fixed, 
the nodes \tikz[baseline=-3] \node [var] {}; represent the 
arguments in the kernel representation for homogeneous Wiener chaos, and the remaining nodes \tikz[baseline=-3] \node [dot] {}; represent variables to be integrated out. 

Each plain arrow 
\tikz[baseline=-0.1cm] \draw[kernel] (0,0) to (1,0); 
represents the kernel $K(z'-z)$, where $z$ and $z'$ are starting and ending points of the arrow. A dotted arrow
\tikz[baseline=-0.1cm] \draw[kepsilon] (0,0) to (1,0); 
represents the kernel $K_{\epsilon}$ with the same orientation as before, and
a bold green arrow \tikz[baseline=-0.1cm] \draw[testfcn] (1,0) to (0,0); 
represents a generic test function in $\bB$ rescaled by a factor $\lambda$. In addition, we use the barred arrow
\tikz[baseline=-0.1cm] \draw[kernel1] (0,0) to (1,0); 
to represent a factor $K(z'-z) - K(-z)$, where as before $z$ and $z'$ denote starting and ending points of the arrow. Finally, a double barred arrow \tikz[baseline=-0.1cm] \draw[kernel2] (0,0) to (1,0); represents the factor $K(z'-z) - K(-z) - x' \cdot DK(-z)$, where $z=(t,x)$, $z'=(t',x')$, and the differentiation $DK$ is with respect to space variable only. 

With these notations, it follows for example that for $\tau = \Psi \iI(\Psi^{3})$ and the canonical model $\Pi^{\epsilon} = \sL_\eps(\xi_\eps)$, we have the expression
\begin{equation} \label{eq:graph_expression}
(\Pi_{0}^{\epsilon} \Psi \iI(\Psi^{3}))(\varphi_{0}^{\lambda}) = \phantom{1}
\begin{tikzpicture}[scale=0.6,baseline=-0.0cm]
\node at (-1,1) [var] (aboveleft) {}; 
\node at (1,1) [var] (aboveright) {}; 
\node at (0,1.2) [var] (above) {}; 
\node at (0,0) [dot] (middle) {}; 
\node at (0,-1) [dot] (below) {}; 
\node at (1,0) [var] (right) {}; 
\node at (-1.2,-1.5) [root] (farbelow) {}; 
\draw[kepsilon] (aboveleft) to (middle); 
\draw[kepsilon] (aboveright) to (middle); 
\draw[kepsilon] (above) to (middle);
\draw[kepsilon] (right) to (below);  
\draw[kernel1] (middle) to (below); 
\draw[testfcn] (below) to (farbelow); 
\end{tikzpicture}
\phantom{1} + \phantom{1} 3 \phantom{1}
\begin{tikzpicture}[scale=0.6,baseline=-0.0cm]
\node at (-1,1) [var] (aboveleft) {}; 
\node at (1,1) [var] (aboveright) {}; 
\node at (0,0) [dot] (middle) {}; 
\node at (0,-1) [dot] (below) {}; 
\node at (1,-0.5) [dot] (right) {}; 
\node at (-1.2,-1.5) [root] (farbelow) {}; 
\draw[kepsilon] (aboveleft) to (middle); 
\draw[kepsilon] (aboveright) to (middle); 
\draw[kepsilon] (right) to (middle); 
\draw[kepsilon] (right) to (below);  
\draw[kernel1] (middle) to (below); 
\draw[testfcn] (below) to (farbelow); 
\end{tikzpicture}
\phantom{1} + \phantom{1} 3 \phantom{1}
\begin{tikzpicture}[scale=0.6,baseline=-0.0cm]
\node at (-1,1) [dot] (aboveleft) {}; 
\node at (1,1) [var] (aboveright) {}; 
\node at (0,0) [dot] (middle) {}; 
\node at (0,-1) [dot] (below) {}; 
\node at (1,0) [var] (right) {}; 
\node at (-1.2,-1.5) [root] (farbelow) {}; 
\draw[kepsilon, bend left = 60] (aboveleft) to (middle); 
\draw[kepsilon, bend right = 60] (aboveleft) to (middle); 
\draw[kepsilon] (aboveright) to (middle); 
\draw[kepsilon] (right) to (below);  
\draw[kernel1] (middle) to (below); 
\draw[testfcn] (below) to (farbelow); 
\end{tikzpicture}
\phantom{1} + \phantom{1} 3
\begin{tikzpicture}[scale=0.6,baseline=-0.0cm]
\node at (0,1.2) [dot] (above) {}; 
\node at (0,0) [dot] (middle) {}; 
\node at (1,-0.5) [dot] (right) {}; 
\node at (0,-1) [dot] (below) {}; 
\node at (-1.2,-1.5) [root] (farbelow) {}; 
\draw[kepsilon, bend left = 60] (above) to (middle); 
\draw[kepsilon, bend right = 60] (above) to (middle); 
\draw[kepsilon] (right) to (middle); 
\draw[kepsilon] (right) to (below);  
\draw[kernel1] (middle) to (below); 
\draw[testfcn] (below) to (farbelow); 
\end{tikzpicture}
\phantom{1}\;.
\end{equation}
Here, the first term represents the component in the fourth homogeneous Wiener chaos (see \cite[Ch.1.1.2]{Nua06}),
the next two terms represent the component in the second homogeneous chaos, and the last term
is the component in the zeroth homogeneous chaos. 
The variance of the first two terms above, for example, are bounded (up to some constant multiple) by
\begin{equation} \label{eq:graph_example}
\begin{tikzpicture}[scale=0.7,baseline=0.4cm]
\node at (0,2) [dot] (farabove) {}; 
\node at (0,1.5) [dot] (above) {}; 
\node at (0,1) [dot] (nearabove) {}; 
\node at (-1,1.5) [dot] (aboveleft) {}; 
\node at (1,1.5) [dot] (aboveright) {}; 
\node at (-1,0.5) [dot] (left) {}; 
\node at (0,0.5) [dot] (middle) {}; 
\node at (1,0.5) [dot] (right) {}; 
\node at (0,-0.5) [root] (below) {}; 
\draw[kepsilon] (farabove) to (aboveleft); 
\draw[kepsilon] (above) to (aboveleft); 
\draw[kepsilon] (nearabove) to (aboveleft); 
\draw[kepsilon] (farabove) to (aboveright); 
\draw[kepsilon] (above) to (aboveright); 
\draw[kepsilon] (nearabove) to (aboveright); 
\draw[kernel1] (aboveleft) to (left); 
\draw[kernel1] (aboveright) to (right); 
\draw[kepsilon] (middle) to (left); 
\draw[kepsilon] (middle) to (right); 
\draw[testfcn] (left) to (below); 
\draw[testfcn] (right) to (below); 
\end{tikzpicture}
\phantom{1} + \phantom{1}
\begin{tikzpicture}[scale=0.7,baseline=-0.2cm]
\node at (0,1.2) [dot] (above) {}; 
\node at (0,-0.2) [dot] (middle) {}; 
\node at (0,-1.5) [root] (below) {}; 
\node at (-1,0.5) [dot] (aboveleft) {}; 
\node at (1,0.5) [dot] (aboveright) {}; 
\node at (-1.8,0) [dot] (left) {}; 
\node at (1.8,0) [dot] (right) {}; 
\node at (-1,-0.5) [dot] (belowleft) {}; 
\node at (1,-0.5) [dot] (belowright) {}; 
\draw[kepsilon] (above) to (aboveleft); 
\draw[kepsilon] (above) to (aboveright); 
\draw[kepsilon] (middle) to (aboveleft); 
\draw[kepsilon] (middle) to (aboveright); 
\draw[kepsilon] (left) to (aboveleft); 
\draw[kepsilon] (left) to (belowleft); 
\draw[kepsilon] (right) to (aboveright); 
\draw[kepsilon] (right) to (belowright); 
\draw[kernel1] (aboveleft) to (belowleft); 
\draw[kernel1] (aboveright) to (belowright); 
\draw[testfcn] (belowleft) to (below); 
\draw[testfcn] (belowright) to (below); 
\end{tikzpicture}
\phantom{1}. 
\end{equation}
To bound this and similar quantities, it is convenient to label the edges of the graph 
to reflect the singularity of the corresponding kernel, and to give a bound of the whole 
object in terms of simple power counting of the labels. For this purpose, and in order to be able to
use the bounds obtained in \cite{HQ15}, we 
introduce \textit{labelled graphs} to represent bounds for quantities like 
$\E |(\hPi_{0}^{\epsilon} \tau)(\varphi_{0}^{\lambda})|^{2}$. 

Let $\gG = (\vV, \eE)$ be a labelled graph, where $\vV$ is the set of vertices, and $\eE$ is the set edges (which are labelled and directed). More precisely, each edge $e = (x_{v_{-}}, x_{v_{+}})$ in the graph has the direction $x_{v_{-}} \leftarrow x_{v_{+}}$, and is associated with a pair of numbers $(a_{e}, r_{e}) \in \R^{+} \times \ZZ$, and the orientation of the edge really matters only if $r_{e} > 0$. As before, edges $e$ are associated to kernels $J_e$, with $a_e$ measuring the singularity of the kernel in question in the sense that we assume that each $J_e$ is compactly supported and satisfies a bound of the type
\begin{equ}[e:boundJe]
|D^k J_e(z)| \lesssim |z|^{-a_e - |k|}\;,
\end{equ}
for every multiindex $k$. The precise factor represented by each edge then furthermore depends
on the value $r_e$.
If $r_{e} = 0$, then the corresponding edge simply represents a factor 
$\widehat{J}_{e}(x_{v_{-}}, x_{v_{+}}) = J_{e}(x_{v_{+}} - x_{v_{-}})$. 
We simply write $a_{e}$ instead of $(a_{e},0)$ in this case. 

If $r_{e} > 0$, then the corresponding edge represents a factor
\begin{equation} \label{eq:positive_renormalisation}
\widehat{J}_{e}(x_{v_{-}}, x_{v_{+}}) = J_{e}(x_{v_{+}}- x_{v_{-}}) - \sum_{|k|_{\fs} < |r_e|} \frac{x_{v_{+}}^{k}}{k!} D^{k}J_{e}(-x_{v_{-}}). 
\end{equation}
On the other hand, if $r_{e} < 0$, then the edge corresponds to a factor
$\widehat{J}_{e}(x_{v_{-}}, x_{v_{+}}) = (\sR J_{e})(x_{v_{+}}- x_{v_{-}})$, where
 $\sR J_{e}$ denotes the \textit{renormalised} distribution
\begin{equation} \label{eq:negative_renormalisation}
(\sR J_{e})(\varphi) = \int J_{e}(x) \Big( \varphi(x) - \sum_{|k|_{\fs} < |r_{e}|} \frac{x^{k}}{k!} \varphi(0) \Big)\, dx\;. 
\end{equation}
In other words, a positive $r_{e}$ corresponds to the re-centering from subtracting lower order Taylor polynomials (or maybe called ``positive renormalisation"), and $r_{e} < 0$ corresponds to ``negative renormalisation". Also, since we will always consider situations where no two edges with $r_{e} < 0$ meet and
all $J_e$ are smooth functions, the meaning of the factor $(\sR J_{e})(x_{v_{+}}- x_{v_{-}})$ is
unambiguous.

Unlike in \cite{HQ15}, each labelled graph does in our case represent a sequence of multiple integrals
depending on a parameter $\epsilon \in (0,1]$. To keep track of some of that dependency,
we consider graphs with both `plain' and `dotted' edges.
If an edge is plain, then the corresponding kernel $J_e$ is allowed to depend on $\eps$
(to make that dependency clear we will also sometimes write $J_e^{(\eps)}$), but the bounds
\eqref{e:boundJe} are assumed to hold uniformly in $\eps \in (0,1]$. 
If an edge is dotted however, then the corresponding kernel $J_{e}^{(\eps)}$ is assumed to satisfy the bound
\begin{equ}
|D^k J_e^{(\eps)}(z)| \lesssim (|z| + \eps)^{-a_e - |k|}\;,
\end{equ}
uniformly in $\eps \in(0,1]$. 
There are two additional edges (in boldface) connecting to the origin that represent a factor $\varphi^{\lambda}(x_{v},0)$. The origin is denoted by $\{0\} \subset \vV$, and we denote by $v_{\star, 1}$ and $v_{\star, 2}$ the two vertices that connect to $0$ by the edges representing test functions. Finally, we set
\begin{align*}
\vV_{\star} = \{0, v_{\star,1}, v_{\star,2}\}, \qquad \vV_{0} = \vV \setminus \{0\}. 
\end{align*}
Thus, as a consequence of Proposition~\ref{pr:kernel_convolution}, 
the quantity in \eqref{eq:graph_example} can be represented by
\begin{align*}
\begin{tikzpicture}[scale=0.7,baseline=-0.2cm]
\node at (-1,1.5) [dot] (aboveleft) {}; 
\node at (1,1.5) [dot] (aboveright) {}; 
\node at (-1,0) [dot] (left) {}; 
\node at (1,0) [dot] (right) {}; 
\node at (0,-1) [root] (below) {}; 
\draw[gepsilon] (aboveleft) to node[labl]{\scriptsize $3$} (aboveright); 
\draw[gepsilon] (left) to node[labl]{\scriptsize $1$} (right); 
\draw[kernel1] (aboveleft) to node[labl]{\tiny $3,1$} (left); 
\draw[kernel1] (aboveright) to node[labl]{\tiny $3,1$} (right); 
\draw[dist] (left) to (below); 
\draw[dist] (right) to (below); 
\end{tikzpicture}
\phantom{1} + \phantom{1}
\begin{tikzpicture}[scale=0.7,baseline=-0.2cm]
\node at (-1,1.5) [dot] (aboveleft) {}; 
\node at (1,1.5) [dot] (aboveright) {}; 
\node at (-1,0) [dot] (left) {}; 
\node at (1,0) [dot] (right) {}; 
\node at (0,-1) [root] (below) {}; 
\draw[gepsilon] (aboveleft) to node[labl]{\scriptsize $2$} (aboveright); 
\draw[kernel1] (aboveleft) to node[labl]{\tiny $3,1$} (left); 
\draw[kernel1] (aboveright) to node[labl]{\tiny $3,1$} (right); 
\draw[gepsilon, bend left = 60] (left) to node[anchor=east]{\tiny $1$} (aboveleft); 
\draw[gepsilon, bend right = 60] (right) to node[anchor=west]{\tiny $1$} (aboveright); 
\draw[dist] (left) to (below); 
\draw[dist] (right) to (below); 
\end{tikzpicture}
\phantom{1}. 
\end{align*}
With all these notations at hand, for a labelled graph $\gG$ and the collection of kernels $J_{e}$, we let $I_{\lambda}^{\gG}$ denote the number
\begin{equation} \label{eq:graph_quantity}
I_{\lambda}^{\gG} = \int_{(\R^{4})^{\vV_{0}}} \prod_{e \in \eE} \widehat{J}_{e} (x_{e_{-}}, x_{e_{+}}) dx, 
\end{equation}
where $4$ reflects the space-time dimension.
In order to determine the right scale of the quantity $I_{\lambda}^{\gG}$, we introduce some additional notations. For any subset $\bar{\vV} \subset \vV$, we let
\begin{align*}
\eE^{\uparrow}(\bar{\vV}) &= \{e \in \eE: e \cap \bar{\vV}=e_{-}, r_{e} > 0 \}; \\
\eE^{\downarrow}(\bar{\vV}) &= \{e \in \eE: e \cap \bar{\vV}=e_{+}, r_{e} > 0 \}; \\
\eE_{0}(\bar{\vV}) &= \{e \in \eE: e \cap \bar{\vV }= e \}; \\
\eE(\bar{\vV}) &= \{e \in \eE: e \cap \bar{\vV} \neq \phi \}. 
\end{align*}
In other words, $\eE^{\uparrow}(\bar{\vV})$ is the set of outgoing edges from $\bar{\vV}$ with $r_{e} > 0$, $\eE^{\downarrow}(\bar{\vV})$ is the set of incoming edges to $\bar{\vV}$ with $r_{e} > 0$, $\eE_{0}(\bar{\vV})$ is the set of edges with both vertices in $\bar{\vV}$, and $\eE(\bar{\vV})$ is the set of edges with at least one vertex in $\bar{\vV}$. Note that the definition of $\eE^{\uparrow}(\bar{\vV})$ and $\eE^{\downarrow}(\bar{\vV})$ only considers edges with $r_{e} > 0$. 

Now, consider a labelled graph $\gG = (\vV, \eE)$ satisfying the following properties.

\begin{assumption} \label{as:graph_assump}
The labelled graph $\gG = (\vV, \eE)$ satisfies the following properties. 
\begin{enumerate}
\item For every edge $e \in \eE$, one has $a_{e} + (r_{e} \wedge 0) < 5$; 

\item For every subset $\bar\vV \subset \vV$ of cardinality at least $3$, one has
\begin{align*}
\sum_{e \in \eE_{0}(\bar{\vV})} a_{e} < 5(|\bar{\vV}|-1); 
\end{align*}
\item For every subset $\bar{\vV} \subset \vV$ containing $0$ and of cardinality at least $2$, one has
\begin{align*}
\sum_{e \in \eE_{0}(\bar{\vV})} a_{e} + \sum_{e \in \eE^{\uparrow}(\bar{\vV})} (a_{e} + r_{e} - 1) - \sum_{e \in \eE^{\downarrow}(\bar{\vV})} r_{e} < 5 (|\bar{\vV}|-1); 
\end{align*}
\item For every non-empty subset $\bar{\vV} \subset \vV \setminus \vV_{\star}$, one has
\begin{align*}
\sum_{e \in \eE(\bar{\vV}) \setminus \eE^{\downarrow}(\bar{\vV})} a_{e} + \sum_{e \in \eE^{\uparrow}(\bar{\vV})} - \sum_{e \in \eE^{\downarrow}(\bar{\vV})} (r_{e}-1) > 5 |\vV|. 
\end{align*}
\end{enumerate}
Note that the number $5$ in the above assumptions indicates the parabolic degree of the space-time dimension is $5$. 
\end{assumption}

It turns out that this assumption on the graph $\gG$ is sufficient to guarantee that the quantity $I_{\lambda}^{\gG}$ has the correct scaling behavior for small $\lambda$. This is the content of the following theorem, proved in \cite{HQ15}.

\begin{thm}
Let $\gG$ be a graph that satisfies Assumption~\ref{as:graph_assump}, and its edges represent kernels that satisfy the definitions and bounds in \eqref{e:boundJe}, \eqref{eq:positive_renormalisation} and \eqref{eq:negative_renormalisation}. If $I^{\gG}_{\lambda}$ denotes the quantity defined in \eqref{eq:graph_quantity}, then one has
\begin{equation} \label{eq:general_bound}
I_{\lambda}^{\gG} \lesssim \lambda^{\alpha}
\end{equation}
uniformly over $\lambda \in (0,1)$, where $\alpha = 5 |\vV \setminus \vV_{\star}| - \sum_{e \in \eE} a_{e}$, and the proportionality constant depends on the graph and magnitudes of norms of the corresponding kernels. 
\end{thm}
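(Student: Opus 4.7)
The plan is to follow the strategy of the analogous result in \cite{HQ15} (which itself extends the multiscale approach of \cite{HP14,Hai14a}), with the main new feature being that the kernels attached to dotted edges are allowed to depend on $\epsilon$. First I would observe that the dotted-edge bound $|D^k J_e^{(\epsilon)}(z)| \lesssim (|z|+\epsilon)^{-a_e - |k|}$ is dominated by $|z|^{-a_e-|k|}$ uniformly in $\epsilon$, so up to a constant one may pretend all edges are of the plain type. Since the bound \eqref{eq:general_bound} does not involve $\epsilon$ at all, this reduces the statement to the purely ``$\epsilon$-free'' estimate proved in \cite{HQ15}.

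Next I would normalise the integrand. For each edge with $r_e > 0$, the Taylor-remainder identity \eqref{eq:positive_renormalisation} yields a pointwise bound
\begin{equ}
|\widehat{J}_e(x_{v_-}, x_{v_+})| \lesssim |x_{v_+}|^{r_e}\, |x_{v_-}|^{-a_e - r_e}
\end{equ}
in the regime $|x_{v_+}| \leq \tfrac{1}{2}|x_{v_-}|$, and the trivial bound $|J_e(x_{v_+}-x_{v_-})| + \sum_{|k|<r_e} |x_{v_+}|^{|k|}|x_{v_-}|^{-a_e-|k|}$ otherwise. For edges with $r_e<0$, I would use the assumption that no two such edges share an endpoint (together with smoothness of the $J_e$ in that case) to interpret \eqref{eq:negative_renormalisation} as an ordinary function bound with singularity governed by $a_e + r_e$. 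After these reductions, the whole integrand becomes a positive expression with monomial factors attached to each edge.

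The core of the argument is then a Hepp-sector decomposition: partition $(\R^4)^{\vV_0}$ according to the binary tree $\tT$ of successive nearest-neighbour clusters of $\{x_v\}_{v\in\vV}$, and on each sector bound the integral by integrating over scale parameters associated with the nodes of $\tT$. On such a sector each edge $e$ contributes a power of the scale of the smallest cluster containing its endpoints, and the integration over each scale produces a polynomial factor in that scale. The four conditions of Assumption~\ref{as:graph_assump} are precisely what is needed to make the resulting sum absolutely convergent and to identify the scaling exponent: (1) controls each single-edge singularity, (2) controls the ``internal UV'' integration over subclusters not touching the origin, (3) controls clusters containing the origin where the extra factors from positive renormalisation come into play, and (4) encodes the fact that the test functions $\varphi_0^\lambda$ localise the remaining integration to scale $\lambda$, yielding exactly the exponent $\alpha = 5|\vV \setminus \vV_\star| - \sum_e a_e$.

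The main technical obstacle, carried out in detail in \cite[Sec.~A]{HQ15}, is the bookkeeping in the Hepp expansion when positive and negative renormalisations sit on subgraphs that overlap at a common vertex: one must reorganise the Taylor subtractions cluster by cluster so that on each sector they contribute genuinely useful decay rather than merely cancelling. Once this reorganisation is in place, the four inequalities in Assumption~\ref{as:graph_assump} translate into strict positivity of the corresponding scale exponents, and the bound \eqref{eq:general_bound} follows by summing a geometric series over the tree of scales. Since none of these steps require any modification in the present setting beyond the trivial dominance argument for dotted edges noted above, the proof of \cite{HQ15} applies verbatim.
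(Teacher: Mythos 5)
Your proposal is correct and takes essentially the same route as the paper, which proves nothing here itself but simply invokes the corresponding result of \cite{HQ15}, whose multiscale (Hepp-sector) argument is what you sketch. Your preliminary observation that the dotted-edge bounds $(|z|+\epsilon)^{-a_e-|k|}\lesssim |z|^{-a_e-|k|}$ hold uniformly in $\epsilon$, so that the statement reduces to the $\epsilon$-free estimate of \cite{HQ15}, is precisely the reason that citation suffices.
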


\begin{rmk}
The proportionality constant in \eqref{eq:general_bound} is a constant multiple of $\prod_{e} \| \widehat{J}_{e} \|_{a_{e},p_{e}}$ for suitable values $p_e$ depending on the
structure of the graph, where
\begin{align*}
\| J \|_{a,p} := \sup_{|z| \le 1, |\ell| \leq p} |z|^{a + |\ell|} |D^{\ell}J(z)|, 
\end{align*}
where we assumed that the kernels are supported in the parabolic unit ball. Since these quantities are finite, we will simply omit them in all the bounds below.
\end{rmk}

Before we prove the bounds in Proposition~\ref{pr:convergence_criterion}, we first choose values of the constants $C_{n}^{(\epsilon)}$ and $C_{n}'^{(\epsilon)}$ that appear in the statement of Theorem~\ref{th:main_convergence}. With the graphic notations, the constant $C_{1}^{(\epsilon)}$ is given by
\begin{equation} \label{eq:C1}
C_{1}^{(\epsilon)} = \int \int K_{\epsilon}^{2}(t,x) dx dt \phantom{1} = G_\eps(0) = \phantom{1}
\begin{tikzpicture}[scale=0.5,baseline=-0.1cm]
\node at (0,1) [dot] (above){};
\node at (0,-1) [root] (below) {};
\draw[kepsilon, bend left = 60] (above) to (below); 
\draw[kepsilon, bend right = 60] (above) to (below); 
\end{tikzpicture}
, \qquad C_{0}^{(\epsilon)} = \epsilon C_{1}^{(\epsilon)}. 
\end{equation}
It is easy to see that, for this definition of $C_1^{(\epsilon)}$ and the renormalised model $\widehat{\Pi}^{\epsilon}$, the expression $(\widehat{\Pi}_{0}^{\epsilon} \Psi \iI(\Psi^{3}))(\varphi_{0}^{\lambda})$ only contains the first two terms in \eqref{eq:graph_expression}, and its variance is indeed bounded by \eqref{eq:graph_example}. 

For $n \geq 2$, we define $C_{n}^{(\epsilon)}$ and $C_{n}'^{(\epsilon)}$ by
\begin{equs} \label{eq:Cn}
C_{n}^{(\epsilon)} &= \epsilon^{n-2} \int K(z) G_{\epsilon}^{n}(z) dz = \epsilon^{n-2} \quad
\begin{tikzpicture}[scale=0.6,baseline=0.2cm]
\node at (-1.5,0) [dot] (left){};
\node at (0,0.6)  [dot] (middle) {};
\node at (1.5,0)   [root] (right) {}; 
\node at (0,2.2) [dot] (above) {}; 
\node at (0,1.05) {\tiny $\vdots$}; 
\node at (0,1.9) {\tiny $\vdots$}; 
\node at (0,1.3) {\scriptsize $n$}; 
\draw[kepsilon] (middle) to (left); 
\draw[kepsilon] (middle) to (right); 
\draw[kepsilon] (above) to (left); 
\draw[kepsilon] (above) to (right); 
\draw[kernel] (left) to (right); 
\end{tikzpicture}
\phantom{1}, \quad n \geq 2\;, \\
C_{n}'^{(\epsilon)} &= \epsilon^{n-\frac{5}{2}} \int K(z) G_{\epsilon}^{n}(z) dz = \epsilon^{-\frac{1}{2}} C_{n}^{(\epsilon)}, \qquad \qquad \qquad n \geq 3\;. 
\end{equs}
It is not hard to check that
\begin{align*}
C_{1}^{(\epsilon)} &= \frac{C_{0}}{\epsilon} + \oO(1), \quad C_{0} = \int (P*\rho)^{2}(z) dz, \\
C_{2}^{(\epsilon)} &= c_{2} |\log \epsilon| + \oO(1)
\end{align*}
for some universal constant $c_2 > 0$. For $n \geq 3$, we have
\begin{align*}
C_{n}^{(\epsilon)} = C_{n} + \oO(\epsilon), \quad C_{n} = \int P(z) (P_{\rho} * P_{\rho})^{n}(z) dz, 
\end{align*}
where $P_{\rho} = P * \rho$. $C_n$ is finite for $n \geq 3$ since the integrand decays like $|z|^{-(n+3)}$ for large $z$.

\subsection{First order renormalisation bounds}

We are now ready to prove Theorem~\ref{th:main_convergence}. In view of Proposition~\ref{pr:convergence_criterion}, it suffices to check the bound \eqref{eq:negative_bound} for all terms that appear in the right hand side of \eqref{eq:rhs}, and the bounds \eqref{eq:group_bound} and \eqref{eq:positive_bound} for relevant terms with positive homogeneities. 

We first prove the bound \eqref{eq:negative_bound} for terms from $\fF^{(n)}$ for $n = 0, 1, 2, 3$. These basis vectors are of the form
\begin{align*}
\tau = \eE^{\frac{k}{2}} \Psi^{k+3-n}. 
\end{align*}
The case $k=0$ has been treated for the case of the standard $\Phi^4_3$ model in \cite[Sec.~10.5]{Hai14a}, 
so we only need to consider $k \geq 1$. For the canonical model $\Pi^{\epsilon}$, we have
\begin{equ}[e:exprProd]
\Pi_{z}^{\epsilon} \tau = \epsilon^{\frac{k}{2}} (\Pi_{z}^{\epsilon} \Psi)^{k+3-n}. 
\end{equ}
If we choose $C_{1}^{(\epsilon)}$ according to \eqref{eq:C1}, then the effect of our renormalisation
procedure is precisely to turn the products in \eqref{e:exprProd} into Wick products, so that
\begin{align*}
(\hPi_{0}^{\epsilon} \tau)(\varphi_{0}^{\lambda}) = \eps^{\frac{k}{2}} \quad
\begin{tikzpicture}[scale=0.6,baseline=-0.9cm]
\node at (-1,0) [var] (left){};
\node at (1,0)  [var] (right) {};
\node at (0,-1)   [dot] (middle) {}; 
\node at (0, -2.8)  [root] (below) {}; 
\node at (0,0) {$\cdots$}; 
\draw[kepsilon] (left) to (middle); 
\draw[kepsilon] (right) to (middle); 
\draw[testfcn] (middle) to (below); 
\draw [decorate,decoration={brace,amplitude=7pt}] (-1,0.3) to node[midway, yshift=0.5cm] {\tiny $k+3-n$} (1,0.3); 
\end{tikzpicture}
\;. 
\end{align*}
The right hand side belongs to the homogeneous Wiener chaos of order $(k+3-n)$, and as a consequence, we can bound its second moment by
\begin{align*}
\E |(\widehat{\Pi}_{0}^{\epsilon}\tau)(\varphi_{0}^{\lambda})|^{2} \lesssim \epsilon^{k} \phantom{1}
\begin{tikzpicture}[scale=0.6,baseline=-0.5cm]
\node at (-1.5,0) [dot] (left){};
\node at (1.5,0)  [dot] (right) {};
\node at (0,-1.5) [root] (below) {}; 
\draw[gepsilon] (left) to node[labl]{\tiny $k+3-n$} (right); 
\draw[dist] (left) to (below); 
\draw[dist] (right) to (below); 
\end{tikzpicture}
\phantom{1} \lesssim \epsilon^{\delta} \phantom{1}
\begin{tikzpicture}[scale=0.6,baseline=-0.5cm]
\node at (-1.5,0) [dot] (left){};
\node at (1.5,0)  [dot] (right) {};
\node at (0,-1.5) [root] (below) {}; 
\draw (left) to node[labl]{\tiny $3-n+\delta$} (right); 
\draw[dist] (left) to (below); 
\draw[dist] (right) to (below); 
\end{tikzpicture}, 
\end{align*}
which satisfies \eqref{eq:negative_bound} since
\begin{align*}
2 |\tau| = n - 3 - 2 (k+3-n)\kappa < n-3-\delta\;,
\end{align*}
if $\delta$ is small enough. The bound for $\eE^{\frac{k}{2}}(\Psi^{k+2}) X$ follows in exactly the same way. We have thus proved the bound \eqref{eq:negative_bound} for $\tau = \eE^{\frac{k}{2}}(\Psi^{k+3-n})$ and $\tau = \eE^{\frac{k}{2}}(\Psi^{k+2})X$.

\subsection{Second order renormalisation bounds}

We now turn to basis vectors coming from the terms $\fF' \iI(\fF)$, $\fF' \iI(\fF')$ and $\fF'' \iI(\fF)$. All these basis elements have the form
\begin{align*}
\tau = \eE^{a} \big( \Psi^{k} \iI (\eE^{b} \Psi^{n}) \big), 
\end{align*}
with the precise values of $a$ and $b$ depending on the element. For each $k$ and $n$, the element
\begin{align*}
(\hPi_{0}^{\epsilon} \tau)(\varphi_{0}^{\lambda})
\end{align*}
can be decomposed into homogeneous Wiener chaoses of orders
\begin{align*}
k + n - 2\ell, \qquad \ell = 0, 1, \cdots, k \wedge n. 
\end{align*}
By examining the homogeneities, we notice that all the $\eE^{\beta}$'s appearing in these elements play the role of multiplication by $\epsilon^{\beta}$ both under the canonical model and the Wick renormalised model. Thus, for the Wick model $\wPi_{0}(\varphi_{0}^{\lambda})$, its component in the $(k+n-2\ell)$-th homogeneous chaos is given by
\begin{equation} \label{eq:chaos_knl}
\ell !  \begin{pmatrix} k \\ \ell \end{pmatrix}  \begin{pmatrix} n \\ \ell \end{pmatrix} \cdot \epsilon^{a+b} \phantom{1}
\begin{tikzpicture}[scale=0.7,baseline=-0.2cm]
\node at (-1.2,0) [dot] (left){};
\node at (0,0.4)  [dot] (middle) {};
\node at (1.2,0)   [dot] (right) {}; 
\node at (0,1.8) [dot] (above) {}; 
\node at (0,0.8) {\tiny $\vdots$}; 
\node at (0,1.6) {\tiny $\vdots$}; 
\node at (0,1.1) {\scriptsize $\ell$}; 
\node at (0,-1.5) [root] (below) {}; 
\node at (-2.4,1.2) [var] (aboveleft) {}; 
\node at (-2.4,-1.2) [var] (belowleft) {}; 
\node at (2.4,1.2) [var] (aboveright) {}; 
\node at (2.4,-1.2) [var] (belowright) {}; 
\node at (-2.4,-0.3) {$\vdots$}; 
\node at (-2.4,0.7) {$\vdots$}; 
\node at (2.4,-0.3) {$\vdots$}; 
\node at (2.4,0.7) {$\vdots$}; 
\draw[kepsilon] (middle) to (left); 
\draw[kepsilon] (middle) to (right); 
\draw[kepsilon] (above) to (left); 
\draw[kepsilon] (above) to (right); 
\draw[kernel1] (left) to (right); 
\draw[testfcn] (right) to (below); 
\draw[kepsilon] (aboveleft) to (left); 
\draw[kepsilon] (belowleft) to (left); 
\draw[kepsilon] (aboveright) to (right); 
\draw[kepsilon] (belowright) to (right); 
\draw [decorate,decoration={brace,amplitude=7pt}] (-2.6,-1.2) to node[midway, xshift=-0.7cm] {\scriptsize $n - \ell$} (-2.6,1.2); 
\draw [decorate,decoration={brace,amplitude=7pt}] (2.6,1.2) to node[midway, xshift=0.6cm] {\scriptsize $k - \ell$} (2.6,-1.2); 
\end{tikzpicture}
\quad. 
\end{equation}
Note that the above expression is for the Wick renormalised model, and does not include effect of the map $M_{0}$ defined in Section~\ref{sec:renormalisation}.  We now 
discuss the convergence for these basis elements for different values of $k$, $n$ and $\ell$.

\begin{rmk}
\eqref{eq:chaos_knl} suggests that the bounds below will in general include a labelled graph introduced above as well as a factor of a positive power of $\epsilon$. With an abuse use of notation, in what follows, we will use $\gG$ to denote a labelled graph multiplied a certain power of $\epsilon$ (see for example \eqref{eq:chaos_knl_var} below). 
\end{rmk}

\subsubsection{$k + n - 2\ell \geq 2$}

We show below that in this case, there is no need for renormalisation beyond Wick ordering. For simplicity, we focus on the elements from the term $\fF' \iI (\fF)$, and the bounds for other basis vectors follow in essentially the same way. Such basis elements have the form and homogeneity
\begin{align*}
\tau_{k,n} = \eE^{\frac{k}{2}-1} \big( \Psi^{k} \iI(\eE^{\frac{n-3}{2}} \Psi^{n}) \big), \qquad |\tau_{k,n}| = -\frac{1}{2} - (k+n) \kappa. 
\end{align*}
If $k+n-2\ell \geq 2$, then as a consequence of the expression \eqref{eq:chaos_knl}, the second moment of the component of $(\hPi_{0}^{\epsilon} \tau_{k,n})(\varphi_{0}^{\lambda})$ in the $(k+n-2\ell)$-th homogeneous chaos is bounded by the graph
\begin{equation} \label{eq:chaos_knl_var}
\gG = \epsilon^{k+n-5}
\begin{tikzpicture} [scale=0.8,baseline=-0.0cm]
\node at (-1.5,1.5) [dot] (aboveleft) {}; 
\node at (1.5,1.5) [dot] (aboveright) {}; 
\node at (-1.5,0) [dot] (left) {}; 
\node at (1.5,0) [dot] (right) {}; 
\node at (0,-1.5) [root] (below) {}; 
\draw[gepsilon] (aboveleft) to node[labl]{\tiny $n-\ell$} (aboveright); 
\draw[gepsilon] (left) to node[labl]{\tiny $k-\ell$} (right); 
\draw[kernel1] (aboveleft) to node[labl]{\tiny $3,1$} (left); 
\draw[gepsilon, bend right = 70] (aboveleft) to node[anchor = east]{\tiny $\ell$} (left); 
\draw[kernel1] (aboveright) to node[labl]{\tiny $3,1$} (right); 
\draw[gepsilon, bend left = 70] (aboveright) to node[anchor = west]{\tiny $\ell$} (right); 
\draw[dist] (left) to (below); 
\draw[dist] (right) to (below); 
\end{tikzpicture}
. 
\end{equation}
Here, the upper and lower edges both have $r_{e} = 0$, so we simply omit it and just write the ``singularity'' $a_{e}$ for edges that does not contain either positive or negative renormalisations. According to \eqref{eq:negative_bound} and the homogeneity of $|\tau_{k,n}|$, we need to bound the graph by
\begin{equation} \label{eq:knl_bound}
|I_{\lambda}^{\gG}| \lesssim \epsilon^{\delta} \lambda^{-1 - \delta}
\end{equation}
for some small positive $\delta$. The assumption that there is a positive appearance of $\eE$ gives the condition
\begin{align*}
k \geq 2, \quad n \geq 3, \quad k + n \geq 6. 
\end{align*}
In order to get the bound \eqref{eq:knl_bound}, we need to assign powers of $\epsilon$'s to different edges of the graph to reduce the singularity of each edge to make the whole graph integrable. The assignments are different for various values of $k$, $n$ and $\ell$. 

For $\ell = 0$, we can assign $(n-3)$ powers of $\epsilon$ to the upper edge and $(k-2-\delta)$ powers to the lower edge, so we obtain the bound
\begin{equation} \label{eq:knl_l0}
\gG \lesssim \epsilon^{\delta} \phantom{1}
\begin{tikzpicture} [scale=0.6,baseline=-0.0cm]
\node at (-1.5,1.5) [dot] (aboveleft) {}; 
\node at (1.5,1.5) [dot] (aboveright) {}; 
\node at (-1.5,0) [dot] (left) {}; 
\node at (1.5,0) [dot] (right) {}; 
\node at (0,-1.5) [root] (below) {}; 
\draw[gepsilon] (aboveleft) to node[labl]{\tiny $3$} (aboveright); 
\draw[gepsilon] (left) to node[labl]{\tiny $2+\delta$} (right); 
\draw[kernel1] (aboveleft) to node[labl]{\tiny $3,1$} (left); 
\draw[kernel1] (aboveright) to node[labl]{\tiny $3,1$} (right); 
\draw[dist] (left) to (below); 
\draw[dist] (right) to (below); 
\end{tikzpicture}
\quad, 
\end{equation}
and we need to check that this graph satisfies Assumption~\ref{as:graph_assump}. We check for example the third item for $\bar{\vV}$ that consists of \tikz \node at (0,0) [root] {};  and the two lower vertices in the ``rectangle''. For this $\bar{\vV}$, $\eE_{0}(\bar{\vV})$ is the lower edge in the ``rectangle'' with $a_{e} = 2+\delta$,\footnote{Rigorously speaking, the two green edges representing test functions also belong to $\eE_{0}(\bar{\vV})$, but since we assume their degrees are $0$, so it does not matter.} $\eE^{\uparrow}(\bar{\vV})$ is empty, and $\eE^{\downarrow}(\bar{\vV})$ consists of the left and right edges, both with $r_{e} = 1$. One can then easily verify item~3 for this $\bar{\vV}$, and the rest of Assumption~\ref{as:graph_assump} can be checked in the same way, so that we obtain \eqref{eq:knl_bound} if $\delta$ is sufficiently small. 

For $\ell = 1$, we still assign $(n-3)$ powers of $\epsilon$ to the upper edge and $(k-2-\delta)$ powers to the lower one, but this time the graph is reduced to
\begin{equation} \label{eq:knl_l1}
\gG \lesssim \epsilon^{\delta} \phantom{1}
\begin{tikzpicture} [scale=0.6,baseline=-0.0cm]
\node at (-1.5,1.5) [dot] (aboveleft) {}; 
\node at (1.5,1.5) [dot] (aboveright) {}; 
\node at (-1.5,0) [dot] (left) {}; 
\node at (1.5,0) [dot] (right) {}; 
\node at (0,-1.5) [root] (below) {}; 
\draw[gepsilon] (aboveleft) to node[labl]{\tiny $2$} (aboveright); 
\draw[gepsilon] (left) to node[labl]{\tiny $1+\delta$} (right); 
\draw[kernel1] (aboveleft) to node[labl]{\tiny $4$} (left); 
\draw[kernel1] (aboveright) to node[labl]{\tiny $4$} (right); 
\draw[dist] (left) to (below); 
\draw[dist] (right) to (below); 
\end{tikzpicture}
\phantom{1} \lesssim \phantom{1} \epsilon^{\delta} \phantom{1}
\begin{tikzpicture} [scale=0.6,baseline=-0.0cm]
\node at (-1.5,1.5) [dot] (aboveleft) {}; 
\node at (1.5,1.5) [dot] (aboveright) {}; 
\node at (-1.5,0) [dot] (left) {}; 
\node at (1.5,0) [dot] (right) {}; 
\node at (0,-1.5) [root] (below) {}; 
\draw[gepsilon] (aboveleft) to node[labl]{\tiny $2$} (aboveright); 
\draw[gepsilon] (left) to node[labl]{\tiny $1+\delta$} (right); 
\draw[kernel1] (aboveleft) to node[labl]{\tiny $4+\delta$} (left); 
\draw[kernel1] (aboveright) to node[labl]{\tiny $4$} (right); 
\draw[dist] (left) to (below); 
\draw[dist] (right) to (below); 
\end{tikzpicture}
\phantom{1}. 
\end{equation}
Again, one can check that the conditions in Assumption~\ref{as:graph_assump} are all satisfied for this graph. 

We now turn to the situation when $\ell \geq 2$. By assigning $(\ell-2+\delta)$ powers of $\epsilon$ to both the leftmost and the rightmost edge with weight $\ell$, we reduce the graph to
\begin{equation} \label{eq:knl_l2}
\gG \lesssim \epsilon^{k+n-2 \ell-1-2 \delta} \phantom{1}
\begin{tikzpicture} [scale=0.6,baseline=-0.0cm]
\node at (-1.5,1.5) [dot] (aboveleft) {}; 
\node at (1.5,1.5) [dot] (aboveright) {}; 
\node at (-1.5,0) [dot] (left) {}; 
\node at (1.5,0) [dot] (right) {}; 
\node at (0,-1.5) [root] (below) {}; 
\draw[gepsilon] (aboveleft) to node[labl]{\tiny $n-\ell$} (aboveright); 
\draw[gepsilon] (left) to node[labl]{\tiny $k-\ell$} (right); 
\draw[kernel1] (aboveleft) to node[labl]{\tiny $5-\delta$} (left); 
\draw[kernel1] (aboveright) to node[labl]{\tiny $5-\delta$} (right); 
\draw[dist] (left) to (below); 
\draw[dist] (right) to (below); 
\end{tikzpicture}
\quad, 
\end{equation}
and the assumption $k+n-2\ell \geq 2$ guarantees there is still a positive power of $\epsilon$ left. If $\ell = n$, then we assign $(k-n-1-3\delta)$ powers to the lower edge, and we assign $(k-n+1-3\delta)$ powers to the lower edge if $\ell = n-1$. The graphs we get in these cases becomes
\begin{align*}
\gG \lesssim \epsilon^{\delta} \phantom{1}
\begin{tikzpicture} [scale=0.5,baseline=-0.0cm]
\node at (-1.5,1.5) [dot] (aboveleft) {}; 
\node at (1.5,1.5) [dot] (aboveright) {}; 
\node at (-1.5,0) [dot] (left) {}; 
\node at (1.5,0) [dot] (right) {}; 
\node at (0,-1.5) [root] (below) {}; 
\draw[gepsilon] (left) to node[labl]{\tiny $1+3\delta$} (right); 
\draw[kernel1] (aboveleft) to node[labl]{\tiny $5-\delta$} (left); 
\draw[kernel1] (aboveright) to node[labl]{\tiny $5-\delta$} (right); 
\draw[dist] (left) to (below); 
\draw[dist] (right) to (below); 
\end{tikzpicture}
\phantom{1} \lesssim \epsilon^{\delta}  \phantom{1}
\begin{tikzpicture} [scale=0.6,baseline=0.2cm]
\node at (0,0) [root] (below) {}; 
\node at (-1,1) [dot] (left) {}; 
\node at (1,1) [dot] (right) {}; 
\draw[dist] (below) to (left); 
\draw[dist] (below) to (right); 
\draw[gepsilon] (left) to node[anchor=south]{\tiny $1+3\delta$} (right); 
\end{tikzpicture}
\phantom{1} (\ell = n), \quad \gG \lesssim \epsilon^{\delta} \phantom{1}
\begin{tikzpicture} [scale=0.5,baseline=-0.0cm]
\node at (-1.5,1.5) [dot] (aboveleft) {}; 
\node at (1.5,1.5) [dot] (aboveright) {}; 
\node at (-1.5,0) [dot] (left) {}; 
\node at (1.5,0) [dot] (right) {}; 
\node at (0,-1.5) [root] (below) {}; 
\draw[gepsilon] (aboveleft) to node[labl]{\tiny $1$} (aboveright); 
\draw[gepsilon] (left) to node[labl]{\tiny $3 \delta$} (right); 
\draw[kernel1] (aboveleft) to node[labl]{\tiny $5-\delta$} (left); 
\draw[kernel1] (aboveright) to node[labl]{\tiny $5-\delta$} (right); 
\draw[dist] (left) to (below); 
\draw[dist] (right) to (below); 
\end{tikzpicture}
\phantom{1} (\ell = n - 1) \phantom{1}. 
\end{align*}
In both cases, one can easily verify Assumption~\ref{as:graph_assump} and conclude the desired bounds. Also, the first graph above does not contain the upper edge since in this case ($\ell = n$) that edge is a bounded continuous function, and we can simply omit edges with $a_{e} = 0$. 

We finally turn to $n - \ell \geq 2$. In this case, we assign powers of $\epsilon$'s in the following way: 
\begin{enumerate}
\item $(n-\ell-2)$ powers to the upper edge; 
\item $(1-3\delta)$ powers to the left edge; 
\item and $(k-\ell)$ powers to the lower edge. 
\end{enumerate}
The condition $n - \ell \geq 2$ guarantees that all the powers assigned above are positive, and there is still a $\delta$ power of $\epsilon$ left. In fact, we get the reduced graph
\begin{equation} \label{eq:knl_l2_more}
\epsilon^{\delta} \phantom{1}
\begin{tikzpicture} [scale=0.6,baseline=-0.0cm]
\node at (-1.5,1.5) [dot] (aboveleft) {}; 
\node at (1.5,1.5) [dot] (aboveright) {}; 
\node at (-1.5,0) [dot] (left) {}; 
\node at (1.5,0) [dot] (right) {}; 
\node at (0,-1.5) [root] (below) {}; 
\draw[gepsilon] (aboveleft) to node[labl]{\tiny $2$} (aboveright); 
\draw[kernel1] (aboveleft) to node[labl]{\tiny $4+2\delta$} (left); 
\draw[kernel1] (aboveright) to node[labl]{\tiny $5-\delta$} (right); 
\draw[dist] (left) to (below); 
\draw[dist] (right) to (below); 
\end{tikzpicture}
\qquad (n - \ell \geq 2) \quad. 
\end{equation}
Again, it is straightforward to check the Assumption \eqref{as:graph_assump} for this graph, and thus the bound \eqref{eq:knl_bound} is satisfied for small enough $\delta$. This finishes the proof of the case $k + n - 2\ell \geq 2$ for elements from $\fF' \iI (\fF)$. The case for the elements from the terms $\fF' \iI (\fF')$ and $\fF'' \iI (\fF)$ can be treated in essentially the same way, and we do not repeat the details here.

\subsubsection{$k = n = \ell$}

The basis elements in this category includes the following types: 
\begin{align*}
\eE^{\frac{n}{2}-1} \big( \Psi^{n} \iI(\eE^{\frac{n}{2}-1} \Psi^{n}) \big), \quad \eE^{\frac{n-1}{2}} \big( \Psi^{n} \iI(\eE^{\frac{n-3}{2}} \Psi^{n}) \big), \quad \eE^{\frac{n}{2}-1} \big( \Psi^{n} \iI(\eE^{\frac{n-3}{2}} \Psi^{n}) \big). 
\end{align*}
The homogeneities are just below $0$ for the first two elements, and just below $-\frac{1}{2}$ for the third one. For $\ell = n$, the $0$-th chaos component of the modelled distribution on these elements are just constants. 

We first treat the first two elements. For both of them, the contribution to the $0$-th chaos of $(\hPi_{0}^{\epsilon} \tau)(\varphi_{0}^{\lambda})$ is given by
\begin{equation} \label{eq:graph_0}
n! \cdot \epsilon^{n-2} \quad
\begin{tikzpicture}[scale=0.5,baseline=-0.0cm]
\node at (-1.5,0) [dot] (left){};
\node at (0,0.6)  [dot] (middle) {};
\node at (1.5,0)   [dot] (right) {}; 
\node at (0,2.2) [dot] (above) {}; 
\node at (0,1.05) {\tiny $\vdots$}; 
\node at (0,1.9) {\tiny $\vdots$}; 
\node at (0,1.3) {\scriptsize $n$}; 
\node at (0,-1) [root] (below) {}; 
\draw[kepsilon] (middle) to (left); 
\draw[kepsilon] (middle) to (right); 
\draw[kepsilon] (above) to (left); 
\draw[kepsilon] (above) to (right); 
\draw[kernel1] (left) to (right); 
\draw[testfcn] (right) to (below); 
\end{tikzpicture}
\quad - \phantom{1} n! \cdot C_{n}^{(\epsilon)} \quad
\begin{tikzpicture}[scale=0.6,baseline=-0.3cm]
\node at (0,1) [dot] (above){};
\node at (0,-1) [root] (below) {}; 
\draw[testfcn] (above) to (below); 
\end{tikzpicture}
\quad = \quad - n! \cdot \epsilon^{n-2} \phantom{1}
\begin{tikzpicture}[scale=0.6,baseline=-0.0cm]
\node at (-1.5,0) [dot] (left){};
\node at (0,0)  [dot] (middle) {};
\node at (1.5,0)   [dot] (right) {}; 
\node at (0,2) [dot] (above) {}; 
\node at (0,0.6) {\tiny $\vdots$}; 
\node at (0,1.6) {\tiny $\vdots$}; 
\node at (0,1) {\scriptsize $n$}; 
\node at (0,-1) [root] (below) {}; 
\draw[kepsilon] (middle) to (left); 
\draw[kepsilon] (middle) to (right); 
\draw[kepsilon] (above) to (left); 
\draw[kepsilon] (above) to (right); 
\draw[kernel] (left) to (below); 
\draw[testfcn] (right) to (below); 
\end{tikzpicture}
\phantom{1}, 
\end{equation}
where the equality comes from the definition of the kernel
\begin{tikzpicture}[scale=0.5,baseline=-0.1cm]
\node at (-1,0) [dot] (left){};
\node at (1,0) [dot] (right) {}; 
\draw[kernel1] (left) to (right); 
\end{tikzpicture}
as well as $C_{n}^{(\epsilon)}$ in \eqref{eq:Cn}. Since there is a strictly positive power of $\epsilon$, by assigning $(n-2-\delta)$ powers to the dotted line in the above graph, we deduce that this object can be bounded by the graph
\begin{equation} \label{eq:bound_0}
\gG = \epsilon^{\delta} \phantom{1}
\begin{tikzpicture}[scale=0.7,baseline=-0.5cm]
\node at (-1,0) [dot] (left){};
\node at (1,0) [dot] (right) {}; 
\node at (0,-1) [root] (below) {}; 
\draw[generic] (left) to node[labl]{\tiny $3$} (below); 
\draw[gepsilon] (left) to node[labl]{\tiny $2+\delta$} (right); 
\draw[dist] (right) to (below); 
\end{tikzpicture}
\phantom{1}. 
\end{equation}
It is then clear that one has $I_{\lambda}^{\gG} \lesssim \epsilon^{\delta} \lambda^{-\delta}$, 
which satisfies the bound \eqref{eq:knl_bound}. We now turn to the third element $\eE^{\frac{n}{2}-1} \big( \Psi^{n} \iI(\eE^{\frac{n-3}{2}} \Psi^{n}) \big)$. The expression of the $0$-th chaos is essentially the same as the previous two, except that one replaces $\epsilon^{n-2}$ by $\epsilon^{n-\frac{5}{2}}$, as well as the renormalisation constant $C_{n}^{(\epsilon)}$ by $C_{n}'^{(\epsilon)}$. Noting from \eqref{eq:Cn} that
\begin{equation} \label{eq:C_n}
C_{n}'^{(\epsilon)} = \epsilon^{-\frac{1}{2}} C_{n}^{(\epsilon)}, 
\end{equation}
we obtain the expression of the $0$-th chaos component of the element $(\hPi_{0}^{\epsilon} \tau)(\varphi_{0}^{\lambda})$ (up to the sign) as
\begin{align*}
n! \cdot \epsilon^{n-2} \phantom{1}
\begin{tikzpicture}[scale=0.5,baseline=-0.0cm]
\node at (-1.5,0) [dot] (left){};
\node at (0,0)  [dot] (middle) {};
\node at (1.5,0)   [dot] (right) {}; 
\node at (0,2) [dot] (above) {}; 
\node at (0,0.6) {\tiny $\vdots$}; 
\node at (0,1.6) {\tiny $\vdots$}; 
\node at (0,1) {\scriptsize $n$}; 
\node at (0,-1) [root] (below) {}; 
\draw[kepsilon] (middle) to (left); 
\draw[kepsilon] (middle) to (right); 
\draw[kepsilon] (above) to (left); 
\draw[kepsilon] (above) to (right); 
\draw[kernel] (left) to (below); 
\draw[testfcn] (right) to (below); 
\end{tikzpicture}
\quad \lesssim \quad \epsilon^{\delta} \phantom{1}
\begin{tikzpicture}[scale=0.8,baseline=-0.5cm]
\node at (-1,0) [dot] (left){};
\node at (1,0) [dot] (right) {}; 
\node at (0,-1) [root] (below) {}; 
\draw[generic] (left) to node[labl]{\tiny $3$} (below); 
\draw[gepsilon] (left) to node[labl]{\tiny $\frac{5}{2}+\delta$} (right); 
\draw[dist] (right) to (below); 
\end{tikzpicture}
, \quad n \geq 3, 
\end{align*}
where the above bound follows from assigning $n-\frac{5}{2}-\delta$ powers of $\epsilon$ to the kernels represented by the dotted lines. This expression is bounded by $\epsilon^{\delta} \lambda^{-\frac{1}{2} - \delta}$, and corresponds to the correct homogeneity (below $-\frac{1}{2}$) if $\delta$ is sufficiently small. We have thus proved the bound \eqref{eq:knl_bound} for the case $k=n=\ell$.

\subsubsection{$k=n+1$, $\ell = n$}

We now deal with the case $k = n+1$ and $\ell = n$, which belongs to the first order homogeneous chaos. There are two situations in this case; the first one includes basis vectors of the form
\begin{align*}
\tau = \eE^{\frac{n-1}{2}} \big(\Psi^{n+1} \iI( \eE^{\frac{n-3}{2}} \Psi^{n})  \big), \qquad |\tau| = -\frac{1}{2} - (2n+1) \kappa. 
\end{align*}
The $1$-st chaos component of $(\hPi_{0}^{\epsilon} \tau)(\varphi_{0}^{\lambda})$ is given by
\begin{equation} \label{eq:graph_1_renorm}
(n+1)! \phantom{1} \Bigg( \epsilon^{n-2} \phantom{1}
\begin{tikzpicture}[scale=0.5,baseline=-0.0cm]
\node at (-1.5,0) [dot] (left){};
\node at (0,0.6)  [dot] (middle) {};
\node at (1.5,0)   [dot] (right) {}; 
\node at (0,2.2) [dot] (above) {}; 
\node at (0,1.05) {\tiny $\vdots$}; 
\node at (0,1.9) {\tiny $\vdots$}; 
\node at (0,1.3) {\scriptsize $n$}; 
\node at (1.5,2.2) [var] (aboveright) {}; 
\node at (1.5,-2.2) [root] (belowright) {}; 
\draw[kepsilon] (middle) to (left); 
\draw[kepsilon] (middle) to (right); 
\draw[kepsilon] (above) to (left); 
\draw[kepsilon] (above) to (right); 
\draw[kernel1] (left) to (right); 
\draw[kepsilon] (aboveright) to (right); 
\draw[testfcn] (right) to (belowright); 
\end{tikzpicture}
\phantom{1} - C_{n}^{(\epsilon)} \phantom{1}
\begin{tikzpicture}[scale=0.6,baseline=-0.0cm]
\node at (0,1.5) [var] (above) {}; 
\node at (0,0) [dot] (middle) {}; 
\node at (0,-1.5) [root] (below) {}; 
\draw[kepsilon] (above) to (middle); 
\draw[testfcn] (middle) to (below); 
\end{tikzpicture}
\phantom{1} \Bigg)
\phantom{1} = \phantom{1} - (n+1)! \cdot \epsilon^{n-2} \phantom{1}
\begin{tikzpicture}[scale=0.5,baseline=-0.0cm]
\node at (0,1.8) [dot] (above) {}; 
\node at (0,0) [dot] (middle) {}; 
\node at (-1.5,0) [dot] (left) {}; 
\node at (1.5,0) [dot] (right) {}; 
\node at (0,-1.5) [root] (below) {}; 
\node at (1.5,1.8) [var] (aboveright) {}; 
\node at (0,1.5) {\tiny $\vdots$}; 
\node at (0,0.5) {\tiny $\vdots$}; 
\node at (0,0.9) {\scriptsize $n$}; 
\draw[kepsilon] (above) to (left); 
\draw[kepsilon] (above) to (right); 
\draw[kepsilon] (middle) to (left); 
\draw[kepsilon] (middle) to (right); 
\draw[kernel] (left) to (below); 
\draw[testfcn] (right) to (below); 
\draw[kepsilon] (aboveright) to (right); 
\end{tikzpicture}
\phantom{1}, 
\end{equation}
where we have used the expression of $C_{n}^{(\epsilon)}$ in \eqref{eq:Cn}. The second moment of this expression is then bounded (up to a constant multiple) by the graph
\begin{equation} \label{eq:bound_1_renorm}
\gG = \phantom{1} \epsilon^{2n-4} \phantom{1}
\begin{tikzpicture}[scale=1,baseline= 0.3cm]
\node at (0,0) [root] (middle) {}; 
\node at (-1,0) [dot] (left) {}; 
\node at (1,0) [dot] (right) {}; 
\node at (-1,1) [dot] (farleft) {}; 
\node at (1,1) [dot] (farright) {}; 
\draw[gepsilon] (farleft) to node[labl]{\tiny $1$} (farright); 
\draw[gepsilon] (farleft) to node[labl]{\tiny $n$} (left); 
\draw[gepsilon] (farright) to node[labl]{\tiny $n$} (right); 
\draw[kernel] (middle) to node[labl]{\tiny $3$} (left); 
\draw[kernel] (middle) to node[labl]{\tiny $3$} (right); 
\draw[dist] (middle) to (farleft); 
\draw[dist] (middle) to (farright); 
\end{tikzpicture}
\phantom{1} \lesssim \phantom{1} \epsilon^{2 \delta} \phantom{1}
\begin{tikzpicture}[scale=1,baseline= 0.3cm]
\node at (0,0) [root] (middle) {}; 
\node at (-1,0) [dot] (left) {}; 
\node at (1,0) [dot] (right) {}; 
\node at (-1,1) [dot] (farleft) {}; 
\node at (1,1) [dot] (farright) {}; 
\draw[gepsilon] (farleft) to node[labl]{\tiny $1$} (farright); 
\draw[gepsilon] (farleft) to node[labl]{\tiny $2+\delta$} (left); 
\draw[gepsilon] (farright) to node[labl]{\tiny $2+\delta$} (right); 
\draw[kernel] (middle) to node[labl]{\tiny $3$} (left); 
\draw[kernel] (middle) to node[labl]{\tiny $3$} (right); 
\draw[dist] (middle) to (farleft); 
\draw[dist] (middle) to (farright); 
\end{tikzpicture}
\phantom{1}, 
\end{equation}
which clearly satisfies the bound
\begin{align*}
I_{\lambda}^{\gG} \lesssim \epsilon^{2 \delta} \lambda^{-1 - 2 \delta}. 
\end{align*}
The exponent on $\lambda$ will be bigger than twice the homogeneity of $\tau$ for small enough $\delta$. Thus, the bound \eqref{eq:knl_bound} holds for the element $\eE^{\frac{n-1}{2}} \big(\Psi^{n+1} \iI( \eE^{\frac{n-3}{2}} \Psi^{n})  \big)$. 

The second situation for $k=n+1$ includes the basis elements
\begin{align*}
\tau = \eE^{\frac{n-1}{2}} \big( \Psi^{n+1} \iI (\eE^{\frac{n}{2}-1} \Psi^{n}) \big) \quad \text{or} \quad \tau = \eE^{\frac{n}{2}} \big( \Psi^{n+1} \iI (\eE^{\frac{n-3}{2}}\Psi^{n}) \big). 
\end{align*}
In both cases, we have $|\tau| = -(2n+1) \kappa$, just below $0$. Since there is no renormalisation beyond Wick ordering on these elements, the $1$-st chaos component of $(\hPi_{0}^{\epsilon} \tau)(\varphi_{0}^{\lambda})$ (for both of them) is given by
\begin{equation} \label{eq:graph_1_good}
(n+1)! \cdot \epsilon^{n-\frac{3}{2}} \phantom{1}
\begin{tikzpicture}[scale=0.5,baseline=-0.0cm]
\node at (-1.5,0) [dot] (left){};
\node at (0,0.6)  [dot] (middle) {};
\node at (1.5,0)   [dot] (right) {}; 
\node at (0,2.2) [dot] (above) {}; 
\node at (0,1.05) {\tiny $\vdots$}; 
\node at (0,1.9) {\tiny $\vdots$}; 
\node at (0,1.3) {\scriptsize $n$}; 
\node at (1.5,2.2) [var] (aboveright) {}; 
\node at (1.5,-2.2) [root] (belowright) {}; 
\draw[kepsilon] (middle) to (left); 
\draw[kepsilon] (middle) to (right); 
\draw[kepsilon] (above) to (left); 
\draw[kepsilon] (above) to (right); 
\draw[kernel1] (left) to (right); 
\draw[kepsilon] (aboveright) to (right); 
\draw[testfcn] (right) to (belowright); 
\end{tikzpicture}
\phantom{1}. 
\end{equation}
The second moment of this expression is bounded by the graph
\begin{equation} \label{eq:bound_1_good}
\gG = \phantom{1} \epsilon^{2n-3} \phantom{1}
\begin{tikzpicture} [scale=0.65,baseline=-0.0cm]
\node at (-1.5,1.5) [dot] (aboveleft) {}; 
\node at (1.5,1.5) [dot] (aboveright) {}; 
\node at (-1.5,0) [dot] (left) {}; 
\node at (1.5,0) [dot] (right) {}; 
\node at (0,-1.2) [root] (below) {}; 
\draw[gepsilon] (left) to node[labl]{\tiny $1$} (right); 
\draw[kernel1] (aboveleft) to node[labl]{\tiny $3,1$} (left); 
\draw[gepsilon, bend right = 70] (aboveleft) to node[anchor = east]{\tiny $n$} (left); 
\draw[kernel1] (aboveright) to node[labl]{\tiny $3,1$} (right); 
\draw[gepsilon, bend left = 70] (aboveright) to node[anchor = west]{\tiny $n$} (right); 
\draw[dist] (left) to (below); 
\draw[dist] (right) to (below); 
\end{tikzpicture}
\phantom{1} \lesssim \phantom{1} \epsilon^{\delta} \phantom{1}
\begin{tikzpicture} [scale=0.65,baseline=-0.0cm]
\node at (-1.5,1.5) [dot] (aboveleft) {}; 
\node at (1.5,1.5) [dot] (aboveright) {}; 
\node at (-1.5,0) [dot] (left) {}; 
\node at (1.5,0) [dot] (right) {}; 
\node at (0,-1.2) [root] (below) {}; 
\draw[gepsilon] (left) to node[labl]{\tiny $3 \delta$} (right); 
\draw[kernel1] (aboveleft) to node[labl]{\tiny $3,1$} (left); 
\draw[gepsilon, bend right = 70] (aboveleft) to node[anchor = east]{\tiny $2-\delta$} (left); 
\draw[kernel1] (aboveright) to node[labl]{\tiny $3,1$} (right); 
\draw[gepsilon, bend left = 70] (aboveright) to node[anchor = west]{\tiny $2-\delta$} (right); 
\draw[dist] (left) to (below); 
\draw[dist] (right) to (below); 
\end{tikzpicture}
\phantom{1}, 
\end{equation}
which immediately gives
\begin{align*}
I_{\lambda}^{\gG} \lesssim \epsilon^{\delta} \lambda^{-3\delta}. 
\end{align*}
Since the homogeneities for these two $\tau$'s are below $0$, we thus conclude the bound \eqref{eq:knl_bound} for this case.

\subsubsection{$n = k+1$, $\ell = k$}

We now turn to this last case. To keep notations consistent, we switch $n$ to $n+1$ and write the symbols as $\eE^{a} \big( \Psi^{n} \iI(\eE^{b} \Psi^{n+1}) \big)$ and $\ell = n$. The symbols in this category that need a mass renormalisation are of the form
\begin{align*}
\tau = \eE^{\frac{n}{2} - 1} \big( \Psi^{n} \iI (\eE^{\frac{n}{2}-1} \Psi^{n+1}) \big), \quad |\tau| = -\frac{1}{2} - (2n+1) \kappa, \quad n \geq 3. 
\end{align*}
The component in the $1$-st Wiener chaos of the object $(\hPi_{0}^{\epsilon} \tau)(\varphi_{0}^{\lambda})$ is given by
\begin{equation} \label{eq:graph_2_renorm}
(n+1)! \phantom{1} \Bigg( \epsilon^{n-2} \phantom{1}
\begin{tikzpicture}[scale=0.5,baseline=-0.0cm]
\node at (-1.5,0) [dot] (left){};
\node at (0,0.6)  [dot] (middle) {};
\node at (1.5,0)   [dot] (right) {}; 
\node at (0,2.2) [dot] (above) {}; 
\node at (0,1.05) {\tiny $\vdots$}; 
\node at (0,1.9) {\tiny $\vdots$}; 
\node at (0,1.3) {\scriptsize $n$}; 
\node at (-1.5,-2.2) [var] (belowleft) {}; 
\node at (1.5,-2.2) [root] (belowright) {}; 
\draw[kepsilon] (middle) to (left); 
\draw[kepsilon] (middle) to (right); 
\draw[kepsilon] (above) to (left); 
\draw[kepsilon] (above) to (right); 
\draw[kernel] (left) to (right); 
\draw[kepsilon] (belowleft) to (left); 
\draw[testfcn] (right) to (belowright); 
\end{tikzpicture}
\phantom{1} - C_{n}^{(\epsilon)} \phantom{1}
\begin{tikzpicture}[scale=0.6,baseline=-0.0cm]
\node at (0,1.7) [var] (above) {}; 
\node at (0,0) [dot] (middle) {}; 
\node at (0,-1.7) [root] (below) {}; 
\draw[kepsilon] (above) to (middle); 
\draw[testfcn] (middle) to (below); 
\end{tikzpicture}
\phantom{1} \Bigg) \phantom{1} - \phantom{1} (n+1)! \cdot \epsilon^{n-2} \phantom{1}
\begin{tikzpicture}[scale=0.7,baseline=-0.0cm]
\node at (-1,1.5) [var] (aboveleft) {}; 
\node at (-1,0) [dot] (left) {}; 
\node at (0,0) [dot] (middle) {}; 
\node at (1,0) [dot] (right) {}; 
\node at (0,1.5) [dot] (above) {}; 
\node at (0,-1.5) [root] (below) {}; 
\node at (0,1.2) {\tiny $\vdots$}; 
\node at (0,0.3) {\tiny $\vdots$}; 
\node at (0,0.7) {\scriptsize $n$}; 
\draw[kepsilon] (aboveleft) to (left); 
\draw[kepsilon] (above) to (left); 
\draw[kepsilon] (above) to (right); 
\draw[kepsilon] (middle) to (left); 
\draw[kepsilon] (middle) to (right); 
\draw[kernel] (left) to (below); 
\draw[testfcn] (right) to (below); 
\end{tikzpicture}
\phantom{1}. 
\end{equation}
The second moment of the last term above is relatively easier to to treat. In fact, it is bounded by the graph
\begin{equation} \label{eq:bound_22_renorm}
\gG = \epsilon^{n-2} \phantom{1}
\begin{tikzpicture}[scale=0.7,baseline=0.3cm]
\node at (0,0) [root] (middle) {}; 
\node at (-1.5,0) [dot] (left) {}; 
\node at (1.5,0) [dot] (right) {}; 
\node at (-1.5,1.5) [dot] (aboveleft) {}; 
\node at (1.5,1.5) [dot] (aboveright) {}; 
\draw[gepsilon] (aboveleft) to node[labl]{\tiny $1$} (aboveright); 
\draw[gepsilon] (aboveleft) to node[anchor=east]{\tiny $n$} (left); 
\draw[gepsilon] (aboveright) to node[anchor=west]{\tiny $n$} (right); 
\draw[kernel] (aboveleft) to node[labl]{\tiny $3$} (middle); 
\draw[kernel] (aboveright) to node[labl]{\tiny $3$} (middle); 
\draw[dist] (left) to (middle); 
\draw[dist] (right) to (middle); 
\end{tikzpicture}
\phantom{1} \lesssim \phantom{1} \epsilon^{2 \delta} \phantom{1}
\begin{tikzpicture}[scale=0.7,baseline=0.3cm]
\node at (0,0) [root] (middle) {}; 
\node at (-1.5,0) [dot] (left) {}; 
\node at (1.5,0) [dot] (right) {}; 
\node at (-1.5,1.5) [dot] (aboveleft) {}; 
\node at (1.5,1.5) [dot] (aboveright) {}; 
\draw[gepsilon] (aboveleft) to node[labl]{\tiny $1$} (aboveright); 
\draw[gepsilon] (aboveleft) to node[anchor=east]{\tiny $2+\delta$} (left); 
\draw[gepsilon] (aboveright) to node[anchor=west]{\tiny $2+\delta$} (right); 
\draw[kernel] (aboveleft) to node[labl]{\tiny $3$} (middle); 
\draw[kernel] (aboveright) to node[labl]{\tiny $3$} (middle); 
\draw[dist] (left) to (middle); 
\draw[dist] (right) to (middle); 
\end{tikzpicture}
\phantom{1}, 
\end{equation}
which clearly gives the desired bound $I_{\lambda}^{\gG} \lesssim \epsilon^{2 \delta} \lambda^{-1-2\delta}$. For the two terms in the parenthesis, by the definition of $C_{n}^{(\epsilon)}$, their difference can be expressed by the graph
\begin{equation} \label{eq:graph_21_renorm}
\epsilon^{n-2} \phantom{1}
\begin{tikzpicture}[scale=0.7,baseline=0.0cm]
\node at (-1,1) [dot] (left) {}; 
\node at (1,1) [dot] (right) {}; 
\node at (-1,-0.5) [var] (belowleft) {}; 
\node at (1,-0.5) [root] (belowright) {}; 
\draw[kernelBig] (left) to (right); 
\draw[kepsilon] (belowleft) to (left); 
\draw[testfcn] (right) to (belowright); 
\end{tikzpicture}
\phantom{1}, 
\end{equation}
where
\begin{tikzpicture}[scale=0.5,baseline=-0.15cm]
\node at (-1,0) [dot] (left) {}; 
\node at (1,0) [dot] (right) {}; 
\draw[kernelBig] (left) to (right); 
\end{tikzpicture}
denotes the renormalised distribution/kernel $\sR (K G_{\epsilon}^{n})$, which has degree $n+3$. Thus, the second moment of this object is bounded by
\begin{equation} \label{eq:bound_21_renorm}
\epsilon^{2n-4} \phantom{1}
\begin{tikzpicture}[scale=0.7,baseline=0.0cm]
\node at (-1,1) [dot] (left) {}; 
\node at (1,1) [dot] (right) {}; 
\node at (-1,-0.5) [dot] (belowleft) {}; 
\node at (1,-0.5) [dot] (belowright) {}; 
\node at (0,-1.5) [root] (below) {}; 
\draw[gepsilon] (left) to node[labl]{\tiny $1$} (right); 
\draw[gepsilon, ->] (left) to node[anchor=east]{\tiny $n+3,-1$} (belowleft); 
\draw[gepsilon, ->] (right) to node[anchor=west]{\tiny $n+3,-1$} (belowright); 
\draw[dist] (below) to (belowleft); 
\draw[dist] (below) to (belowright); 
\end{tikzpicture}
\phantom{1} \lesssim \phantom{1} \epsilon^{2\delta}
\begin{tikzpicture}[scale=0.7,baseline=0.0cm]
\node at (-1,1) [dot] (left) {}; 
\node at (1,1) [dot] (right) {}; 
\node at (-1,-0.5) [dot] (belowleft) {}; 
\node at (1,-0.5) [dot] (belowright) {}; 
\node at (0,-1.5) [root] (below) {}; 
\draw[gepsilon] (left) to node[labl]{\tiny $1$} (right); 
\draw[gepsilon, ->] (left) to node[anchor=east]{\tiny $5+\delta,-1$} (belowleft); 
\draw[gepsilon, ->] (right) to node[anchor=west]{\tiny $5+\delta,-1$} (belowright); 
\draw[dist] (below) to (belowleft); 
\draw[dist] (below) to (belowright); 
\end{tikzpicture}
\phantom{1}. 
\end{equation}
Again, one can verify that Assumption~\ref{as:graph_assump} is satisfied, and thus one has
\begin{align*}
I_{\lambda}^{\gG} \lesssim \epsilon^{2 \delta} \lambda^{-1-2\delta}, 
\end{align*}
which vanishes at the right homogeneities if $\delta$ is sufficiently small. 

We now turn to the other two terms in this category, which are of the forms
\begin{align*}
\eE^{\frac{n}{2}-1} \big( \Psi^{n} \iI(\eE^{\frac{n-1}{2}} \Psi^{n}) \big), \qquad \eE^{\frac{n-1}{2}} \big( \Psi^{n} \iI(\eE^{\frac{n}{2}-1} \Psi^{n}) \big), 
\end{align*}
and both have homogeneities just below $0$. For both symbols, the components of $(\hPi_{0}^{\epsilon}\tau)(\varphi_{0}^{\lambda})$ in the $1$-st chaos can be expressed by
\begin{equation} \label{eq:graph_2_good}
(n+1)! \cdot \epsilon^{n-\frac{3}{2}} \phantom{1}
\begin{tikzpicture}[scale=0.5,baseline=-0.0cm]
\node at (-1.5,0) [dot] (left){};
\node at (0,0.6)  [dot] (middle) {};
\node at (1.5,0)   [dot] (right) {}; 
\node at (0,2.2) [dot] (above) {}; 
\node at (0,1.05) {\tiny $\vdots$}; 
\node at (0,1.9) {\tiny $\vdots$}; 
\node at (0,1.3) {\scriptsize $n$}; 
\node at (-1.5,-2.2) [var] (belowleft) {}; 
\node at (1.5,-2.2) [root] (belowright) {}; 
\draw[kepsilon] (middle) to (left); 
\draw[kepsilon] (middle) to (right); 
\draw[kepsilon] (above) to (left); 
\draw[kepsilon] (above) to (right); 
\draw[kernel1] (left) to (right); 
\draw[kepsilon] (belowleft) to (left); 
\draw[testfcn] (right) to (belowright); 
\end{tikzpicture}
\phantom{1}, 
\end{equation}
whose second moment is bounded by the graph
\begin{equation} \label{eq:bound_2_good}
\gG = \phantom{1} \epsilon^{2n-3} \phantom{1}
\begin{tikzpicture} [scale=0.65,baseline=-0.0cm]
\node at (-1.5,1.5) [dot] (aboveleft) {}; 
\node at (1.5,1.5) [dot] (aboveright) {}; 
\node at (-1.5,0) [dot] (left) {}; 
\node at (1.5,0) [dot] (right) {}; 
\node at (0,-1.5) [root] (below) {}; 
\draw[gepsilon] (aboveleft) to node[labl]{\tiny $1$} (aboveright); 
\draw[kernel1] (aboveleft) to node[labl]{\tiny $3,1$} (left); 
\draw[gepsilon, bend right = 70] (aboveleft) to node[anchor = east]{\tiny $n$} (left); 
\draw[kernel1] (aboveright) to node[labl]{\tiny $3,1$} (right); 
\draw[gepsilon, bend left = 70] (aboveright) to node[anchor = west]{\tiny $n$} (right); 
\draw[dist] (left) to (below); 
\draw[dist] (right) to (below); 
\end{tikzpicture}
\phantom{1} \lesssim \phantom{1} \epsilon^{\delta} \phantom{1}
\begin{tikzpicture} [scale=0.65,baseline=-0.0cm]
\node at (-1.5,1.5) [dot] (aboveleft) {}; 
\node at (1.5,1.5) [dot] (aboveright) {}; 
\node at (-1.5,0) [dot] (left) {}; 
\node at (1.5,0) [dot] (right) {}; 
\node at (0,-1.5) [root] (below) {}; 
\draw[gepsilon] (aboveleft) to node[labl]{\tiny $3 \delta$} (aboveright); 
\draw[kernel1] (aboveleft) to node[labl]{\tiny $3,1$} (left); 
\draw[gepsilon, bend right = 70] (aboveleft) to node[anchor = east]{\tiny $2-\delta$} (left); 
\draw[kernel1] (aboveright) to node[labl]{\tiny $3,1$} (right); 
\draw[gepsilon, bend left = 70] (aboveright) to node[anchor = west]{\tiny $2-\delta$} (right); 
\draw[dist] (left) to (below); 
\draw[dist] (right) to (below); 
\end{tikzpicture}
\phantom{1}. 
\end{equation}
Again, this object vanishes at the correct homogeneity. This concludes the proof of the bound \eqref{eq:knl_bound} for all symbols with negative homogeneity that contains a strictly positive appearance of $\eE$.

\subsection{The bounds \eqref{eq:group_bound} and \eqref{eq:positive_bound}}

We first deal with the bound \eqref{eq:group_bound} on $\widehat{f}^{\epsilon}$. By inspection of the formal right hand side of the abstract equation, we need to prove \eqref{eq:group_bound} for $\beta = \frac{j-1}{2}$ and formal symbols $\tau$ of the form
\begin{align*}
\tau = \Psi^{j+2-n} \underbrace{\big(\iI(\eE^{\frac{q-1}{2}} \Psi^{q+2}) \big)^{a} \big(\iI(\eE^{\frac{q-1}{2}} \Psi^{q+1}) \big)^{b} X^{c}}_{\sigma}, \qquad n \geq 4, \quad a+b+c \leq n. 
\end{align*}
Since $\widehat{f}_{z}^{\epsilon} = \wf_{z}$, and that the Wick renormalised model $(\wPi, \wf)$ satisfies the relation \eqref{eq:model_epsilon}, we have
\begin{align*}
\widehat{f}_{z}^{\epsilon}(\sE^{\frac{j-1}{2}}_{0}\tau) = -\epsilon^{\frac{j-1}{2}} (\wPi_{z} \Psi^{j+2-n})(z) (\wPi_{z}\sigma)(z), 
\end{align*}
where $\sigma$ is the basis vector as indicated above. Since the homogeneity of $\sigma$ is strictly positive, the expression above is $0$ if any of the factors of $\sigma$ has a positive power. Thus, the only situation we need to consider for the bound on $\widehat{f}_{z}^{\epsilon} (\sE^{\frac{j-1}{2}}_{0}\tau)$ is $\tau = \Psi^{j+2-n}$, and as a consequence, we get
\begin{align*}
D^{\ell} \widehat{f}_{z}^{\epsilon}(\sE^{\frac{j-1}{2}}_{0}(\Psi^{j+2-n})) = -\epsilon^{\frac{j-1}{2}} (D^{\ell} \wPi_{0} \Psi^{j+2-n})(z). 
\end{align*}
By the definition of $\wPi$, the right hand side above can be expressed as a Hermite polynomial, each term being proportional to
\begin{align*}
\epsilon^{\frac{j-1}{2}} (C_{1}^{(\epsilon)})^{k} (D^{\ell} \Psi_{\epsilon}^{j+2-n-2k})(z) = \epsilon^{\frac{j-1}{2}} (C_{1}^{(\epsilon)})^{k} \sum_{\sum |q_{i}| = |\ell|} (D^{q_{1}} \Psi_{\epsilon})(z) \cdots (D^{q_{j+2-n-2k}} \Psi_{\epsilon}) (z). 
\end{align*}
where we have written $\Psi_{\epsilon} = \Pi_{0} \Psi = K * \xi_{\epsilon}$ for simplicity. Now, taking expectation on the right hand side above, using generalised H\"{o}lder's inequality, and the fact that $C_{1}^{\epsilon} \sim \epsilon^{-1}$, we get
\begin{equation} \label{eq:bound_group}
\E |D^{\ell} \widehat{f}_{z}^{\epsilon} (\sE^{\frac{j-1}{2}}_{0}(\Psi^{j+2-n}))| \lesssim \max_{k \leq \frac{1}{2}(j+2-n)} \epsilon^{\frac{j-1}{2}-k} \sum_{\sum |q_{i}| = |\ell|} \prod_{i} \big(\E|D^{q_{i}}  \Psi_{\epsilon}|^{j+2-n-2k} \big)^{\frac{1}{j+2-n-2k}}. 
\end{equation}
By equivalence of moments in Wiener chaos, each of the above factor is equivalent to $\E |(D^{q_{i}} \Psi_{\epsilon})(z)|$, which could be bounded by
\begin{equation} \label{eq:equiv_moments}
\E |(D^{q_{i}} \Psi_{\epsilon})(z)| \lesssim \big( \E |(D^{q_{i}} \Psi_{\epsilon})(z)|^{2} \big)^{\frac{1}{2}} \lesssim \epsilon^{-\frac{1}{2} - |q_{i}|}, 
\end{equation}
where we have used $\E |(D^{q_{i}}K * \xi_{\epsilon})|^{2} \lesssim \epsilon^{-1-2|q_{i}|}$. Combining \eqref{eq:bound_group} and \eqref{eq:equiv_moments}, we get
\begin{align*}
\E |D^{\ell} \widehat{f}_{z}^{\epsilon} (\sE^{\frac{j-1}{2}}_{0}(\Psi^{j+2-n}))| \lesssim \epsilon^{\frac{j-1}{2} - \frac{j+2-n}{2} - |\ell|}, 
\end{align*}
where we used the fact that there are totally $j+2-n-2k$ factors in the product, and $\sum |q_{i}| = |\ell|$. Since $|\tau| < - \frac{j+2-n}{2}$, this establishes the bound \eqref{eq:group_bound}. 

We now turn to the bound \eqref{eq:positive_bound} for $\tau \in \uU$, which includes $\Psi$, $\iI(\eE^{\frac{j-1}{2}} \Psi^{j+2})$, $\iI(\eE^{\frac{j-1}{2}} \Psi^{j+1})$, $\1$ and $X$. The bound is trivial for $\1$ and $X$, and is also straightforward for $\Psi$. The treatment for the rest two basis elements are similar, and we only give details for $\tau = \iI(\eE^{\frac{j-1}{2}} \Psi^{j+2})$. Since the test function $\psi$ annihilates affine functions, we have
\begin{equation} \label{eq:graph_positive}
(\hPi_{z}^{\epsilon} \tau)(\psi_{z}^{\lambda}) = \epsilon^{\frac{j-1}{2}} \phantom{1} 
\begin{tikzpicture}[scale=0.8,baseline=-0.0cm]
\node at (-1.5,0) [root] (left) {}; 
\node at (0,0) [dot] (middle) {}; 
\node at (1.5,0) [dot] (right) {}; 
\node at (2.5,1) [var] (above) {}; 
\node at (2.5,-1) [var] (below) {}; 
\node at (2.5,-0.25) {$\vdots$}; 
\node at (2.5, 0.55) {$\vdots$}; 
\draw[testfcn] (middle) to (left); 
\draw[kernel2] (right) to (middle); 
\draw[kepsilon] (above) to (right); 
\draw[kepsilon] (below) to (right); 
\draw [decorate,decoration={brace,amplitude=7pt}] (2.7,1) to node[midway, xshift=0.6cm] {\scriptsize $j+2$} (2.7,-1); 
\end{tikzpicture}
\phantom{1}. 
\end{equation}
It then follows that we have the bound
\begin{equation} \label{eq:bound_positive}
\E |(\hPi_{z}^{\epsilon} \tau)(\psi_{z}^{\lambda})|^{2} \lesssim \epsilon^{j-1} \phantom{1}
\begin{tikzpicture} [scale=0.75,baseline=-0.0cm]
\node at (-1.5,1.5) [dot] (aboveleft) {}; 
\node at (1.5,1.5) [dot] (aboveright) {}; 
\node at (-1.5,0) [dot] (left) {}; 
\node at (1.5,0) [dot] (right) {}; 
\node at (0,-1.5) [root] (below) {}; 
\draw[gepsilon] (aboveleft) to node[anchor=south]{\tiny $j+2$} (aboveright); 
\draw[kernel1] (aboveleft) to node[labl]{\tiny $3,2$} (left); 
\draw[kernel1] (aboveright) to node[labl]{\tiny $3,2$} (right); 
\draw[dist] (left) to (below); 
\draw[dist] (right) to (below); 
\end{tikzpicture}
\phantom{1} \lesssim \phantom{1} \epsilon^{2(|\tau|-\zeta)} \phantom{1}
\begin{tikzpicture} [scale=0.75,baseline=-0.0cm]
\node at (-1.5,1.5) [dot] (aboveleft) {}; 
\node at (1.5,1.5) [dot] (aboveright) {}; 
\node at (-1.5,0) [dot] (left) {}; 
\node at (1.5,0) [dot] (right) {}; 
\node at (0,-1.5) [root] (below) {}; 
\draw[gepsilon] (aboveleft) to node[anchor=south]{\tiny $3+2|\tau|-2\zeta$} (aboveright); 
\draw[kernel1] (aboveleft) to node[labl]{\tiny $3,2$} (left); 
\draw[kernel1] (aboveright) to node[labl]{\tiny $3,2$} (right); 
\draw[dist] (left) to (below); 
\draw[dist] (right) to (below); 
\end{tikzpicture}
\phantom{1}. 
\end{equation}
Since $2 \zeta \in (2,3)$ and $2 |\tau| = 1 - 2(j+2) \kappa$, the conditions for Assumption~\ref{as:graph_assump} can be verified straightforwardly, and thus one obtains
\begin{align*}
\E |(\hPi_{z}^{\epsilon} \tau)(\psi_{z}^{\lambda})|^{2}  \lesssim \lambda^{1 - 2|\tau| + 2 \zeta} \epsilon^{2 |\tau| - 2 |\zeta|} = \lambda^{2 \zeta + \theta} \epsilon^{2 |\tau| - 2 |\zeta|}
\end{align*}
for some positive $\theta$. The bound for $\tau = \iI(\eE^{\frac{j-1}{2}} \Psi^{j+1})$ follows in essentially the same way.

\section{Identification of the limits}
\label{sec:limits}

We are now ready to address the main theme of the article: identifying the large scale limits of microscopic models under various assumptions on $V$. As mentioned in the introduction, we will see that, in both the weakly nonlinear and weak noise regime, the large scale limit of these near-critical models are described by $\Phi^4_3$ as long as $V$ is symmetric, but described by either $\Phi^3_3$ or OU processes when asymmetry is present. The only difference is that the critical $\theta$ at which one sees a a pitchfork or saddle-node bifurcation is different. 

We will formulate precisely and prove these results below, starting with the weakly nonlinear regime.

\subsection{Weakly nonlinear regime}

Let $\tilde{u}$ be a process on a large torus satisfying
\begin{align*}
\partial_{t} \tilde{u} = \Delta \tilde{u} - \epsilon V_{\theta}'(\tilde{u}) + \widehat{\xi}, 
\end{align*}
and the re-centered and rescaled process $u_{\epsilon^{\alpha}}$ to be
\begin{align*}
u_{\epsilon^{\alpha}} = \epsilon^{-\frac{\alpha}{2}} \big( \tilde{u} (t/\epsilon^{2 \alpha}, x/\epsilon^{\alpha}) - h  \big), 
\end{align*}
where $\alpha$ is the scale, and $h$ is a small parameter depending on $\epsilon$, both to be chosen later. By setting $\delta = \epsilon^{\alpha}$, it is easy to see that $u_{\delta}$ satisfies the equation
\begin{align*}
\partial_{t} u_{\delta} = \Delta u_{\delta} - \delta^{\frac{1}{\alpha} - \frac{5}{2}} V_{\theta}'(\delta^{\frac{1}{2}} u_{\delta} + h) + \delta^{-\frac{5}{2}} \widehat{\xi} (t/\delta^{2}, x/\delta). 
\end{align*}
Note that the noise term is equivalent in law to $\xi * \rho_{\delta}$ for some mollifier $\rho$ rescaled at size $\delta$, expanding $V_{\theta}'$ with respect to Hermite polynomials, we get
\begin{equation} \label{eq:asymmetric_delta}
\partial_{t} u_{\delta} = \Delta u_{\delta} - \delta^{\frac{1}{\alpha}-1} \sum_{j=0}^{m} \ha_{j}^{(h)}(\theta) \cdot \delta^{\frac{j-3}{2}} H_{j} (u_{\delta}; C_{1}^{(\delta)}) + \xi_{\delta}, 
\end{equation}
where
\begin{align*}
\ha_{j}^{(h)}(\theta) = \sum_{k=j}^{m} \begin{pmatrix} k \\ j \end{pmatrix} \ha_{k}(\theta) \cdot h^{k-j}. 
\end{align*}
We now fix $\gamma \in (1,\frac{6}{5})$, $\eta \in (-\frac{m+1}{2m}, \frac{1}{2})$, and we shall lift the above equation to the abstract $\dD^{\gamma,\eta}_{\epsilon}$ space associated to the model $\fM_{\delta} = M_{\delta} \sL_{\delta} (\xi_{\delta})$ as in Theorem~\ref{th:main_convergence}. We also let $\phi_{0}^{(\delta)} \in \cC^{\gamma,\eta}_{\delta}$ such that $\| \phi_{0}^{(\delta)}; \phi_{0} \|_{\gamma,\eta;\delta} \rightarrow 0$ for some $\phi_{0} \in \cC^{\eta}$. The corresponding abstract fixed point equation then has the form
\begin{equation} \label{eq:abstract_approximate}
\Phi^{(\delta)} = \pP \1_{+} \bigg( \Xi - \sum_{j=4}^{m} \lambda_{j}^{(\delta)} \qQ_{\leq 0} \heE^{\frac{j-3}{2}} \qQ_{\leq 0} \big( (\Phi^{(\delta)})^{j} \big) - \sum_{j=0}^{3} \lambda_{j}^{(\delta)} \qQ_{\leq 0} \big( (\Phi^{(\delta)})^{j} \big) \bigg) + \widehat{P} \phi_{0}^{(\delta)}. 
\end{equation}
Comparing the right hand sides of \eqref{eq:renormalised_equation} and \eqref{eq:asymmetric_delta}, we should choose the coefficients $\lambda_{j}^{(\delta)}$'s to be
\begin{align*}
\lambda_{j}^{(\delta)} &= \delta^{\frac{1}{\alpha}-1} \cdot \ha_{j}^{(h)}(\theta), \qquad j \geq 3; \\
\lambda_{2}^{(\delta)} &= \delta^{\frac{1}{\alpha} - \frac{3}{2}} \cdot \ha_{2}^{(h)}(\theta); \\
\lambda_{1}^{(\delta)} &= \delta^{\frac{1}{\alpha} - 2} \cdot \ha_{1}^{(h)}(\theta) - \delta^{\frac{2}{\alpha}-2} C_{\delta,\theta,h}; \\
\lambda_{0}^{(\delta)} &= \delta^{\frac{1}{\alpha} - \frac{5}{2}} \cdot \ha_{0}^{(h)}(\theta) - \delta^{\frac{2}{\alpha}-\frac{5}{2}} C'_{\delta,\theta,h} - 6 \lambda_{2}^{(\delta)} \lambda_{3}^{(\delta)} C_{2}^{(\delta)}, 
\end{align*}
where
\begin{align*}
C_{\delta,\theta,h} &= \sum_{n=2}^{m-1} (n+1)^{2} n! \cdot \big( \ha_{n+1}^{(h)}(\theta) \big)^{2} \cdot C_{n}^{(\delta)} + \sum_{n=3}^{m-2} (n+2)! \cdot \ha_{n}^{(h)}(\theta) \cdot \ha_{n+2}^{(h)} (\theta) \cdot C_{n}^{(\delta)} \\
&= 18 \ha_{3}^{2} c_{2} |\log \delta| + \oO(1); \\
C_{\delta,\theta,h}' &= \sum_{n=3}^{m-1} (n+1)! \cdot \ha_{n}^{(h)}(\theta) \cdot \ha_{n+1}(\theta) C_{n}^{(\delta)} = A + \oO(\delta, \theta, h), 
\end{align*}
Here, the quantity $A$ is given by
\begin{equation} \label{eq:quantity_a}
A = \sum_{n=3}^{m-1} (n+1)! \cdot \ha_{n} \ha_{n+1} C_{n}, 
\end{equation}
and $C_{n}$'s are the limits of $C_{n}^{(\delta)}$'s (recall that they do converge to a finite limit for $n \geq 3$). It is then clear that the reconstructed solution $u_{\delta} = \widehat{\rR}^{\delta} \Phi^{(\delta)}$ exactly solves \eqref{eq:asymmetric_delta} with initial condition $\phi_{0}^{(\delta)}$. Here, we have used the notation $\oO(a,b)$ to denote $\oO(a \vee b)$. 

By Theorem~\ref{th:main_convergence}, there exists a limiting model $\fM \in \sM_{0}$ such that $\$ \fM_{\delta}; \fM \$_{\delta;0} \rightarrow 0$. If $\lambda_{j}^{(\delta)}$ converges to some $\lambda_{j} \in \R$ for each $j$, then by Theorem~\ref{th:fixed_pt}, we will have $\$ \Phi^{(\delta)}; \Phi \$_{\gamma,\eta;\delta} \rightarrow 0$, where $\Phi \in \dD^{\gamma,\eta}$ associated to the model $\fM$ solves the fixed point equation
\begin{equation} \label{eq:abstract_limit}
\Phi = \pP \1_{+} \bigg( \Xi - \sum_{j=4}^{m} \lambda_{j} \qQ_{\leq 0} \heE^{\frac{j-3}{2}} \big(\qQ_{\leq 0} (\Phi^{j}) \big) - \sum_{j=0}^{3} \lambda_{j} \qQ_{\leq 0} (\Phi^{j}) \bigg) + \widehat{P} \phi_{0}.  
\end{equation}
The continuity of the reconstruction operator thus implies $u_{\delta} \rightarrow u = \widehat{\rR} \Phi$ in $\cC^{\eta}$. In what follows, we will choose the small parameter $h$ as well as the scale $\alpha$ in a proper way such that the coefficients $\lambda_{j}^{(\delta)}$'s do converge to the desired limiting values under various assumptions on $V$. Once these limiting values $\lambda_{j}$'s are known, we can immediately derive the limiting equation that $u$ solves. We will also always assume that $u_{\delta}$ solves \eqref{eq:asymmetric_delta} on $[0,T] \times \TT^{3}$ with initial condition $\phi_{0}^{(\delta)}$.

We now assume that $\langle V_{\theta} \rangle$ satisfies a pitchfork bifurcation at $(0,0)$. Then, by the conditions \eqref{eq:bifurcation} and \eqref{eq:pitchfork}, the coefficients $\ha_{j}^{(h)}(\theta)$ on the right hand side of \eqref{eq:asymmetric_delta} satisfy
\begin{equation} \label{eq:coefficients_asymptotic}
\begin{split}
\ha_{j}^{(h)}(\theta) &= \ha_{j} + \oO(\theta, h), \qquad j \geq 3; \\
\ha_{2}^{(h)}(\theta) &= 3 \ha_{3} h + \oO(\theta, h^{2}); \\
\ha_{1}^{(h)}(\theta) &= 3 \ha_{3} h^{2} + \ha_{1}' \theta + \oO(\theta^{2}, \theta h, h^{3}); \\
\ha_{0}^{(h)}(\theta) &= \ha_{3} h^{3} + \ha_{1}' \theta h + \frac{\ha_{0}''}{2} \cdot \theta^{2} + \oO(\theta^{3}, \theta h^{2}, h^{4}), 
\end{split}
\end{equation}
As already mentioned in the introduction, whether one could obtain $\Phi^4_3$ in the large scale limit depends on whether the quantity $A$ defined in \eqref{eq:quantity_a} is $0$. In the case $A = 0$, we have the following theorem.

\begin{thm} \label{th:main_symmetric}
Let $A=0$. If we set $\alpha = 1$, $h = 0$, and
\begin{align*}
\theta = \theta(\epsilon) = \frac{18 \ha_{3}^{2} c_{2}}{\ha_{1}'} \cdot \epsilon |\log \epsilon| + \lambda \epsilon + o(\epsilon), 
\end{align*}
then, for any fixed $T> 0$, $u_{\epsilon}$ converges in probability in $\cC([0,T],\cC^\eta(\TT^{3}))$ to the $\Phi^4_3(\ha_{3})$ family of solutions indexed by $\lambda$ with initial condition $\phi_0$. 
\end{thm}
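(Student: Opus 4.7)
The plan is to apply the machinery of Theorems~\ref{th:main_convergence} and~\ref{th:fixed_pt} directly: Theorem~\ref{th:main_convergence} already provides the convergence of the renormalised models $\fM_{\eps} = M_{\eps}\sL_{\eps}(\xi_{\eps}) \to \fM$ in $\sM_{0}$, and Theorem~\ref{th:fixed_pt} lifts any convergence of the coefficients $\lambda_{j}^{(\eps)}\to\lambda_{j}$ to convergence of the abstract fixed point $\Phi^{(\eps)}\to\Phi$ in $\dD^{\gamma,\eta}_{\eps}$, whence $u_{\eps}=\widehat\rR^{\eps}\Phi^{(\eps)}\to \widehat\rR\Phi$ in $\cC^{\eta}$. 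Since the coefficients $\lambda_{j}^{(\eps)}$ arising in \eqref{eq:abstract_approximate} are explicit functions of $\theta$, $h$ and the renormalisation constants, the whole proof reduces to checking that, with the prescribed choices $\alpha=1$, $h=0$ and $\theta$ as in the statement, these coefficients admit finite limits, and to identifying those limits.

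First, I would plug $\alpha=1$, $\delta=\eps$, $h=0$ and the prescribed $\theta(\eps)=\mu\eps|\log\eps|+\lambda\eps+o(\eps)$ with $\mu=18\ha_{3}^{2}c_{2}/\ha_{1}'$ into the defining expressions of the $\lambda_{j}^{(\eps)}$'s and use the asymptotics \eqref{eq:coefficients_asymptotic}. For $j\geq 3$ one immediately gets $\lambda_{j}^{(\eps)}=\ha_{j}+\oO(\theta)\to\ha_{j}$; in particular $\lambda_{3}^{(\eps)}\to\ha_{3}$. For $j=2$, the prefactor $\eps^{-1/2}$ is killed by $\ha_{2}^{(0)}(\theta)=\oO(\theta)$, giving $\lambda_{2}^{(\eps)}=\oO(\eps^{1/2}|\log\eps|)\to 0$. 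The delicate term is
\[
\lambda_{1}^{(\eps)}=\eps^{-1}\bigl(\ha_{1}'\theta+\oO(\theta^{2})\bigr)-C_{\eps,\theta,0},
\]
where $C_{\eps,\theta,0}=18\ha_{3}^{2}c_{2}|\log\eps|+D+o(1)$ for some explicit constant $D$ coming from the $\oO(1)$ part of the $C_{n}^{(\eps)}$'s. The choice of the logarithmic coefficient in $\theta$ has been made precisely so that the two $|\log\eps|$ divergences cancel, leaving $\lambda_{1}^{(\eps)}\to\ha_{1}'\lambda-D$, which is the free parameter of the limiting $\Phi^4_3$ family.

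The truly delicate coefficient is $\lambda_{0}^{(\eps)}$: here one must show that $\eps^{-3/2}\ha_{0}^{(0)}(\theta)$, $\eps^{-1/2}C_{\eps,\theta,0}'$ and $6\lambda_{2}^{(\eps)}\lambda_{3}^{(\eps)}C_{2}^{(\eps)}$ all tend to $0$. The first is $\eps^{-3/2}\oO(\theta^{2})=\oO(\eps^{1/2}|\log\eps|^{2})$. For the third one uses $\lambda_{2}^{(\eps)}=\oO(\eps^{1/2}|\log\eps|)$ and $C_{2}^{(\eps)}=\oO(|\log\eps|)$ to bound it by $\oO(\eps^{1/2}|\log\eps|^{2})$. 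The hard step, and the only place where the assumption $A=0$ is essential, is the middle one: since $C_{\eps,\theta,0}'=A+\oO(\theta,\eps)$, the $\oO(1)$ part is exactly $A$, and only the vanishing of $A$ allows $\eps^{-1/2}C_{\eps,\theta,0}'\to 0$; otherwise this term would blow up like $\eps^{-1/2}$. So $\lambda_{0}^{(\eps)}\to 0$ precisely under the hypothesis of the theorem.

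Collecting limits, the limiting abstract equation \eqref{eq:abstract_limit} becomes
\[
\Phi=\pP\1_{+}\Bigl(\Xi-\ha_{3}\qQ_{\leq 0}(\Phi^{3})-(\ha_{1}'\lambda-D)\qQ_{\leq 0}(\Phi)\Bigr)+\widehat P\phi,
\]
where I have used that all higher-order nonlinearities $\lambda_{j}$ for $j\geq 4$ carry a factor $\heE^{(j-3)/2}$ and hence vanish against the limiting model $\fM$ by the last statement of Theorem~\ref{th:main_convergence}. By Theorem~\ref{th:renormalised_equation} (passed to the limit), $\widehat\rR\Phi$ is exactly an element of the $\Phi^4_3(\ha_{3})$ family parametrised by the finite mass shift $\ha_{1}'\lambda-D$; since $D$ is $\lambda$-independent, this gives a one-parameter family indexed by $\lambda$, as claimed. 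The convergence takes place in $\cC^{\eta}([0,T]\times\TT^{3})$ by continuity of the reconstruction operator and Theorem~\ref{th:fixed_pt}, with initial data taken as any $\phi_{0}^{(\eps)}\in\cC^{\gamma,\eta}_{\eps}$ with $\|\phi_{0}^{(\eps)};\phi\|_{\gamma,\eta;\eps}\to 0$.
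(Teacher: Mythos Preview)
Your proposal is correct and follows essentially the same approach as the paper's proof: both reduce the question to verifying convergence of each coefficient $\lambda_{j}^{(\eps)}$, identify the cancellation of logarithmic divergences in $\lambda_{1}^{(\eps)}$ as the reason for the choice of $\theta$, pinpoint $A=0$ as exactly what is needed for $\lambda_{0}^{(\eps)}\to 0$, and then invoke the fact that the limiting model annihilates all symbols containing $\eE^{\beta}$ so that the reconstructed limit coincides with the $\Phi^{4}_{3}(\ha_{3})$ family. Your write-up is, if anything, slightly more explicit than the paper's (you spell out the limit of $\lambda_{1}^{(\eps)}$ and bound the $6\lambda_{2}^{(\eps)}\lambda_{3}^{(\eps)}C_{2}^{(\eps)}$ term by $\oO(\eps^{1/2}|\log\eps|^{2})$), but there is no substantive difference in strategy or content.
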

\begin{proof}
Since $\alpha = 1$, we actually have $\epsilon = \delta$. From \eqref{eq:coefficients_asymptotic}, we immediately deduce that
\begin{align*}
\lambda_{j}^{(\epsilon)} \rightarrow \ha_{j}, \qquad j \geq 3. 
\end{align*}
Since $h = 0$, we have $\ha_{2}(\theta) \sim \epsilon \log \epsilon$, which gives $\lambda_{2}^{(\epsilon)} \sim \epsilon^{\frac{1}{2}} \log \epsilon \rightarrow 0$. For $\lambda_{0}^{(\epsilon)}$, we have
\begin{align*}
\ha_{0}(\theta) \sim \epsilon^{2} \log^{2} \epsilon, \qquad \lambda_{2}^{(\epsilon)} = \oO(\epsilon^{\frac{1}{2}} \log \epsilon), 
\end{align*}
so the only problematic term is $C'_{\epsilon, \theta}$. But note that $A = 0$, this term also vanishes, so we also have $\lambda_{0}^{(\epsilon)} \rightarrow 0$. 

We now turn to $\lambda_{1}^{(\epsilon)}$. Note that both $\ha_{1}(\theta) \cdot \epsilon^{-1}$ and $C_{\epsilon,\theta}$ diverge logarithmically, but the prefactor of the term $\epsilon |\log \epsilon|$ in $\theta$ guarantees that these two divergent terms cancel each other, so $\lambda_{1}^{(\epsilon)}$ converges to some finite quantity $\lambda_{1}$, depending on the choice $\lambda$ in front of the $\epsilon$ term in $\theta$. This implies that when restricted to basis vectors without an appearance of $\eE$, the formal right hand side of \eqref{eq:abstract_limit} is identical as that of $\Phi^4_3(\ha_{3})$ with a proper linear term. 

Since the action of the model $\fM$ on basis vectors without an appearance of $\eE$ are precisely the same as the limiting model in $\Phi^4_3$, and its action on symbols with $\eE$ yields $0$, it then follows 
that $u = \widehat{\rR} \Phi$ for the limiting equation does coincide with the $\Phi^4_3(\ha_{3})$ 
family indexed by $\lambda$. 
Recall that we assumed that $\scal{V_\theta}$ undergoes a pitchfork bifurcation and that, in particular,
this implies by \eqref{eq:pitchfork} that $\ha_3 > 0$. It then follows from the results in
\cite{Konstantin,Phi4_global} that, for any initial condition in $\cC^\eta(\T^3)$, this limit almost surely admits solutions 
globally in time.
We can therefore apply Theorem~\ref{th:fixed_pt}, which yields the desired convergence,
 thus completing the proof. 
\end{proof}

We now turn to the non-symmetric case where $A \neq 0$. Since changing $u$ to $-u$ and $h_{\epsilon}$ to $-h_{\epsilon}$
has the effect of simply turning $A$ into $-A$, we can assume without loss of generality that $A > 0$. 
Before stating our result in this case, we introduce a way of comparing trajectories up to a possible explosion
time. Consider the set $\xX^\eta = \cC_\star (\R_+, \cC^{\eta}(\TT^{3}))$ of pairs $(\Phi,T_\star)$ where $T_\star > 0$
and $\Phi\in \cC ([0,T_\star), \cC^{\eta}(\TT^{3}))$. We introduce a family of ``distances'' (which however fail to be symmetric!) indexed by $K,T > 0$
in the following way. For $K,T > 0$ and elements $(\bar \Phi,\bar T_\star),(\Phi,T_\star) \in \xX^\eta$,
we set
\begin{equs}
\tau &= T \wedge \inf\{t \in [0,T_\star)\,:\, \|\Phi(t)\|_\eta \ge K\}\;,\\
\bar \tau &= T \wedge \inf\{t \in [0,\bar T_\star)\,:\, \|\bar \Phi(t)\|_\eta \ge K+1\}\;.
\end{equs}
We then set
\begin{equ}[e:funnydistance]
d_\star^{K,T} \bigl((\bar \Phi,\bar T_\star),(\Phi,T_\star)\bigr)
= |\tau - (\tau \wedge \bar \tau)| + \sup_{t \le \tau \wedge \bar \tau}\|\bar \Phi(t)- \Phi(t)\|_\eta\;.
\end{equ}
The reason for this somewhat asymmetric definition is that we want to consider $(\bar \Phi,\bar T_\star)$
as being ``close to'' $(\Phi,T_\star)$ even in situation where $\Phi$ explodes at time $T_\star$,
but $\bar \Phi$ merely gets very large at that time and then explodes at some much later time. 
On the other hand, we \textit{do not} want to allow the converse situation.
Given a sequence of random elements $(\Phi^{(\eps)},T_\star^{(\eps)}) \in \xX^\eta$,
we say that it converges in law to a (random) limit $(\Phi,T_\star) \in \xX^\eta$ if, for every $K,T > 0$
 and every $\delta > 0$, there exists $\bar \eps > 0$ and a coupling between these random variables such that,
for $\eps < \bar \eps$, one has $\PP\bigl(d_\star^{K,T} \bigl((\Phi^{(\eps)}, T_\star^{(\eps)}),(\Phi,T_\star)\bigr) > \delta\bigr) < \delta$.

The main statement is the following.

\begin{thm} \label{th:main_asymmetric}
Let $A > 0$, and assume $\theta = \rho \epsilon^{\beta}$ near the origin ($\rho>0$). 
\begin{enumerate}

\item If $\beta < \frac{2}{3}$, then there exists three distinct choices $h_{\epsilon}^{(1)} < h_{\epsilon}^{(2)} < h_{\epsilon}^{(3)}$ such that at scale $\alpha = \frac{1 + \beta}{2}$, both $u_{\delta}^{(1)}$ and $u_{\delta}^{(3)}$ converges in probability to $u$ while $u_{\delta}^{(2)}$ converges in probability to $v$, where $u$ and $v$ solves the equations
\begin{align*}
\partial_{t} u = \Delta u - 2  |\ha_{1}'| \rho u + \xi, \qquad \partial_{t} v = \Delta v +  |\ha_{1}'| \rho v + \xi, 
\end{align*}
respectively, both with initial data $\phi_{0}$. 

\item If $\beta > \frac{2}{3}$, then there exists a unique choice $h_{\epsilon}$ such that at scale $\alpha = \frac{5}{6}$, the process $u_{\delta}$ converges in probability to the solution $u$ of the equation
\begin{align*}
\partial_{t} u = \Delta u - 3 \big( \ha_{3} A^{2} \big)^{\frac{1}{3}} u + \xi
\end{align*}
with initial data $\phi_{0}$. 

\item If $\beta = \frac{2}{3}$ and $\theta = \rho \epsilon^{\frac{2}{3}}$, then there exists a critical value
\begin{equation} \label{eq:critical_value}
\rho^{*} = \frac{3}{|\ha_{1}'|} \cdot \bigg( \frac{\ha_{3} A^{2}}{4}\bigg)^{\frac{1}{3}}
\end{equation}
such that for $\rho < \rho^{*}$ (and resp. $\rho > \rho^{*}$) there exist one (and three, resp.) choices of $h$ such that at scale $\alpha = \frac{5}{6}$, $u_{\epsilon^{\alpha}}$ converges to one or three distinct O.U. processes. 

For $\rho = \rho^{*}$, there exist two distinct choices $h_{\epsilon}^{(1)} < h_{\epsilon}^{(2)}$ such that for $h = h_{\epsilon}^{(1)}$, at scale $\alpha = \frac{8}{9}$, $u_{\epsilon^{\alpha}}$ converges to the solution $u$ of the equation
\begin{align*}
\partial_{t} u = \Delta u + 3 \bigg( \frac{\ha_{3}^{2} A}{2} \bigg)^{\frac{1}{3}} \Wick{u^{2}} + \xi, 
\end{align*}
while for $h = h_{\epsilon}^{(2)}$, $u_{\epsilon^{\alpha}}$ still converges to O.U. at scale $\alpha = \frac{5}{6}$. 
\end{enumerate}
All the convergences above are convergences in law in $\xX^\eps$, with $T_\star$ (resp. $T_\star^{(\eps)}$)
given by the explosion times of the respective processes in $\cC^\eta$. 
\end{thm}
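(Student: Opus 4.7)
My plan is to apply the abstract framework of Sections \ref{sec:sol}--\ref{sec:convModels} to equation \eqref{eq:asymmetric_delta} in four steps: (i) lift $u_\delta$ (with $\delta = \epsilon^\alpha$) to the fixed point problem \eqref{eq:abstract_approximate} based on the renormalised model $\fM_\delta = M_\delta \sL_\delta(\xi_\delta)$; (ii) choose the shift $h = h_\epsilon$ and scale $\alpha$ so that every coefficient $\lambda_j^{(\delta)}$ converges to a finite limit $\lambda_j$; (iii) invoke Theorem \ref{th:main_convergence} (giving $\fM_\delta \to \fM$) and Theorem \ref{th:fixed_pt} (giving $\Phi^{(\delta)} \to \Phi$); (iv) identify the limiting PDE using continuity of reconstruction together with the fact that the limiting model annihilates every symbol containing an $\eE^\beta$.

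Step (ii) is the crux. The requirement that $\lambda_0^{(\delta)}$ stay bounded forces $\ha_0^{(h)}(\theta) = \langle V_\theta'\rangle(h)$ to cancel, to the required order, both $\delta^{1/\alpha}C'_{\delta,\theta,h}$ and the cross-term coming from $6\lambda_2^{(\delta)}\lambda_3^{(\delta)}C_2^{(\delta)}$. Using \eqref{eq:coefficients_asymptotic} and $C'_{\delta,\theta,h} = A + \oO(\delta,\theta,h)$, this reduces $h$ to being a root of the perturbed cubic
\begin{equation*}
\ha_3 h^3 + \ha_1'\theta\, h + \tfrac{1}{2}\ha_0''\theta^2 - \epsilon A + \text{(lower order)} = 0,
\end{equation*}
i.e.\ a critical point of a renormalisation-corrected effective potential. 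For $\beta < 2/3$ at scale $\alpha = (1+\beta)/2$ the balance $\ha_3 h^3 \sim -\ha_1'\theta h$ dominates, producing three real roots: two stable critical points $h^{(1,3)} \approx \pm(|\ha_1'|\rho/\ha_3)^{1/2}\epsilon^{\beta/2}$ and one saddle $h^{(2)} = \oO(\epsilon^\beta)$. Using the cubic constraint $3\ha_3 h^2 + \ha_1'\theta = 2|\ha_1'|\rho\epsilon^\beta$ at the minima and $\ha_1^{(h^{(2)})}(\theta) \approx \ha_1'\theta$ at the saddle, one computes $\lambda_1^{(\delta)} \to 2|\ha_1'|\rho$ and $\lambda_1^{(\delta)} \to \ha_1'\rho$ respectively, while all other $\lambda_j^{(\delta)}$ vanish in the limit by direct power counting in $\delta$. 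For $\beta > 2/3$ at scale $\alpha = 5/6$ the balance $\ha_3 h^3 \sim \epsilon A$ yields a unique real root $h \sim (\epsilon A/\ha_3)^{1/3}$, and the ensuing cancellation delivers $\lambda_1^{(\delta)} \to 3(\ha_3 A^2)^{1/3}$.

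The case $\beta = 2/3$ is the delicate one. Setting $h = y\epsilon^{1/3}$, the cubic becomes $\ha_3 y^3 + \ha_1'\rho\, y - A = 0$, whose discriminant vanishes precisely at $\rho = \rho^* = \tfrac{3}{|\ha_1'|}(\ha_3 A^2/4)^{1/3}$, which is exactly \eqref{eq:critical_value}. For $\rho \neq \rho^*$ the analysis parallels the previous regimes. At $\rho = \rho^*$ the cubic has one simple root (yielding a stable OU at scale $\alpha = 5/6$ as before) and one double root $y_0$ characterised by $3\ha_3 y_0^2 + \ha_1'\rho^* = 0$; at this double root $\ha_1^{(h)}(\theta) = 3\ha_3 h^2 + \ha_1'\theta + \oO(\cdot)$ vanishes to leading order, so the OU scaling breaks down and one must descend to the finer scale $\alpha = 8/9$. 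At this scale one checks $\lambda_1^{(\delta)} \to 0$, $\lambda_j^{(\delta)} \to 0$ for $j \geq 3$, while $\lambda_2^{(\delta)} = \delta^{1/\alpha - 3/2}\ha_2^{(h)}(\theta) \to 3\ha_3 y_0 = 3(\ha_3^2 A/2)^{1/3}$, producing the Wick-quadratic nonlinearity of $\Phi^3_3$. The main technical hurdle will be to expand $h_\epsilon$ around the double root to sufficiently high order in $\epsilon$ so that subleading contributions from $C_{\delta,\theta,h}$, $C'_{\delta,\theta,h}$, and the $\theta^2$ term neither spoil the cancellation of $\lambda_1^{(\delta)}$ nor give spurious limits for $\lambda_0^{(\delta)}$; once this bookkeeping is in place, convergence of $u_{\epsilon^\alpha}$ in probability in $\cC^\eta([0,T]\times\TT^3)$ follows immediately from step (iii) and continuity of reconstruction.
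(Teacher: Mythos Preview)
Your approach is essentially the paper's: reduce everything to convergence of the coefficients $\lambda_j^{(\delta)}$ and choose $h_\epsilon$ as (leading-order) roots of the cubic $f_\rho(r) = \ha_3 r^3 + \ha_1'\rho\, r - A$, with the double root at $\rho = \rho^*$ yielding the $\Phi^3_3$ limit at scale $\alpha = 8/9$; the paper only spells out the case $\beta = 2/3$ and argues exactly as you do. Two minor corrections: the double root is $r_1 = -(A/(2\ha_3))^{1/3} < 0$ (it is the local maximum of $f_{\rho^*}$ and corresponds to the \emph{smaller} shift $h_\epsilon^{(1)}$), so $3\ha_3 y_0 = -3(\ha_3^2 A/2)^{1/3}$ and the sign in the limiting equation comes out right only after remembering that the nonlinearity carries a factor $-\lambda_2$; and your anticipated ``technical hurdle'' about higher-order expansions of $h_\epsilon$ evaporates, since the leading-order ansatz $h = r_1\epsilon^{1/3}$ already makes the subleading pieces of $\lambda_0^{(\delta)}$ and $\lambda_1^{(\delta)}$ of size $\oO(\epsilon^{1/9})$ and $\oO(\epsilon^{2/9})$ respectively, without any further tuning.
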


\begin{rmk}
The situation for $\beta = \frac{2}{3}$ and $\rho < \rho^{*}$ (or $\rho > \rho^{*}$) are similar to that of $\beta > \frac{2}{3}$ (or $\beta < \frac{2}{3}$), except that the coefficients in the limiting equations are different. The other difference is that in the case $\rho > \rho^{*}$, the three choices of $h$ gives three different limiting O.U. processes, unlike when $\beta < \frac{2}{3}$, two of the three $h$'s gives the same limiting equation. Also note that the limiting equation $\Phi^3_3$ does not have a global solution even in the case when the highest power of $V_{0}$ has a positive coefficient. 
\end{rmk}

We will give the proof of the above theorem for the most interesting case $\beta = \frac{2}{3}$, and the proof for the other two situations are essentially the same but only simpler. We will make use of the following elementary lemma.

\begin{lem}
Let $A > 0$. For any $\rho > 0$, let $f_{\rho}(r) = \ha_{3} r^{3} + \rho \ha_{1}' r - A$. Let $\rho^{*}$ be the same as in \eqref{eq:critical_value}. Then, the equation $f_{\rho}(r) = 0$ has one, two, or three distinct real roots for $\rho < \rho^{*}$, $\rho = \rho^{*}$ and $\rho > \rho^{*}$, respectively. In particular, for $\rho = \rho^{*}$, the two roots $r_{1} < r_{2}$ satisfy
\begin{align*}
r_{1} = - \bigg( \frac{A}{2 \ha_{3}} \bigg)^{\frac{1}{3}}, \qquad r_{2} > \bigg( \frac{A}{2 \ha_{3}} \bigg)^{\frac{1}{3}}. 
\end{align*}
\end{lem}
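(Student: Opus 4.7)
The plan is to analyse the cubic $f_\rho$ via its derivative and the sign of its local extrema, and then compute the roots explicitly at the critical value $\rho = \rho^*$.

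First I would compute $f_\rho'(r) = 3\hat a_3 r^2 + \rho \hat a_1'$. Since $\hat a_3 > 0$ and $\hat a_1' < 0$, this has two simple zeros
\begin{equation*}
r_\pm = \pm \sqrt{\frac{\rho |\hat a_1'|}{3\hat a_3}},
\end{equation*}
corresponding to a local maximum at $r_-$ and a local minimum at $r_+$. Thus $f_\rho$ is strictly increasing on $(-\infty, r_-]$ and on $[r_+, \infty)$, and strictly decreasing on $[r_-, r_+]$. Since $f_\rho(\pm\infty) = \pm\infty$ and $f_\rho(0) = -A < 0$, the number of distinct real roots is entirely determined by the signs of $f_\rho(r_\pm)$.

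Next, I would evaluate $f_\rho$ at the critical points. Using $\hat a_3 r_\pm^2 = \rho|\hat a_1'|/3$, one obtains
\begin{equation*}
f_\rho(r_\pm) = \hat a_3 r_\pm^3 + \rho \hat a_1' r_\pm - A = -\tfrac{2}{3}\,\rho|\hat a_1'|\, r_\pm - A.
\end{equation*}
Since $r_+ > 0$, we have $f_\rho(r_+) < -A < 0$ unconditionally. On the other hand,
\begin{equation*}
f_\rho(r_-) = \frac{2\rho|\hat a_1'|}{3}\sqrt{\frac{\rho|\hat a_1'|}{3\hat a_3}} - A = \frac{2}{3\sqrt{3}}\,\frac{(\rho|\hat a_1'|)^{3/2}}{\sqrt{\hat a_3}} - A.
\end{equation*}
Setting $f_\rho(r_-) = 0$ and solving for $\rho$ yields exactly $\rho = \rho^*$ as defined in \eqref{eq:critical_value}. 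Hence $f_\rho(r_-) < 0$, $=0$, $>0$ according to whether $\rho < \rho^*$, $\rho = \rho^*$, $\rho > \rho^*$, which by the monotonicity analysis above gives the claimed count of one, two (a double and a simple) or three distinct real roots.

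Finally, for $\rho = \rho^*$, the double root is $r_1 = r_-$ and can be computed explicitly: from $(\rho^*)^3 = \tfrac{27 \hat a_3 A^2}{4|\hat a_1'|^3}$ one gets
\begin{equation*}
\frac{\rho^*|\hat a_1'|}{3\hat a_3} = \Big(\frac{A^2}{4\hat a_3^2}\Big)^{1/3}, \qquad \text{hence} \qquad r_1 = -\Big(\frac{A}{2\hat a_3}\Big)^{1/3}.
\end{equation*}
The third root $r_2$ then follows from Vieta's formula applied to the depressed cubic $f_\rho(r)/\hat a_3 = r^3 + (\rho^*\hat a_1'/\hat a_3)r - A/\hat a_3$: the sum of the three roots (counted with multiplicity) vanishes, so $2r_1 + r_2 = 0$, giving $r_2 = -2r_1 = 2(A/(2\hat a_3))^{1/3} > (A/(2\hat a_3))^{1/3}$, as required. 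None of the steps involves any real obstacle; the only mildly delicate point is keeping track of signs (that it is $r_-$, the local maximum, which becomes the double root) and doing the algebra to recognise $\rho^*$ in the expression for $f_\rho(r_-)=0$.
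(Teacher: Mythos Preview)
Your proof is correct and takes essentially the same approach as the paper: both identify the double root with the local maximum of $f_\rho$ (using that $A>0$ forces this) and compute $\rho^*$ and $r_1$ from that condition. Your version simply fills in all the details that the paper leaves implicit, including the nice use of Vieta's formula to pin down $r_2 = -2r_1$ exactly rather than just the inequality stated.
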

\begin{proof}
If $f_{\rho}(r) = 0$ has exactly two distinct roots, then since $A > 0$, the smaller one must also be a local maximum for $f_{\rho}$. The value of $\rho^{*}$ and that root could then be computed directly, and all other assertions follow. 
\end{proof}

\begin{flushleft}
\textbf{Proof of Theorem~\ref{th:main_asymmetric}. }
\end{flushleft}

We only give details to the case when $\beta = \frac{2}{3}$ so $\theta = \rho \epsilon^{\frac{2}{3}}$. For $\rho = \rho^{*}$, let $r_{1} < r_{2}$ be the two roots to the equation $f_{\rho^{*}}(r) = 0$, and set
\begin{align*}
\alpha_{1} = \frac{8}{9}, \qquad \theta \sim \rho^{*} \delta_{1}^{\frac{3}{4}}, \qquad h_{\delta}^{(1)} = r_{1} \delta_{1}^{\frac{3}{8}} = r_{1} \epsilon^{\frac{1}{3}}; \\
\alpha_{2} = \frac{5}{6}, \qquad \theta \sim \rho^{*} \delta_{2}^{\frac{4}{5}}, \qquad h_{\delta}^{(2)} = r_{2} \delta_{1}^{\frac{2}{5}} = r_{2} \epsilon^{\frac{1}{3}}, 
\end{align*}
For the choice of $(\alpha_{1}, h^{(1)})$, we deduce from the properties of $\ha_{j}^{(h)}(\theta)$'s that $\lambda_{j}^{(\delta_{1})} \rightarrow 0$ for all $j \neq 2$, while
\begin{align*}
\lambda_{2}^{(\delta_{1})} \rightarrow -3 \bigg( \frac{\ha_{3}^{2} A}{2} \bigg)^{\frac{1}{3}}. 
\end{align*}
The claim then follows immediately. For the choice $(\alpha_{2}, h^{(2)})$, we have $\lambda_{j}^{(\delta_{2})} \rightarrow 0$ for all $j \neq 1$ and $\lambda_{1}^{(\delta_{2})}$ converges to some positive real number. Thus, the limiting process in this case is O.U.. 

For $\rho < \rho^{*}$ and $\rho > \rho^{*}$, one should note that there exist one (or three, respectively) distinct real solutions to the equation
\begin{align*}
f_{\rho}(r) = 0. 
\end{align*}
By setting $\alpha = \frac{5}{6}$ and $h_{\delta} = r \delta^{\frac{2}{5}} = r \epsilon^{\frac{1}{3}}$ with the roots $r$, one can show that all $\lambda_{j}^{(\delta)}$'s vanish in the limit except $\lambda_{1}^{(\delta)}$ which converges to a finite quantity. The form of the limiting equation then follows immediately. The coefficient of the drift term can be found by computing the roots to $f_{\rho}(r) = 0$, but this is not important here. This completes the proof.

\begin{rmk} \label{rm:chart}
	One can also adjust $\theta$ to the second order. In fact, for
	\begin{align*}
	\theta = \rho_{1} \epsilon^{\beta_1} + \rho_{2} \epsilon^{\beta_2}
	\end{align*}
	with $\beta_1 = \frac{2}{3}$ and $\rho_1 = \rho^{*}$, it is not difficult to show that if $\beta_2 < \frac{8}{9}$ and $\rho_{2} > 0$, then one still gets three OU's, but two of them are observed at larger scales than $\frac{5}{6}$. If $\beta_2 \geq \frac{8}{9}$, then one can get $\Phi^3_3$. This can be illustrated by the following figure. 
	\begin{align*}
	\begin{tikzpicture}[scale=1.1,baseline=0cm]
	\node at (0,0) [sdot] (r1) {};
	\node at (-2,0) [sdot] (r1left) {}; 
	\node at (2.5,0) [sou] (r1right) {}; 
	\node at (0,1.5) [sdot] (b1) {}; 
	\node at (-2,1.5) [sdot] (b1left) {}; 
	\node at (2.5,1.5) [sou] (b1right) {}; 
	\node at (-3,2) [sou] (b1s) {}; 
	\node at (-3,1) [nou] (b1n) {}; 
	\node at (-3,0.5) [sou] (r1s1) {}; 
	\node at (-3,0) [nou] (r1n) {}; 
	\node at (-3,-0.5) [sou] (r1s2) {}; 
	\node at (2.7,1.5) [] () {\scriptsize $\frac{5}{6}$}; 
	\node at (2.7,0) [] () {\scriptsize $\frac{5}{6}$}; 
	\node at (0,-1.5) [sdot] (b2) {}; 
	\node at (1.9,-1.5) [sdot] (b2right) {}; 
	\node at (-2,-1.5) [sdot] (b2left) {}; 
	\node at (-3,-1) [sou] (b2s1) {}; 
	\node at (-3,-1.5) [nou] (b2n) {}; 
	\node at (-3,-2) [sou] (b2s2) {}; 
	\node at (2.5,-1) [sou] (b2s) {}; 
	\node at (2.5,-2) [root] (b2wick) {}; 
	\node at (2.7,-1) [] () {\scriptsize $\frac{5}{6}$}; 
	\node at (2.7,-2) [] () {\scriptsize $\frac{8}{9}$}; 
	\node at (-3.2,-2) [] () {\scriptsize $\frac{5}{6}$}; 
	\node at (-1.5,-3) [root] (ou) {}; 
	\node at (1.5,-3) [sou] (saddle) {}; 
	\node at (-1.5,-3.5) [] () {$\frac{8}{9}$}; 
	\node at (1.5,-3.5) [] () {$\frac{5}{6}$}; 
	\draw[akernel] (b1) to node[anchor = south]{\tiny $\beta_1<\frac{2}{3}$} (b1left); 
	\draw[akernel] (b1) to node[anchor = south]{\tiny $\beta_1>\frac{2}{3}$} (b1right); 
	\draw[akernel] (b1) to node[labl]{\tiny $\beta_1=\frac{2}{3}$} (r1); 
	\draw[akernel] (b1left) to (b1n); 
	\draw[akernel] (r1) to node[anchor = south]{\tiny $\rho_1 > \rho^*$} (r1left); 
	\draw[akernel] (r1) to node[anchor = south]{\tiny $\rho_1 < \rho^*$} (r1right); 
	\draw[akernel] (r1left) to (r1s1); 
	\draw[akernel] (r1left) to (r1s2); 
	\draw[akernel] (r1left) to (r1n); 
	\draw[akernel, bend left = 45] (b1left) to (b1s); 
	\draw[akernel, bend right = 45] (b1left) to (b1s); 
	\draw [decorate,decoration={brace,amplitude=3pt}] (-3.15,1) to node[midway, xshift=-0.5cm] {\scriptsize $\frac{1+\beta_1}{2}$} (-3.15,2); 
	\draw [decorate,decoration={brace,amplitude=3pt}] (-3.15,-0.5) to node[midway, xshift=-0.3cm] {\scriptsize $\frac{5}{6}$} (-3.15,0.5); 
	\draw[akernel] (r1) to node[labl]{\tiny $\rho_1 = \rho^{*}$} (b2); 
	\draw[akernel] (b2) to node[anchor = south]{\tiny $\beta_2<\frac{8}{9}$} (b2left); 
	\draw[akernel] (b2) to node[anchor = south]{\tiny $\beta_2>\frac{8}{9}$} (b2right); 
	\draw[akernel] (b2left) to (b2s1); 
	\draw[akernel] (b2left) to (b2s2); 
	\draw[akernel] (b2left) to (b2n); 
	\draw[akernel] (b2right) to (b2s); 
	\draw[akernel] (b2right) to (b2wick); 
	\draw [decorate,decoration={brace,amplitude=2pt}] (-3.15,-1.5) to node[midway, xshift=-0.6cm] {\scriptsize $\frac{2}{3} + \frac{\beta_2}{4}$} (-3.15,-1); 
	\draw[akernel] (b2) to node[labl]{\tiny $\beta_2 = \frac{8}{9}$} (ou); 
	\draw[akernel] (b2) to node[labl]{\tiny $\beta_2 = \frac{8}{9}$} (saddle); 
	\end{tikzpicture}
	\end{align*}
	Here, each \tikz[baseline=-0.1cm] \node [sou] {}; represents a stable OU process (the one with two arrows pointing to it indicates that the two limiting OU processes have the same coefficient), each \tikz[baseline=-0.1cm] \node [nou] {}; represents an unstable OU process, and the green node \tikz[baseline=-0.1cm] \node [root] {}; represents a $\Phi^3_3$ equation. The difference between the two green dots are that the limit represented by the one at the bottom represents a $\Phi^3_3$ family parametrised by the coefficient $\rho_2$, while the one on the right has the canonical Wick product meaning. Finally, the numbers next to each node indicates the scale $\alpha$. 
	
	Note that for the branch containing the saddle-node bifurcation, the scale increases from $\frac{1}{2}$ to $\frac{8}{9}$ continuously with respect to the exponents $(\beta_1, \beta_2)$. One can also obtain such a complete description for the symmetric case, but we omit the statement of the details for conciseness. Also, the reason that the three nodes \tikz[baseline=-0.1cm] \node [sou] {}; on the right of the figure all exhibit scale $\frac{5}{6}$ is that we only include the case when $\rho_{1} > 0$. In fact, one can also recover the scales from $\frac{1}{2}$ to $\frac{5}{6}$ continuously by considering $\beta_{1} < \frac{2}{3}$ and $\rho_{1} < 0$. 
\end{rmk}

\begin{rmk}
We now very briefly discuss the case when $\langle V_{\theta} \rangle$ has a stable extreme point or a saddle-node bifurcation near the origin. The proofs are much simpler than the pitchfork bifurcation case, so we do not give details here. In both cases, no re-centering is needed so $h = 0$. 

If $\langle V \rangle$ has a stable extreme point at the origin, then $\ha_{1} \neq 0$. In this case, we choose $\alpha = \frac{1}{2}$ (so $\delta = \epsilon^{\frac{1}{2}}$). Since we always assume $\ha_{0} = 0$, then as long as $\theta = o(\epsilon)$, all $\lambda_{j}^{(\delta)}$'s vanish in the limit except $\lambda_{1}^{(\delta)} \rightarrow \ha_{1}$. Thus, the process $u_{\delta}$ converges in probability to the limit
\begin{align*}
\partial_{t} u = \Delta u - \ha_{1} u + \xi. 
\end{align*}
In the case of saddle-node bifurcation when $\ha_{0} = \ha_{1} = 0$ but $\ha_{2} \neq 0$, the correct scale here should be $\alpha = \frac{2}{3}$. Then, as long as $\theta = o(\delta) = o(\epsilon^{\frac{3}{2}})$, all $\lambda_{j}^{(\delta)} \rightarrow 0$ except for $\lambda_{2}^{(\delta)}$ which converges to $\ha_{2}$. This gives the limiting equation
\begin{align*}
\partial_{t} u = \Delta u - \ha_{2} \Wick{u^2} + \xi. 
\end{align*}
If $\theta = \oO (\epsilon^{\frac{3}{2}})$, then the resulting limit is a $\Phi^3_3$ family. Note that in the above two cases, no further renormalisation is needed beyond the usual Wick ordering, so they can actually be treated using the methods developed in \cite{DPD03} and \cite{EJS13}. 
\end{rmk}

\subsection{Weak noise regime}

We now consider the weak noise regime. Here, we assume $V: \theta \mapsto V_{\theta}(\cdot)$ is smooth in $\cC^{8}$ functions so that it can be expanded near $x=0$ as in \eqref{eq:uniform_regular}. We also assume that $V$ has a pitchfork bifurcation near the origin in the sense of \eqref{eq:pitchfork_noise}. Let $\tilde{u}$ be the process satisfying
\begin{align*}
\partial_{t} \tilde{u} = \Delta \tilde{u} - V_{\theta}'(\tilde{u}) + \epsilon^{\frac{1}{2}} \widehat{\xi}, 
\end{align*}
and define $u_{\epsilon^{\alpha}}$ to be
\begin{align*}
u_{\epsilon^{\alpha}} = \epsilon^{-\frac{1+\alpha}{2}} \big( \tilde{u}(t/\epsilon^{2 \alpha}, x/\epsilon^{\alpha}) - h \big). 
\end{align*}
By setting $\delta = \epsilon^{\alpha}$, we see that $u_{\delta}$ satisfies the equation
\begin{equation} \label{eq:noise_macro}
\partial_{t} u_{\delta} = \Delta u_{\delta} - \sum_{j=0}^{6} a_{j}^{(h)}(\theta) \delta^{\frac{j-1}{2\alpha} + \frac{j-5}{2}} u_{\delta}^{j} - \delta^{-\frac{1}{2\alpha} - \frac{5}{2}} F_{\theta,h}(\delta^{\frac{1}{2\alpha} + \frac{1}{2}} u_{\delta}) + \xi_{\delta}
\end{equation}
for certain function $F_{\theta,h}$ satisfying $|F_{\theta,h}(x)| \lesssim |x|^{7}$ uniformly over $|\theta|, |h|, |x| < 1$, and the coefficients $a_{j}^{(h)}$'s satisfy
\begin{equation} \label{eq:noise_a's}
a_{j}^{(h)}(\theta) = \sum_{k=j}^{6} a_{k}(\theta) \begin{pmatrix} k \\ j \end{pmatrix} \cdot h^{k-j} + \oO(h^{7-j}), \qquad 0 \leq j \leq 6. 
\end{equation}
Similar as before, we always assume \eqref{eq:noise_macro} starts with initial data $\phi_{0}^{(\delta)}  \in \cC^{\gamma,\eta}_{\delta}$ such that $\| \phi_{0}^{(\delta)}; \phi_{0} \|_{\gamma,\eta;\delta} \rightarrow 0$ for some $\phi_{0} \in \cC^{\eta}$. 

We still let $\fM_{\delta} = M_{\delta} \sL_{\delta}(\xi_{\delta})$ be the renormalised model as before, $\dD^{\gamma,\eta}_{\delta}$ and $\widehat{\rR}^{\delta}$ be the associated space and reconstruction operator, and consider the abstract fixed point equation
\begin{equation} \label{eq:abstract_noise}
\begin{split}
\Phi^{(\delta)} = &\pP \1_{+} \bigg( \Xi - \sum_{j=4}^{6} \lambda_{j}^{(\delta)} \qQ_{\leq 0} \heE^{\frac{j-3}{2}} \qQ_{\leq 0} \big( (\Phi^{(\delta)})^{j} \big) - \sum_{j=0}^{3} \lambda_{j}^{(\delta)} \qQ_{\leq 0} \big( (\Phi^{(\delta)})^{j} \big) \\
&- \delta^{-\frac{1}{2\alpha} - \frac{5}{2}} F_{\theta,h}(\delta^{\frac{1}{2\alpha} + \frac{1}{2}} \widehat{\rR}^{\delta} \Phi^{\delta}) \cdot \1 \bigg) + \widehat{P} \phi_{0}^{(\delta)}. 
\end{split}
\end{equation}
Here, we allow the parameters $\theta$ and $h$ to depend on $\delta$, which is indeed the case we consider later. The following statement is an analogy to Theorem~\ref{th:fixed_pt}. It will be crucial to proving the convergence of $u_{\delta}$ to corresponding limits in various situations.

\begin{thm} \label{th:abstract_noise}
Let $\fM_{\delta} \in \sM_{\delta}$ and $\fM \in \sM$ be as before, and let $\alpha \leq 1$. Suppose $|F_{\theta,h}(x)| \lesssim |x|^{7}$ near the origin uniformly over $|\theta|, |h| < 1$, and suppose for each $j$, there exists $\lambda_{j} \in \R$ such that $\lambda_{j}^{(\delta)} \rightarrow \lambda_{j}$. Then, there exists a short existence time $T$ such that there is a unique fixed point solution $\Phi \in \dD^{\gamma,\eta}$ to the equation
\begin{align*}
\Phi = \pP \1_{+} \bigg( \Xi - \sum_{j=4}^{6} \lambda_{j} \qQ_{\leq 0} \heE^{\frac{j-3}{2}} \big( \qQ_{\leq 0}(\Phi^{j}) \big) - \sum_{j=0}^{3} \lambda_{j}  \qQ_{\leq 0} (\Phi^{j}) \bigg) + \widehat{P} \phi_{0}. 
\end{align*}
Furthermore, for every small enough $\delta$, there also exists a fixed point solution $\Phi^{(\delta)} \in \dD^{\gamma,\eta}_{\delta}$ to \eqref{eq:abstract_noise} up to the same time $T$ such that
\begin{align*}
	\lim_{\delta \rightarrow 0}  \| \Phi^{(\delta)}; \Phi  \|_{\gamma,\eta;\delta} = 0\;,\quad
	\lim_{\delta \to 0} \sup_{t \in [0,T]} \|(\rR^{(\delta)} \Phi^{(\delta)})(t,\cdot)-(\rR \Phi)(t,\cdot)\|_\eta = 0\;. 
\end{align*}
\end{thm}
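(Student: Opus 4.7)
The strategy is to mimic the contraction-mapping proof of Theorem~\ref{th:fixed_pt}, treating the extra term
$-\delta^{-\frac{1}{2\alpha}-\frac{5}{2}} F_{\theta,h}(\delta^{\frac{1}{2\alpha}+\frac{1}{2}}\widehat{\rR}^\delta\Phi^{(\delta)})\cdot\1$
as a small (vanishing) perturbation of the weakly nonlinear fixed point map $\mM^{(\delta)}$ already analysed there. The plan is to define the modified map $\widetilde{\mM}^{(\delta)}$ on a ball in $\dD^{\gamma,\eta}_\delta$ by adding this extra term, show that it still contracts a suitable ball with contraction constant bounded uniformly in $\delta$, and identify the limit.

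The principal new ingredient is controlling the $F_{\theta,h}$ term. Since $F_{\theta,h}$ enters the abstract equation only as a scalar multiple of $\1$, it suffices to bound the function $F_{\theta,h}(\delta^{\frac{1}{2\alpha}+\frac{1}{2}}\widehat{\rR}^\delta\Phi^{(\delta)})$ in $\cC^{\gamma,\eta}_\delta$. Using $|F_{\theta,h}(x)|\lesssim |x|^{7}$ together with the $L^\infty$ bound
$\|\widehat{\rR}^\delta\Phi^{(\delta)}\|_\infty \lesssim \delta^{-\frac{1}{2}-\kappa}\,\|\Phi^{(\delta)}\|_{\gamma,\eta;\delta}\,\$\Pi^{(\delta)}\$_\delta$
from Proposition~\ref{pr:restart_solution}, the prefactor collapses to
\begin{equation*}
\delta^{-\frac{1}{2\alpha}-\frac{5}{2}}\cdot \delta^{7(\frac{1}{2\alpha}+\frac{1}{2})}\cdot \delta^{-\frac{7}{2}-7\kappa}
\;=\; \delta^{\frac{3}{\alpha}-\frac{5}{2}-7\kappa}\;,
\end{equation*}
which carries a strictly positive power of $\delta$ for every $\alpha\le 1$ and $\kappa$ small. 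The remaining $\cC^{\gamma,\eta}_\delta$ bounds on $\widehat{\rR}^\delta\Phi^{(\delta)}$ supplied by Proposition~\ref{pr:restart_solution} feed through the smooth function $F_{\theta,h}$ (via the chain rule and the fact that all derivatives of $F_{\theta,h}$ up to order seven are uniformly controlled) to yield a bound of the form $\delta^{\theta}\|\Phi^{(\delta)}\|^7_{\gamma,\eta;\delta}(1+\$\Pi^{(\delta)}\$_\delta)^{7}$ for some $\theta>0$. Lifting this scalar function against $\1$ into $\dD^{\gamma,\eta}_\delta$ and applying $\pP\1_{+}$ (via Proposition~\ref{pr:integration}) thus adds a contribution to $\widetilde{\mM}^{(\delta)}$ of size $\oO(\delta^\theta)$, with an analogous Lipschitz bound obtained by writing $F_{\theta,h}(y)-F_{\theta,h}(\bar y)$ as $(y-\bar y)\int_0^1 F_{\theta,h}'(\bar y + s(y-\bar y))\,ds$ and controlling each term as above.

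With this estimate in hand, local existence and the choice of a $\delta$-uniform $T$ follow exactly as in the proof of Theorem~\ref{th:fixed_pt}: the extra term is dominated by the weakly nonlinear terms and does not affect the contraction argument, so \eqref{eq:short_existence_time} still gives a uniform short existence time. The abstract fixed point for $\Phi$ with coefficients $\lambda_j$ exists on $[0,T]$ by the same argument applied to the unperturbed map (which corresponds to dropping the $F$-term). For convergence, one compares $\widetilde{\mM}^{(\delta)}(\Phi^{(\delta)})$ with $\mM(\Phi)$: the difference splits into a ``weakly nonlinear'' part, treated verbatim as in Theorem~\ref{th:fixed_pt} and producing a bound in terms of $|\lambda_j^{(\delta)}-\lambda_j|$, $\$\Pi^{(\delta)};\Pi\$_{\delta;0}$ and $\|\phi_0^{(\delta)};\phi_0\|_{\gamma,\eta;\delta}$, plus the $F$-contribution which is $\oO(\delta^\theta)\to 0$. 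Iterating this comparison and applying Proposition~\ref{pr:restart_solution} to restart at positive times as in the proof of Theorem~\ref{th:fixed_pt} then extends the convergence up to the full time $T$.

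The main obstacle is the first step: establishing the $\cC^{\gamma,\eta}_\delta$-regularity of the composite function $F_{\theta,h}(\delta^{\frac{1}{2\alpha}+\frac{1}{2}}\widehat{\rR}^\delta\Phi^{(\delta)})$ with a favourable (positive) power of $\delta$. This requires a careful propagation of the three separate norms in Definition~\ref{defn:weighted_function_space} through the nonlinear composition, using the $\delta^{\eta-1}$ bound on $\|D\widehat{\rR}^\delta\Phi^{(\delta)}\|_\infty$ and the $\delta$-dependent H\"older modulus of $D\widehat{\rR}^\delta\Phi^{(\delta)}$. Once these bookkeeping estimates are in place, every subsequent step is a direct transcription of the corresponding step in the proof of Theorem~\ref{th:fixed_pt}.
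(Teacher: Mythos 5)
Your overall strategy is the same as the paper's: reduce everything to the contraction and comparison arguments of Theorem~\ref{th:fixed_pt}, and show that the extra term \eqref{eq:smooth} is locally Lipschitz on a ball in $\dD^{\gamma,\eta}_{\delta}$ with a Lipschitz constant carrying a positive power of $\delta$, uniformly in $\theta$ and $h$; your power count $\delta^{3/\alpha-5/2-7\kappa}$ agrees with the paper's $\delta^{\sigma}$. Where you diverge is in how that estimate is obtained. The paper does \emph{not} try to control the composite $F_{\theta,h}(\delta^{\frac{1}{2\alpha}+\frac12}\widehat{\rR}^{\delta}\Phi^{(\delta)})$ in a weighted H\"older space: it only shows that $\Phi^{(\delta)}\mapsto \delta^{-\frac{1}{2\alpha}-\frac52}F_{\theta,h}(\delta^{\frac{1}{2\alpha}+\frac12}\widehat{\rR}^{\delta}\Phi^{(\delta)})\cdot\1$ is locally Lipschitz into the space of \emph{continuous functions} with uniform topology (using the decomposition $\Phi=\Psi+U$, the bound $\langle U(z),\1\rangle\lesssim(\delta+\sqrt{|t|})^{\eta}$ and $|K*\xi_{\delta}|\lesssim\delta^{-1/2-\kappa}$, rather than citing Proposition~\ref{pr:restart_solution}), and then lets the operation $\pP\1_{+}$ supply all the regularity needed to land back in $\dD^{\gamma,\eta}_{\delta}$, picking up the factor $(S\delta)^{\sigma}$. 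Your route instead pushes the composition through Definition~\ref{defn:weighted_function_space} by the chain rule, which is workable (the exponents do come out positive, as you can check term by term) but strictly heavier: it requires uniform bounds on derivatives of $F_{\theta,h}$ and on $D\widehat{\rR}^{\delta}\Phi^{(\delta)}$ and its modulus, i.e.\ exactly the bookkeeping you flag as the main obstacle and do not carry out, and it also mixes a purely spatial space ($\cC^{\gamma,\eta}_{\delta}$ consists of functions of $x$ at fixed time) with a space-time object, so you would in any case have to recast the target space as a $\dD$-type space before invoking Proposition~\ref{pr:integration}. So the proposal is essentially correct, but the paper's sup-norm-only argument buys a substantially shorter proof under weaker regularity input, while your version buys explicit H\"older control of the composite that the argument never actually needs.
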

\begin{proof}
In view of Theorem~\ref{th:fixed_pt}, it suffices to prove that the map (up to some fixed time $S$)
\begin{equation} \label{eq:smooth}
\Phi^{(\delta)} \mapsto \delta^{-\frac{1}{2\alpha} - \frac{5}{2}} \pP \1_{+} \big( F_{\theta,h}(\delta^{\frac{1}{2\alpha} + \frac{1}{2}} \widehat{\rR}^{\delta} \Phi^{\delta}) \cdot \1 \big)
\end{equation}
is locally Lipschitz from $\dD^{\gamma,\eta}_{\delta}$ to itself with a Lipschitz constant bounded by $\delta^{\sigma}$ for some positive $\sigma$, uniformly over $\theta$ and $h$. We need this uniformity because of the dependence of $\theta$ and $h$ on $\delta$ in \eqref{eq:abstract_noise}. 

To see \eqref{eq:smooth}, we first note that if $\Phi$ solves the fixed point equation \eqref{eq:abstract_noise}, then it necessarily has the form
\begin{align*}
\Phi = \Psi + U(z), 
\end{align*}
where $U$ takes value in a subspace of $\tT$ spanned by $\1$ and elements with strictly positive homogeneities. As a consequence, we have
\begin{align*}
(\widehat{\rR}^{\delta} U)(z) = \langle U(z), \1 \rangle \lesssim (\delta + \sqrt{|t|})^{\eta} \|U\|_{\gamma,\eta;\delta}. 
\end{align*}
It is also straightforward to show that
\begin{align*}
| (\widehat{\rR} \Psi) (z) | = | (K * \xi_{\delta})(z) | \lesssim \delta^{-\frac{1}{2}-\kappa} \|\fM_{\delta}\|_{\delta}. 
\end{align*}
Thus, combining the above two bounds together with the assumption of the behavior of $F$ around $0$, we deduce that the map
\begin{align*}
\Phi^{(\delta)} \mapsto \delta^{-\frac{1}{2\alpha} - \frac{5}{2}}  F_{\theta,h}(\delta^{\frac{1}{2\alpha} + \frac{1}{2}} \widehat{\rR}^{\delta} \Phi^{\delta}) \cdot \1 
\end{align*}
is locally Lipschitz continuous from $\dD^{\gamma,\eta}_{\delta}$ to the space of continuous functions $\cC$ with uniform topology, and that the local Lipschitz constant is proportional to $\delta^{\sigma}$ for some $\sigma > 0$ (independent of $\theta$ and $h$). The additional operation by $\pP \1_{+}$ (up to time $S$) makes the map \eqref{eq:smooth} locally Lipschitz from $\dD^{\gamma,\eta}_{\delta}$ to itself, and the Lipschitz constant is bounded by $(S \delta)^{\sigma}$. 

The rest of the proof follows in the same line as that in Theorem~\ref{th:fixed_pt}. 
\end{proof}

Suppose we have now chosen $\lambda_{j}^{(\delta)}$'s such that $\widehat{\rR}^{\delta} \Phi^{(\delta)}$ exactly solves \eqref{eq:noise_macro}. By the assumptions on the models and initial conditions, Theorem~\ref{th:abstract_noise} guarantees that as long as we can show that these $\lambda_{j}^{(\delta)}$'s converge to the desired limiting values, then the convergence of $u_{\delta}$ to the limiting process with follow automatically as in the previous section. 

Inspecting the right hand side of \eqref{eq:renormalised_equation}, we see that in order for $\widehat{\rR}^{\delta} \Phi^{(\delta)}$ to solve \eqref{eq:noise_macro}, we need to set $\lambda_{j}^{(\delta)}$'s in the following way: 
\begin{equation} \label{eq:noise_coefficients}
\begin{split}
\lambda_{6}^{(\delta)} &= a_{6}^{(h)}(\theta) \cdot \delta^{\frac{5}{2\alpha}-1}, \qquad \lambda_{5}^{(\delta)} = a_{5}^{(h)}(\theta) \cdot \delta^{\frac{2}{\alpha} - 1}; \\
\lambda_{4}^{(\delta)} &= \delta^{\frac{3}{2\alpha}-1} \big( a_{4}^{(h)}(\theta)  + 15 a_{6}^{(h)}(\theta) C_{0} \cdot \delta^{\frac{1}{\alpha}} \big); \\
\lambda_{3}^{(\delta)} &=  \delta^{\frac{1}{\alpha}-1} \big( a_{3}^{(h)}(\theta) + 10 a_{5}^{(h)}(\theta) C_{0} \cdot \delta^{\frac{1}{\alpha}} \big); \\
\lambda_{2}^{(\delta)} &= \delta^{\frac{1}{2\alpha}-\frac{3}{2}} \big( a_{2}^{(h)}(\theta)  + 6 a_{4}^{(h)}(\theta) C_{0} \cdot \delta^{\frac{1}{\alpha}} + 45 a_{6}^{(h)}(\theta) C_{0}^{2} \cdot \delta^{\frac{2}{\alpha}} \big); \\
\lambda_{1}^{(\delta)} &= \delta^{-2} \big( a_{1}^{(h)}(\theta) + 3 a_{3}^{(h)}(\theta) C_{0} \cdot \delta^{\frac{1}{\alpha}} + 15 a_{5}^{(h)}(\theta) C_{0}^{2} \cdot \delta^{\frac{2}{\alpha}} \big) - C_{\delta}; \\
\lambda_{0}^{(\delta)} &= \delta^{-\frac{1}{2\alpha}-\frac{5}{2}} \big( a_{0}^{(h)}(\theta) + a_{2}^{(h)}(\theta) C_{0} \cdot \delta^{\frac{1}{\alpha}} + 3 a_{4}^{(h)}(\theta) C_{0}^{2} \cdot \delta^{\frac{2}{\alpha}} + 15 a_{6}^{(h)}(\theta) C_{0}^{3} \cdot \delta^{\frac{3}{\alpha}} \big) \\
&\phantom{11} - C_{\delta}' - 6 \lambda_{2}^{(\delta)} \lambda_{3}^{(\delta)} C_{2}^{(\delta)}, 
\end{split}
\end{equation}
where the constants $C_{\delta}$ and $C_{\delta}'$ are given by
\begin{align*}
C_{\delta} &= \sum_{n=2}^{5} (n+1)^{2} n! \cdot (\lambda_{n+1}^{(\delta)})^{2} C_{n}^{(\delta)} + \sum_{n=3}^{4} (n+2)! \cdot \lambda_{n}^{(\delta)} \lambda_{n+2}^{(\delta)} C_{n}^{(\delta)}; \\
C_{\delta}' &= \delta^{-\frac{1}{2}} \sum_{n=3}^{5} (n+1)! \cdot \lambda_{n}^{(\delta)} \lambda_{n+1}^{(\delta)} C_{n}^{(\delta)}
\end{align*}
The additional term $F_{\theta,h} \cdot \1$ in \eqref{eq:abstract_noise} does not affect the choice as it precisely gives the corresponding term in \eqref{eq:noise_macro} when hit with the reconstruction operator. The following statement gives the situation where we can observe $\Phi^4_3$.

\begin{thm} \label{th:noise_symmetric}
Suppose
\begin{equation} \label{eq:quantity_b}
B = a_{4}  + \frac{3 a_{0}'' a_{3}^{2}}{2 a_{1}'^{2}} - \frac{a_{2}'a_{3}}{a_{1}'} = 0. 
\end{equation}
Then, there exists
\begin{align*}
\theta(\epsilon) = - \frac{3 a_{3} C_{0}}{a_{1}'} \cdot \epsilon + \frac{18 a_{3}^{2} c_{2}}{a_{1}'} \cdot \epsilon^{2} |\log \epsilon| + \lambda \epsilon^{2}, \qquad h(\epsilon) = \rho_{1} \epsilon + \rho_{2} \epsilon^{2}, 
\end{align*}
such that at scale $\alpha = 1$, the solution $u_{\epsilon}$ to \eqref{eq:noise_macro} with initial condition $\phi_{0}^{(\epsilon)}$ converges in probability in $\cC\big([0,T], \cC^{\eta}(\TT^{3})\big)$ for every $T>0$ to the $\Phi^4_3 (a_{3})$ family (with initial data $\phi_{0}$) indexed by $\lambda$. Here, $\rho_{1}$ depends on $C_{0}$ and the coefficients $a_{j}$'s, and $\rho_{2}$ is chosen depending on $\lambda$. 
\end{thm}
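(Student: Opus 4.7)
The plan is to apply Theorem~\ref{th:abstract_noise} at scale $\alpha = 1$ (so $\delta = \epsilon$) with the renormalised coefficients $\lambda_j^{(\epsilon)}$ defined in \eqref{eq:noise_coefficients}, and then to verify that with the stated choices of $\theta$ and $h$, each $\lambda_j^{(\epsilon)}$ converges to a finite limit $\lambda_j$ as $\epsilon \to 0$, with $\lambda_3 = a_3$, $\lambda_j = 0$ for $j \in \{2,4,5,6\}$, and $\lambda_0, \lambda_1$ finite. Once this is established, the convergence $\$\Phi^{(\epsilon)};\Phi\$_{\gamma,\eta;\epsilon,0} \to 0$ follows, and since the limiting model $\fM$ from Theorem~\ref{th:main_convergence} annihilates every basis vector containing a factor $\eE^{\beta}$ with $\beta > 0$, reconstruction produces a solution $u = \widehat{\rR}\Phi$ of an equation of the form $\partial_t u = \Delta u - a_3 u^3 + \lambda_1 u - \lambda_0 + \xi$, interpreted through the limiting $\Phi^4_3$ model. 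Since the free parameter $\lambda$ appearing in the ansatz for $\theta$ enters affinely into $\lambda_1$, this is exactly the $\Phi^4_3(a_3)$ family with an additional constant.

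To analyze $\lambda_j^{(\epsilon)}$, we Taylor expand $a_j^{(h)}(\theta)$ using \eqref{eq:noise_a's} and the pitchfork conditions \eqref{eq:pitchfork_noise}, which give $a_0 = a_0' = a_1 = a_2 = 0$. Immediately $\lambda_6^{(\epsilon)}, \lambda_5^{(\epsilon)}, \lambda_4^{(\epsilon)}$ all carry positive powers of $\epsilon$ and therefore vanish, while $\lambda_3^{(\epsilon)} = a_3^{(h)}(\theta) + O(\epsilon) \to a_3$. The substantive cancellations occur in $\lambda_2^{(\epsilon)}, \lambda_1^{(\epsilon)}, \lambda_0^{(\epsilon)}$. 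Writing $\tau_1 := -3 a_3 C_0/a_1'$ and $\tau_2 := 18 a_3^2 c_2/a_1'$ (the coefficients appearing in the theorem), for $\lambda_2^{(\epsilon)} = \epsilon^{-1}(a_2^{(h)}(\theta) + 6 a_4 C_0 \epsilon + O(\epsilon^2 \log\epsilon))$ we use $a_2^{(h)}(\theta) = 3 a_3 h + a_2'\theta + O(\epsilon^2)$ to read off the $O(\epsilon)$ coefficient as $3 a_3 \rho + a_2'\tau_1 + 6 a_4 C_0$; choosing $\rho$ so that this vanishes forces $\lambda_2^{(\epsilon)} \to 0$. For $\lambda_1^{(\epsilon)}$, the $\epsilon^{-1}$ divergence coming from $\epsilon^{-2} a_1^{(h)}(\theta)$ combines with $3 a_3^{(h)} C_0 \epsilon^{-1}$ into $\epsilon^{-1}(a_1'\tau_1 + 3 a_3 C_0) = 0$, and the $|\log\epsilon|$ divergence from $-C_\epsilon = -18 a_3^2 c_2 |\log\epsilon| + O(1)$ cancels against $\epsilon^{-2} \cdot a_1'\tau_2 \epsilon^2 |\log\epsilon|$ thanks to the choice of $\tau_2$; the remaining terms give a finite $\lambda_1$ depending affinely on $\lambda$.

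The main obstacle is $\lambda_0^{(\epsilon)}$. Its leading $\epsilon^{-3/2}$ divergence receives contributions from $\epsilon^{-7/2} a_0^{(h)}(\theta)$ at order $\epsilon^2$ (namely $a_1'\tau_1 \rho + \tfrac{a_0''}{2}\tau_1^2$), from $\epsilon^{-7/2} a_2^{(h)}(\theta) C_0 \epsilon$ at order $\epsilon^2$ (namely $C_0(3 a_3 \rho + a_2'\tau_1)$), and from $\epsilon^{-7/2} \cdot 3 a_4 C_0^2 \epsilon^2$. Substituting $\tau_1 = -3 a_3 C_0/a_1'$ together with the already-fixed value of $\rho$, a direct calculation collapses this coefficient to $3 C_0^2 B$, which vanishes precisely under the hypothesis $B = 0$ from \eqref{eq:quantity_b}. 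The residual subleading divergences in $\lambda_0^{(\epsilon)}$ (at orders $\epsilon^{-1/2}$, $\log\epsilon$, and $\epsilon^{-1/2}\log\epsilon$), arising from the mass counterterm $C_\epsilon'$, from $-6\lambda_2^{(\epsilon)}\lambda_3^{(\epsilon)} C_2^{(\epsilon)}$ (already tame since $\lambda_2^{(\epsilon)} = O(\epsilon\log\epsilon)$), and from higher-order contributions to $a_0^{(h)}(\theta)$, are removed by choosing the $O(\epsilon^2)$ correction to $h$ inside the ansatz $h = O(\epsilon)$ appropriately.

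The principal difficulty is therefore purely combinatorial: carefully bookkeeping all sub-leading contributions through the sixth-order Taylor expansion of $V_\theta'$, the three renormalisation constants $C_\epsilon, C_\epsilon', C_2^{(\epsilon)}$, and the two-parameter ansatz, while invoking the structural identity $B = 0$ to cancel the one obstruction that no choice of $\theta, h$ can remove. Once all coefficients converge, Theorem~\ref{th:abstract_noise} transfers convergence to the abstract fixed points, and the reconstruction identification described above yields the stated convergence in $\cC^\eta([0,T]\times\TT^3)$.
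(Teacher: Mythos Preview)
Your approach is exactly the paper's: set $\alpha = 1$, verify that each $\lambda_j^{(\epsilon)}$ from \eqref{eq:noise_coefficients} converges (with $\lambda_j \to 0$ for $j \ge 4$, $\lambda_3 \to a_3$, $\lambda_2 \to 0$ after fixing $\rho$, and $\lambda_0, \lambda_1$ finite), then invoke Theorem~\ref{th:abstract_noise} together with the continuity of reconstruction. Your key computation that the leading divergent coefficient in $\lambda_0^{(\epsilon)}$ collapses to $3C_0^2 B$ is correct and more explicit than the paper's one-line ``it is straightforward''; note only that at $\alpha=1$ the prefactor in $\lambda_0^{(\epsilon)}$ is $\epsilon^{-1/2-5/2}=\epsilon^{-3}$ (not $\epsilon^{-7/2}$), so the divergence killed by $B=0$ sits at order $\epsilon^{-1}$ rather than $\epsilon^{-3/2}$.
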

\begin{proof}
At $\alpha = 1$, we have $\delta = \epsilon$. It is easy to see that if $B = 0$, then with the above choice of $\theta$, all $\lambda_{j}^{(\epsilon)}$'s converge to a finite limit. In particular, we have
\begin{align*}
\lambda_{j}^{(\epsilon)} \rightarrow 0 \phantom{1} (j \geq 4), \qquad \lambda_{3}^{(\epsilon)} \rightarrow a_{3}, \qquad \lambda_{2}^{(\epsilon)} \rightarrow \lambda_{2} = - \frac{3 a_{2}' a_{3} C_{0}}{a_{1}'} + 3 a_{3} \rho_{1} + 6 a_{4} C_{0}. 
\end{align*}
Since $a_3 \neq 0$, we can choose $\rho_1$ such that $\lambda_2 = 0$. For $\lambda_{0}^{(\epsilon)}$, it is straightforward to show that it converges to a finite limiting $\lambda_{0}$ whose value depends on $\lambda$ and $\rho_2$. Since $\rho_2$ is multiplied by $a_3$ which is non-zero, one can then choose $\rho_2$ to make $\lambda_{0}$ vanish. The assertion then follows from \cite{Konstantin, Phi4_global}, Theorem~\ref{th:abstract_noise} and the continuity of the reconstruction operators. 
\end{proof}


In the case when $B \neq 0$, we need to look at a different scale to observe a non-trivial limit. The value of $\theta$ at which one sees a saddle-node bifurcation turns out to be
\begin{align*}
\theta^{*}(\epsilon) = \rho_{1}^{*} \epsilon + \rho_{2}^{*} \epsilon^{\frac{4}{3}} + \rho_{3}^{*} \epsilon^{\frac{5}{3}} + \oO(\epsilon^{\frac{16}{9}})
\end{align*}
with
\begin{equation} \label{eq:rho's}
\rho_{1}^{*} = \frac{3 a_{3} C_{0}}{|a_{1}'|}, \quad \rho_{2}^{*} = \frac{9}{(12)^{1/3} |a_{1}'|} (a_{3} B^{2} C_{0}^{4})^{\frac{1}{3}}, \quad \rho_{3}^{*} = 2 B C_{0} \bigg(\frac{3 \rho_{2}^{*}}{|a_{1}'| a_{3}} \bigg)^{\frac{1}{2}}. 
\end{equation}
%
%
%
We then have the following theorem.

\begin{thm} \label{th:noise}
	Suppose $V_{\theta}$ is smooth (in $\theta$) in the space of $\cC^{8}$ functions, and exhibits pitchfork bifurcation at the origin. Suppose also $B \neq 0$. Let $u_{\epsilon^{\alpha}}$ be the solution to the PDE \eqref{eq:noise_macro} with initial data $\phi_{0}^{(\epsilon^{\alpha})}$, and let $\theta = \theta(\epsilon)$ be of the form
	\begin{align*}
	\theta = \rho_{1} \epsilon^{\beta_1} + \rho_{2} \epsilon^{\beta_2} + \rho_{3} \epsilon^{\beta_3} + \rho_{4} \epsilon^{\beta_4}
	\end{align*}
	with $0 < \beta_1 < \beta_2 < \beta_3 < \beta_4$ and $\rho_j > 0$. Let $\rho_{j}^{*}$'s be as in \eqref{eq:rho's}. Then, we have the following (with all the limiting processes starting with initial data $\phi_{0}$): 
	\begin{align*}
	\begin{tikzpicture}[scale=1.0,baseline=0cm]
	\node at (-1.5,-3) [root] (saddle) {}; 
	\node at (-1.5,-3.3) [] () {\scriptsize $\frac{7}{9}$}; 
	\node at (1.5,-3) [sou] (ou) {}; 
	\node at (1.5,-3.3) [] () {\scriptsize $\frac{2}{3}$}; 
	\node at (0,0) [sdot] (r3) {}; 
	\node at (-2,0) [sdot] (r3left) {}; 
	\node at (-3,0) [nou] (r3n) {}; 
	\node at (-3,0.5) [sou] (r3s1) {}; 
	\node at (-3,-0.5) [sou] (r3s2) {}; 
	\node at (-3.2,-0.5) [] () {\scriptsize $\frac{2}{3}$}; 
	\node at (2.5,0) [sou] (r3right) {}; 
	\node at (2.7,0) [] () {\scriptsize $\frac{2}{3}$}; 
	\node at (0,1.5) [sdot] (b3) {}; 
	\node at (-2,1.5) [sdot] (b3left) {}; 
	\node at (-3,1.5) [nou] (b3n) {}; 
	\node at (-3,2) [sou] (b3s1) {}; 
	\node at (-3,1) [sou] (b3s2) {}; 
	\node at (-3.2,1) [] () {\scriptsize $\frac{2}{3}$}; 
	\node at (2.5,1.5) [sou] (b3right) {}; 
	\node at (2.7,1.5) [] () {\scriptsize $\frac{2}{3}$}; 
	\node at (0,3) [sdot] (r2) {}; 
	\node at (-2,3) [sdot] (r2left) {}; 
	\node at (-3,3) [nou] (r2n) {}; 
	\node at (-3,3.5) [sou] (r2s1) {}; 
	\node at (-3,2.5) [sou] (r2s2) {}; 
	\node at (2.5,3) [sou] (r2right) {}; 
	\node at (2.7,3) [] () {\scriptsize $\frac{2}{3}$}; 
	\node at (0,4.5) [sdot] (b2) {}; 
	\node at (-2,4.5) [sdot] (b2left) {}; 
	\node at (-3,4) [nou] (b2n) {}; 
	\node at (-3,5) [sou] (b2s) {}; 
	\node at (2.5,4.5) [sou] (b2right) {}; 
	\node at (2.7,4.5) [] () {\scriptsize $\frac{2}{3}$}; 
	\node at (0,6) [sdot] (r1) {}; 
	\node at (-2,6) [sdot] (r1left) {}; 
	\node at (-3,5.5) [nou] (r1n) {}; 
	\node at (-3,6.5) [sou] (r1s) {}; 
	\node at (2.5,6) [sou] (r1right) {}; 
	\node at (2.7,6) [] () {\scriptsize $\frac{1}{2}$}; 
	\node at (0,7.5) [sdot] (b1) {}; 
	\node at (-2,7.5) [sdot] (b1left) {}; 
	\node at (-3,7) [nou] (b1n) {}; 
	\node at (-3,8) [sou] (b1s) {}; 
	\node at (2.5,7.5) [sou] (b1right) {}; 
	\node at (2.7,7.5) [] () {\scriptsize $\frac{1}{2}$}; 
	\node at (0,-1.5) [sdot] (b4) {}; 
	\node at (-2,-1.5) [sdot] (b4left) {}; 
	\node at (1.9,-1.5) [sdot] (b4right) {}; 
	\node at (-3,-1) [sou] (b4s1) {}; 
	\node at (-3,-2) [sou] (b4s2) {}; 
	\node at (-3,-1.5) [nou] (b4n) {}; 
	\node at (2.5,-1) [sou] (b4s) {}; 
	\node at (2.5,-2) [root] (b4wick) {}; 
	\node at (2.7,-1) [] () {\scriptsize $\frac{2}{3}$}; 
	\node at (2.7,-2) [] () {\scriptsize $\frac{7}{9}$}; 
	\node at (-3.2,-2) [] () {\scriptsize $\frac{2}{3}$}; 
	\draw[akernel] (r3) to node[labl]{\tiny $\rho_3 = \rho_3^*$} (b4); 
	\draw[akernel] (b4) to node[anchor = south]{\tiny $\beta_4 < \frac{16}{9}$} (b4left); 
	\draw[akernel] (b4) to node[anchor = south]{\tiny $\beta_4 > \frac{16}{9}$} (b4right); 
	\draw[akernel] (b4left) to (b4s1); 
	\draw[akernel] (b4left) to (b4s2); 
	\draw[akernel] (b4left) to (b4n); 
	\draw[akernel] (b4right) to (b4s); 
	\draw[akernel] (b4right) to (b4wick); 
	\draw [decorate,decoration={brace,amplitude=3pt}] (-3.1,-1.5) to node[midway, xshift=-0.6cm] {\scriptsize $\frac{1}{3} + \frac{\beta_4}{4}$} (-3.1,-1); 
	\draw[akernel] (b1) to node[labl]{\tiny$\beta_1=1$} (r1); 
	\draw[akernel] (r1) to node[labl]{\tiny$\rho_1=\rho_1^*$} (b2); 
	\draw[akernel] (b2) to node[labl]{\tiny$\beta_2=\frac{4}{3}$} (r2); 
	\draw[akernel] (r2) to node[labl]{\tiny$\rho_2=\rho_2^*$} (b3); 
	\draw[akernel] (b3) to node[labl]{\tiny$\beta_3=\frac{5}{3}$} (r3); 
	\draw[akernel] (b4) to node[labl]{\tiny$\beta_4 = \frac{16}{9}$} (saddle); 
	\draw[akernel] (b4) to node[labl]{\tiny$\beta_4 = \frac{16}{9}$} (ou); 
	\draw[akernel] (r3) to node[anchor = south]{\tiny $\rho_3>\rho_3^*$} (r3left); 
	\draw[akernel] (r3left) to (r3n); 
	\draw[akernel] (r3) to node[anchor = south]{\tiny $\rho_3<\rho_3^*$} (r3right); 
	\draw[akernel] (r3left) to (r3s1); 
	\draw[akernel] (r3left) to (r3s2); 
	\draw[akernel] (b3) to node[anchor = south]{\tiny $\beta_3<\frac{5}{3}$} (b3left); 
	\draw[akernel] (b3left) to (b3n); 
	\draw[akernel] (b3) to node[anchor = south]{\tiny $\beta_3>\frac{5}{3}$} (b3right); 
	\draw[akernel] (b3left) to (b3s1); 
	\draw[akernel] (b3left) to (b3s2); 
	\draw[akernel] (r2) to node[anchor = south]{\tiny $\rho_2>\rho_2^*$} (r2left); 
	\draw[akernel] (r2left) to (r2n); 
	\draw[akernel] (r2) to node[anchor = south]{\tiny $\rho_2<\rho_2^*$} (r2right); 
	\draw[akernel] (r2left) to (r2s1); 
	\draw[akernel] (r2left) to (r2s2); 
	\draw[akernel] (b2) to node[anchor = south]{\tiny $\beta_2<\frac{4}{3}$} (b2left); 
	\draw[akernel] (b2) to node[anchor = south]{\tiny $\beta_2>\frac{4}{3}$} (b2right); 
	\draw[akernel, bend left = 45] (b2left) to (b2s); 
	\draw[akernel, bend right = 45] (b2left) to (b2s); 
	\draw[akernel] (b2left) to (b2n); 
	\draw[akernel] (r1) to node[anchor = south]{\tiny $\rho_1>\rho_1^*$} (r1left); 
	\draw[akernel] (r1) to  node[anchor = south]{\tiny $\rho_1<\rho_1^*$} (r1right); 
	\draw[akernel, bend left=45] (r1left) to (r1s); 
	\draw[akernel, bend right=45] (r1left) to (r1s); 
	\draw[akernel] (r1left) to (r1n); 
	\draw[akernel] (b1) to node[anchor = south]{\tiny $\beta_1<1$} (b1left); 
	\draw[akernel] (b1) to node[anchor = south]{\tiny $\beta_1>1$} (b1right); 
	\draw[akernel, bend left = 45] (b1left) to (b1s); 
	\draw[akernel, bend right = 45] (b1left) to (b1s); 
	\draw[akernel] (b1left) to (b1n); 
	\draw [decorate,decoration={brace,amplitude=5pt}] (-3.1,7) to node[midway, xshift=-0.4cm] {\scriptsize $\frac{\beta_1}{2}$} (-3.1,8); 
	\draw [decorate,decoration={brace,amplitude=5pt}] (-3.1,5.5) to node[midway, xshift=-0.4cm] {\scriptsize $\frac{1}{2}$} (-3.1,6.5); 
	\draw [decorate,decoration={brace,amplitude=5pt}] (-3.1,4) to node[midway, xshift=-0.4cm] {\scriptsize $\frac{\beta_2}{2}$} (-3.1,5); 
	\draw [decorate,decoration={brace,amplitude=5pt}] (-3.1,2.5) to node[midway, xshift=-0.4cm] {\scriptsize $\frac{2}{3}$} (-3.1,3.5); 
	\draw [decorate,decoration={brace,amplitude=3pt}] (-3.1,1.5) to node[midway, xshift=-0.6cm] {\scriptsize $\frac{1}{3}+\frac{\beta_3}{4}$} (-3.1,2); 
	\draw [decorate,decoration={brace,amplitude=3pt}] (-3.1,0) to node[midway, xshift=-0.3cm] {\scriptsize $\frac{3}{4}$} (-3.1,0.5); 
	\end{tikzpicture}
	\end{align*}
Here, the notations are the same as in Remark~\ref{rm:chart}: each \tikz[baseline=-0.1cm] \node [sou] {}; represents a stable OU process, each \tikz[baseline=-0.1cm] \node [nou] {}; represents an unstable OU process, and each green node \tikz[baseline=-0.1cm] \node [root] {}; represents a $\Phi^3_3$. Each black node \tikz[baseline=-0.1cm] \node [sou] {}; with two arrows pointing to it indicates that the two limiting OU processes, obtained by shifting the field to the left and to the right, have the same coefficients. The numbers next to each dot indicates the scale $\alpha$ at which one observes the corresponding limit. 

All the convergences above are convergences in law in $\xX^\eps$, with $T_\star$ (resp. $T_\star^{(\eps)}$)
given by the explosion times of the respective processes in $\cC^\eta$ (in the same sense as in Theorem \ref{th:main_asymmetric}). 
\end{thm}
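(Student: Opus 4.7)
The general strategy mirrors that of Theorem~\ref{th:main_asymmetric}: for each branch of the diagram, I would make an explicit choice of the field shift $h = h(\epsilon)$ and verify, using the expressions \eqref{eq:noise_coefficients} together with the pitchfork conditions \eqref{eq:pitchfork_noise} and the expansion \eqref{eq:noise_a's}, that all coefficients $\lambda_j^{(\delta)}$ with $\delta=\epsilon^\alpha$ at the scale $\alpha$ indicated in the diagram converge to a finite limit as $\epsilon\to 0$. Once this is done, Theorem~\ref{th:abstract_noise} together with continuity of the reconstruction operator and convergence of the renormalised models in Theorem~\ref{th:main_convergence} delivers convergence of $u_{\epsilon^\alpha}$ in $\cC^\eta$ to the claimed limiting process, whose precise form is read off from the limiting $\lambda_j$.

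The computational heart of the proof is the following \emph{cascade of resonances}: when $\theta = \rho_1\eps^{\beta_1} + \rho_2 \eps^{\beta_2} + \cdots$ one seeks $h$ of the form $h = r_1\eps^{\sigma_1} + r_2\eps^{\sigma_2} + \cdots$ such that the effective potential $a_0^{(h)}(\theta)$ (together with its shifted Wick corrections) vanishes to sufficient order. Using $a_0=a_0'=a_1=a_2=0$, the leading balance in $a_1^{(h)}(\theta)$ forces $\sigma_1$ and $r_1$ to satisfy the cubic $a_3 r_1^3 + a_1' r_1 \cdot (\rho_1 - \rho_1^*) + (\text{correction from } B) = 0$, whose structure controls how many real $h$-branches exist and determines the threshold value $\rho_1^* = 3 a_3 C_0/|a_1'|$ (this accounts for the Wick shift $3 a_3(\theta) C_0\delta^{1/\alpha}$ appearing in the expression for $\lambda_1^{(\delta)}$). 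Repeating the analysis one order up, one finds that at $\rho_1=\rho_1^*$ the natural scale increases to $\alpha=\frac{2}{3}$ and the next threshold $\rho_2^*$ emerges from a similar cubic in which the coefficient $B$ (the obstruction to symmetry) appears explicitly, giving the value \eqref{eq:rho's}. A third iteration at $\rho_2=\rho_2^*$ produces $\rho_3^*$ and, when $\rho_3=\rho_3^*$, we drop into the regime where the quadratic term survives and yields a $\Phi^3_3$ scaling at $\alpha=\frac{7}{9}$, exactly analogous to the critical case of Theorem~\ref{th:main_asymmetric}.

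More concretely, I would execute the proof by starting at the bottom of the diagram and working up. First, take $\theta = \rho_1 \eps^{\beta_1}$ with $\beta_1 < 1$ (or $\beta_1 = 1$, $\rho_1\neq \rho_1^*$): choose $h = r\eps^{\beta_1/2}$ with $r$ a real root of a cubic obtained by balancing $a_3^{(h)}h^2 + a_1^{(h)}$ after Wick subtraction; at scale $\alpha = \frac{\beta_1}{2}$ (respectively $\alpha=\frac{1}{2}$) all $\lambda_j^{(\delta)}$ with $j\neq 1$ vanish while $\lambda_1^{(\delta)}$ converges, giving an Ornstein--Uhlenbeck limit. Then at $\beta_1=1$, $\rho_1=\rho_1^*$ the cubic degenerates to a double root $r_0$ and one must expand $h = r_0 \eps^{1/2} + h_2$ with $h_2$ of lower order; substituting into $\lambda_1^{(\delta)}$ and $\lambda_2^{(\delta)}$ forces $\beta_2 = \frac{4}{3}$ to appear naturally from the next balance. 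Iterating once more produces the exponent $\beta_3=\frac{5}{3}$ and the value $\rho_3^*$, and in the critical case $\beta_4=\frac{16}{9}$, $\rho_3=\rho_3^*$ the surviving term is $\lambda_2^{(\delta)}$, which converges to a nonzero limit proportional to $B^{1/3}$, yielding $\Phi^3_3$ at the scale $\alpha = \frac{7}{9}$.

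The hard part will be the bookkeeping: one must keep track of terms of order $\oO(\eps^\sigma)$ for many different values of $\sigma$ simultaneously, and verify that the logarithmic Wick corrections (visible as $c_2|\log\delta|$ in $C_{\delta}$) do not spoil the cancellations. The delicate cancellations between the mass counterterm $C_\delta$, the constant counterterm $C_\delta'$, and the cross term $6\lambda_2^{(\delta)}\lambda_3^{(\delta)} C_2^{(\delta)}$ in $\lambda_0^{(\delta)}$ need particular care, because precisely these are what force the precise values of $\rho_2^*$ and $\rho_3^*$ in \eqref{eq:rho's}. Once these matchings are checked order by order, identifying the limiting equation is automatic from the limiting $\lambda_j$'s and the structure of \eqref{eq:renormalised_equation}.
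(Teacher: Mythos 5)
Your proposal follows essentially the same route as the paper: the paper's own proof likewise reduces the theorem to verifying convergence of the $\lambda_j^{(\delta)}$'s from \eqref{eq:noise_coefficients} at the appropriate choices of $\alpha$ and $h_\epsilon$, then invokes Theorem \ref{th:abstract_noise}, the model convergence and continuity of reconstruction, with the computational details declared analogous to Theorem \ref{th:main_asymmetric}. One small slip worth noting: at $\beta_1=1$, $\rho_1=\rho_1^*$ the leading-order cubic for $h$ degenerates to a triple root at the origin (not a nonzero double root $r_0$), so the expansion simply continues with $h=o(\epsilon^{1/2})$; this does not affect your cascade, which correctly yields $\beta_2=\tfrac{4}{3}$, $\beta_3=\tfrac{5}{3}$, $\beta_4=\tfrac{16}{9}$, the thresholds \eqref{eq:rho's}, and the $\Phi^3_3$ coefficient proportional to $B^{1/3}$.
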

\begin{proof}
	The key in the proof is to show the convergence of $\lambda_{j}^{(\delta)}$'s as defined in \eqref{eq:noise_coefficients} to the desired limiting values at various choices of $\alpha$ and $h_{\epsilon}$. In particular, for the $\Phi^3_3$ limit, the coefficient of the quadratic Wick term is proportional to $B^{\frac{1}{3}}$. The details of the proof are very similar to those in Theorem~\ref{th:main_asymmetric}, and is straightforward by the expression of the $a_{j}^{(h)}$'s in \eqref{eq:noise_a's}, so we do not repeat them here. 
\end{proof}

\begin{rmk}
	If any of the $\rho_{j}$'s is negative, it will make $\theta$ further away from the effective critical value $\theta^{*}$ (but close to $0$), and one could only see one stable OU in the limit. In fact, by including negative $\rho_{j}$'s, one will fill in the jump of the scale (from $\frac{1}{2}$ to $\frac{2}{3}$) on the right of the figure, and obtain a complete description (in terms of the continuous change of the scale) as the left side of the figure.  
\end{rmk}

\bibliographystyle{Martin}
\bibliography{Refs}

\end{document}